\documentclass[12pt]{article}
\usepackage[a4paper,margin=1in]{geometry}
\usepackage{setspace}
\usepackage{etex}
\usepackage{setspace}
\usepackage{comment}
\usepackage[pdftex]{graphicx}
\usepackage{rotating}
\usepackage{amsmath}
\usepackage{amsthm}
\usepackage{amssymb}
\usepackage{graphicx}
\usepackage{tikz}
\usepackage{subcaption}
\usepackage{rotating}
\usepackage{mathtools}
\usepackage{bbm}
\usepackage{amsfonts}
\usepackage{setspace}
\usepackage[round]{natbib}
\usepackage{color}
\usepackage{booktabs}
\usepackage{caption}
\newtheorem{theorem}{Theorem}[section]
\newtheorem{lemma}{Lemma}[section]
\newtheorem{definition}{Definition}[section]
\usepackage{amssymb}
\usepackage[linesnumbered, ruled]{algorithm2e}

\SetKwRepeat{Do}{do}{while}%

\usepackage{amsmath}               
  {
      \theoremstyle{plain}
      \newtheorem{assumption}{Assumption}
  }

\usepackage{amsmath}               
{
  \theoremstyle{remark}
  
}

\usepackage{tikz}
\usepackage{pgf}
\usepackage{graphicx}
\usepackage{pstricks}
\usepackage{framed}
\usepackage{varwidth}
\usepackage[normalem]{ulem}
\usetikzlibrary{shapes}
\usetikzlibrary{arrows,decorations.pathmorphing,backgrounds,fit,
positioning,shapes.symbols,chains}

\usepackage{longtable}
\usepackage{booktabs}
\usepackage{array}
\usepackage{multirow}
\usepackage{wrapfig}
\usepackage{float}
\usepackage{colortbl}
\usepackage{pdflscape}
\usepackage{tabu}
\usepackage{threeparttable}
\usepackage{threeparttablex}
\usepackage[normalem]{ulem}
\usepackage{makecell}
\usepackage{xcolor}
\usepackage{threeparttable}
\usepackage{comment}

\doublespacing

\newcommand{\vect}[1]{\mbox{\boldmath $#1$}}

\newcommand{\vA}{\vect{A}}

\newcommand{\bbE}{\mathbb{E}}

\newcommand{\OMIT}[1]{\relax}

\def\text{{\rm}}

 \newcommand{\bma}[1]{\mbox{\boldmath $#1$}}

 \newcommand{\appendixhead}%
{\textbf{\huge Appendix}
\vspace{0.1in}}

 \newcommand{\bA}{ {\mathbf{A}} }
 \newcommand{\ba}{ {\mathbf{a}} }

 \newcommand{\bc}{ {\bma{c}} }
 \newcommand{\bD}{ {\bma{D}} }
 \newcommand{\bd}{ {\bma{d}} }

 \newcommand{\bH}{ {\bma{H}} }
 \newcommand{\bh}{ {\bma{h}} }

 \newcommand{\bK}{ {\bma{K}} }
 \newcommand{\bk}{ {\bma{k}} }

 \newcommand{\bq}{ {\bma{q}} }

 \newcommand{\bT}{ {\bma{T}} }

 \newcommand{\bV}{ {\bma{V}} }
 
 \newcommand{\bW}{ {\bma{W}} }
 
 \newcommand{\bX}{ {\bma{X}} }
 \newcommand{\bx}{ {\bma{x}} }
 \newcommand{\bY}{ {\bma{Y}} }
 
 \newcommand{\bZ}{ {\bma{Z}} }

\newcommand{\bbG}{ {\mathbb{G}} }
\newcommand{\bbT}{ {\mathbb{T}} }
\newcommand{\bbI}{ {\mathbb{I}} }

\newcommand{\epochfield}{\mathcal{F}_{j-1}}

\newcommand{\epochgroup}{\mathcal{E}_{j-1}}

\newcommand{\bhistory}{ \bh^{n_J}}
\newcommand{\bHistory}{ \mathcal{H}^{n_J}}
\newcommand{\branchasymp}{\mathscr{I}}
\newcommand{\branchingvalue}{V^{n_J}}
\newcommand{\noepochsamplesize}{\kappa}

 \newcommand{\bmu}{ {\bma{\mu}} }

 \newcommand{\bbeta}{ {\pmb{\beta}} }
 
 \newcommand{\btheta}{ {\pmb{\theta}} }
 \newcommand{\bTheta}{ {\pmb{\Theta}} }

 \newcommand{\bgamma}{ {\pmb{\gamma}} }

 \newcommand{\balpha}{ {\pmb{\alpha}} }

 \newcommand{\bbP}{\mathbb{P}}
 \newcommand{\bpi}{\pmb{\pi}}
 \newcommand{\bPi}{\pmb{\Pi}}
\newcommand{\bphi}{\pmb{\phi}}

\usepackage{mathrsfs}
\usepackage{arydshln}

 \usepackage[shortlabels]{enumitem}

\newtheorem{thm}{Theorem}[section]

\theoremstyle{definition}

\usepackage{xr}
\makeatletter
\newcommand*{\addFileDependency}[1]{
\typeout{(#1)}
\@addtofilelist{#1}
\IfFileExists{#1}{}{\typeout{No file #1.}}
}\makeatother

\title{Reinforcement Learning for Respondent-Driven Sampling}
\author{
 Justin Weltz \thanks{Department of Statistical Science, Duke University, justin.weltz@duke.edu}
 \and
 Angela Yoon \thanks{Department of Neurosurgery and Neurology, Duke University, angela.yoon@duke.edu}
 \and
 Yichi Zhang \thanks{Department of Computer Science and Statistics, University of Rhode Island, yichizhang@uri.edu}
 \and
 Alexander Volfovsky \thanks{Department of Statistical Science, Duke University, alexander.volfovsky@duke.edu} 
 \and
 Eric Laber \thanks{Department of Statistical Science, Duke University, eric.laber@duke.edu} 
}
\date{\vspace{-0.5cm}}

\begin{document}

\maketitle

\begin{abstract} 
Respondent-driven sampling (RDS) is widely used to study hidden or hard-to-reach populations 
by incentivizing study participants to recruit their social connections.  
The success and efficiency of RDS can depend critically on the nature of the 
incentives, including their number, value, call to action, etc. Standard
RDS uses an incentive structure that is set {\em a priori} and held
fixed throughout the study. Thus, it does not make use of accumulating information
on which incentives are effective and for whom. 
We propose a reinforcement learning (RL) based adaptive RDS study design in which 
the incentives are tailored over time to maximize cumulative utility during the study.  
We show that these designs are more efficient, cost-effective, and can generate 
new insights into the social structure of hidden populations.  
In addition, we develop methods for valid post-study inference which are non-trivial
due to the adaptive sampling induced by RL as well as the complex dependencies among
subjects due to latent (unobserved) social network structure.  We provide asymptotic
regret bounds and illustrate its finite sample behavior through a suite of 
simulation experiments.  
\end{abstract}

Social network data are at the forefront of healthcare research. 
It is widely acknowledged that a better understanding of community social structure can provide insights into the prevalence of mental illnesses such as depression, transmittable diseases such as HIV, syphilis, and COVID, and other conditions like obesity and type-II diabetes \citep[][]{ma2007trends,tabak2012prediabetes,eisenberg2013social,perry2018egocentric}. 
Currently, available methodologies for studying the underlying social network in a community fall broadly into two
categories: (1) complete community census \citep[such as in the Framingham Heart Study,][]{mahmood2014framingham}, or (2) network sampling algorithms. 
The magnitude of modern healthcare needs makes complete censuses nearly unachievable (and generally impractical) as the relevant sampling frames are usually unknown. Consequently, network sampling techniques have proven invaluable \citep[see][for a recent survey]{raifman2022respondent}. 

Respondent-driven sampling (RDS) is a network sampling algorithm based on  participant referral which is frequently employed for surveillance in public health research \citep{heckathorn1997respondent,wejnert2011respondent,heckathorn2017network}. 
RDS begins with an initial group (usually a convenience sample) of individuals who are given a limited number of coupons and asked to recruit other members of the population of interest by giving them one of the coupons directly or by providing their contact information to the research team. Recipients of these initial coupons redeem them with study researchers; they are then compensated, interviewed, and given new coupons to recruit additional subjects.  This process continues until
a sufficient sample is generated, the study budget or duration is reached, or 
some other stopping criterion is met.

The attributes of the coupons, e.g., their number, value, call to action, expiration date, etc., can 
play a critical role in shaping the evolution of the RDS process.  However, in standard RDS, 
researchers give each participant
an identical coupon allocation which is typically based on convention rather
than characteristics of the population under study 
\citep[][]{goel2010assessing}.  This is inefficient in that coupon return
rates, population coverage, and cost may be poor relative to what could be achieved with a tailored
coupon allocation.  Despite being widely recognized, 
attempts to address this issue have been scarce.  
\citet{lunagomez2018evaluating} proposed calibrating the
number of coupons by simulating RDS under a parametric model and an 
informative prior, but they do not consider adapting coupon allocations
to information accumulating during the study.  \citet{mcfall2021optimizing} ran a two-stage 
RDS study in which an initial RDS study was used to identify 
characteristics of participants likely to be successful recruiters, then 
in a subsequent RDS study, participants with these characteristics were
given extra coupons \citep[see][for an extensive simulation study of such
two-stage designs]{vanorsdale2023adaptive}.  Two-stage designs make use
of interim data but include only a single adaptation step and require 
independent samples collected across two studies, which is often 
impractical.

We use reinforcement learning \citep[RL,][]{sutton} to adapt coupon 
allocations over time within a single RDS study in such a way that 
some study objective is optimized.  Example study objectives 
include maximizing  
information about disease prevalence and reaching as many individuals 
in the target population as possible under budget and time
constraints.  Our approach 
uses a Markov branching process 
\citep[][]{sevast1974controlled,athreya2004branching} 
as a working model to guide coupon allocation during the study.  However, 
for post-study inference, we do not assume that this model is correct. 
Instead, we develop a novel projection confidence set that provides valid finite sample coverage for a large class of functionals of the generative model, even when the true model is not identifiable under RDS.  This is non-trivial
as we are combining two procedures which are notorious for their 
inferential challenges: (i)
RDS which is complicated by differential response probabilities, homophily, and other design effects 
\citep{goel2010assessing,gile20107,tomas2011effect,lu2012sensitivity,roch2018generalized,rohe2019critical}; and (ii) adaptive
RL-based experimentation for which standard bootstrap or normality based inference
procedures can fail to provide nominal coverage 
\citep[][]{deshpande2018accurate,zhang2020inference,bibaut2021post,zhan2023policy,bibaut2024demistifying}.  To address these complications, we develop valid post adaptive-experimentation inference procedures
for $M$-estimators constructed from Markov decision processes
\citep[MDPs,][]{putterman1994markov}, a result which is of independent interest. In simulation experiments, our proposed procedure, which we term RL-RDS, significantly improves efficiency relative
to static and two-stage RDS designs.  Furthermore, the projection confidence sets deliver nominal coverage without being excessively conservative.

The contributions of this work are summarized as follows:
(1) we develop the first principled framework for adaptive RDS using RL; (2) we show that a 
Markov branching process approximation to the RDS process is useful for 
guiding online adaptation; 
(3) we prove regret bounds for our RL algorithm under the Markov branching process model; and (4) we develop valid finite sample inference
methods for adaptive-RDS without requiring identifiability.

In Section~\ref{sec:setup}, we review RDS.  In Section~\ref{sec:rl}, 
we introduce our branching process approximation and RL-based adaptive
coupon selection.  In Section~\ref{sec:regretbounds}, we present 
regret bounds for the branching process approximation.  
In Section~\ref{sec:geninf}, we introduce our inference approach and prove that confidence regions constructed by this procedure achieve nominal coverage in finite samples and concentrate asymptotically.
Lastly, in Section \ref{sec:sim}, we present 
a suite of simulation experiments comparing RL-RDS, static, and two-stage designs.  

\section{Setup and Notation}
\label{sec:setup}
An RDS study recruits participants in epochs or waves.  The initial epoch, 
which comprises individuals $\mathcal{E}_0 \subseteq \mathbb{N}$, is typically collected as a convenience sample. Generally, we use $i$ to index interim study participants organized in epochs and $v$ to index interim study participants ordered by arrival time.
Each participant $i\in\mathcal{E}_0$ receives an allocation of coupons $\bA_i\in\mathcal{A}$ 
to distribute among
their social contacts. These coupons explain the study and encourage participation by offering an incentive.  The data collected in the zeroth 
epoch are thus 
$\mathcal{Z}_0 \triangleq \left\lbrace (T_i, \bX_i, Y_i, \bA_i, C_i)\,:\, i\in
\mathcal{E}_0 \right\rbrace$, where $T_i \in \mathbb{R}_+$ is the arrival 
time of participant $i$, $\bX_i\in\mathcal{X}\subseteq \mathbb{R}^p$ 
are their covariates,
$Y_i \in \mathcal{Y}\subseteq [0,1]$ is an outcome of interest, $\bA_i \in\mathcal{A}$ is their coupon allocation, and $C_i\in\mathbb{R}_+$ is the cost
of their recruitment.  The set of individuals in epoch $j \ge 1$ is  
denoted by 
$\mathcal{E}_j \subseteq
\mathbb{N}\setminus \left \{ \cup_{j'<j} \mathcal{E}_{j'} \right \}$ and their
associated data are $\mathcal{Z}_j \triangleq
\left\lbrace (R_i, T_i,\bX_i, Y_i, \bA_i, C_i)\,:\, i \in \mathcal{E}_j\right\rbrace$, where $R_i \in \mathcal{E}_{j-1}$ denotes the recruiter of individual 
$i \in \mathcal{E}_j$.  Individuals in each epoch are given coupons to recruit the next
epoch until available resources are depleted or another stopping criterion is met; for concreteness, we assume that the study terminates the
first time the total cost exceeds a fixed budget $D$.  
Figure (\ref{fig:GRGS}) illustrates the evolution of an RDS sample.

We note that the covariates $\bX_i$ and the outcome $Y_i$ may (though they
need not) be measured simultaneously.  Even if this is the case, $Y_i$ is
distinguished by its role in defining the adaptive RDS algorithm's objective.
As detailed below, we define an optimal coupon allocation strategy as one that
maximizes the cumulative sum of the outcome across the sample.  
For example, in an RDS
study of people who inject drugs, the outcome may be choosing to be tested for
HIV, and 
participant attributes might include 
demographic information,
PrEP use, history of STI testing, and attitudes and intentions related to risky
behaviors \citep[][]{risser2009relationship}. 
In an RDS study targeting colorectal cancer
screening among non-utilizers of a healthcare 
system, participant attributes might include 
demographic information, family medical
history, previous FIT or colonoscopy screening, 
and risk factors for colorectal cancer. 
The outcome might be a participant's 
screening intention.  Post-selection inference could focus on the
distribution of the covariates, the outcome, or both \citep[][]{cooks2022telehealth}.

Let $J$ denote
the total number of epochs. We use an overline to represent history so
that $\overline{\mathcal{E}}_J \triangleq \bigcup_{j\le J} \mathcal{E}_j$ 
are the complete sets of study participants, 
and $\overline{\mathcal{Z}}_J \triangleq \bigcup_{j\le J} \mathcal{Z}_j$ 
are the data that correspond to the members of $\overline{\mathcal{E}}_J$. 
For simplicity, we assume that subjects are processed 
sequentially; i.e., no two individuals arrive at exactly
the same time. Therefore, 
the data may be equivalently represented as $\mathcal{D}^{\noepochsamplesize} \triangleq \left\lbrace \left (R^v, T^v, \bX^v, Y^v, \bA^v, C^v \right )\right\rbrace_{v=1}^{\kappa}$, in which individuals are indexed by their arrival times.

Under adaptive RDS, when an individual arrives in the study, they are assigned
a coupon allocation based on accumulated information on all prior participants.  
Let $\bH^v \in\mathcal{H}^v$ denote the information available to researchers
at the time the $v^{th}$ subject
is given their coupon allocation $\bA^v \in \mathcal{A}$. 
For each $\bH^v=\bh^v$, let $\psi^v(\bh^v) \subseteq \mathcal{A}$ 
denote the set of allowable coupon combinations given $\bh^v$ 
(e.g., this set may
be restricted to ensure budget constraints are not violated). 
A deterministic allocation strategy $\pmb{\pi} = (\pi^1, \pi^2,\ldots) \in \Pi$ is
a sequence of functions such that $\pi^v:\mathcal{H}^v \rightarrow 
\mathcal{A}$ and $\pi^v(\bh^v) \in \psi^v(\bh^v)$ for all $\bh^v$. The set $\Pi$
can be restricted to
exclude allocation strategies 
that are inherently unfair or harmful or to improve the tractability of finding the optimal allocation strategy.  
We
define an optimal allocation strategy, $\pmb{\pi}^{\mathrm{opt}} \in \Pi$, as maximizing 
the expected value of the cumulative outcome across the RDS sample.

We formalize the optimal allocation strategy
within the potential outcomes framework \citep[][]{rubin,splawa1990application}.
For each $v\geq 2$, let $\bH^{v*}(\overline{\ba}^{v-1})$ denote the
potential history under the sequence of coupon allocations
$\overline{\ba}^{v-1} \triangleq (\ba^1, \ba^2,\ldots, \ba^{v-1})$. For
any deterministic allocation strategy $\pmb{\pi} \in \Pi$, the potential history
at time $v\ge 2$ is 
\begin{equation*}
\bH^{v*}(\pmb{\pi}) \triangleq \sum_{\overline{\ba}^{v-1}}
\bH^{v*}(\overline{\ba}^{v-1})\prod_{k=1}^{v-1}\mathbb{I}\left[
\pi^k\left\lbrace 
\bH^{k*}(\overline{\ba}^{k-1})
\right\rbrace = \ba^{k}
\right],
\end{equation*}
where we have defined $\bH^{1*}(\ba^0) \equiv \bH^1$ and 
$\mathbb{I}(u)$ as an indicator of the event $u$.  
Similarly, for each individual $v$,
let $Y^{v*}(\pmb{\pi})$ be the potential outcome
and
$C^{v*}(\pmb{\pi})$ the potential cost
under $\pmb{\pi}$.  
The
potential number of participants under allocation
strategy $\pmb{\pi}$ is thus 
\begin{equation*}
n^*(\pmb{\pi}) \triangleq \inf
\left\lbrace
g\,:\, \sum_{v=1}^{g}
  C^{v*}(\pmb{\pi})
 > D
\right\rbrace.
\end{equation*}
We assume that there exists $Q \in \mathbb{N}$ such that
$\sup_{\bpi \in \Pi} n^*(\bpi) \le Q$ almost surely;
in application, this 
incurs no loss in generality.  
Let $\Delta^{v*}(\pmb{\pi})$ be an indicator that
the budget has not been exceeded when individual $v$ enters the study under
$\pmb{\pi}$.
For each $n \in \mathbb{N}$,
define the history-value function of $\bpi$ at $\bh^n$ as 
\begin{equation*}
V^n(\bh^n, \pmb{\pi}) \triangleq
 \mathbb{E}\left\lbrace
\sum_{v = n}^Q\Delta^{v*}(\bpi)Y^{v*}(\bpi)
\big|\bH^n=\bh^n
\right\rbrace,
\end{equation*}
where this expectation is over the sampling process.
An optimal
allocation strategy, $\pmb{\pi}^{\mathrm{opt}}$, satisfies
$V^n(\bh^n, \pmb{\pi}^{\mathrm{opt}}) \ge V^n(\bh^n, \pmb{\pi})$ for each $n$, all feasible
$\pmb{\pi}$, and $\bh^n \in \mathcal{H}^n$. 

We identify
$\pmb{\pi}^{\mathrm{opt}}$ in terms of
the data-generating model by making a series of assumptions that are standard for sequential decision problems \citep[][]{tsiatis2019dynamic}.
Define the collection of all potential outcomes as 
\begin{equation*}
\mathcal{W} \triangleq \left\lbrace
\bH^{v*}(\overline{\ba}^{v-1}), 
Y^{v*}(\overline{\ba}^v), 
C^{v*}(\overline{\ba}^v)\,:\,
\ba^v \in \psi^v\left\lbrace
  \bH^{v*}(\overline{\ba}^{v-1})
\right\rbrace
\right\rbrace_{v\ge 1}.
\end{equation*}
  
\begin{assumption}[Strong ignorability]\label{as:1}
For all $v \in \mathbb{N}$, $\mathcal{W}\perp \bA^v | \bH^v$.
\end{assumption}
\begin{assumption}[Consistency]\label{as:2}
For all $v \in \mathbb{N}$, $\bH^v = \bH^{v*}(\overline{\bA}^{v-1})$, $Y^v  = Y^{v*}(\overline{\bA}^v)$,  
and $C^v = C^{*v}(\overline{\bA}^v)$; 
i.e., the observed histories, outcomes, and costs are equal to their 
counterfactual counterparts under the coupon allocations actually given.
\end{assumption}
\begin{assumption}[Positivity]\label{as:3}
For all $v \in \mathbb{N}$, $\bh^v \in \mathcal{H}^v$ and
$\ba \in \psi^v(\bh^v)$, there exists $\epsilon > 0$ such that
$\bbP(\bA^v = \ba|\bH^v = \bh^v) \ge \epsilon$.
\end{assumption}
\noindent

Under these assumptions, one can express the likelihood for the
counterfactual cumulative outcome under any coupon allocation
strategy in terms of the data-generating model 
\citep[][]{robins2004optimal}. However, the curvature of this likelihood may be 
zero or near zero in some regions of the parameter space even under
 simple network models  
\citep[][]{crawford2018hidden,weltzhidden}.  Consequently, estimators 
of parameters indexing the RDS process can be extremely 
volatile and online learning based on such
estimators is similarly volatile, especially in small samples.  To make online learning tractable, one 
must impose additional
structure on the model. This could be done through an informative prior,
though such priors are difficult to posit and have been shown to
exert unacceptably large influence on the operating characteristics of 
resulting estimators \citep[][]{weltzhidden}.  Instead, we posit a simple branching process 
working model that is 
parsimonious and stable when fit to the RDS data as it accumulates, yet
flexible enough to capture salient features of the RDS process for online
learning.   
As noted previously, we do not require this model be correctly specified.

\begin{figure}
    \centering
    \includegraphics[width=\linewidth]{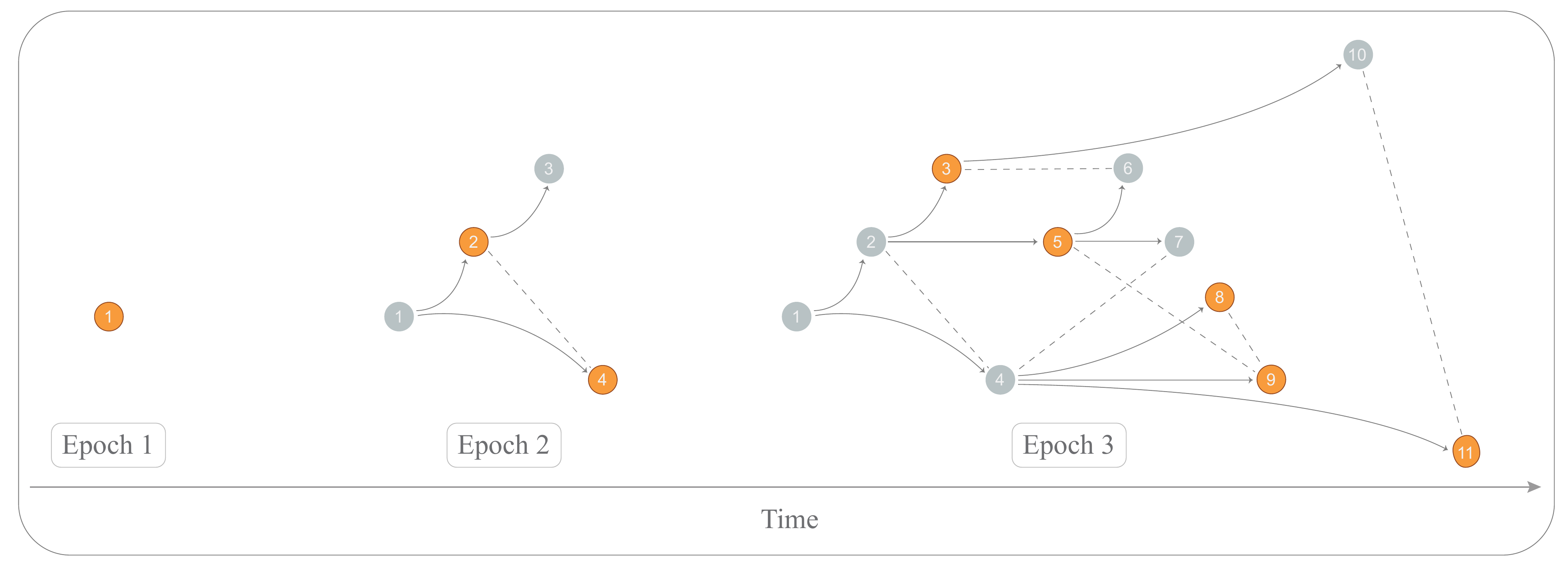}
    \caption{RDS is a complex stochastic process that samples without replacement over a social network. The observed RDS sample is composed of coupon exchanges, illustrated by arrows ($\rightarrow$). The unobserved connections between sample participants are represented as dashed lines (\protect\tikz[baseline]{\protect\draw[line width=0.5mm,densely dashed] (0,.8ex)--++(0.45,0);}). The observed data resembles a branching process.}
    \label{fig:GRGS}
\end{figure}

\section{Reinforcement Learning for RDS}
\label{sec:rl}
Most 
inferential techniques for RDS are based on simplifications 
of the RDS process \citep{heckathorn1997respondent, 
rohe2015network,crawford2018hidden} or graph model 
\citep{gile2015network, gile2011improved} because the network structure is often only weakly 
identifiable from the sample \citep{weltzhidden}.
Consequently, these estimators can be biased and 
unstable. Instead of estimating the graph dynamics directly, 
we posit a partially controllable branching process
as a working model for RDS. 
This model captures relationships between 
recruiters and recruits, and its tractable likelihood 
facilitates asymptotic regret guarantees and robust 
inference techniques.  However, we emphasize that 
our inferential approach in Section~\ref{sec:geninf} does not assume the branching
process model is correct.

Let $M^v$ be the number of recruits of participant $v$,  
$\underline{\bX}^v = (\bX_1^v,\ldots, \bX_{M^v}^v)$ the collection 
of their covariates, $\underline{\bY}^v = (Y_1^v,\ldots, Y_{M^v}^v)$ 
their response statuses, and $\underline{\bT}^v = (T_1^v,\ldots, T_{M^v}^v)$
their response times. We write $\underline{\bX}^0, \underline{\bY}^0, \bA^0$, and $\underline{\bT}^0$ to
denote the information collected in the initial sample,
treating the first epoch as recruits of a fictional recruiter
``zero." 
We posit a branching process working model that factors
as
\begin{align}   
\label{eq:naiveFactoredLH}
\begin{split}
\prod_{j=1}^{J}\prod_{v\in\mathcal{E}_j}
&f(\underline{\bY}^v|\underline{\bX}^{v},
\underline{\bT}^v, M^v, \bA^v, \bH^{v})
f(\underline{\bX}^v| \underline{\bT}^v, M^v, \bA^v, \bH^{v})  \times \\
&f(\underline{\bT}^v| M^v, \bA^v, \bH^{v})
f(M^v|\bA^v, \bH^{v})f(\bA^v|\bH^{v}),
\end{split}
\end{align} 
where we define $f(\bA^0|\bH^0) = f(\bA^0)$ and
assume that $C^v$ is a known function of 
$(\bX^v$, $\bA^v,Y^{v} , \underline{\bT}^v, \underline{\bX}^v, \underline{\bY}^v)$. 
According to this model, the recruits associated with recruiter
$v$ are not affected by the coupon allocation to individual $v'$ (or recruits of $v'$) if $v' > v$.

During the RDS process, our goal is to guide incentive allocation to maximize the expected
cumulative outcome. This outcome is chosen to characterize study effectiveness, e.g., 
an indicator that a participant consents to disease screening. 
Online sequential decision making to optimize a cumulative objective
fits naturally within the framework of reinforcement learning
\citep[RL;][]{sutton}.  
Thompson sampling is an RL algorithm that has been shown to possess
favorable theoretical properties and strong empirical performance on 
a wide range of problems \citep[][]{chapelle2011empirical,agrawal2013further,gopalan2014thompson,gopalan2015thompson,russo2016information,russo2018tutorial,laber2018optimal}.  
However, it has never
been studied as an approach to optimal allocation 
in RDS. This context
is especially challenging due to budget constraints and
complex dependence between study participants.

We consider a class of parametric models for the components of the branching process, 
specified in Equation (\ref{eq:naiveFactoredLH}), 
which we index by $\pmb{\beta} = \left(\pmb{\beta}_y, 
\pmb{\beta}_t,
\pmb{\beta}_{\bx}, \pmb{\beta}_{m}\right) \in \mathcal{B}$, where $\mathcal{B}$ is a bounded, open subset of $\mathbb{R}^k$. The contribution to the joint density from the 
$v$th participant is composed of the following components 
\begin{equation}\label{eq:param}
\begin{array}{ll}
f\left(\underline{\bY}^v|\underline{\bX}^{v}, 
\underline{\bT}^v, M^v, \bA^v, \bH^{v}\right) 
&= f\left ( \underline{\bY}^v|\underline{\bX}^{v}, 
\underline{\bT}^v, M^v, \bA^v, \bH^v; 
\pmb{\beta}_y  \right ),
\\ 
f\left(\underline{\bX}^v| \underline{\bT}^v, M^v, \bA^v, \bH^{v}\right)
&=  
f\left (
\underline{\bX}^v | \underline{\bT}^v, M^v, \bA^v, \bH^v;
\pmb{\beta}_{\bx}\right ),
\\ 
f\left(\underline{\bT}^v| M^v, \bA^v, \bH^{v}\right) 
&= 
f\left (
\underline{\bT}^v| M^v, \bA^v, \bH^v;
\pmb{\beta}_{t}
\right ), \\
f\left(M^v|\bA^v, \bH^{v}\right) 
&= 
f\left ( M^v|\vA^v, \bH^v;
\pmb{\beta}_m\right ).
\end{array}
\end{equation} 

To approximate an optimal strategy after each participant's arrival, we use the branching process model to simulate potential futures starting from the current RDS sample.
Recall that $Q \in \mathbb{N}$ is defined such that $\sup_{\bpi} n^*(\bpi) \le Q$ 
almost surely. 
For any $n \in \mathbb{N}$, $\bh^n \in \mathcal{H}^n$, $\pmb{\pi} \in \bPi$, $\bbeta \in \mathcal{B}$,
and $B \in \mathbb{N}$, we denote by  
$\mathcal{K}^B(\bh^n,\pmb{\pi}; \bbeta)$ a set of $B$ trajectories of length $Q-n$ 
simulated under $\bbeta$.
For $b \in B$, the trajectory begins with the RDS information collected so far, $\bH_{b, \bpi, \bbeta}^n = \bh^n$, and the coupons,
$\bA_{b, \bpi, \bbeta}^{\ell}$, are assigned according to $\pi^\ell(\bH^\ell_{b, \bpi, \bbeta})$
for $\ell = n, n+1,
\ldots, Q$, i.e., 
\begin{equation*}
\mathcal{K}^B(\bh^n,\pmb{\pi}; \bbeta) \triangleq 
\left\lbrace
\left(Y_{b, \bpi, \bbeta}^n, \bH_{b, \bpi, \bbeta}^n, \bA_{b, \bpi, \bbeta}^n, Y_{b, \bpi, \bbeta}^{n+1}, \bH_{b, \bpi, \bbeta}^{n+1}, \bA_{b, \bpi, \bbeta}^{n+1}, \ldots,
\bY_{b, \bpi, \bbeta}^{Q}, \bH_{b, \bpi, \bbeta}^{Q}
\right)
\right\rbrace_{b=1}^B.
\end{equation*}
 These simulated 
trajectories will be
used to approximate an optimal strategy,
$\pmb{\pi}^{\mathrm{opt}}$.
Under Assumptions~\ref{as:1}-\ref{as:2}, the history-value function reduces to
\begin{equation*}
V^n(\bh^n, \pmb{\pi}; \bbeta)  = \mathbb{E}_\bbeta\left\lbrace
\sum^{Q}_{v = n}\Delta^{v}(\bpi)Y^v (\bpi)
\big|\bH^n=\bh^n
\right\rbrace
\end{equation*}
for any $\bbeta \in \mathcal{B}$, $n \in \mathbb{N}$,
history $\bH^n=\bh^n$, and $\bpi \in \bPi$.
Under Assumption~\ref{as:3}, this expectation can be estimated by averaging cumulative reward over the simulated trajectories,
\begin{equation*}
\widehat{V}^n_{B}(\bh^{n}, \bpi; \bbeta) =
\frac{\sum_{b = 1}^B \left (
\sum_{v = n}^{Q} \Delta^v_{b, \bpi, \bbeta} Y^v_{b, \bpi, \bbeta}
\right )}{B}.
\end{equation*}
The estimated optimal policy under $\bbeta$ upon observing
$\bH^n=\bh^n$ is thus
$\widehat{\bpi}^{n}_{B}(\bh^n ; \bbeta) \in \arg \max_{\bpi \in \Pi} \widehat{V}^n_{B}(\bh^n, \bpi; \bbeta)$.

Deciding between allocations that appear to be optimal given 
current estimated parameter values and those that might 
improve parameter estimates and thereby  
lead to better decisions
in the future is a fundamental problem in reinforcement learning. This balance between information gain and optimization
is commonly known as the exploration-exploitation trade-off 
in statistics and computer science \citep[][]{berry1985bandit,sutton,csaba2010,slivkins2019introduction,lattimore2020bandit}.  
We consider a variant of Thompson sampling with clipping \citep{zhang2020inference} to ensure sufficient exploration when determining coupon allocations.

At each decision point, our method samples a coupon allocation for the newest study participant approximately proportional to the probability that it is optimal.
Let $\widehat{\bbeta}^{n}$ be an estimator 
of the parameters indexing the banching process model  based on a sample of size $n \in \mathbb{N}$ \footnote{
In application, a natural choice for this statistic 
(and the one we use for the simulations in Section~\ref{sec:sim}) is the 
maximum likelihood estimator for the parameters indexing the
branching process model; however the theory holds more generally.}, and
$\widehat{P}^{n}$ denote an estimator of the sampling distribution of  $\widehat{\bbeta}^{n}$.
Given state $\bH^n=\bh^n$ and $\bpi \in \bPi$, let
\begin{equation*}
\hat{\rho}^{n}_{B}(\bh^n, \bpi) = \int 
\mathbb{I}\left\lbrace
\bpi \in \arg\max_{\bpi \in \bPi}\widehat{V}^n_{B}(\bh^n, \bpi; \bbeta)
\right\rbrace d\widehat{P}^{n}(\bbeta)
\end{equation*}
be the estimated confidence that
$\bpi$ is the optimal policy. 
If $\hat{\rho}^{n}_{B}(\bh^n, \bpi)$ is high, it means that $\bpi$ maximizes the cumulative reward under the points in $\widehat{P}^{n}(\bbeta)$ that are most likely.
Additionally, for $ \ba^n \in \psi^n(\bh^n)$, let
\begin{equation*}
\widehat{\xi}^{n}_B(\bh^{n}, \ba^{n}) = \int_{\bPi}
\mathbb{I}\left\lbrace
 \pi^n(\bh^n) = \ba^n
\right\rbrace d\hat{\rho}^{n}_{B}(\bh^n, \bpi)
\end{equation*}
be the estimated probability that $\ba^n$ is the optimal action.
Thompson sampling with clipping at level $\epsilon \in (0,1)$
selects action $\bA^n = \ba^n$
with probability proportional to $\min  \left [1-\epsilon, \max \left \{\epsilon,\widehat{\xi}^{n}_{B} (\bh^n, \ba^n) \right \} \right ]$. 
 Lemma~\ref{lem:clip_weights} in 
 the Supplemental Materials
 shows that this clipping constraint satisfies the following assumption, which is needed for the theoretical developments in the next section.
\begin{assumption}[Bounded action selection probabilities]
    \label{as:clip}
   For all $v \in \mathbb{N}$, assume there exists $\rho_{\min}, \rho_{\max} \in \mathbb{R}^+$ such that $\rho_{\min} \leq \sqrt{1/\bbP \left (\bA^v | \bH^v \right)} \leq \rho_{\max}$ with
   probability one. 
\end{assumption}

\noindent We provide justification for the proposed variant of Thompson Sampling 
with the branching process approximation in the following section.

\subsection{Asymptotic Regret Bounds}
\label{sec:regretbounds}
In this section, we show that if the branching model is correctly specified, then 
Thompson Sampling with clipping achieves optimal regret.  This result is both novel
in its own right and justifies the use of Thompson Sampling for coupon 
allocation with our branching
process approximation.

Recall that $\mathcal{B}$ is a bounded, open subset of $\mathbb{R}^k$ for $k \in \mathbb{N}$.
We assume a parametric model indexed by fixed but unknown parameter
$\bbeta^* \in \mathcal{B}$.
We construct a weighted maximum likelihood estimator of $\bbeta^*$  
by extending the  
M-estimation approach described in \citet{zhang2021statistical} to general MDPs. 
\citeauthor{zhang2021statistical} show that the maximum likelihood estimator (MLE) constructed from adaptively sampled data in a linear contextual bandit can converge to a non-normal limit when two arms have the 
same mean reward.   A normal asymptotic limit 
is obtained by re-weighting the likelihood by a function of the propensity score 
in such a way that the asymptotic variance
is stabilized \citep[see also][]{deshpande2018accurate,hadad2021confidence, zhan2023policy,bibaut2024demistifying,zhan2024policy}.  In RL for RDS 
this re-weighting will depend on the coupon assignment distribution.  

We consider an asymptotic regime based on complete generations (epochs) of 
study participants. For $j \in \mathbb{N}$ and $i \in \epochgroup$,
we restrict the information available before the coupon allocation to participant $i$ to $\overline{\mathcal{Z}}_{j-1}$.
Additionally, we do not use coupon allocation information associated with individuals in the same epoch as the current study participant.
Define $r(i)$ to be the rank of participant $i$'s arrival
time; (i.e., if participant $i$ were the third
participant in the study, then $r(i)=3$). 
For $j \in \mathbb{N}$ and $i \in \epochgroup$, 
we assume that the coupon allocation for 
participant $i$ is based on historical information, $\bH_i$, that is an element of sigma field
\begin{equation}
    \label{eq:generationhistory}
    \sigma \left [ \left\lbrace (R^v, T^v, \bX^v, Y^v , \bA^v, C^v)\right\rbrace_{v=1}^{r(i)} \cap \left ( \overline{\mathcal{Z}}_{j-1} \setminus  \{\bA_i\}_{i \in \epochgroup} \right ) \right ].
\end{equation}
Define the complete information associated with the recruits of individual $i$ as \\ $\bD_i \triangleq \left  \{  \underline{\bY}_i, \underline{\bX}_i, \underline{\bT}_i, M_i, \bA_i \right \} \in \mathscr{D}$,
and 
define the field associated with the first $j$ generations as 
$\mathcal{F}_j \triangleq \sigma( \overline{\mathcal{Z}}_j \setminus  \{\bA_i\}_{i \in \mathcal{E}_j} )$.
In addition, under the branching process model, we assume that the individuals in  
generation $\mathcal{Z}_j$ are conditionally independent given $\epochfield$ and $\{\bA_i\}_{i \in \epochgroup}$, 
so that 
\begin{equation*}
    \bbP\left ( \left \{\bD_i \right \}_{i \in \epochgroup } \mid \{\bA_i\}_{i \in \epochgroup}, \epochfield \right ) = \prod_{i \in \epochgroup} \bbP\left ( \bD_i  \mid \bA_i, \epochfield \right ).
\end{equation*}
Define the ``complete generation" likelihood as
\begin{equation}
    \label{eq:compgenerationlike}
   \prod_{j=1}^J \bbP \left ( \mathcal{Z}_j |  \epochfield \right ) = \prod_{j=1}^J \prod_{i \in \epochgroup } \bbP \left ( \left  \{  \underline{\bY}_i, \underline{\bX}_i, \underline{\bT}_i, M_i \right \}  | \bA_i,  \epochfield \right) \bbP \left ( \bA_i |  \bH_i    \right ),
\end{equation}
where the equality follows from Equation (\ref{eq:generationhistory}) under which 
$\bbP \left ( \bA_i | \epochfield, \{\bD_k \}_{k \in \epochgroup, k \neq i} \right ) =   \bbP \left ( \bA_i | \epochfield \right ) = \bbP \left ( \bA_i | \bH_i \right )$.

In our context, the estimated optimal policy need not converge to a fixed strategy because of the 
non-stationarity of the branching process we use to model RDS. Consequently, as anticipated by 
the literature on inference after adaptive sampling, the asymptotic behavior of the 
MLE based on (\ref{eq:compgenerationlike}) is difficult to characterize 
\citep[][]{bibaut2024demistifying}.  However, we are able to obtain a parametric
rate of convergence after a suitable re-weighting of the likelihood.  
Let $\widehat{\bpi}$ be the policy followed by the reinforcement algorithm, and $\widetilde{\bpi}$ be 
the policy that samples from among the available coupon allocations with equal probability; 
i.e., $\bbP_{\widetilde{\pi}}(\pmb{a}|\bH_i=\bh_i) \triangleq 1/\left |\psi_i(\bh_i) \right |$ if
$\pmb{a} \in \psi_i(\bh_i)$ and zero otherwise, 
and 
$\bbP_{\widehat{\pi}}(\pmb{a}|\bH_i=\bh_i) \triangleq \bbP(\bA_i=\pmb{a} | \bH_i=\bh_i)$.
We use 
\begin{equation*}
    W_i \triangleq \sqrt{\frac{\bbP_{\widetilde{\pi}}(\bA_i|\bH_i)}
    {
    \bbP_{\widehat{\pi}}(\bA_i|\bH_i) }
    },
\end{equation*}
as stabilizing weights 
in the log-likelihood.  
Our complete generation M-estimator maximizes the weighted log-likelihood
\begin{align*}
    &\widehat{\bbeta}_{J} \triangleq \arg \max_{\bbeta \in \mathcal{B}} \sum_{j=1}^{J} \sum_{i \in \epochgroup }  W_i l_i(\bbeta), \\
    &\mathrm{where} \ \mathrm{for} \ j \in \mathbb{N} \ \mathrm{and} \ i \in \epochgroup, \   l_i(\bbeta) \triangleq l(\bbeta, \bD_i)  \triangleq \log \left \{ \bbP(\bD_i \mid \mathcal{F}_{j-1}; \bbeta) \right \}.
\end{align*}

Below we state the technical assumptions under which we derive regret bounds 
for the estimated optimal policy.  
We verify that these assumptions hold in the 
working model described by Equation (\ref{eq:rdsmod}) in Section~\ref{sec:examp}.
Define the number of sample participants in generation $j$ as $\kappa_j \triangleq |\mathcal{E}_j|$ and the total number of individuals recruited before generation $j$ as $\overline{\kappa}_j \triangleq |\overline{\mathcal{E}}_{j-1}|$. Additionally, note that the complete data-generating distribution is specified by the branching process parameter, $\bbeta$, and the reinforcement learning policy, $\bpi$. 
The expectation taken with respect to this distribution is denoted $\bbE_{\bbeta, \bpi}$.
\begin{assumption}[Branching asymptotics]\label{as:generation_asymptotics}
For all recruiters $v \in \mathbb{N}$, and their recruits $j \in \{1,2,\ldots, M^v\}$, there exists $\alpha > 0$ such that $T^v_j - T^v \geq \alpha$ with
probability one. In addition, the number of coupons in an allocation is bounded
above by $L_*$.
Lastly, the branching process is super-critical; i.e., there exists a random variable $\branchasymp$ such that for any $\epsilon > 0$ there exists $\delta > 0$ such that
\begin{equation*}
  \lim_{j \to \infty} \kappa_j/m^j \to \branchasymp \ \ \mathrm{a.s.}, \ \mathrm{where} \  \bbP \left ( \branchasymp \geq \delta \right ) \geq 1-\epsilon.
\end{equation*}
\end{assumption}
\begin{assumption}[Growing budget asymptotics]
    \label{as:budg}
   The budget grows over time as follows. Let $S^n$ denote the budget when individual
   $n$ is recruited.  For all $n \in \mathbb{N}$, 
   $0 < S^n - \sum_{v=1}^n C^v < C^*$ with probability one for some
   fixed $C^* \in \mathbb{R}^+$.
\end{assumption}

\begin{assumption}[Identifiability and differentiability]
    \label{as:differentiable}
      The parameter indexing the branching process is identifiable. Additionally, for all $j \in \mathbb{N}$ and $i \in \epochgroup$, the first two derivatives of $l_i(\bbeta)$ with respect to any $\bbeta \in \mathcal{B}$ exist. 
\end{assumption}
\begin{assumption}[Moment conditions]
    \label{as:moments}
    The parameter that indexes the true generative process, $\bbeta^* \in \mathcal{B}$, is in the interior of $\mathcal{B}$. For all $j \in \mathbb{N}$ and $i \in \epochgroup$, the first two moments of $l_i(\bbeta^*)$, $\dot{l}_i(\bbeta^*)$, and $\ddot{l}_i(\bbeta^*)$ conditional on $\epochfield$ are bounded almost surely.
\end{assumption}
\begin{assumption}[Lipschitz]
\label{as:lipchitz}
    There exists a real-valued function $g: \mathscr{D} \to \mathbb{R}$ such that for all $j \in \mathbb{N}$, $i \in \epochgroup$, and $\bbeta, \bbeta' \in \mathcal{B}$,
    \begin{equation*}
        |l_i(\bbeta) - l_i(\bbeta')| \leq g(\bD_i) \|\bbeta - \bbeta'\|_2,
    \end{equation*}
    where $\bbE_{\bbeta^*, \widetilde{\bpi}}\left \{ g(\bD_i)^2 | \epochfield \right \}$ is bounded almost surely.
\end{assumption}
\begin{assumption}[Well-separated maximizer]
    \label{as:wellseperated}
    For any $\epsilon > 0$, there exists $J_0 \in \mathbb{N}$ and $\delta > 0$ such that for all $J \geq J_0$,
    \begin{equation*}
        \inf_{\bbeta \in \mathcal{B}: \|\bbeta - \bbeta^* \|_2 > \epsilon} \left [ \frac{1}{\overline{\kappa}_J} \sum_{j=1}^{J} \sum_{i \in \epochgroup}  \bbE_{\widetilde{\bpi}, \bbeta^*} \left \{ l_i(\bbeta^*) - l_i(\bbeta) | \epochfield \right \} \right ] \geq \delta \ \ \mathrm{a.s.}
    \end{equation*}
\end{assumption}
Assumptions~\ref{as:generation_asymptotics} and \ref{as:budg} specify the asymptotic regime for the branching process.
Assumption~\ref{as:generation_asymptotics} implies that $J \to \infty$ and that the generation sizes are consistent with Galton-Watson processes \citep{athreya2004branching}. 
Assumption~\ref{as:budg} states that the budget grows in such a way that the
remaining budget is always bounded; this avoids trivial solutions in which maximal
resources are allocated at each time point.  
Assumptions \ref{as:differentiable}-\ref{as:wellseperated} ensure that the log-likelihood is well-behaved.
In Assumptions~\ref{as:differentiable} and \ref{as:moments}, we assume that the log-likelihood is identifiable, two times differentiable, and its components (and their derivatives) have finite second moments.. Assumption~\ref{as:lipchitz} limits the complexity of the log-likelihood function so that
the weighted log-likelihood converges uniformly. Assumption~\ref{as:wellseperated} requires that $\bbeta^*$ be a ``well-seperated" point of maximum and is a standard assumption for consistency; e.g., Theorem~5.7 of \cite{van2000asymptotic}. Note that Assumptions~\ref{as:lipchitz} and \ref{as:wellseperated} 
are unnecessary for consistency if the log-likelihood is concave.

Under Assumption~\ref{as:differentiable}, for any policy $\bpi \in \Pi$, $\bbeta\in\mathcal{B}$, $j \in \mathbb{N}$, and $i \in \epochgroup$, it follows that 
$\bbE_{\bbeta, \bpi}  \left \{ \dot{l}_i(\bbeta) \mid \bA_i, \bH_i \right \} = 0$.
Consequently, $\sum_{i \in \epochgroup} W_i\dot{l}_i(\bbeta) $ is a martingale difference with respect to the filtration $\{\mathcal{F}_j\}_{j\geq 1}$,
 \begin{align*}
    \bbE_{\bbeta, \bpi} \left \{\sum_{i \in \epochgroup} W_i\dot{l}_i(\bbeta) \mid \epochfield \right \}
    &= \bbE_{\bbeta, \bpi}  \left [ \sum_{i \in \epochgroup} W_i \bbE_{\bbeta, \bpi} \left \{\dot{l}_i(\bbeta)  \mid \bA_i, \epochfield \right \} \mid \epochfield \right ] = 0.
 \end{align*}
 We define the variance of the weighted score function conditioned on this filtration, 
 \begin{equation*}
     \eta_{J} \triangleq \sum_{j = 1}^J \bbE_{ \bbeta^*,\widehat{\bpi}} \left  \lbrace \sum_{i \in \epochgroup} W_i^2\dot{l}_i(\bbeta^*)\dot{l}_i(\bbeta^*)^\top \mid \epochfield \right \rbrace =  
     \sum_{j = 1}^J
     \bbE_{\bbeta^*, \widetilde{\bpi}} \left  \lbrace \sum_{i \in \epochgroup}  \dot{l}_i(\bbeta^*) \dot{l}_i(\bbeta^*)^\top | \epochfield \right \rbrace,
 \end{equation*} 
 where the equality follows from 
 \begin{align*}
     \bbE_{\bbeta^*, \widehat{\bpi}} \left  \lbrace W_i^2 \dot{l}_i(\bbeta^*)\dot{l}_i(\bbeta^*)^\top| \epochfield \right \rbrace &= \int W_i^2 \dot{l}_i(\bbeta^*) \dot{l}_i(\bbeta^*)^\top  f\left ( \bD_i|\bA_i, \epochfield;
\bbeta^*\right ) \bbP_{\widehat{\pi}}(\bA_i | \epochfield) d\nu\\
     &=\int \frac{\bbP_{\widetilde{\pi}}(\bA_i | \bH_i)}{\bbP_{\widehat{\pi}}(\bA_i | \bH_i) } \dot{l}_i(\bbeta^*) \dot{l}_i(\bbeta^*)^\top  f\left ( \bD_i|\bA_i, \epochfield;
\bbeta^*\right ) \bbP_{\widehat{\pi}}(\bA_i | \bH_i) d\nu \\
     &= \bbE_{\bbeta^*, \widetilde{\bpi}} \left  \lbrace \dot{l}_i(\bbeta^*) \dot{l}_i(\bbeta^*)^\top | \epochfield \right \rbrace.
 \end{align*}
 We see that re-weighting the outer product
 of the score function 
 removes  
 the dependence of the estimating equation's conditional variance on the RL algorithm 
 $\widehat{\bpi}$.
 
While $\eta_J$ is a constant in the setting considered by \cite{zhang2021statistical}, 
here it is a random variable. 
Consequently, characterizing the asymptotic behavior of $\widehat{\bbeta}_J$ requires
new theory for martingale estimating functions constructed from controllable branching
processes.  
For matrix $A$, let $\sigma_{\min}(A)$ be its minimum eigenvalue. 
In addition, let $I_k$ be the $k$ dimensional identity matrix. 
To extend asymptotic theory for $M$-estimators collected under a contextual bandit 
to the more general setting of an MDP, we make use of the following standard assumptions.  
\begin{assumption}[Martingale stabilizing variance]
    \label{as:stabalizedvariance}
    As $J \to \infty$, there exists a sequence of constant (i.e., not random) 
    positive definite matrices 
    $\{\Sigma_{J}\}_{J \geq 1}$ such that
    \begin{equation*}
        \Sigma_{J}^{-1/2} {\eta_{J}} \Sigma_{J}^{-1/2}  \overset{p}{\to} U,
    \end{equation*}
    where $U$ is a (random) positive
    definite matrix and $U \succ 0$ with probability one.
\end{assumption}
\begin{assumption}[Information accumulation]
\label{as:IA}
For some $\delta > 0$, $\lim \inf_{J \rightarrow \infty} \sigma_{\min}\left \{ \eta_{J}/\overline{\kappa}_J \right \} \geq \delta$ 
with probability one.
\end{assumption}
\begin{assumption}[Equicontinuity]
    \label{as:equicontinuity}
     There exists an $\epsilon_{\ddot{l}} > 0$ and a function $f:\mathscr{D} \to \mathbb{R}$ such that for all $j \in \mathbb{N}$, $i \in \epochgroup$ and $0 < \epsilon \leq \epsilon_{\ddot{l}}$, there exists $\delta_{\epsilon}$ such that
    \begin{equation*}
        \sup_{\bbeta\in \mathcal{B}: \|\bbeta - \bbeta^*\|_2 \leq \delta_{\epsilon}} \| \ddot{l}_i(\bbeta) - \ddot{l}_i(\bbeta^*) \|_2 \leq \epsilon  f(\bD_i) \ \ \mathrm{a.s.},
    \end{equation*}
    and  $\bbE_{\bbeta^*, \widetilde{\bpi}} \left \{ f(\bD_i) | \epochfield\right \}$ is bounded almost surely.
\end{assumption}
\noindent 
Assumption~\ref{as:stabalizedvariance} requires that the conditional variance of the log-likelihood components converges. This assumption is the crux of our extention from the contextual bandit framework of \cite{zhang2021statistical} to more general MDPs. It is also standard in asymptotic theory
for martingales, e.g., Theorem~3.2 in \cite{hall2014martingale}.
Assumption~\ref{as:IA} requires that information accumulates over time.
Assumption~\ref{as:equicontinuity} ensures the equicontinuity of the empirical Fisher information.

\begin{theorem}
\label{thm:asympNorm}
For $\delta > 0$, define the event $E_\branchasymp= \left \{ \branchasymp> \delta \right \}$, where $\branchasymp$ is defined in Assumption~\ref{as:generation_asymptotics}. 
Under Assumptions \ref{as:1}-\ref{as:equicontinuity}, 
as $J \to \infty$,
\begin{align}
    \label{eq:consistency}
    & \widehat{\bbeta}_{J} - \bbeta^* = O_p( 1/\sqrt{\overline{\kappa}_J})
\end{align}
on event $E_\branchasymp$.
\end{theorem}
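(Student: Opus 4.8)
The plan is to treat $\widehat{\bbeta}_J$ as an $M$-estimator whose criterion is a martingale array indexed by generations, and to follow the classical two-step route: first establish consistency via a uniform law of large numbers together with the well-separation condition, and then upgrade to the $1/\sqrt{\overline{\kappa}_J}$ rate through a mean-value expansion of the weighted score equation. Throughout I work on the event $E_\branchasymp$, on which Assumption~\ref{as:generation_asymptotics} forces $\kappa_j/m^j \to \branchasymp > \delta$ and hence $\overline{\kappa}_J \to \infty$; this is what guarantees that the information in the sample diverges and that the asymptotics engage. The crucial structural fact, recorded just before the theorem, is that reweighting by $W_i$ converts conditional expectations under the adaptive policy $\widehat{\bpi}$ into expectations under the fixed exploration policy $\widetilde{\bpi}$, thereby decoupling the analysis from the (non-convergent) behavior of the RL algorithm; I use this identity repeatedly.

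\emph{Consistency.} Define the normalized centered criterion $M_J(\bbeta) \triangleq \overline{\kappa}_J^{-1}\sum_{j=1}^J \sum_{i\in\epochgroup} W_i\{l_i(\bbeta) - l_i(\bbeta^*)\}$ and its conditional-mean analogue $\overline{M}_J(\bbeta) \triangleq \overline{\kappa}_J^{-1}\sum_{j=1}^J \sum_{i\in\epochgroup} \bbE_{\bbeta^*,\widetilde{\bpi}}\{l_i(\bbeta)-l_i(\bbeta^*)\mid\epochfield\}$. Since $\bbeta^*$ maximizes the population log-likelihood, $\overline{M}_J(\bbeta)$ is nonpositive and, by Assumption~\ref{as:wellseperated}, bounded above by $-\delta$ uniformly outside any $\epsilon$-ball around $\bbeta^*$. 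Each generation's centered increment, namely $\sum_{i\in\epochgroup} W_i\{l_i(\bbeta)-l_i(\bbeta^*)\}$ minus its conditional mean given $\epochfield$, is a martingale difference; by Assumption~\ref{as:clip} (bounded weights) and the Lipschitz bound of Assumption~\ref{as:lipchitz} its size is controlled by $\|\bbeta-\bbeta^*\|_2\sum_{i\in\epochgroup} g(\bD_i)$ with $\bbE_{\bbeta^*,\widetilde{\bpi}}\{g(\bD_i)^2\mid\epochfield\}$ bounded. A covering argument over the bounded set $\mathcal{B}$ combined with a martingale maximal inequality then yields $\sup_{\bbeta\in\mathcal{B}}|M_J(\bbeta) - \overline{M}_J(\bbeta)| \povr 0$, and the argmax theorem (Theorem~5.7 of \cite{van2000asymptotic}) gives $\widehat{\bbeta}_J \povr \bbeta^*$. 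If the log-likelihood is concave, Assumptions~\ref{as:lipchitz} and~\ref{as:wellseperated} may be bypassed and consistency follows from pointwise convergence of the concave criterion.

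\emph{Rate.} Because $\bbeta^*$ is interior (Assumption~\ref{as:moments}) and $l_i$ is twice differentiable (Assumption~\ref{as:differentiable}), $\widehat{\bbeta}_J$ solves $\sum_{j=1}^J\sum_{i\in\epochgroup} W_i\dot{l}_i(\widehat{\bbeta}_J)=0$. A componentwise mean-value expansion about $\bbeta^*$ gives $S_J + H_J(\widetilde{\bbeta}_J)(\widehat{\bbeta}_J-\bbeta^*)=0$, where $S_J \triangleq \sum_{j,i} W_i\dot{l}_i(\bbeta^*)$, $H_J(\bbeta)\triangleq \sum_{j,i} W_i\ddot{l}_i(\bbeta)$, and $\widetilde{\bbeta}_J$ lies on the segment joining $\widehat{\bbeta}_J$ and $\bbeta^*$. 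For the score, $S_J$ is the martingale whose predictable variation is $\eta_J$; bounded weights and the bounded conditional second moments of $\dot{l}_i(\bbeta^*)$ (Assumption~\ref{as:moments}) give $\eta_J \preceq C\overline{\kappa}_J I_k$, so a martingale maximal inequality yields $\|S_J\|_2 = O_p(\sqrt{\overline{\kappa}_J})$. For the Hessian, the martingale law of large numbers gives $\overline{\kappa}_J^{-1}H_J(\bbeta^*) - \overline{\kappa}_J^{-1}\sum_{j,i}\bbE_{\bbeta^*,\widetilde{\bpi}}\{\ddot{l}_i(\bbeta^*)\mid\epochfield\} \povr 0$, and Assumptions~\ref{as:IA} and~\ref{as:stabalizedvariance} certify that the limiting (negative-definite) Fisher information is bounded away from singularity, so $\overline{\kappa}_J^{-1}H_J(\bbeta^*)$ is eventually invertible with $\|(\overline{\kappa}_J^{-1}H_J(\bbeta^*))^{-1}\|_2 = O_p(1)$. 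The equicontinuity of Assumption~\ref{as:equicontinuity}, together with consistency $\widetilde{\bbeta}_J \povr \bbeta^*$, transfers this to $H_J(\widetilde{\bbeta}_J)$. Solving the expansion gives $\widehat{\bbeta}_J-\bbeta^* = -\{\overline{\kappa}_J^{-1}H_J(\widetilde{\bbeta}_J)\}^{-1}\overline{\kappa}_J^{-1}S_J = O_p(1)\cdot O_p(1/\sqrt{\overline{\kappa}_J}) = O_p(1/\sqrt{\overline{\kappa}_J})$, as claimed.

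\emph{Main obstacle.} The genuinely new difficulty, relative to the contextual-bandit analysis of \cite{zhang2021statistical}, is that the criterion is a martingale array whose per-generation block size $\kappa_j$ is itself a random, $\epochfield$-measurable count that grows geometrically under the supercritical branching process. I expect the hardest step to be the uniform law of large numbers in this setting: one must control the supremum over $\bbeta\in\mathcal{B}$ of a martingale whose increments aggregate a random number of summands. The reweighting by $W_i$ (licensed by the bounded-propensity Assumption~\ref{as:clip}) is precisely what renders the conditional first and second moments policy-free, reducing the problem to martingale concentration against the fixed reference measure $\widetilde{\bpi}$; the Lipschitz envelope of Assumption~\ref{as:lipchitz} supplies the entropy control, while Assumption~\ref{as:IA} supplies the matching lower bound on information so that normalization by $\overline{\kappa}_J$ is exactly the right scale.
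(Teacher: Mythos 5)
Your proposal is correct and follows essentially the same route as the paper: consistency via a uniform martingale law of large numbers (a bracketing/covering argument over the bounded set $\mathcal{B}$, with the $W_i$-reweighting converting conditional moments under $\widehat{\bpi}$ into moments under $\widetilde{\bpi}$) combined with Assumption~\ref{as:wellseperated}, followed by a Taylor expansion of the weighted score, a martingale bound on $\dot{\mathcal{M}}_J(\bbeta^*)$, and invertibility of the normalized Hessian obtained from Assumptions~\ref{as:stabalizedvariance}--\ref{as:equicontinuity} together with the equicontinuity transfer to the intermediate point. The one place the routes differ is minor: the paper normalizes the score by the deterministic $\Sigma_J^{-1/2}$ of Assumption~\ref{as:stabalizedvariance}, applies the martingale convergence theorem, and only then converts to the $\overline{\kappa}_J$ scale via Assumption~\ref{as:IA}, whereas you bound $\|S_J\|_2 = O_p(\sqrt{\overline{\kappa}_J})$ directly from $\eta_J \preceq C\,\overline{\kappa}_J I_k$ and a maximal inequality --- a shortcut that does work, but only because on $E_\branchasymp$ the random normalizer $\overline{\kappa}_J$ is comparable to the deterministic scale $m^J$, a point you should make explicit when passing from a bound on $\mathbb{E}\|S_J\|_2^2$ to a bound in terms of the random $\overline{\kappa}_J$.
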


We consider the cumulative reward of policies implemented over study participants as they arrive at the study, represented by data $ \mathcal{D}^{\kappa} \triangleq \left\lbrace (R^v, T^v, \bX^v, Y^v , \bA^v, C^v)\right\rbrace_{v=1}^{\kappa}$ for $\kappa \in \mathbb{N}$. Consequently, we require a Lipchitz condition for the log-likelihood of this arrival process to establish asymptotic regret bounds. For $v \in \mathbb{N}$, define the log-likelihood components of the arrival process as
\begin{align*}
    &q^v(\bbeta) \triangleq q\left \{ \bbeta,  (R^v, T^v, \bX^v, Y^v , \bA^v) \right \} \triangleq \log \left [ \bbP \left \{ (R^v, T^v, \bX^v, Y^v , \bA^v)| \bH^{v-1},\bA^{v-1}, \bbeta \right \} \right ].
\end{align*}
\begin{assumption}[Lipschitz 2]
\label{as:lipchitz2}
    There exists a real-valued function $e$ such that for all $v \in \mathbb{N}$, and $\bbeta, \bbeta' \in \mathcal{B}$,
    \begin{equation*}
        |q^v(\bbeta) - q^v(\bbeta')| \leq e\left \{  (R^v, T^v, \bX^v, Y^v , \bA^v) \right \}  \|\bbeta - \bbeta'\|_2,
    \end{equation*}
    where for any $\bpi \in \Pi$, $\bbE_{\bbeta^*, \bpi} \left [ e\left \{ (R^v, T^v, \bX^v, Y^v , \bA^v)\right \}^2 \right ]$ is bounded.
\end{assumption}
\noindent Define $t_{\min}\geq0$ as the minimum processing time for an individual and $t_{\max}>0$ as the coupon expiration time so that $t_{\min} \leq T_j^v - T^v \leq t_{\max}$ for any $v \in \mathbb{N}$ and $j \in \left \{1,2,\ldots, M^v \right \}$. For a sample of size $\kappa$, define $J_{\noepochsamplesize}$ as the last complete epoch induced by the expiration of coupons, $J_{\noepochsamplesize} \triangleq \max\{j: \forall i \in \mathcal{E}_j, \ T_i + t_{\max} < T^{\noepochsamplesize} \}$.
Additionally, define $n_J \triangleq \min \left \{ \kappa : J_{\kappa} = J \right \}$ as the index of the first individual recruited after all the coupons associated with members of epoch $J$ have expired. 
\begin{theorem}\label{thm:budg}
For any $\bhistory \in \bHistory$,  define 
    $\widehat{\bpi}_{J} \in \arg\max_{\bpi \in \Pi} \branchingvalue(\bhistory, \bpi; \widehat{\bbeta}_{J})$. For $\delta > 0$, define the event $E_\branchasymp= \left \{ \branchasymp> \delta \right \}$, where $\branchasymp$ is defined in Assumption~\ref{as:generation_asymptotics}. 
Under Assumptions \ref{as:1}-\ref{as:lipchitz2} and for a fixed $\bhistory \in \bHistory$, it follows that
\begin{equation*}
    \branchingvalue(\bhistory,\bpi^{\mathrm{opt}}; \bbeta^*) - \branchingvalue(\bhistory, \widehat{\bpi}_{J}; \bbeta^*) = O_p\left (  1/\sqrt{\overline{\kappa}_J} \right )
\end{equation*}
 as $J \to \infty$ on event $E_\branchasymp$.
\end{theorem}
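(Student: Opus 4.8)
The plan is to reduce the regret to the parameter-estimation error already controlled by Theorem~\ref{thm:asympNorm}, via the standard plug-in decomposition together with a uniform-in-policy Lipschitz bound on the value function in $\bbeta$. Throughout, fix $\bhistory \in \bHistory$ and abbreviate $V(\bpi;\bbeta) \triangleq \branchingvalue(\bhistory,\bpi;\bbeta)$; recall that $\widehat{\bpi}_{J}$ maximizes $V(\cdot;\widehat{\bbeta}_{J})$ while, under correct specification of the branching model, the generative law is indexed by $\bbeta^*$, so that $\bpi^{\mathrm{opt}}$ maximizes $V(\cdot;\bbeta^*)$.

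First I would add and subtract the value of each policy under the opposite parameter:
\begin{equation*}
V(\bpi^{\mathrm{opt}};\bbeta^*) - V(\widehat{\bpi}_{J};\bbeta^*)
= \underbrace{\left\{ V(\bpi^{\mathrm{opt}};\bbeta^*) - V(\bpi^{\mathrm{opt}};\widehat{\bbeta}_{J})\right\}}_{(\mathrm{I})}
+ \underbrace{\left\{ V(\bpi^{\mathrm{opt}};\widehat{\bbeta}_{J}) - V(\widehat{\bpi}_{J};\widehat{\bbeta}_{J})\right\}}_{(\mathrm{II})}
+ \underbrace{\left\{ V(\widehat{\bpi}_{J};\widehat{\bbeta}_{J}) - V(\widehat{\bpi}_{J};\bbeta^*)\right\}}_{(\mathrm{III})}.
\end{equation*}
Term $(\mathrm{II})$ is nonpositive because $\widehat{\bpi}_{J}$ maximizes $V(\cdot;\widehat{\bbeta}_{J})$, and this holds regardless of the randomness in $\widehat{\bpi}_{J}$. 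Each of $(\mathrm{I})$ and $(\mathrm{III})$ is bounded in absolute value by $\sup_{\bpi \in \Pi} |V(\bpi;\bbeta^*) - V(\bpi;\widehat{\bbeta}_{J})|$, so that the (nonnegative) regret obeys
\begin{equation*}
0 \le V(\bpi^{\mathrm{opt}};\bbeta^*) - V(\widehat{\bpi}_{J};\bbeta^*) \le 2\sup_{\bpi \in \Pi}\left| V(\bpi;\bbeta^*) - V(\bpi;\widehat{\bbeta}_{J})\right|.
\end{equation*}

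The crux is to show that $V(\bpi;\cdot)$ is Lipschitz in $\bbeta$, uniformly over $\bpi \in \Pi$. Writing $V(\bpi;\bbeta) = \bbE_{\bbeta,\bpi}\{ r(\tau)\}$ with $r(\tau) = \sum_{v=n_J}^{Q}\Delta^v(\bpi)Y^v(\bpi)$ the cumulative reward along a trajectory $\tau$, note that $0\le r(\tau)\le Q$ since $Y^v\in[0,1]$ and $n^*(\bpi)\le Q$ almost surely. The trajectory law factors according to the arrival-process likelihood, so its log-density is $\ell_\bbeta(\tau) = \sum_{v=n_J}^{Q} q^v(\bbeta)$; by smoothness of the parametric families the score exists, and Assumption~\ref{as:lipchitz2} forces $\|\dot\ell_\bbeta(\tau)\|_2 \le E(\tau) \triangleq \sum_{v=n_J}^{Q} e\{(R^v,T^v,\bX^v,Y^v,\bA^v)\}$ (the Lipschitz constant dominates each gradient), with $\bbE_{\bbeta,\bpi}\{E(\tau)^2\} \le Q\sum_{v} \bbE_{\bbeta,\bpi}[ e\{\cdot\}^2]$. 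Parameterizing $\bbeta_s = \bbeta' + s(\bbeta - \bbeta')$ and differentiating under the integral sign,
\begin{equation*}
V(\bpi;\bbeta) - V(\bpi;\bbeta') = \int_0^1 \bbE_{\bbeta_s,\bpi}\left\{ r(\tau)\,\dot\ell_{\bbeta_s}(\tau)^\top\right\}(\bbeta - \bbeta')\, ds,
\end{equation*}
whence Cauchy--Schwarz, $r\le Q$, and the score bound yield $|V(\bpi;\bbeta) - V(\bpi;\bbeta')| \le K\|\bbeta-\bbeta'\|_2$, with $K \triangleq Q\sup_{\bbeta\in\mathcal{B}_0}\sqrt{\bbE_{\bbeta,\bpi}\{E(\tau)^2\}}$ over a fixed closed ball $\mathcal{B}_0 \subset \mathcal{B}$ about the interior point $\bbeta^*$ (Assumption~\ref{as:moments}); $K$ is finite once the envelope moment is controlled uniformly on $\mathcal{B}_0$, and does not depend on $\bpi$.

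Combining the two displays at $\bbeta'=\widehat{\bbeta}_{J}$ gives $V(\bpi^{\mathrm{opt}};\bbeta^*) - V(\widehat{\bpi}_{J};\bbeta^*) \le 2K\|\widehat{\bbeta}_{J} - \bbeta^*\|_2$ on the event $E_\branchasymp \cap \{\widehat{\bbeta}_{J}\in\mathcal{B}_0\}$, whose probability tends to $\bbP(E_\branchasymp)$ by the consistency in Theorem~\ref{thm:asympNorm}; on that event $\|\widehat{\bbeta}_{J}-\bbeta^*\|_2 = O_p(1/\sqrt{\overline{\kappa}_J})$, which delivers the claim. I expect the main obstacle to be the uniform-in-policy value-Lipschitz bound: justifying the interchange of differentiation and integration through the square-integrable envelope $E(\tau)$, confirming that $e\{\cdot\}$ dominates the trajectory score uniformly in $\bpi$, and—since Assumption~\ref{as:lipchitz2} supplies the envelope moment only at $\bbeta^*$—extending that moment control from $\bbeta^*$ to the neighborhood $\mathcal{B}_0$ so that $K$ is finite after taking the supremum over $s\in[0,1]$. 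The three-term decomposition and the final appeal to consistency are then routine.
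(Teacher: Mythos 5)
Your reduction is exactly the paper's: the three-term decomposition with the middle term nonpositive by definition of $\widehat{\bpi}_{J}$, yielding $0 \le \branchingvalue(\bhistory,\bpi^{\mathrm{opt}};\bbeta^*) - \branchingvalue(\bhistory,\widehat{\bpi}_{J};\bbeta^*) \le 2\sup_{\bpi\in\Pi}|\branchingvalue(\bhistory,\bpi;\bbeta^*)-\branchingvalue(\bhistory,\bpi;\widehat{\bbeta}_{J})|$, followed by an appeal to Theorem~\ref{thm:asympNorm}. Where you diverge is the middle step. You bound the value difference via the score identity $V(\bpi;\bbeta)-V(\bpi;\bbeta')=\int_0^1 \bbE_{\bbeta_s,\bpi}\{r(\tau)\dot\ell_{\bbeta_s}(\tau)^\top\}(\bbeta-\bbeta')\,ds$, which requires a second moment of the envelope $E(\tau)$ under every intermediate law $\bbeta_s$ on the segment; as you yourself note, Assumption~\ref{as:lipchitz2} supplies that moment only under $\bbeta^*$, so your constant $K$ is not finite under the stated hypotheses without an additional uniform-in-$\mathcal{B}_0$ moment condition. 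The paper's proof avoids this by never changing the base measure: it writes the value difference as $\int (\sum_k \Delta^k y^k)\bigl[1-\exp\{\ell_{S^*}(\widehat{\bbeta}_{J})-\ell_{S^*}(\bbeta^*)\}\bigr]$ integrated against the trajectory density at $\bbeta^*$, bounds the log-likelihood-ratio pointwise by $\bigl[\sum_v e^v\bigr]\,\|\widehat{\bbeta}_{J}-\bbeta^*\|_2$ using Assumption~\ref{as:lipchitz2}, Taylor-expands $1-e^{x}=-x+o_p(1/\sqrt{\overline{\kappa}_J})$, and then only needs $\bbE_{\bbeta^*,\widetilde{\bpi}}(\sum_v|e^v|)=O(1)$ together with the trivial bound $|\sum_k \Delta^k Y^k|\le S^*$. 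So the obstacle you flag is real, and the likelihood-ratio formulation (rather than differentiation under the integral) is precisely the device that removes it; with that substitution your argument coincides with the paper's.
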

\noindent The proofs of Theorems~\ref{thm:asympNorm} and \ref{thm:budg} are in 
the Supplemental Materials. Lemma~\ref{lem:generation_asymptotics_2} in the Supplementary Materials ensures that $\lim_{\kappa \to \infty} J_{\noepochsamplesize} \to \infty$ a.s. under Assumption~\ref{as:generation_asymptotics}, implying that inference based on the last complete epoch will achieve the asymptotic guaranties of Theorems~\ref{thm:asympNorm} and \ref{thm:budg}. Note that $\lim_{\kappa \to \infty} J_{\noepochsamplesize} \to \infty$ a.s. implies $n_J < \infty$ for all $J \in \mathbb{N}$.

\subsection{A Branching Process Example}
\label{sec:examp}

In this section, we provide an example of a working branching process model that might
be used in the context of RL-RDS and illustrate how the assumptions used in 
Theorems (\ref{thm:asympNorm}) and (\ref{thm:budg}) can be verified for this
model.  
Define $T_{i,l}$, $\bX_{i,l}$, $Y_{i,l}$, and $A_{i,l}$ for $l=1,\ldots, M_i$ as the arrival times, covariates, rewards, and coupon types associated with the potential recruits of recruiter $i$ respectively. 
Define $\mathbb{A} = \{-1, 1\}$ as the set of possible coupon types (these might reflect different calls to
action for example), and $\bZ_{i,l} = \left ( 1, \bX_{i,l}, \bX_{i,l} \bbI \left (A_{i,l} = -1 \right ) \right )$.

We consider a working model of the form:  
\begin{eqnarray}\label{eq:rdsmod}
\bbP(M_{i} = m_{i} | \bH_{i}, \bA_i) &=& \frac{
\lambda^{m_i}/m_i!
}{
  \sum_{\ell=0}^{|\mathbf{A}_i|} (\lambda^\ell/\ell!)  
} , m_i=0,\ldots, |\mathbf{A}_i|,   \nonumber \\[3pt]
T_{i,l} - T_i | \bH_i, \bA_i, M_i &\sim& 
\mathrm{Truncated \ Exponential}(\zeta, t_{\min}, t_{\max}),\, l=1,\ldots, M_i,  \nonumber \\[3pt] 
\bX_{i,l} | \bH_i, U_{i,l}, \bA_i, M_i &\sim& 
\mathrm{Normal}\left (\bphi_{a} + G_{a}\bX_i,\Sigma_{a} \right ),\,
l=1,\ldots, M_i, a = A_{i,l},  \nonumber \\[3pt]
Y_{i,l} | \bH_i, \bX_i, U_{i,l}, \bA_i, M_i &\sim& \mathrm{ Bernoulli} \left \{ \frac{1}{1 + \exp \left (-{\bZ_{i,l}}^\top \bbeta_y \right )} \right \} ,\, 
l=1,\ldots, M_i,
\end{eqnarray}
where $\lambda, \zeta \in \mathbb{R}$, $\{ G_{a} \}_{a \in \mathbb{A}} \in \mathbb{R}^{p \times p}$, 
$\{ \bphi_{a} \}_{a \in \mathbb{A}} \in \mathbb{R}^{p \times 1}$, $\bbeta_y \in \mathbb{R}^{\ell}$, 
and $\{\Sigma_{a} \}_{a \in \mathbb{A}} \in \mathbb{R}^{p\times p}$. 
In the context of a randomized experiment, 
Assumption 1 (strong ignorability), Assumption 3 (positivity),
and Assumption 6 (budget growth) can be ensured to hold by design. Thus, we assume them 
as a matter of course.  Assumption 2 (consistency) is also assumed (this is standard 
as consistency is sometimes considered as an axiom rather than an assumption).  
We evaluate the assumptions that are sufficient for the convergence and regret results to hold for our working model in the Supplementary Materials. We verify them under the model conditions:   
\begin{itemize}
    \item[(C1)] the same number of coupons, defined as $L \in \mathbb{N}$, are given to each participant, $t_{\min} > 0$, and 
    $\lambda$ is such that
    \begin{equation*}
      \lambda \in  \left \{ \lambda: \sum_{m_i = 1}^L m_i \frac{ \lambda^{m_i}/m_i!}{
      \sum_{\ell=0}^{L} (\lambda^\ell/\ell!)  } > 1 \right \}; 
    \end{equation*}
    \item[(C2)] for each $i \in \mathbb{N}$, $A_{i,l}$ is constant across $l \in \left \{1,2,\ldots, M_{i} \right \}$;
    \item[(C3)] the set of coupon allocations available for each participant is constant throughout the study;
    \item[(C4)] $\mathcal{X}$ is compact; 
    \item[(C5)] $\mathcal{B} \subseteq \mathbb{R}^q$ is convex and compact, and $\bbeta^* \in \mathcal{B}$ is an interior point of $\mathcal{B}$;
    \item[(C6)] 
   $\frac{1}{|\mathbb{A}|} \sum_{a \in \mathbb{A}} \log \| G_{a} \|_2 < 0,$
    recalling that $\|\cdot \|$ is the spectral norm.
\end{itemize}
Under Conditions (C1)-(C6), we show that there exists a random variable $\branchasymp$ s.t.
\begin{equation}
    \label{eq:branching_converge_2}
    \branchasymp = \lim_{j \to \infty } m^{-j} \kappa_j \quad \mathrm{a.s.} , \quad  \bbP \left ( \branchasymp \geq \delta \right ) \geq \epsilon.
\end{equation}
where $\bbE \left ( \branchasymp \right ) = 1$ \citep{athreya2004branching}. Consequently, event $E_\branchasymp$ is well defined. Condition (C6) ensures that the auto-regressive covariate process is not explosive.

\begin{theorem}\label{thm:rlbranching}
Assume that Conditions (C1)-(C6) hold as well as Assumptions 1-3 and 6. Then, under the
working model given above, the conclusions of Theorems (\ref{thm:asympNorm}) and
(\ref{thm:budg}) hold. 
\end{theorem}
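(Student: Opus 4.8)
Theorem~\ref{thm:rlbranching} is a verification statement, so the plan is to confirm that the working model~(\ref{eq:rdsmod}) satisfies each hypothesis of Theorems~\ref{thm:asympNorm} and \ref{thm:budg} that is not already granted. Assumptions~\ref{as:1}, \ref{as:2}, \ref{as:3}, and \ref{as:budg} are assumed in the theorem, and Assumption~\ref{as:clip} holds for the clipped Thompson sampler by construction (Lemma~\ref{lem:clip_weights}), independently of the model; it therefore remains to check Assumptions~\ref{as:generation_asymptotics}, \ref{as:differentiable}, \ref{as:moments}, \ref{as:lipchitz}, \ref{as:wellseperated}, \ref{as:stabalizedvariance}, \ref{as:IA}, \ref{as:equicontinuity}, and \ref{as:lipchitz2}. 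Every verification rests on two structural features: uniform boundedness of the model's observables and smoothness of its four exponential-family components. Under (C1)--(C2) the offspring count obeys $M_i \le L$ with coupon type constant across a recruiter's recruits; under (C1) the inter-arrival times satisfy $t_{\min} \le T_{i,l}-T_i \le t_{\max}$ with $t_{\min}>0$; each $Y_{i,l}$ is Bernoulli; and by (C4)--(C5) the covariates lie in the compact set $\mathcal{X}$ while $\bbeta$ ranges over the compact, convex $\mathcal{B}$ with interior point $\bbeta^*$.

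The regularity conditions then follow from standard arguments. The truncated-Poisson, truncated-exponential, Gaussian-regression, and logistic factors of $l_i(\bbeta)$ are smooth in $\bbeta$ on $\mathcal{B}$, which gives the differentiability half of Assumption~\ref{as:differentiable}; their first- and second-order derivatives are continuous in $(\bbeta, \bD_i)$ on a compact set and hence bounded, yielding the conditional moment bounds of Assumption~\ref{as:moments}. Taking $g$, $e$, and $f$ to be the suprema over $\mathcal{B}$ of the norms of $\dot l_i$, $\dot q^v$, and the third derivative of $l_i$, respectively, each is bounded and therefore has bounded (conditional) second moment, which verifies the Lipschitz, arrival-process-Lipschitz, and equicontinuity conditions (Assumptions~\ref{as:lipchitz}, \ref{as:lipchitz2}, and \ref{as:equicontinuity}); here the uniform bounds hold for every policy because they do not depend on the coupon-assignment law. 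For the identifiability half of Assumption~\ref{as:differentiable}, each component is a minimal exponential family and the nondegenerate Gaussian covariate law prevents $\bX_{i,l}$ (and hence $\bZ_{i,l}$) from concentrating on a lower-dimensional affine set, so the parameters are recovered from the conditional law. Assumption~\ref{as:wellseperated} then holds because the conditional Kullback--Leibler criterion $\bbE_{\widetilde{\bpi},\bbeta^*}\{l_i(\bbeta^*)-l_i(\bbeta)\mid\epochfield\}$ is continuous in $\bbeta$, strictly positive off $\bbeta^*$ by identifiability, and, being a continuous function of the recruiter covariate on the compact $\mathcal{X}$, is bounded below uniformly over $\{\|\bbeta-\bbeta^*\|>\epsilon\}\cap\mathcal{B}$; alternatively, reparameterizing each factor in its natural parameter makes $l_i$ concave and invokes the remark following Assumption~\ref{as:wellseperated}.

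For Assumption~\ref{as:generation_asymptotics}, set $\alpha=t_{\min}>0$ and $L_*=L$; supercriticality is precisely (C1), which states that the offspring mean $m=\sum_{m_i=1}^{L} m_i (\lambda^{m_i}/m_i!)/\sum_{\ell=0}^{L}(\lambda^\ell/\ell!)$ exceeds one. The offspring law is bounded, hence has finite variance, so the Kesten--Stigum theorem yields $\kappa_j/m^j\to\branchasymp$ almost surely with $\bbE(\branchasymp)=1$ and $\branchasymp>0$ on non-extinction, which is exactly (\ref{eq:branching_converge_2}) and makes $E_\branchasymp$ well defined.

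The main obstacle is Assumptions~\ref{as:stabalizedvariance} and \ref{as:IA}, which require the re-weighted conditional information $\eta_J$ to stabilize after deterministic normalization. Writing $\bbE_{\widetilde{\bpi},\bbeta^*}\{\dot l_i(\bbeta^*)\dot l_i(\bbeta^*)^\top\mid\epochfield\}=h(\bX_i)$, the Gaussian covariate recursion $\bX_{i,l}=\bphi_a+G_a\bX_i+\bepsilon_{i,l}$ makes the covariates a matrix-valued autoregression indexed by the genealogical tree, and (C6), $|\mathbb{A}|^{-1}\sum_{a}\log\|G_a\|_2<0$, forces the top Lyapunov exponent of the lineage products of the $G_a$ to be negative. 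The plan is to exploit this contraction to prove a law of large numbers along the tree: the conditional covariate law forgets its ancestral initialization geometrically and converges to a stationary distribution, so that $\kappa_{j-1}^{-1}\sum_{i\in\epochgroup}h(\bX_i)\to\overline{I}$ for a fixed positive-definite $\overline{I}$. Combined with $\kappa_{j-1}\sim\branchasymp\, m^{j-1}$ this gives $\eta_J=\branchasymp\{(m^J-1)/(m-1)\}\,\overline{I}\,\{1+o_p(1)\}$ on $E_\branchasymp$, so choosing the deterministic normalizer $\Sigma_J=\{(m^J-1)/(m-1)\}\,\overline{I}$ yields $\Sigma_J^{-1/2}\eta_J\Sigma_J^{-1/2}\overset{p}{\to}U=\branchasymp\, I_k$, which is positive definite on $E_\branchasymp$ and establishes Assumption~\ref{as:stabalizedvariance}; since $\overline{\kappa}_J\sim\branchasymp\{(m^J-1)/(m-1)\}$ as well, $\eta_J/\overline{\kappa}_J\to\overline{I}\succ0$, giving Assumption~\ref{as:IA}. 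The delicate step is this tree-indexed ergodic theorem: one must control the dependence among siblings sharing a common ancestor and prove a strong law for additive functionals of a supercritical branching population whose types evolve by a contracting random linear recursion, which is exactly where the multiplicative-ergodicity content of (C6) enters.
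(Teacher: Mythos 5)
Your outline takes the same route as the paper: verify the routine regularity conditions from compactness of $\mathcal{X}\times\mathcal{B}$ and smoothness of the four exponential-family factors, get supercriticality and $\kappa_j/m^j\to\branchasymp$ from (C1) via Kesten--Stigum, and reduce the whole problem to a law of large numbers for additive functionals of the tree-indexed covariate process, with (C6) supplying the contraction that makes the covariate chain ergodic. That is exactly the paper's architecture, and your handling of Assumptions~\ref{as:differentiable}--\ref{as:lipchitz}, \ref{as:equicontinuity}, and \ref{as:lipchitz2} matches the paper's (extreme value theorem on a compact data-by-parameter space, mean value theorem for the Lipschitz constants).

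There are, however, two places where you state what must be proved rather than proving it, and these are precisely the nontrivial content of the theorem. First, the tree-indexed ergodic theorem you defer to at the end is the paper's main new ingredient: it is established by introducing the auxiliary ``spine'' chain $\bK_j$ with kernel $Q=(L/m)P_1^*$, identifying $Q$ with the one-step covariate kernel $P_1$ (a Markov-switching autoregression whose ergodicity follows from (C6)), and then controlling the second moment of $m^{-j}\sum_{i\in\bbG_j^*}f(\bX_i)$ by decomposing the cross terms according to the most recent common ancestor and summing the resulting geometric series. Without some version of that sibling-correlation argument the claim $\kappa_{j-1}^{-1}\sum_{i\in\epochgroup}h(\bX_i)\to\overline{I}$ is unsupported, since members of a generation are far from independent. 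Second, you assert that the limit $\overline{I}$ is positive definite, but this requires work: the paper proves it componentwise via block-matrix eigenvalue bounds (Schur-complement lower bounds) together with a lower bound $\mathrm{Var}_\mu(\bK)\succeq|\mathcal{A}|^{-1}\sum_{\ba}\Sigma_{\ba}$ on the stationary covariance, and for the reward model a separate argument that the interaction design $\bZ$ has nondegenerate second moment. Finally, your primary argument for Assumption~\ref{as:wellseperated} is shaky: the conditional Kullback--Leibler divergence at a \emph{fixed} recruiter covariate $\bx$ can vanish for some $\bbeta\neq\bbeta^*$ (e.g.\ $\bphi_{\ba}+G_{\ba}\bx=\bphi_{\ba}'+G_{\ba}'\bx$ with $\Sigma_{\ba}=\Sigma_{\ba}'$), so the uniform pointwise lower bound you invoke does not hold; your fallback --- concavity of the log-likelihood in the natural parameterization, which renders Assumptions~\ref{as:lipchitz} and \ref{as:wellseperated} unnecessary --- is the route the paper actually takes, and should be promoted to the main argument.
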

\noindent
The proof of the preceding result involves deriving a new weak law of large numbers for Galton-Watson branching processes to verify Assumption~\ref{as:stabalizedvariance}. 
Theorem~\ref{thm:rlbranching} implies that for the posited model, policy-search with Thompson sampling attains favorable regret bounds as $J \to \infty$. Because Equation~\ref{eq:branching_converge_2} and Condition (C1) satisfy Assumption~\ref{as:generation_asymptotics}, we know that Lemma~\ref{lem:generation_asymptotics_2} is also satisfied under the assumptions made in Theorem~\ref{thm:rlbranching}.
In the next section, we discuss inference
given an RL-RDS sample without assuming an underlying branching model.

\section{Inference for RL-RDS}
\label{sec:geninf}

In this section, we derive valid inference for functionals of the population network model.
To account for the underlying social network and the idiosyncrasies of RDS, 
we do not assume that data-generating model is a branching process.  Instead, we 
consider a dynamic network model indexed by $\btheta^* \in \bTheta$ for which
we derive asymptotic confidence sets.  Projections of these 
regions are then used to conduct inference for functionals of the data-generating 
model, e.g., disease prevalence, rate of risky behavior,
attitudes toward public health services, etc. 

We recall that the data-generating model (i.e., $\btheta^*$) need not be identifiable
under an RDS sampling scheme.  Nevertheless, it is still possible to obtain
valid confidence intervals by inverting a test 
\citep[e.g., see][]{robins2004optimal,laber2011adaptive}.  Our test is based on the
likelihood ratio for the covariate distribution in the branching process working model.  
We reiterate that this is only used to construct a test and that we are not assuming
that this model is correct.  
Let $\ell_{\noepochsamplesize}^{\btheta} \left (\bbeta_\bx  \right )$ be the log-likelihood\footnote{Note that in this inference procedure, we use the complete sample $v \in \left \{1,2, \ldots, \noepochsamplesize \right \}$ because we are no longer restricted to epoch structured data by the branching process.} of the branching process covariate model for a collection of $\noepochsamplesize$ subjects sampled under the 
RDS process when $\btheta \in \bTheta$ is the true parameter.
Define the MLE of the working model at $\btheta$ as 
\begin{equation*}
    \widehat{\bbeta}_{\bx}^{\noepochsamplesize}(\btheta) \in \arg \max_{\bbeta_\bx \in \mathcal{B}}\ell_{\noepochsamplesize}^{\btheta} \left ( \bbeta_\bx  \right ). 
\end{equation*}
At the true parameter, $\btheta^*$, we let $\ell_{\noepochsamplesize} \triangleq \ell_{\noepochsamplesize}^{\btheta^*}$ and $\widehat{\bbeta}_{\bx}^{\noepochsamplesize} \triangleq \widehat{\bbeta}_{\bx}^{\noepochsamplesize}(\btheta^*)$.

Recall that $\mathcal{B}$ is a bounded, open subset of $\mathbb{R}^k$. Let $s:\bTheta \to \mathcal{B}$ be a fixed function. Our confidence region is based on the 
distribution of the proximity of $\widehat{\bbeta}_{\bx}^{\noepochsamplesize}(\btheta)$ to $s(\btheta)$.  In 
our application, we choose $s(\btheta)$ to be an asymptotic limit of 
$\widehat{\bbeta}_{\bx}^{\noepochsamplesize}(\btheta)$ though other choices are possible and may be more
appropriate in other contexts; hence, we let $s$ be arbitrary.  
Given $s$, define the sampling distribution of the log-likelihood ratio statistic 
at $\btheta$ as 
\begin{equation*}
     -2\left [ \ell_{\noepochsamplesize}^{\btheta} \left \{ s(\btheta) \right \} -  \ell_{\noepochsamplesize}^{\btheta} \left \{\widehat{\bbeta}_{\bx}^{\noepochsamplesize}(\btheta) \right \} \right ] \sim P_{\noepochsamplesize}^{\btheta}\left ( s \right ),
\end{equation*}
and the $1-\alpha$ quantile of $P_{\noepochsamplesize}^{\btheta}\left (s \right )$ as 
$\bgamma_{1-\alpha,\noepochsamplesize}^{\btheta} (s)$.
A confidence region for $\btheta^*$ is
\begin{equation}
    \label{eq:finiteinterval}
    \Gamma_{1-\alpha,\noepochsamplesize}(s) = \left \{ \btheta : -2\left [ \ell_{\noepochsamplesize} \left \{ s(\btheta) \right \} -  \ell_{\noepochsamplesize} \left \{\widehat{\bbeta}_{\bx}^{\noepochsamplesize} \right \} \right ] \leq  \bgamma_{1-\alpha,\noepochsamplesize}^{\btheta} \left (s \right ) \right \}.
\end{equation}
For $\btheta \in \Theta$, we sample from $P_{\noepochsamplesize}^{\btheta}\left ( s \right )$ by simulating a dataset of size $\noepochsamplesize$ under RDS at $\btheta$ and calculate $-2\left [ \ell_{\noepochsamplesize}^{\btheta} \left \{ s(\btheta) \right \} -  \ell_{\noepochsamplesize}^{\btheta} \left \{\widehat{\bbeta}_{\bx}^{\noepochsamplesize}(\btheta) \right \} \right ]$ given this data. We can approximate $\bgamma_{1-\alpha,\noepochsamplesize}^{\btheta}(s)$ to arbitrary precision by generating a large number of draws from $P_{\noepochsamplesize}^{\btheta}\left ( s \right )$. 
That  $\Gamma_{1-\alpha, \noepochsamplesize}(s)$ achieves nominal coverage is easily verified as, by
construction, we have 
\begin{equation*}
        \mathbb{P}\left \lbrace \btheta^* \in \Gamma_{1-\alpha, \noepochsamplesize}(s)  \right \rbrace = 
        \mathbb{P}\left \lbrace   -2\left [ \ell_{\noepochsamplesize} \left \{ s(\btheta^*) \right \} -  \ell_{\noepochsamplesize} \left \{\widehat{\bbeta}_{\bx}^{\noepochsamplesize} \right \} \right ] 
        \leq \bgamma_{1-\alpha, \noepochsamplesize}^{\btheta^*} \left (s\right ) \right \rbrace = 1-\alpha.
\end{equation*}
This equality holds in finite samples regardless of the function $s$.

The preceding result shows that $\Gamma_{1-\alpha, \noepochsamplesize}(s)$ achieves nominal coverage, we now
consider another key attribute of this interval, its asymptotic concentration.  
To do this, we make the following additional
assumptions.  
First, we assume that the average log-likelihood for each $\bbeta_\bx \in \mathcal{B}$ converges to a finite limit.
\begin{assumption}[Pointwise convergence]
    \label{as:exist}
    For any $\btheta \in \bTheta$ and for each $\bbeta_\bx \in \mathcal{B}$, 
    $\ell_{\noepochsamplesize}^{\btheta}(\bbeta_{\bx})/\noepochsamplesize$ converges almost surely to a finite
    limit $\overline{\ell}^{\btheta}(\bbeta_{\bx})$ as $\noepochsamplesize \rightarrow \infty$.  
    Define \\ $\bar{\bbeta}_\bx(\btheta) = 
    \arg \max_{\bbeta_\bx \in \mathcal{B}} \overline{\ell}^{\btheta}(\bbeta_\bx)$, then $\bar{\bbeta}_\bx(\btheta)$ is finite almost surely.
\end{assumption}
\noindent Under this assumption, the asymptotic limit of $\widehat{\bbeta}_{\bx}^{\noepochsamplesize}(\btheta)$ is well-defined. We 
suggest using this limit, $\overline{\bbeta}_{\bx}(\btheta)$, as the function $s(\btheta)$. Let $\Gamma_{1-\alpha, \kappa }$ denote the confidence set for $\btheta^*$ under this
choice. 
We also assume that the average log-likelihood over a compact set stays strictly concave asymptotically.
\begin{assumption}
\label{as:gen_inf_iar}
For any $\btheta \in \Theta$, the log-likelihood of the working model, ${\ell}^{\btheta}_{\noepochsamplesize}$, is concave for every $\kappa \in \mathbb{N}$. Additionally, $\dot{\ell}_{{\noepochsamplesize}}^{\btheta} \left ( \bbeta_\bx \right )/{\noepochsamplesize} = O_p(1)$ as $\kappa \to \infty$. Lastly, for any $\btheta \in \Theta$, and any compact set $\mathcal{B}' \subseteq \mathcal{B}$, there exists $\delta_{\mathcal{B}', \btheta} > 0$ such that
\begin{equation*}
    \lim_{\noepochsamplesize \to \infty} \sup_{\bbeta_{\bx} \in \mathcal{B}'} \sigma_{\min} \left \{ -\ddot{\ell}^{\btheta}_{\noepochsamplesize} \left ( \bbeta_\bx \right ) /\noepochsamplesize \right \} \geq \delta_{\mathcal{B}', \btheta} \ \ \mathrm{a.s.}
\end{equation*}
\end{assumption}
 
Define the equivalence class 
\begin{equation*}
\bTheta^* = \left \{ \btheta \in \bTheta : \overline{\bbeta}_{\bx}(\btheta) = 
\overline{\bbeta}_{\bx}(\btheta^*) \right \}    
\end{equation*}
as the set of network models that have
the same asymptotic branching process parameter as the true network model. If there is a unique $\bar{\bbeta}_{\bx}(\btheta)$ for each $\btheta \in \bTheta$, then $\bTheta^* = \left\lbrace 
\btheta^*\right\rbrace$. Otherwise, the size of $\bTheta^*$ can be thought of as the ``price" our inference approach pays for using a working model to perform inference on $\btheta^*$ instead of the true data-generating model. In Theorem~\ref{thm:misspec}, we show that Assumptions~\ref{as:exist}-\ref{as:gen_inf_iar} with Conditions (C2) and (C4) imply that $\Gamma_{1-\alpha,\noepochsamplesize}$ concentrates around $\bTheta^*$ as $\noepochsamplesize \to \infty$.
\begin{theorem}
    \label{thm:misspec}
    Under Assumptions \ref{as:1}-\ref{as:3}, \ref{as:exist}-\ref{as:gen_inf_iar}, 
    $\bar{\bbeta}_\bx(\btheta)$ is unique and
    \begin{equation*}
        \widehat{\bbeta}^{\noepochsamplesize}_{\bx}(\btheta) \overset{p}{\to} \bar{\bbeta}_\bx(\btheta)
    \end{equation*}
    as $\noepochsamplesize \to \infty$. Additionally, for any $\btheta \notin \bTheta^*$,
    \begin{equation*}
       \lim_{\noepochsamplesize \to \infty} \bbP \left ( \btheta \in \Gamma_{1-\alpha, \noepochsamplesize} \right ) \to 0.
    \end{equation*}
\end{theorem}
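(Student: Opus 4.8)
The plan is to prove the three claims in sequence: (i) uniqueness of $\bar{\bbeta}_\bx(\btheta)$, (ii) consistency of the working-model MLE, and (iii) that $\Gamma_{1-\alpha,\noepochsamplesize}$ excludes every $\btheta\notin\bTheta^*$ with probability tending to one; I work throughout on a fixed compact convex $\mathcal{B}'\subseteq\mathcal{B}$ containing the relevant limits in its interior. For (i) and (ii) I would argue as follows. By Assumption~\ref{as:gen_inf_iar} each map $\bbeta_\bx\mapsto\ell^{\btheta}_{\noepochsamplesize}(\bbeta_\bx)$ is concave, so its pointwise almost-sure limit $\overline{\ell}^{\btheta}$ from Assumption~\ref{as:exist} is concave, and the positive lower bound on the minimum eigenvalue of the normalized negative Hessian transfers to the limit, rendering $\overline{\ell}^{\btheta}$ strictly concave on $\mathcal{B}'$; hence its maximizer $\bar{\bbeta}_\bx(\btheta)$ is unique and, by Assumption~\ref{as:exist}, finite. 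For consistency I invoke the convexity lemma: pointwise convergence of concave functions to a finite limit forces uniform convergence on compact subsets of the interior, so $\ell^{\btheta}_{\noepochsamplesize}/\noepochsamplesize\to\overline{\ell}^{\btheta}$ uniformly on $\mathcal{B}'$, and uniform convergence together with a unique, well-separated maximizer gives $\widehat{\bbeta}^{\noepochsamplesize}_{\bx}(\btheta)\overset{p}{\to}\bar{\bbeta}_\bx(\btheta)$ by the standard argmax argument (Theorem~5.7 of \citet{van2000asymptotic}). This conclusion applies verbatim both to the observed data (true parameter $\btheta^*$) and to data simulated at any $\btheta$, since Assumption~\ref{as:exist} is stated for each $\btheta$ separately.

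For (iii), write $\Lambda_{\noepochsamplesize}(\btheta)=-2[\ell_{\noepochsamplesize}\{\bar{\bbeta}_\bx(\btheta)\}-\ell_{\noepochsamplesize}\{\widehat{\bbeta}^{\noepochsamplesize}_\bx\}]$ for the statistic defining membership in $\Gamma_{1-\alpha,\noepochsamplesize}$ (with $s=\bar{\bbeta}_\bx$ and $\ell_\noepochsamplesize=\ell^{\btheta^*}_\noepochsamplesize$ computed on the observed data). Since $\ell_\noepochsamplesize\{\widehat{\bbeta}^\noepochsamplesize_\bx\}/\noepochsamplesize=\max_{\bbeta_\bx}\ell_\noepochsamplesize(\bbeta_\bx)/\noepochsamplesize\to\overline{\ell}^{\btheta^*}\{\bar{\bbeta}_\bx(\btheta^*)\}$ by uniform convergence while $\ell_\noepochsamplesize\{\bar{\bbeta}_\bx(\btheta)\}/\noepochsamplesize\to\overline{\ell}^{\btheta^*}\{\bar{\bbeta}_\bx(\btheta)\}$ by pointwise convergence, I obtain $\Lambda_\noepochsamplesize(\btheta)/\noepochsamplesize\to 2[\overline{\ell}^{\btheta^*}\{\bar{\bbeta}_\bx(\btheta^*)\}-\overline{\ell}^{\btheta^*}\{\bar{\bbeta}_\bx(\btheta)\}]=:2D(\btheta)$ almost surely. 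For $\btheta\notin\bTheta^*$ we have $\bar{\bbeta}_\bx(\btheta)\ne\bar{\bbeta}_\bx(\btheta^*)$, and because $\bar{\bbeta}_\bx(\btheta^*)$ is the \emph{unique} maximizer of $\overline{\ell}^{\btheta^*}$, $D(\btheta)>0$; thus $\Lambda_\noepochsamplesize(\btheta)$ grows linearly in $\noepochsamplesize$.

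It remains to show the critical value is negligible on this scale. The reference law $P^{\btheta}_\noepochsamplesize(s)$ is that of $R_\noepochsamplesize:=-2[\ell^{\btheta}_\noepochsamplesize\{s(\btheta)\}-\ell^{\btheta}_\noepochsamplesize\{\widehat{\bbeta}^\noepochsamplesize_\bx(\btheta)\}]$ computed on data \emph{simulated at} $\btheta$. Here $s(\btheta)=\bar{\bbeta}_\bx(\btheta)$ is precisely the maximizer of $\overline{\ell}^{\btheta}$, so both $\ell^\btheta_\noepochsamplesize\{s(\btheta)\}/\noepochsamplesize$ and $\ell^\btheta_\noepochsamplesize\{\widehat{\bbeta}^\noepochsamplesize_\bx(\btheta)\}/\noepochsamplesize$ converge to $\overline{\ell}^{\btheta}\{\bar{\bbeta}_\bx(\btheta)\}$, whence $R_\noepochsamplesize/\noepochsamplesize\overset{p}{\to}0$. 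Convergence in probability of $R_\noepochsamplesize/\noepochsamplesize$ to zero forces its $1-\alpha$ quantile to be $o(\noepochsamplesize)$, i.e.\ $\bgamma^{\btheta}_{1-\alpha,\noepochsamplesize}(s)/\noepochsamplesize\to 0$. Combining the two scalings, for $\btheta\notin\bTheta^*$ the event $\{\Lambda_\noepochsamplesize(\btheta)\le\bgamma^{\btheta}_{1-\alpha,\noepochsamplesize}(s)\}$ compares a quantity diverging like $\noepochsamplesize$ to one of order $o(\noepochsamplesize)$, so its probability tends to zero, which is exactly $\bbP(\btheta\in\Gamma_{1-\alpha,\noepochsamplesize})\to 0$.

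I expect the main obstacle to be the careful bookkeeping across the two sampling regimes: the observed statistic is driven by data at $\btheta^*$ and must \emph{diverge}, whereas the critical value is calibrated by simulation at $\btheta$ and must be \emph{negligible} relative to $\noepochsamplesize$, and the argument hinges on the fact that $s(\btheta)=\bar{\bbeta}_\bx(\btheta)$ is the correct limit for the reference regime but the wrong target for the observed regime. The second delicate point is the passage from $R_\noepochsamplesize/\noepochsamplesize\overset{p}{\to}0$ to vanishing of the associated quantile $\bgamma^{\btheta}_{1-\alpha,\noepochsamplesize}(s)/\noepochsamplesize$, which must be justified for every fixed $\btheta\notin\bTheta^*$. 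The purely analytic inputs — the convexity lemma and strict concavity of the limiting objective — are routine once Assumptions~\ref{as:exist}–\ref{as:gen_inf_iar} are granted, so the real work lies in ensuring the uniform convergence and the quantile translation hold simultaneously in both regimes.
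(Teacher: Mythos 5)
Your proposal is correct, and the first two claims (uniqueness and consistency) follow the paper's route almost exactly: concavity of $\ell^{\btheta}_{\noepochsamplesize}$, the eigenvalue lower bound on the normalized negative Hessian transferring to the limit to give strict concavity on a compact neighborhood, and then the argmax argument for concave objectives (the paper invokes Newey--McFadden's Theorem~2.7 rather than the convexity lemma plus van der Vaart's Theorem~5.7, but the content is the same). Where you genuinely diverge is the concentration step. The paper never passes to the limiting objective $\overline{\ell}^{\btheta^*}$: it lower-bounds the observed statistic by a second-order Taylor expansion around $\widehat{\bbeta}^{\noepochsamplesize}_{\bx}$ (where the score vanishes), obtaining $\Lambda_{\noepochsamplesize}(\btheta)/\noepochsamplesize \geq \epsilon_{\btheta}\,\|\bar{\bbeta}_{\bx}(\btheta)-\widehat{\bbeta}^{\noepochsamplesize}_{\bx}\|_2^2$, which stays bounded away from zero for $\btheta\notin\bTheta^*$; and it upper-bounds the reference statistic by a first-order concavity inequality, $R_{\noepochsamplesize}/\noepochsamplesize \leq 2\left[\dot{\ell}^{\btheta}_{\noepochsamplesize}\{\bar{\bbeta}_{\bx}(\btheta)\}/\noepochsamplesize\right]^{\top}\{\widehat{\bbeta}^{\noepochsamplesize}_{\bx}(\btheta)-\bar{\bbeta}_{\bx}(\btheta)\}\to 0$, using the $O_p(1)$ bound on the normalized score from Assumption~\ref{as:gen_inf_iar}. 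You instead identify the limit $\Lambda_{\noepochsamplesize}(\btheta)/\noepochsamplesize\to 2D(\btheta)$ with $D(\btheta)>0$ by uniqueness of the maximizer of $\overline{\ell}^{\btheta^*}$, and send $R_{\noepochsamplesize}/\noepochsamplesize\to 0$ by noting both of its terms converge to the common value $\overline{\ell}^{\btheta}\{\bar{\bbeta}_{\bx}(\btheta)\}$. Your route is cleaner and gives the divergence rate an explicit interpretation as a population likelihood gap, at the cost of leaning on uniform convergence of $\ell^{\btheta^*}_{\noepochsamplesize}/\noepochsamplesize$ on a compact containing $\bar{\bbeta}_{\bx}(\btheta^*)$ to control the maximized term (which you correctly extract from the convexity lemma); the paper's Hessian bounds work entirely with finite-$\noepochsamplesize$ quantities and sidestep that. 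Two small points to tighten: the quantile translation from $R_{\noepochsamplesize}/\noepochsamplesize\overset{p}{\to}0$ to $\bgamma^{\btheta}_{1-\alpha,\noepochsamplesize}(s)/\noepochsamplesize\to 0$ should record that $R_{\noepochsamplesize}\geq 0$ by definition of $\widehat{\bbeta}^{\noepochsamplesize}_{\bx}(\btheta)$, so for any $\epsilon>0$ eventually $\bbP(R_{\noepochsamplesize}/\noepochsamplesize>\epsilon)<\alpha$ and the quantile is at most $\epsilon\noepochsamplesize$; and global uniqueness of $\bar{\bbeta}_{\bx}(\btheta)$ over all of $\mathcal{B}$, not merely over $\mathcal{B}'$, requires the paper's short contradiction argument combining strict concavity near the maximizer with global concavity.
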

\noindent The proof of Theorem~\ref{thm:misspec} is in the Supplemental Materials.
If we use the branching process in Equation~\ref{eq:rdsmod} as the working model, we can substitute Assumption~\ref{as:inf_iar}, Condition (2), and Condition (4) for Assumption~\ref{as:gen_inf_iar} in Theorem~\ref{thm:misspec}.
\begin{assumption}
\label{as:inf_iar}
For any $n \in \mathbb{N}$, define fields $\mathcal{F}^n = \sigma \left [\left\lbrace R^v, T^v, \bX^v, Y^v , \bA^v, C^v \right\rbrace_{v=1}^{n} \right ]$.
For any $\btheta \in \Theta$, there exists a positive definite $\Delta$, $\delta > 0$, and $N \in \mathbb{N}$ such that $\forall \ba \in \mathcal{A}$ and $n \geq N$, 
\begin{align*}
    \bbE_{\btheta} \left \{ \bX^{R^n} \bX^{R^n} \bbI(\bA^{R^n} = \ba) \mid \mathcal{F}^{n-1} \right \} \succeq \Delta, \quad
    \bbP_{\btheta} \left ( \bA^{R^n} = \ba \mid \mathcal{F}^{n-1} \right ) \geq \delta.
\end{align*}
\end{assumption}

We note that setting $s(\btheta) = \bar{\bbeta}_{\bx}(\btheta)$ necessitates approximating $\bar{\bbeta}_\bx(\btheta)$ for an arbitrary $\btheta \in \bTheta$. Algorithm~\ref{alg:cfllr} in 
the Supplemental Materials
describes our method of approximation as well as the full procedure for constructing $\Gamma_{1-\alpha, \kappa}$ when $s(\btheta) = \bar{\bbeta}_{\bx}(\btheta)$.

\section{RL-RDS Simulations}
\label{sec:sim}

We conducted a series of simulation experiments to evaluate the operating characteristics
of RL-RDS.  To allow comparisons with the two-stage procedure proposed by 
\citet{mcfall2021optimizing} (see also \citet{vanorsdale2023adaptive}),
we consider the setting in which the goal is to recruit the largest subset of 
people in a hidden population with a given binary trait, e.g., undiagnosed HIV.     
The outcome is thus an indicator of this trait.  
We  estimate the optimal policy, $\bpi^{\mathrm{opt}}$, using RL-RDS. After collecting the sample, we 
use the confidence set derived in Section~\ref{sec:geninf} to construct projection intervals
for parameters indexing the target population's network and covariate models. 

The hidden population network is generated as follows. Given population size $N$,  each population member $v \in \{1,2,\ldots, N\}$ is assigned attributes $\bX^v \overset{\mathrm{iid}}{\sim} \mathrm{Normal}(\bmu , \Sigma)$, where 
$\bmu \in \mathbb{R}^p$, and $\Sigma \in \mathbb{R}^{p\times p}$.
Define $\mathscr{A}_N \in \{0,1\}^{N\times N}$ to be the adjacency matrix representing links between the population members.
We construct the adjacency matrix using a latent distance network model, which is known to be flexible and projective \citep{spencer2017projective}.
This model specifies the probability of a connection between individuals $i$ and $j$ as
\begin{equation*}
    \mathbb{P}\left \{\left (\mathscr{A}_N \right )_{i,j} = 1\big| \bX^i=\bx^i,
    \bX^j=\bx^j\right \}= 1/\left [ 1+\exp\left \{- \left (\rho_0 - \rho_1 \| \bx^i - \bx^j \|_2 \right ) \right \} \right ],
\end{equation*}
where $\rho_0, \rho_1 \in \mathbb{R}$.

To characterize the evolution of RDS, for each  $v \in \{1,2,\ldots, N\}$, define
\begin{equation*}
    \mathcal{N}^v = \{ j \in \{1,2,\ldots, N\} \setminus \{v\}: \left (\mathscr{A}_N \right )_{v,j}  = 1 \}
\end{equation*}
as the neighborhood of $v$. 
Label the set of individual $v$'s potential recruits 
as $\mathcal{M}_1^v$; i.e., the un-recruited members of $\mathcal{N}^v$ when $v$ is recruited. Furthermore, define $\mathcal{M}^v_2 \subseteq \mathcal{M}^v_1$ as the coupon-constrained set of potential recruits for participant $v$.
If $|\mathcal{M}_1^v| \leq |\bA^v|$, then $\mathcal{M}_2^v  \equiv \mathcal{M}_1^v$. If $|\mathcal{M}_1^v| > |\bA^v|$, then sample $|\bA^v|$ recruits from $\mathcal{M}_1^v$ according to probabilities $\mathbf{p}^v = u(\bA^v, \underline{\bX}^{v})$, where $u: \mathcal{A} \times \mathcal{X}^{|\mathcal{M}_1^v|} \to [0,1]^{|\mathcal{M}_1^v|}$, and label them $\mathcal{M}_2^v$.
This allows for the neighbor selection process to depend on the characteristics of the neighbors and the coupon allocation type. 
When individual $v$ is recruited, we assign arrival times, $T^v_j$,  to the edges between recruiter, $v$, and hidden population members, $j \in \mathcal{M}_2^v$, such that $T^v_j - T^v \sim \mathrm{Truncated \ Exponential}(\zeta, t_{\min}, t_{\max})$. At time $T^v_j$, participant $j$ enters the study (if they have not been previously recruited) and edge $\{v, j\}$ is recorded. 
The reward is independently drawn for recruit $j$ of pariticpant $v$ according to $Y^v_j | \bX^v_j, A^v_j \sim \mathrm{ Bernoulli} \left [ 1/\left \{1 + \exp \left (-{\bZ^v_j}^\top \bbeta_y \right ) \right \} \right ]$. 
We note that the complete data-generating process is parameterized by $\btheta \triangleq (\rho_0, \rho_1, \bmu, \Sigma, \bbeta_y, \zeta, u, t_{\min}, t_{\max})$.

\subsection{Policies}
We evaluate the performance of RL-RDS against a suite of alternative strategies. At each step, the researcher can choose from a finite selection of coupon types. 
The fixed allocation policies (i.e., those that give the same coupon allocation type to all participants)  represent the current standard in RDS. The train-and-implement policy (aka, explore-than-exploit) mimics the procedure used by \cite{mcfall2021optimizing}, which determines an incentive strategy using a pilot study. 
We describe each strategy below.
\begin{enumerate}
	\item \textbf{Fixed} offers a fixed coupon allocation $\ba \in \mathcal{A}$ to every study participant. If $\ba \notin  \psi^v(\bh^v)$, then pick a random coupon allocation from $\psi^v(\bh^v)$ to give to the $v^{\mathrm{th}}$ study participant.
	\item \textbf{Random} offers a random element of $\psi^v(\bh^v)$ to the $v^{\mathrm{th}}$ study participant.
         \item \textbf{Train-and-Implement} uses half of the budget for a ``pilot study," in which the the Random policy is used to assign coupon allocations. It then conducts policy search using the pilot study data to estimate the branching process working model. This estimated policy (without updating) is then used to determine coupon allocations for the remainder of the budget.
	\item \textbf{RL-RDS} uses the Random policy to assign coupon allocations to participants in a short ``warm-up" period ($50$ participants in the simulations below). Then, it performs policy search with Thompson sampling
    for the remainder of the budget.
\end{enumerate}

To conduct RL-RDS, we use the following space of policies, $\Pi$. Define $\alpha_0 \in \mathbb{R}$, $\bma{\alpha}_1 \in \mathbb{R}^p$, and $\bma{\alpha} = (\alpha_0, \bma{\alpha}_1)$. For $n \in \mathbb{N}$ and state $\bh^n \in \mathcal{H}^n$, we consider policies of the form $\bpi(\bh^n) =  \left \{ \pi^n(\bh^n), \pi^{n+1}(\bh^{n+1}), \cdots  \right \}$ such that for $v \geq n$,
\begin{equation}\label{justinsWeirdFn}
    \pi^v(\bh^v) = \pi^v(\bh^v, \bma{\alpha}) = \pi^v(\bx^v,\bma{\alpha}) = g^v \left [ \frac{1}{1+\exp \left \{ - (\alpha_0 + {\bx^v}^\top \bma{\alpha}_1) \right \} } \right ],
\end{equation}
where $g^v: (0,1) \to \phi^v(\bh^v)$ maps a continuous score (dependent on the participant's covariates) to a coupon allocation, $\ba^v \in \psi^v(\bh^v)$. 
We first draw $\widehat{\bbeta}^{n}$ from the 
generalized bootstrap estimator of the 
sampling distribution of 
the MLE \citep{chatterjee2005generalized}. We then generate synthetic data sets,
\begin{equation*}
\mathcal{K}^B(\bh^n,\pmb{\pi}; \bbeta) \triangleq 
\left\lbrace
\left(Y_{b, \bpi, \bbeta}^n, \bH_{b, \bpi, \bbeta}^n, \bA_{b, \bpi, \bbeta}^n, Y_{b, \bpi, \bbeta}^{n+1}, \bH_{b, \bpi, \bbeta}^{n+1}, \bA_{b, \bpi, \bbeta}^{n+1}, \ldots,
\bY_{b, \bpi, \bbeta}^{Q}, \bH_{b, \bpi, \bbeta}^{Q}
\right)
\right\rbrace_{b=1}^B,
\end{equation*}
and calculate
\begin{equation*}
\widehat{V}^n_{B}(\bh^{n}, \bpi; \bbeta) =
\frac{\sum_{b = 1}^B \left (
\sum_{v = n}^{Q} \Delta^v_{b, \bpi, \bbeta} Y^v_{b, \bpi, \bbeta}
\right )}{B}.
\end{equation*}
for each $\bpi \in \Pi$. To determine the coupon allocation for the current study participant, we set $\widehat{\balpha}_B^n = \arg \max_{\balpha} \widehat{V}^n_{B}(\bh^{n}, \balpha; \widehat{\bbeta}^{n} )$,
and assign $\ba^n = \pi^n(\bx^n, \widehat{\balpha}_B^n)$. Alternatively, we could use a grid approximation to $\Pi$ or a gradient descent method to approximate $\widehat{\balpha}_B^n$.  

\subsection{Results}
In the following simulations, we set the hidden population size to $N= 5,000$. The recruitment process begins with an initial sample of $25$ individuals randomly drawn from the population, $\left | \mathcal{E}_0\right | = 25$. The graph model is defined by $\rho^*_0 =  0$, $\bmu^* = (1,1,1)$, and $\Sigma^* = \mathrm{diag} \left \{ (10,10,10) \right \}$. We vary $\rho^*_1 \in \{1,2\}$, to compare simulation experiments in dense and sparse network settings respectively.
The researchers have access to three types of coupon allocations $\mathcal{A} \triangleq \{\ba_1, \ba_2, \ba_3\}$ that correspond to three coupon types $\mathbb{A} \triangleq \{a_1, a_2, a_3\}$. Each allocation has $5$ coupons. We found that limiting the number of coupons given to each study participant in the pilot study and the warm-up period of the T\&I and the RLRDS policies respectively allows us to observe the effects of the learned policies earlier in the sampling process. Consequently, we give two coupons to individuals in the pilot and warm-up period, and increase the allotment to $5$ coupons afterwards. Additionally, $C^v \equiv 1$ for all $v \in \mathbb{N}$. For recruit $j$ of participant $v$, define the reward model components as $\bZ_j^v \triangleq \left \{1, \bX^v_j, \mathbb{I}(A^v_j = a_2)\bX^v_j,  \mathbb{I}(A^v_j = a_3)\bX^v_j \right \}$ and $\bbeta_y^* \triangleq (-1,3\bk,-3\bk,-6\bk)$, where $\bk \triangleq (1,-1,-1)$. Furthermore, we define the neighbor selection probability distribution as $\mathbf{p}^v = u(\bA^v, \underline{\bX}^{v})$ such that 
\begin{align*}
    p^v_j = \begin{cases}
      & \frac{ \left \|\bX^v - \bX^v_j \right \|_2^2} {\sum_{j \in \mathcal{M}_1^v}  \left \| \bX^v - \bX^v_j \right \|_2^2 },  \ \mathrm{if} \ \bA^v = \ba_2, \\
        & 1/|\mathcal{M}_1^v|, \ \mathrm{otherwise,}
    \end{cases}
   \end{align*}
where $\mathcal{M}_1^v$ is the set of participant $v$'s potential recruits (as defined in the previous section). This implies that recruiters with the second coupon type will be more likely to recruit neighbors that are different from them.
The basic policy objective is clear: we want to ensure that coupon type 1, $\ba_1$, is used to recruit individuals with covariates that satisfy ${\bX^v}^\top \bk > 0$, and coupon type 3, $\ba_3$, is used to recruit individuals with covariates that satisfy ${\bX^v}^\top \bk < 0$. 
We also wish to recruit individuals with large $| {\bX^v}^\top \bk |$ because this will increase the probability of the desired outcome contingent on the correct coupon type being awarded. 
Consequently, it is possible that coupon type 2, $\ba_2$, is optimal for participant $v$ if $| {\bX^v}^\top \bk |$ is low because it increases the likelihood of observing high values in the future.
We define the policy space by specifying the function $g^v$ from (\ref{justinsWeirdFn}) as 
\begin{equation*}
    g^v(z) = \begin{cases}
      \ba_1, & \mathrm{if}\ z > 0.66, \\
      \ba_2, & \mathrm{if}\ 0.66 > z \geq 0.33, \\
      \ba_3, & \mathrm{if}\ 0.33 > z.
    \end{cases}
\end{equation*}
This policy space implies that correctly assigning coupons $1$ or $3$ will depend on the sign of the $\balpha$ components. The frequency of coupon 2 allocation will be determined by the magnitude of $\balpha$. This structure makes finding an optimal policy computationally feasible while maintaining sufficient difficulty to showcase the strength of RL-RDS. Lastly, we set $\zeta = 1$, $t_{\min}=0$, and $t_{\max} = 3$.

Figure~\ref{fig:cumrew} illustrates the estimated value of policies in both the sparse and dense network settings. It indicates that RL-RDS outperforms all competitor policies by a significant margin in each regime.
In the Supplemental Materials,
we test RL-RDS under two additional simulation paradigms. We vary the value/cost and number of coupons given to RDS participants in these experiments. In these contexts, we introduce another branching process working model that incorporates the incentive value in the covariate and reward models (while reducing their dimensionality). This model is described by Equation~\ref{eq:rdsmod2} in the Supplemental Materials.
Figures~\ref{fig:cumrew_price} and \ref{fig:cumrew_amount} confirm that RL-RDS outperforms all competitor policies by a significant margin under multiple graph density and coupon allocation settings.

We construct 95\% confidence regions for the full network model, $(\rho_0^*, \rho_1^*, \bmu^*, \Sigma^*)$, with a variety of inference procedures.
In these experiments, we assume that $\Sigma^*$ is diagonal and $q$ is known to reduce the computational burden of the inference methods.
We also add an additional network setting, $\rho_1^* = 0.5$ (corresponding to a higher density network). 
Table~\ref{tab:GraphCoverage} depicts coverage results for these confidence regions. 
The table indicates that our simulation-based inference (SBI) technique, which is described by Algorithm~\ref{alg:cfllr} in the Supplemental Materials,
achieves nominal coverage or greater across all graph settings.
In the sparse setting,
bootstrapping with the log-likelihood ratio (BS LLR) and approximate bayesian computation (ABC) achieve nominal coverage as well. However, as the the graph becomes denser, ABC and BS LLR's coverage decreases while SBI's coverage stays above 95\%; i.e. the comparison methods fail to adequately quantify uncertainty while our method succeeds. The details of ABC, BS LLR, and bootstrapping with the wald interval (BS WI) are described in 
the Supplemental Materials.

Table~\ref{tab:cov_dist_int} reports 95\% projection interval coverage results for each dimension of the average covariate value; i.e., $\nu_i$ for $i \in \{1,2,3\}$. 
We see that the projection intervals associated with every inference method have high coverage except for BS WI. This is unsurprising since we conduct inference calibrated to have 95\% coverage over all network parameters simultaneously. Encouragingly, our method, SBI, has a smaller interval length than contenders that achieve nominal coverage (BS LLR and ABC).

\begin{figure}
    \centering
    \includegraphics{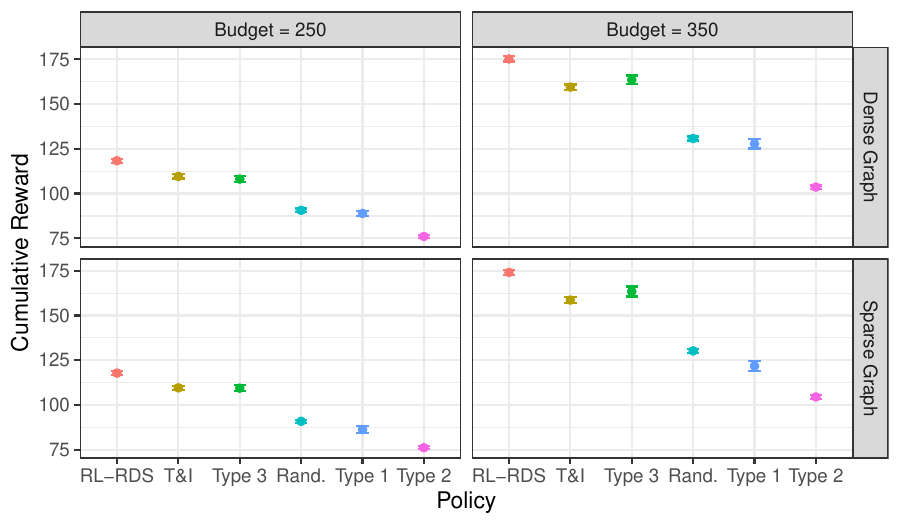}
    \caption{This figure compares the estimated cumulative reward of each policy with 90\% Monte Carlo confidence intervals over multiple sample sizes and graph densities.}
    \label{fig:cumrew}
\end{figure}

\begin{table}

\caption{\label{tab:GraphCoverage} Simultaneous Network Parameter Coverage}
\centering
\begin{tabular}[t]{rrr}
\toprule
Method & Graph Density & Coverage\\
\midrule
\textbf{SBI} & Low & 0.975\\
\cellcolor{gray!6}{\textbf{SBI}} & \cellcolor{gray!6}{Medium} & \cellcolor{gray!6}{0.962}\\
\textbf{SBI} & High & 0.979\\
\cellcolor{gray!6}{ABC} & \cellcolor{gray!6}{Low} & \cellcolor{gray!6}{0.950}\\
ABC & Medium & 0.749\\
\cellcolor{gray!6}{ABC} & \cellcolor{gray!6}{High} & \cellcolor{gray!6}{0.393}\\
BS LLR & Low & 0.945\\
\cellcolor{gray!6}{BS LLR} & \cellcolor{gray!6}{Medium} & \cellcolor{gray!6}{0.937}\\
BS LLR & High & 0.908\\
\cellcolor{gray!6}{BS WI} & \cellcolor{gray!6}{Low} & \cellcolor{gray!6}{0.526}\\
BS WI & Medium & 0.480\\
\cellcolor{gray!6}{BS WI} & \cellcolor{gray!6}{High} & \cellcolor{gray!6}{0.508}\\
\bottomrule
\end{tabular}
\begin{tablenotes}
    \item These are simultaneous coverage rates for the network parameters over $250$ simulations for three graph settings. 
\end{tablenotes}
\end{table}

\begin{table}

\caption{\label{tab:cov_dist_int}Covariate Distribution Mean Intervals}
\centering
\begin{tabular}[t]{rrrr}
\toprule
Method & Dim. & Coverage & Interval\\
\midrule
\cellcolor{gray!6}{\textbf{SBI}} & \cellcolor{gray!6}{1} & \cellcolor{gray!6}{0.99} & \cellcolor{gray!6}{2.78}\\
\textbf{SBI} & 2 & 0.99 & 2.71\\
\cellcolor{gray!6}{\textbf{SBI}} & \cellcolor{gray!6}{3} & \cellcolor{gray!6}{1.00} & \cellcolor{gray!6}{2.72}\\
ABC & 1 & 1.00 & 3.04\\
\cellcolor{gray!6}{ABC} & \cellcolor{gray!6}{2} & \cellcolor{gray!6}{1.00} & \cellcolor{gray!6}{3.23}\\
ABC & 3 & 1.00 & 3.09\\
\cellcolor{gray!6}{BS LLR} & \cellcolor{gray!6}{1} & \cellcolor{gray!6}{0.98} & \cellcolor{gray!6}{2.86}\\
BS LLR & 2 & 0.98 & 2.82\\
\cellcolor{gray!6}{BS LLR} & \cellcolor{gray!6}{3} & \cellcolor{gray!6}{0.97} & \cellcolor{gray!6}{2.84}\\
BS WI & 1 & 0.88 & 2.15\\
\cellcolor{gray!6}{BS WI} & \cellcolor{gray!6}{2} & \cellcolor{gray!6}{0.87} & \cellcolor{gray!6}{2.15}\\
BS WI & 3 & 0.89 & 2.15\\
\bottomrule
\end{tabular}
\begin{tablenotes}
    \item These are coverage rates and interval lengths averaged over graph settings for the mean of each dimension of the covariate distribution over $250$ simulations. 
\end{tablenotes}
\end{table}

\section{Discussion}
We showed that RL-driven adaptive RDS can lead to 
dramatic 
improvements in the effectiveness of RDS.  In the 
course of deriving an RL strategy, we: (i) showed that 
a branching process approximation to RDS is useful for 
guiding coupon selection without having to fully model 
the underlying system dynamics; (ii) extended 
asymptotic theory for adaptively sampled M-estimators 
to Markov Decision Processes; and (iii) developed novel 
methods for post adaptive-RDS inference that remain 
valid even when the true model is not identifiable.

This work is the first to consider fully adaptive 
RDS methods using RL.  Consequently,
there are a number of open problems and interesting
directions for future work.  One such direction
is the development of a regret
bound that explicitly depends on the quality of the
branching process approximation.  While branching 
processes
have been used extensively to model epidemics 
evolving on networks, the quality of these approximations
remains an important open question.  Another direction
for future work regards statistical efficiency.  As 
noted
in the introduction, it is possible to fold measures
of information gain into participant outcomes to improve 
power in post-study analyses.  However,
how best to do this for various network models is not clear.  
Finally, as we considered projective dynamic graph models,  
exploring whether there is a more
general class of models to which the asymptotic theory
still applies is yet another potential area for future research.

\newpage

\bibliographystyle{plainnat}
\bibliography{ref}

\pagebreak
\begin{center}
\textbf{\Large Supplemental Materials}
\end{center}

\section{Glossary}
\begin{table}[ht]
    \centering
    \resizebox{12.5cm}{!}{
    \begin{tabular}{l || l}
       \small $i$ (subscript) & \small  index for interim study participants organized in epochs \\ [0.12cm]
       \small $v$ (superscript) & \small index for interim study participants ordered by arrival times \\ [0.1cm]
       \hdashline \\ [-0.4cm]
       \small $\bA_i$  &\small  coupon allocation given to $i$  \\ [0.12cm]
       \small $T_i$ &\small  arrival time of $i$ \\ [0.12cm]
       \small $\bX_i$ & \small covariate of $i$ \\ [0.12cm]
       \small $Y_i$ & \small  outcome of interest for $i$ \\ [0.12cm]
       \small $C_i$ & \small cost of recruitment for $i$ \\ [0.12cm]
       \small $R_i$ & \small recruiter of $i$ \\ [0.12cm]
       \small $M_i$ & \small number of recruits of $i$ \\ [0.12cm]
       \small $\underline{\bX}_i, \underline{\bY}_i, \underline{\bT}_i$ & \small collection of covariates, response statuses, and response times \\
       & \small associated with the recruits of participant $i$ \\ [0.12cm]
       \small $\bD_i$ & \small complete information associate with the recruits of individual $i$ \\ [0.12cm]
       \small $\bH^v$ & \small information available to researchers at the time $v$ is given their \\
       & \small coupon allocation \\ [0.1cm]
       \hdashline \\ [-0.4cm]
       \small $\mathcal{E}_j, \overline{\mathcal{E}}_j$  & \small  epoch $j$ and all epochs up to epoch $j$ respectively  \\ [0.12cm]
       \small $\kappa_j, \overline{\kappa}_j$  & \small size of epoch $j$ and size of all epochs up to epoch $j-1$ respectively  \\ [0.12cm]
       \small $\mathcal{Z}_j, \overline{\mathcal{Z}}_j$  & \small data associated with epoch $j$ and data associated with all \\
       & \small epochs up to epoch $j$ respectively \\ [0.12cm]
       \small $\mathcal{F}_j$ & \small sigma field associated with the first $j$ epochs without the coupon \\
       &\small allocation information associated with individuals in the $j^{th}$ epoch \\ [0.12cm]
       \small $\mathcal{D}^{\kappa}$ & \small data indexed by arrival time (first $\kappa$ individuals)  \\ [0.1cm]
       \hdashline \\ [-0.4cm]
       \small $\bpi$ & \small deterministic coupon allocation strategy \\ [0.12cm]
       \small $\bpi^{\mathrm{opt}}$ & \small optimal coupon allocation strategy \\ [0.12cm]
       \small $\psi^v(\bh^v)$ & \small set of allowable coupon allocations given $\bh^v$ \\ [0.12cm]
       \small $V^v(\bh^v, \pmb{\pi})$ & \small history-value function of $\bpi$ at $\bh^v$ \\ [0.12cm]
       \small $\widehat{\xi}^{v}_B(\bh^{v}, \ba^{v})$ & \small estimated probability that $\ba^v$ is the optimal action given $\bh^{v}$ \\ [0.1cm]
       \hdashline \\ [-0.4cm]
       \small $\bbeta \in \mathcal{B}$ & \small parameter that indexes the branching process \\ [0.12cm]
       \small $W_i$ & \small stabilizing weights for log-likelihood \\ [0.12cm]
       \small $l_i(\bbeta)$ & \small log-likelihood associated with $\bD_i$ conditional \\
       &\small on $\mathcal{F}_{j-1}$, where $i \in \mathcal{E}_{j-1}$\\ [0.12cm]
       \small $\widehat{\bbeta}_J$ &  \small M-estimator of weighted log-likelihood based on  complete epoch \\
       & \small information up to epoch $J$ \\ [0.12cm]
       \small $\branchasymp$ & \small asymptotic event associated with super-critical branching processes \\ [0.12cm]
       \small $q^v(\bbeta)$ & \small log-likelihood component associated with $v$ of the arrival process data, $\mathcal{D}^{\kappa}$ \\ [0.1cm]
       \hdashline \\ [-0.4cm]
       \small $\btheta \in \bTheta$ & \small parameter that indexes the dynamic network model \\ [0.12cm]
       \small $\ell_{\noepochsamplesize}^{\btheta} \left (\bbeta_\bx  \right )$ & \small log-likelihood of the branching process covariate model for a collection \\
       & \small of $\kappa$ subjects sampled under the RDS process when $\btheta$ is the true parameter \\ [0.12cm]
       \small $\bar{\bbeta}_{\bx}\left ( \btheta \right )$ & \small maximizes limit of $\ell_{\noepochsamplesize}^{\btheta} \left (\bbeta_\bx  \right )/\kappa$ \\ [0.12cm]
       \small $\Gamma_{1-\alpha, \noepochsamplesize} \left (\bar{\bbeta}_{\bx} \right )$ & \small confidence region for $\btheta^*$ based on $\bar{\bbeta}_{\bx}$
    \end{tabular}
    }
    \caption{Table of notation}
    \label{tab:gloss}
\end{table}

\newpage

\section{Inference Algorithms}
\label{app_sec:altalg}

Algorithm~\ref{alg:cfllr} generates an approximation of the confidence interval described in Section~\ref{sec:geninf}.
Recall that we need to approximate $\bar{\bbeta}_\bx(\btheta)$ for an arbitrary $\btheta \in \Theta$. We do this by first generating a large network for $\btheta$ and simulating an RDS study of size $K >> \noepochsamplesize$ over this network. Using this simulated sample, we maximize the log-likelihood of the branching process working model to obtain $\widehat{\bbeta}^K_{\bx}(\btheta)$.

\begin{algorithm}[H]
	\SetAlgoLined
	\DontPrintSemicolon
	\textbf{Input:} Set of parameter values $\Theta_S \subseteq \Theta$; the MLE of the working model, $\widehat{\bbeta}^{\noepochsamplesize}_\bx(\btheta^*)$; $K$, the sample size for the approximation of $\bar{\bbeta}_\bx(\btheta)$; and $B$, the number of simulated branching processes used to approximate the test statistic distribution\;
    \For{$\btheta \in \Theta_S$}{
	Generate a large network 
        from $\btheta$ (of size greater than $K$), and simulate an RDS study of size $K$ over this graph \;
        Use the branching process working model with this sample to find $\widehat{\bbeta}^K_{\bx}(\btheta)$\;
	Simulate $B$ branching processes of size $\noepochsamplesize$ over graphs generated according to $\btheta$\;
	For $b \in \{1,2, \ldots, B\}$,  calculate $\widehat{\bbeta}^{\noepochsamplesize}_{\bx, b}(\btheta)$\;
	Construct a set of test statistic values $\widehat{P}^{\btheta}_{\noepochsamplesize}( \bar{\bbeta}_\bx)= \left \{ -2\left [ \ell_{\noepochsamplesize}^{\btheta} \left \{ \widehat{\bbeta}^K_{\bx}(\btheta) \right \} -  \ell_{\noepochsamplesize}^{\btheta} \left \{\widehat{\bbeta}^{\noepochsamplesize}_{\bx,b}(\btheta) \right \} \right ], b \in \left \{1,2,\ldots, B \right \}  \right \}$ \;
	Determine the $1-\alpha$ upper quantile of $\widehat{P}^{\btheta}_{\noepochsamplesize}(\bar{\bbeta}_\bx)$ and label this $\widehat{\bgamma}^{\btheta}_{1-\alpha, \noepochsamplesize}(\bar{\bbeta}_\bx)$\;
	If $ -2\left [ \ell_{\noepochsamplesize}^{\btheta^*} \left \{ \widehat{\bbeta}^K_{\bx}(\btheta) \right \} -  \ell_{\noepochsamplesize}^{\btheta^*} \left \{\widehat{\bbeta}^{\noepochsamplesize}_\bx(\btheta^*) \right \} \right ] \leq \widehat{\bgamma}^{\btheta}_{1-\alpha, \noepochsamplesize}(\bar{\bbeta}_\bx)$, then include $\btheta$ in $\Gamma_{1-\alpha, \noepochsamplesize }$\;
 }
    \textbf{Output:} $\Gamma_{1-\alpha, \noepochsamplesize }$ \;
\caption{SBI Confidence Set}
\label{alg:cfllr}
\end{algorithm}

\newpage

We compare our inference approach to a series of natural alternatives.
Algorithm~\ref{alg:ABC} is inspired by approximate Bayesian computation (ABC).

\begin{algorithm}[H]
	\SetAlgoLined
	\DontPrintSemicolon
	\textbf{Input:} Set of parameter values $\Theta_S \subseteq \Theta$; $K$, the sample size for the approximation of $\bar{\bbeta}_\bx(\btheta)$; and $B$, the number of bootstrap samples\;
	
	Take $B$ bootstrapped samples using a generalized bootstrap for estimating equations as described in \cite{chatterjee2005generalized} and find the MLE for every sample, label this set 
    $\left \{\widehat{\bbeta}^{\noepochsamplesize}_{\bx, b} \right \}_{b=1}^B$\;

    For each $\btheta \in \Theta_S$, generate a large network consistent with $\btheta$ (of size greater than $K$), and simulate an RDS study of size $K$ over this graph \;
    Use the log-likelihood of the branching process working model with this sample to find $\widehat{\bbeta}^K_{\bx}(\btheta)$\;
        
	\textbf{Output:} $\Gamma_{1-\alpha, \noepochsamplesize } = \left \{\widehat{\btheta}_b = \arg \min_{\btheta \in \Theta} \| \widehat{\bbeta}^{\noepochsamplesize}_{\bx, b} - \widehat{\bbeta}^K_{\bx}(\btheta) \|^2, \ b \in \{1,2,\ldots, B\} \right \}$\;
	
	\caption{ABC Confidence Set}
	\label{alg:ABC}
\end{algorithm}

Algorithm~\ref{alg:bootstrap} describes the bootstrap with Wald interval (BS WI) and the bootstrap with likelihood ratio (BS LLR) approaches. It uses a generalized bootstrap for estimating equations as described in \cite{chatterjee2005generalized}. 

\begin{algorithm}[H]
	\SetAlgoLined
	\DontPrintSemicolon
	\textbf{Input:} Set of parameter values $\Theta_S \subseteq \Theta$; the MLE of the working model, $\widehat{\bbeta}^{\noepochsamplesize}_\bx$; $K$, the sample size for the approximation of $\bar{\bbeta}_\bx(\btheta)$; and $B$, the number of bootstrapped samples\;
	Take $B$ bootstrapped samples using a generalized bootstrap for estimating equations as described in \cite{chatterjee2005generalized}\;
    For $b \in \{1,2, \ldots, B\}$,  estimate the MLE, $\widehat{\bbeta}^{\noepochsamplesize}_{\bx, b}$, for each bootstrap from the log-likelihood $\ell_{\noepochsamplesize,b}$ \;
	Construct a set of test statistic the following manner: \\
        \uIf{LLR}{
        $\widehat{P}_{\noepochsamplesize}= \left \{ -2\left [ \ell_{\noepochsamplesize,b} \left ( \widehat{\bbeta}^{\noepochsamplesize}_\bx \right ) -  \ell_{\noepochsamplesize,b}\left \{\widehat{\bbeta}^{\noepochsamplesize}_{\bx,b} \right \} \right ], b \in \left \{1,2,\ldots, B \right \}  \right \}$ \;
        }
        \Else{
        Calculate $ \widehat{\Sigma}^{\noepochsamplesize}_B = \sum_{b=1}^B \left \{ \widehat{\bbeta}^{\noepochsamplesize}_{\bx,b} - \widehat{\bbeta}^{\noepochsamplesize}_\bx \right \} \left \{ \widehat{\bbeta}^{\noepochsamplesize}_{\bx,b} - \widehat{\bbeta}^{\noepochsamplesize}_\bx \right \}^\top / B$
        $\widehat{P}_{\noepochsamplesize} = \left \{ \left \{ \widehat{\bbeta}^{\noepochsamplesize}_{\bx,b} - \widehat{\bbeta}^{\noepochsamplesize}_\bx\right \}^\top \widehat{\Sigma}^{\noepochsamplesize}_B \left \{ \widehat{\bbeta}^{\noepochsamplesize}_{\bx,b}- \widehat{\bbeta}^{\noepochsamplesize}_\bx  \right \}   \right \}$ \; }
	Determine a $1-\alpha$ upper quantile of $\widehat{P}_{\noepochsamplesize}$ and label this $\widehat{\bgamma}_{1-\alpha, \noepochsamplesize}$\;
        \For{$\btheta \in \Theta_S$}{
         Generate a large network consistent with $\btheta$ (of size greater than $K$), and simulate an RDS study of size $K$ over this graph \;
        Use the log-likelihood of the branching process working model with this sample to find $\widehat{\bbeta}^K_{\bx}(\btheta)$\;
         \uIf{LLR}{
         If $ -2\left [ \ell_{\noepochsamplesize} \left \{ \widehat{\bbeta}^K_{\bx}(\btheta) \right \} -  \ell_{\noepochsamplesize} \left \{\widehat{\bbeta}^{\noepochsamplesize}_\bx \right \} \right ] \leq \widehat{\bgamma}_{1-\alpha, \noepochsamplesize}$, then include $\btheta$ in $\Gamma_{1-\alpha, \noepochsamplesize }$\;
        }
        \Else{
        If $\left \{ \widehat{\bbeta}^{\noepochsamplesize}_\bx- \widehat{\bbeta}^K_{\bx}(\btheta) \right \}^\top \widehat{\Sigma}^{\noepochsamplesize}_B \left \{ \widehat{\bbeta}^{\noepochsamplesize}_\bx - \widehat{\bbeta}^K_{\bx}(\btheta)  \right \} \leq \widehat{\bgamma}_{1-\alpha, \noepochsamplesize}$,
        then include $\btheta$ in $\Gamma_{1-\alpha, \noepochsamplesize }$\; }
        }
        \textbf{Output:} $\Gamma_{1-\alpha, \noepochsamplesize }$ \;
\caption{Bootstrap Confidence Set}
\label{alg:bootstrap}
\end{algorithm}

\newpage

\section{Proof of Theorem~\ref{thm:asympNorm}}
\label{app_sec:convrate}

\subsection{Supporting Martingale Limit Theory}
	Theorems~\ref{thm:SLLN}-\ref{thm:SLLNrand} establish useful martingale laws of large numbers.

\begin{thm}[Theorem 2.19 from \cite{hall2014martingale}]
    \label{thm:SLLN}
        Let $\{X_n, n \geq 1\}$ be a sequence of random variables and $\{\mathcal{F}_n, n \geq 1 \}$ be an increasing sequence of $\sigma$-fields such that $X_n$ is $\mathcal{F}^n$ measurable for each $n$. Let $X$ be a r.v. and $c$ a constant such $\bbE|X| < \infty$, $\bbE(|X|\log^+|X|) < \infty$, and $\bbP(|X_n| > x ) \leq c\bbP(|X| > x)$ for each $x \geq 0$ and $n\geq 1$ (note that if $X_n$ is bounded, this condition is satisfied). Then
        \begin{equation*}
            \frac{1}{n} \sum_{i=1}^n \left \{ X_i - \bbE(X_i | \mathcal{F}_{i-1}) \right \} \to 0
        \end{equation*}
        a.s. as $n \to \infty$. 
    \end{thm}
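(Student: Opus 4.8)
The plan is to prove this as the classical truncation-based strong law for martingales; since the statement is quoted verbatim from \cite{hall2014martingale}, the most economical route in the paper is simply to invoke their Theorem~2.19, but a self-contained argument would proceed as follows. First I would note that the tail-domination hypothesis together with $\bbE|X| < \infty$ forces each $X_n$ to be integrable, so that $D_i \triangleq X_i - \bbE(X_i \mid \mathcal{F}_{i-1})$ is a genuine martingale difference sequence and $S_n \triangleq \sum_{i=1}^n D_i$ is an $\{\mathcal{F}_n\}$-martingale; the goal is then to show $S_n/n \to 0$ a.s.

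Next I would truncate at the running level, setting $X_i' \triangleq X_i \mathbb{I}(|X_i| \le i)$ and writing $D_i = \{X_i' - \bbE(X_i' \mid \mathcal{F}_{i-1})\} + (X_i - X_i') - C_i$, where $C_i \triangleq \bbE\{X_i\mathbb{I}(|X_i| > i) \mid \mathcal{F}_{i-1}\}$. This splits $S_n/n$ into three Ces\`{a}ro averages that I would dispatch separately. The raw tail part $\frac1n\sum_{i \le n}(X_i - X_i')$ vanishes because $\sum_i \bbP(|X_i| > i) \le c\sum_i \bbP(|X| > i) \le c\,\bbE|X| < \infty$, so by Borel--Cantelli only finitely many summands are nonzero and their fixed total is divided by $n \to \infty$.

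For the truncated martingale part, I would verify the $L^2$-summability condition $\sum_i i^{-2}\bbE\{(X_i')^2 \mid \mathcal{F}_{i-1}\} < \infty$ a.s.; integrating and using the tail domination, $\sum_i i^{-2}\bbE(X_i^2\mathbb{I}(|X_i|\le i)) \lesssim \int_0^\infty x\,\bbP(|X|>x)\sum_{i > x} i^{-2}\,dx \lesssim \bbE|X| < \infty$, so that $\sum_i D_i'/i$ converges a.s. by the martingale convergence theorem and Kronecker's lemma yields $\frac1n\sum D_i' \to 0$. The main obstacle is the centering-correction term $\frac1n\sum_{i\le n} C_i$: here $|C_i| \le B_i \triangleq \bbE\{|X_i|\mathbb{I}(|X_i|>i)\mid\mathcal{F}_{i-1}\}$, and this is precisely where the stronger hypothesis $\bbE(|X|\log^+|X|) < \infty$ is indispensable, since $\bbE\sum_i B_i/i \le c\sum_i i^{-1}\bbE(|X|\mathbb{I}(|X|>i)) = c\,\bbE\{|X|\sum_{i<|X|} i^{-1}\} = O\{\bbE(|X|\log^+|X|)\} < \infty$. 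Hence $\sum_i B_i/i < \infty$ a.s., and a final application of Kronecker's lemma gives $\frac1n\sum C_i \to 0$, completing the argument. The delicate point throughout is that $\bbE|X|<\infty$ alone suffices for the tail and the quadratic-variation pieces, whereas the harmonic-weighted mean of the truncated conditional expectations converges only under the sharper $\log^+$ moment, so I would foreground that computation as the crux.
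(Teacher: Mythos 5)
Your proposal is correct. The paper itself supplies no proof of this statement: it is imported verbatim as Theorem~2.19 of \cite{hall2014martingale} and used as a black box, so the ``paper's approach'' is simply the citation. Your self-contained argument --- truncating at the running level, killing the raw tail by Borel--Cantelli from $\sum_i \bbP(|X_i|>i) \le c\,\bbE|X| < \infty$, handling the truncated martingale differences via the conditional second-moment series and Kronecker's lemma, and isolating the centering correction as the one place where $\bbE(|X|\log^{+}|X|)<\infty$ is genuinely needed --- is exactly the standard proof underlying that reference, and all three estimates check out.
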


     \begin{thm}[Theorem 2.17 from \cite{hall2014martingale}]
     \label{thm:martconv}
    Let $\left \{S_n = \sum_{i=1}^n X_i , \mathcal{F}_n, n \geq 1 \right \}$ be a martingale, and let $1 \leq p \leq 2$. Then $S_n$ converges a.s. on the set \\ $\left \{ \sum_{i=1}^\infty \bbE \left (|X_i|^p | \mathcal{F}_{i-1} \right ) < \infty \right \}$.
     \end{thm}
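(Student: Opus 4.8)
The plan is to prove this through the standard localization argument, which reduces the conditional-moment hypothesis to the case of a martingale bounded in $L^p$, where classical convergence theorems apply directly. First I would exploit the predictability of the conditional moments: since $\bbE\left(|X_i|^p \mid \mathcal{F}_{i-1}\right)$ is $\mathcal{F}_{i-1}$-measurable, the partial sums $B_n \triangleq \sum_{i=1}^{n} \bbE\left(|X_i|^p \mid \mathcal{F}_{i-1}\right)$ form a predictable, increasing process. Hence for each constant $K > 0$ the time
\begin{equation*}
\tau_K \triangleq \inf\left\{ n \geq 1 : B_{n+1} > K \right\}
\end{equation*}
is a stopping time with respect to $\{\mathcal{F}_n\}$, because $\{n \leq \tau_K\} = \{\tau_K \leq n-1\}^c \in \mathcal{F}_{n-1}$. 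The event $\{B_\infty < \infty\}$ on which convergence must be established is contained, up to a null set, in $\bigcup_{K \in \mathbb{N}} \{\tau_K = \infty\}$, so it suffices to show that the stopped sequence $S_{n \wedge \tau_K}$ converges almost surely for each fixed $K$.

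Next I would verify that the stopped martingale is bounded in $L^p$. The process $S_{n \wedge \tau_K} = \sum_{i=1}^n X_i \mathbb{I}(i \leq \tau_K)$ is again a martingale, and since $\{i \leq \tau_K\}$ is $\mathcal{F}_{i-1}$-measurable, its increments have conditional $p$-th moments summing to $B_{n \wedge \tau_K} \leq B_{\tau_K} \leq K$ by construction. The key analytic input is a von Bahr--Esseen type moment inequality: for martingale differences and $1 \leq p \leq 2$,
\begin{equation*}
\bbE\left| S_{n \wedge \tau_K}\right|^p \;\leq\; C_p \sum_{i=1}^n \bbE\left\{ |X_i|^p \mathbb{I}(i \leq \tau_K) \right\} = C_p\, \bbE\left\{ B_{n \wedge \tau_K} \right\} \leq C_p\, K,
\end{equation*}
uniformly in $n$, for an absolute constant $C_p$. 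The restriction $1 \leq p \leq 2$ enters precisely here, as it is the range in which the $p$-th moment of a martingale is dominated by the sum of the $p$-th moments of its increments.

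Finally I would invoke Doob's forward martingale convergence theorem. Uniform boundedness in $L^p$ implies boundedness in $L^1$, so $S_{n \wedge \tau_K}$ converges almost surely to an integrable limit for each $K$. On $\{\tau_K = \infty\}$ the stopped process agrees with $S_n$ for every $n$, hence $S_n$ converges there; taking the union over $K \in \mathbb{N}$ and using the containment above yields almost sure convergence on $\{B_\infty < \infty\}$, which is the claim. For the endpoint $p = 1$ the argument simplifies, since $L^1$-boundedness of the stopped sequence already gives almost sure convergence without passing through a higher moment.

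The main obstacle is the moment inequality in the second step: establishing the uniform $L^p$ bound on the stopped martingale is the crux of the result, and it is exactly the ingredient that pins down the admissible range $1 \leq p \leq 2$. The localization and the final appeal to Doob's theorem are routine once that bound is in hand.
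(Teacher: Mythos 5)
Your argument is correct: the localization via the predictable stopping times $\tau_K$, the uniform $L^p$ bound on the stopped martingale via the von Bahr--Esseen inequality for $1 \leq p \leq 2$, and the final appeal to Doob's convergence theorem together give exactly the claimed almost sure convergence on $\left\{ \sum_{i=1}^\infty \bbE\left(|X_i|^p \mid \mathcal{F}_{i-1}\right) < \infty \right\}$. The paper offers no proof of this statement --- it is quoted verbatim as Theorem 2.17 of \cite{hall2014martingale} --- and your route is essentially the standard one used in that reference, so there is nothing to reconcile.
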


     \begin{thm}[Theorem~2.18 from \cite{hall2014martingale}]
        \label{thm:SLLNrand}
         Let $\left \{S_n = \sum_{i=1}^n X_i , \mathcal{F}_n, n \geq 1 \right \}$ be a martingale and $\left \{U_n, n \geq 1 \right \}$ a non-decreasing sequence of positive random variables such that $U_n$ is $\mathcal{F}_{n-1}$-measurable for each $n$. If $1 \leq p \leq 2$, then
         \begin{equation*}
             \lim_{n \to \infty} U_n^{-1} S_n  = 0 \ a.s.
         \end{equation*}
         on the set $\left \{ \lim_{n\to \infty} U_n = \infty, \sum_{i=1}^\infty U_i^{-p} \bbE \left ( |X_i|^p | \mathcal{F}_{i-1} \right ) < \infty \right \}$.
     \end{thm}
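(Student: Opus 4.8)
The plan is to reduce the statement to the classical Kronecker lemma by way of a martingale transform, using the $L^p$ martingale convergence result already recorded as Theorem~\ref{thm:martconv}. First I would introduce the transformed partial sums
\begin{equation*}
  T_n \triangleq \sum_{i=1}^n U_i^{-1} X_i.
\end{equation*}
Because $U_n$ is $\mathcal{F}_{n-1}$-measurable and positive, each $U_i^{-1}$ is a predictable multiplier, so $\{U_i^{-1} X_i\}$ is again a sequence of martingale differences with respect to $\{\mathcal{F}_n\}$, and hence $\{T_n,\mathcal{F}_n\}$ is a martingale. Pulling the predictable factor out of the conditional moment gives, for $1 \leq p \leq 2$,
\begin{equation*}
  \bbE\left( |U_i^{-1} X_i|^p \mid \mathcal{F}_{i-1}\right) = U_i^{-p}\, \bbE\left( |X_i|^p \mid \mathcal{F}_{i-1}\right).
\end{equation*}

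Next I would apply Theorem~\ref{thm:martconv} to the martingale $\{T_n\}$. That result asserts almost-sure convergence of $T_n$ on the event where $\sum_{i=1}^\infty \bbE(|U_i^{-1}X_i|^p \mid \mathcal{F}_{i-1}) < \infty$, which by the identity above is exactly the event $\{\sum_{i=1}^\infty U_i^{-p}\bbE(|X_i|^p \mid \mathcal{F}_{i-1}) < \infty\}$. Hence on the set appearing in the statement (which is contained in this event), $T_n$ converges almost surely to a finite limit.

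Finally I would invoke Kronecker's lemma pathwise. Fix an $\omega$ in the set where $U_n(\omega) \uparrow \infty$ and $\sum_i U_i^{-1}(\omega) X_i(\omega)$ converges. Taking the deterministic non-decreasing sequence $u_i = U_i(\omega)\to\infty$ and summands $x_i = X_i(\omega)$, Kronecker's lemma yields $u_n^{-1}\sum_{i=1}^n x_i \to 0$, i.e.
\begin{equation*}
  U_n(\omega)^{-1} S_n(\omega) \longrightarrow 0.
\end{equation*}
Since this holds for almost every such $\omega$, we conclude $U_n^{-1} S_n \to 0$ a.s.\ on the stated set.

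The argument is short precisely because the two substantive ingredients---$L^p$ martingale convergence on a random set and Kronecker summation---are both available off the shelf. The points requiring care are structural rather than analytic: verifying that multiplication by the predictable sequence $U_i^{-1}$ preserves the martingale-difference property and lets $U_i^{-p}$ be extracted from the conditional $p$th moment, and checking that Kronecker's lemma may be applied $\omega$-by-$\omega$ on the random event of interest, so that the almost-sure conclusion is correctly localized. I expect this localization bookkeeping---ensuring the convergence set furnished by Theorem~\ref{thm:martconv} contains the set in the hypothesis---to be the main, though mild, obstacle.
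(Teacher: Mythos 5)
Your argument is correct and is exactly the standard proof of this result: the paper itself does not prove Theorem~\ref{thm:SLLNrand} but simply cites it from \cite{hall2014martingale}, and the proof there is precisely your martingale-transform-plus-Kronecker argument, using the paper's Theorem~\ref{thm:martconv} (Hall--Heyde Theorem~2.17) as the convergence input. The only point you gloss over is the integrability of $U_i^{-1}X_i$ needed to call $\{T_n\}$ a martingale (handled in the source by localization), but this is a routine technicality and does not affect the substance.
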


      \noindent We also state the Toeplitz Lemma from \cite{hall2014martingale}.
        \begin{lemma}
        \label{lem:toep}
        For $i \geq 1$, if $x_i$ are real numbers, $a_i$ are positive numbers, and $b_n = \sum_{i=1}^n a_i$ diverges, then $x_n \to x$ ensures that
        \begin{equation*}
            b_n^{-1} \sum_{i=1}^n a_i x_i \to x.
        \end{equation*}
        \end{lemma}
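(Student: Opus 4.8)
This is the classical Toeplitz lemma, and the plan is a direct $\epsilon$-$N$ argument built around the normalization identity $b_n^{-1}\sum_{i=1}^n a_i = 1$. First I would center the quantity of interest: since $\sum_{i=1}^n a_i = b_n$, we have $x = b_n^{-1}\sum_{i=1}^n a_i x$, so that
\begin{equation*}
    b_n^{-1}\sum_{i=1}^n a_i x_i - x = b_n^{-1}\sum_{i=1}^n a_i (x_i - x).
\end{equation*}
It therefore suffices to show that the right-hand side tends to $0$ as $n \to \infty$.

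Next I would fix $\epsilon > 0$ and use $x_n \to x$ to choose $N$ with $|x_i - x| < \epsilon$ for all $i > N$, then split the sum at $N$:
\begin{equation*}
    b_n^{-1}\sum_{i=1}^n a_i (x_i - x) = b_n^{-1}\sum_{i=1}^{N} a_i (x_i - x) + b_n^{-1}\sum_{i=N+1}^{n} a_i (x_i - x).
\end{equation*}
The head term has a fixed numerator $\sum_{i=1}^{N} a_i (x_i - x)$ that does not depend on $n$, so since $b_n \to \infty$ by hypothesis, this term converges to $0$; in particular it is eventually smaller than $\epsilon$ in absolute value. For the tail term I would invoke positivity of the $a_i$ together with the bound $|x_i - x| < \epsilon$ for $i > N$ and the inequality $\sum_{i=N+1}^n a_i \le b_n$, yielding
\begin{equation*}
    \left| b_n^{-1}\sum_{i=N+1}^{n} a_i (x_i - x) \right| \le b_n^{-1}\sum_{i=N+1}^{n} a_i \,\epsilon \le \epsilon.
\end{equation*}
Combining the two bounds gives $\bigl| b_n^{-1}\sum_{i=1}^n a_i x_i - x \bigr| < 2\epsilon$ for all sufficiently large $n$, and since $\epsilon$ was arbitrary the claim follows.

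There is no serious obstacle here: the only points requiring care are that the $a_i$ being \emph{positive} is what lets me pass to absolute values in the tail while retaining $\sum_{i=N+1}^n a_i \le b_n$ (this would fail for signed weights), and that the \emph{divergence} of $b_n$ is exactly what kills the fixed head term. I would note in passing that this establishes the lemma as stated for deterministic sequences; its use in the surrounding martingale arguments is pointwise (e.g.\ applied along almost-sure realizations), so no additional measurability considerations are needed.
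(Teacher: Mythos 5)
Your proof is correct and complete: the centering identity $x = b_n^{-1}\sum_{i=1}^n a_i x$, the split at $N$, the use of divergence of $b_n$ to kill the fixed head term, and positivity of the $a_i$ to bound the tail by $\epsilon$ are exactly the ingredients needed, and you correctly flag the two places where the hypotheses are actually used. The paper itself gives no proof -- it states the lemma as a cited result from \cite{hall2014martingale} -- and your $\epsilon$-$N$ argument is the standard textbook proof that the cited reference supplies, so there is nothing to reconcile; your closing remark that the paper's applications are pointwise along almost-sure realizations (e.g.\ in the proof of Theorem~\ref{thm:branchingSLLN} and Lemma~\ref{lem:toeplitzknockoff}) is also accurate.
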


        \noindent Lastly, we state a strong law of large numbers specific to the branching process model developed in Section~\ref{sec:rl}. Assume the historical information and epoch setup of Section~\ref{sec:rl} for the following Theorem. 

      \begin{thm}
        \label{thm:branchingSLLN}
        For function $z: \mathscr{D} \to \mathbb{R}$, assume for any $j \in \mathbb{N}$, $i \in \epochgroup$, $\bbeta \in \mathcal{B}$, and $B < \infty$ that
        \begin{equation}
            \label{eq:boundedf}
            \bbE_{\bbeta, \widetilde{\bpi}} \left \{ z^2(\bD_i) \mid \epochfield \right \} < B.
        \end{equation}
        Additionally, assume that for $j \in\mathbb{N}$, $\{\bD_i\}_{i \in \epochgroup}$ are conditionally independent given $\epochfield$.
        For $\delta > 0$, define the event $E_\branchasymp= \left \{ \branchasymp> \delta \right \}$, where $\branchasymp$ is defined in Assumption~\ref{as:generation_asymptotics}. Then, 
        \begin{equation*}
            \lim_{J \to \infty} \frac{1}{\overline{\kappa}_J} \left [ \sum_{j =1}^J \sum_{i \in \epochgroup} W_i z(\bD_i) - \bbE_{\bbeta, \widehat{\bpi}} \left \{ W_i z(\bD_i) \mid \epochfield \right \} \right ]  \to 0
        \end{equation*}
        on the event $E_\branchasymp$.
        \end{thm}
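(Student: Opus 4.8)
The plan is to recognize the centered sum as a martingale along the epoch filtration $\{\mathcal{F}_j\}$ and to apply the random-normalization martingale strong law (Theorem~\ref{thm:SLLNrand}) with $p=2$ and random normalizer $U_J=\overline{\kappa}_J$. Write
\begin{equation*}
M_J \triangleq \sum_{j=1}^J \xi_j, \qquad \xi_j \triangleq \sum_{i\in\epochgroup}\left[W_i z(\bD_i) - \bbE_{\bbeta,\widehat{\bpi}}\{W_i z(\bD_i)\mid\epochfield\}\right].
\end{equation*}
Each summand indexed by $i\in\mathcal{E}_{j-1}$ is a function of $\bD_i$ (hence $\mathcal{F}_j$-measurable, since the recruit data and $\bA_i$ for $i\in\mathcal{E}_{j-1}$ survive the removal of $\{\bA_i\}_{i\in\mathcal{E}_j}$) and is centered given $\mathcal{F}_{j-1}$, so $\bbE_{\bbeta,\widehat{\bpi}}(\xi_j\mid\mathcal{F}_{j-1})=0$ and $\{M_J,\mathcal{F}_J\}_{J\geq1}$ is a martingale. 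Moreover $\overline{\kappa}_J=|\overline{\mathcal{E}}_{J-1}|$ is non-decreasing and $\mathcal{F}_{J-1}$-measurable, and Lemma~\ref{lem:generation_asymptotics_2} together with Assumption~\ref{as:generation_asymptotics} forces $\overline{\kappa}_J\to\infty$ on $E_\branchasymp$; thus $U_J=\overline{\kappa}_J$ is an admissible normalizer for Theorem~\ref{thm:SLLNrand}.

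It then remains to control the predictable quadratic variation $\sum_J \overline{\kappa}_J^{-2}\,\bbE_{\bbeta,\widehat{\bpi}}(\xi_J^2\mid\mathcal{F}_{J-1})$. First I would use the assumed conditional independence of $\{\bD_i\}_{i\in\mathcal{E}_{J-1}}$ given $\mathcal{F}_{J-1}$: the centered terms are then conditionally independent and conditionally mean zero, so all cross terms vanish and
\begin{equation*}
\bbE_{\bbeta,\widehat{\bpi}}(\xi_J^2\mid\mathcal{F}_{J-1}) = \sum_{i\in\mathcal{E}_{J-1}}\mathrm{Var}_{\bbeta,\widehat{\bpi}}\{W_i z(\bD_i)\mid\mathcal{F}_{J-1}\} \leq \sum_{i\in\mathcal{E}_{J-1}}\bbE_{\bbeta,\widehat{\bpi}}\{W_i^2 z^2(\bD_i)\mid\mathcal{F}_{J-1}\}.
\end{equation*}
Next, the same importance-weight change of measure used in the definition of $\eta_J$ (absorbing $W_i^2=\bbP_{\widetilde{\pi}}(\bA_i|\bH_i)/\bbP_{\widehat{\pi}}(\bA_i|\bH_i)$ against the sampling density) gives $\bbE_{\bbeta,\widehat{\bpi}}\{W_i^2 z^2(\bD_i)\mid\mathcal{F}_{J-1}\}=\bbE_{\bbeta,\widetilde{\bpi}}\{z^2(\bD_i)\mid\mathcal{F}_{J-1}\}<B$ by hypothesis~(\ref{eq:boundedf}). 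Hence $\bbE_{\bbeta,\widehat{\bpi}}(\xi_J^2\mid\mathcal{F}_{J-1})\leq B\,\kappa_{J-1}$.

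Finally I would verify $\sum_J \overline{\kappa}_J^{-2}\kappa_{J-1}<\infty$ on $E_\branchasymp$. By Assumption~\ref{as:generation_asymptotics}, $\kappa_j/m^j\to\branchasymp$ a.s.\ with offspring mean $m>1$, and on $E_\branchasymp$ we have $\branchasymp>\delta>0$; thus there is an a.s.\ finite positive (random) constant $c$ with $\kappa_{J-1}\leq c\,m^{J}$ and $\overline{\kappa}_J=\sum_{j=0}^{J-1}\kappa_j\geq \kappa_{J-1}\geq c^{-1}m^{J}$ for all large $J$, so that $\overline{\kappa}_J^{-2}\kappa_{J-1}\leq c^3\,m^{-J}$ eventually and $\sum_J m^{-J}<\infty$. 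Both defining conditions of Theorem~\ref{thm:SLLNrand} therefore hold on $E_\branchasymp$, and we conclude $\overline{\kappa}_J^{-1}M_J\to0$ a.s.\ on $E_\branchasymp$, which is the claimed limit. The main obstacle is precisely the random normalization $\overline{\kappa}_J$: unlike a fixed-design law of large numbers, summability of the quadratic variation is only available on the super-critical event, and making this rigorous hinges on the $\mathcal{F}_{J-1}$-measurability of $\overline{\kappa}_J$ (so that Theorem~\ref{thm:SLLNrand} applies) combined with the geometric lower bound $\overline{\kappa}_J\gtrsim m^J$ supplied by Assumption~\ref{as:generation_asymptotics} on $E_\branchasymp$, which converts the quadratic-variation series into a convergent geometric one.
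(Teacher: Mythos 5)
Your proposal is correct and follows essentially the same route as the paper: cast the centered sum as a martingale in the epoch filtration, apply the randomly normalized martingale SLLN (Theorem~\ref{thm:SLLNrand}) with $U_J=\overline{\kappa}_J$, kill the cross terms via conditional independence, absorb $W_i^2$ by the change of measure to invoke the bound~(\ref{eq:boundedf}), and use the supercritical growth $\kappa_j/m^j\to\branchasymp>\delta$ on $E_\branchasymp$ to make the quadratic-variation series summable. The only cosmetic difference is the last step, where you bound the series directly by a geometric one while the paper reduces it to $\sum_j \kappa_j^{-1}$ and applies the Toeplitz lemma; both rest on exactly the same fact.
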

        \begin{proof}
            We first consider the sum
            \begin{equation*}
            \sum_{j =1}^J \sum_{i \in \epochgroup} W_i z(\bD_i) - \bbE_{\bbeta, \widehat{\bpi}} \left \{ W_i z(\bD_i) \mid \epochfield \right \}. 
            \end{equation*}
            This is clearly a martingale sequence with respect to fields $\left \{ \mathcal{F}_{j} \right \}_{j=1}^J$ and martingale differences 
            \begin{equation*}
               \left [ \sum_{i \in \epochgroup} W_i z(\bD_i) - \bbE_{\bbeta, \widehat{\bpi}} \left \{ \sum_{i \in \epochgroup} W_i z(\bD_i) \mid \epochfield \right \} \right ]_{j = 1}^J.
            \end{equation*}
            Consequently, we can invoke Theorem~\ref{thm:SLLNrand}, implying that
            \begin{equation*}
                \lim_{J \to \infty} \frac{1}{\overline{\kappa}_J} \left [ \sum_{j =1}^J \sum_{i \in \epochgroup} W_i z(\bD_i) - \bbE_{\bbeta, \widehat{\bpi}} \left \{ W_i z(\bD_i) \mid \epochfield \right \} \right ] = 0 
            \end{equation*}
            on the set 
            \begin{align*}
                &\Bigg \{ \lim_{J \to \infty} \overline{\kappa}_J \to \infty, \\
                &\lim_{J \to \infty} \sum_{j=1}^J \frac{1}{\overline{\kappa}^2_j} \bbE_{\bbeta, \widehat{\bpi}} \left ( \left [  \sum_{i \in \epochgroup} W_i z(\bD_i) - \bbE_{\bbeta, \widehat{\bpi}} \left \{ W_i z(\bD_i) | \epochfield \right \}  \right ]^2 \mid \epochfield \right ) < \infty \Bigg \}. 
            \end{align*}
            We examine
            \begin{align*}
                &\sum_{j=1}^J\frac{1}{\overline{\kappa}_j^2} \bbE_{\bbeta, \widehat{\bpi}} \left ( \left [  \sum_{i \in \epochgroup} W_i z(\bD_i) - \bbE_{\bbeta, \widehat{\bpi}} \left \{ W_i z(\bD_i) \mid \epochfield \right \}  \right ]^2 \mid \epochfield\right ) \overset{(a)}{=} \\
                &\sum_{j=1}^J\frac{1}{\overline{\kappa}_j^2} \sum_{i \in \epochgroup}  \bbE_{\bbeta, \widehat{\bpi}} \left ( \left [ W_i z(\bD_i) - \bbE_{\bbeta, \widehat{\bpi}} \left \{ W_i z(\bD_i) \mid \epochfield \right \}  \right ]^2 \mid \epochfield\right ) = \\
                &\sum_{j=1}^J\frac{1}{\overline{\kappa}_j^2} \sum_{i \in \epochgroup} \Big (  \bbE_{\bbeta, \widehat{\bpi}} \left \{W^2_i z^2(\bD_i) \mid \epochfield\right \} - \\
                &\hspace{1cm}\bbE_{\bbeta, \widehat{\bpi}} \left [ \bbE_{\bbeta, \widehat{\bpi}} \left \{ W_i z(\bD_i) \mid \epochfield \right \}^2 \mid \epochfield  \right ] \Big ) \leq \\
                &\sum_{j=1}^J\frac{1}{\overline{\kappa}_j^2} \sum_{i \in \epochgroup} \bbE_{\bbeta, \widehat{\bpi}} \left (W^2_i z^2(\bD_i) \mid \epochfield\right ) = \\ 
                &\sum_{j=1}^J\frac{1}{\overline{\kappa}_j^2} \sum_{i \in \epochgroup} \int W_i^2 z^2(\bD_i) f\left ( \bD_i|\vA_i, \epochfield \right ) \bbP_{\widehat{\pi}}(\bA_i | \epochfield ) d\nu  \overset{(b)}{=} \\
                &\sum_{j=1}^J\frac{1}{\overline{\kappa}_j^2} \sum_{i \in \epochgroup} \int \frac{\bbP_{\widetilde{\pi}}(\bA_i | \bH_i)}{\bbP_{\widehat{\pi}}(\bA_i | \bH_i) } z^2(\bD_i)  f\left ( \bD_i|\vA_i, \epochfield \right ) \bbP_{\widehat{\pi}}(\bA_i | \bH_i) d\nu = \\
                &\sum_{j=1}^J\frac{1}{\overline{\kappa}_j^2} \sum_{i \in \epochgroup} \bbE_{\bbeta, \widetilde{\bpi}} \left (z^2(\bD_i) \mid \epochfield\right ) \overset{(c)}{\leq} \\
                &\sum_{j=1}^J \frac{\kappa_j}{\overline{\kappa}_j^2} B \overset{(d)}{\leq} \\
                &\sum_{j=1}^J \frac{1}{\kappa_j} B. 
            \end{align*}
            Equality $(a)$ follows from the fact that for $i>g$,
            \begin{align*}
                &\bbE_{\bbeta, \widehat{\bpi}} \left (  \left [ W_i z(\bD_i) - \bbE_{\bbeta, \widehat{\bpi}} \left \{ W_i z(\bD_i) \mid \epochfield \right \}  \right ]\left [ W_g z(\bD_g) - \bbE_{\bbeta, \widehat{\bpi}} \left \{ W_g z(\bD_g) \mid \epochfield \right \}  \right ] \mid \epochfield\right ) \\
                &=\bbE_{\bbeta, \widehat{\bpi}} \Bigg (  \left [ \bbE_{\bbeta, \widehat{\bpi}}  \left \{ W_i z(\bD_i) \mid  \epochfield, \bD_g\right \}  - \bbE_{\bbeta, \widehat{\bpi}} \left \{ W_i z(\bD_i) \mid \epochfield \right \}  \right ] \times \\
                &\hspace{2cm}\left [ W_g z(\bD_g) - \bbE_{\bbeta, \widehat{\bpi}} \left \{ W_g z(\bD_g) \mid \epochfield \right \}  \right ] \mid \epochfield \Bigg ) \\
                &\overset{(i)}{=} \bbE_{\bbeta, \widehat{\bpi}} \Bigg (  \left [ \bbE_{\bbeta, \widehat{\bpi}}  \left \{ W_i z(\bD_i) \mid  \epochfield\right \}  - \bbE_{\bbeta, \widehat{\bpi}} \left \{ W_i z(\bD_i) \mid \epochfield \right \}  \right ] \times \\
                &\hspace{2cm}\left [ W_g z(\bD_g) - \bbE_{\bbeta, \widehat{\bpi}} \left \{ W_g z(\bD_g) \mid \epochfield \right \}  \right ] \mid \epochfield \Bigg ) \\
                &= 0.
            \end{align*}
            Equality $(i)$ follows because
            \begin{equation*}
               \bbE_{\bbeta, \widehat{\bpi}}  \left \{ W_i z(\bD_i) \mid  \epochfield, \bD_g\right \} =  \bbE_{\bbeta, \widehat{\bpi}}  \left \{ W_i z(\bD_i) \mid  \epochfield \right \}.
            \end{equation*}
            Equality $(b)$ follows from the fact that $\bbP_{\widehat{\pi}}(\bA_i | \epochfield ) = \bbP_{\widehat{\pi}}(\bA_i | \bH_i)$ and the definition of $W_i$.
            Inequality $(c)$ follows from Equation~\ref{eq:boundedf}.
             Additionally, we know
            \begin{align*}
                \sum_{j=1}^J \frac{1}{\kappa_j} B &= \sum_{j=1}^J \frac{1}{m^{j-1}} \frac{m^{j-1}}{\kappa_j} B \\
                &= \left ( \sum_{j=1}^J \frac{1}{m^{j-1}} \right ) \left ( \sum_{j=1}^J \frac{1}{m^{j-1}} \right )^{-1} \sum_{j=1}^J \frac{1}{m^{j-1}} \frac{m^{j-1}}{\kappa_j} B \\ &\overset{\mathrm{a.s.}}{\to} 
                \left ( \sum_{j=1}^J \frac{1}{m^{j-1}} \right ) \branchasymp^{-1}B
            \end{align*}
            by Lemma~\ref{lem:toep} and Assumption~\ref{as:generation_asymptotics}.
            Because we assume event $E_\branchasymp$, we know that $\branchasymp^{-1} < \frac{1}{\delta}$. Consequently,
            \begin{equation*}
                \sum_{j=1}^J \frac{1}{\kappa_j} B \overset{\mathrm{a.s.}}{\to} 
               \frac{1}{1-1/m} \branchasymp^{-1}B < \frac{B}{\delta(1-1/m)} < \infty.
            \end{equation*}
            
        \end{proof}
        
\subsection{Consistency}
	First, we show asymptotic consistency of $\widehat{\bbeta}_{J}$. We follow the proof of \cite{zhang2021statistical}, making slight changes to account for a more general context. We define $e_i(\bbeta) \triangleq  W_i l_i(\bbeta)$. By the definition of $\widehat{\bbeta}_J$,
	
	\begin{equation}
		\label{eq:esteq1}
		\sum_{j = 1}^J \sum_{i \in \epochgroup} e_i  (\widehat{\bbeta}_{J}) = \sup_{\bbeta \in \mathcal{B}}  \sum_{j = 1}^J \sum_{i \in \epochgroup} e_i \left (\bbeta \right ) \geq \sum_{j = 1}^J \sum_{i \in \epochgroup} e_i \left (\bbeta^* \right ).
	\end{equation}
	Note that $\|\widehat{\bbeta}_J - \bbeta^*\| > \epsilon > 0$ implies that
	\begin{equation}
		\label{eq:esteq2}
		\sup_{\bbeta \in \mathcal{B}: \|\bbeta - \bbeta^*\| > \epsilon} \sum_{j = 1}^J \sum_{i \in \epochgroup} e_i \left (\bbeta \right ) = \sup_{\bbeta \in \mathcal{B}} \sum_{j = 1}^J \sum_{i \in \epochgroup} e_i \left (\bbeta \right ). 
	\end{equation}
        Define $\bbP_{\bbeta^*, \widehat{\bpi}}$ as the probability distribution associated with the complete data-generating distribution, where $\bbeta^*$ is the generative parameter associated with the branching process and $\widehat{\bpi}$ is the policy followed during data collection. Equations~\ref{eq:esteq1} and \ref{eq:esteq2} imply the following inequality,
	\begin{align*}
		\bbP_{\bbeta^*, \widehat{\bpi}}\left ( \| \widehat{\bbeta}_{J}- \bbeta^* \| > \epsilon \right ) &\leq \bbP_{\bbeta^*, \widehat{\bpi}}\left \{ \sup_{\bbeta \in \mathcal{B}: \|\bbeta - \bbeta^*\| > \epsilon} \sum_{j = 1}^J \sum_{i \in \epochgroup}e_i \left (\bbeta \right ) \geq  \sum_{j = 1}^J \sum_{i \in \epochgroup}e_i \left (\bbeta^* \right ) \right \} \\
		&\hspace{-2.5cm}=\bbP_{\bbeta^*, \widehat{\bpi}}\left [ \sup_{\bbeta \in \mathcal{B}: \|\bbeta - \bbeta^*\| > \epsilon} \frac{1}{\overline{\kappa}_J} \sum_{j = 1}^J \sum_{i \in \epochgroup}W_il_i\left (\bbeta \right ) - \frac{1}{\overline{\kappa}_J} \sum_{j = 1}^J \sum_{i \in \epochgroup} W_il_i\left (\bbeta^* \right ) \geq 0  \right ]\\
		&\hspace{-4cm} = \bbP_{\bbeta^*, \widehat{\bpi}} \Bigg [ \sup_{\bbeta \in \mathcal{B}: \|\bbeta - \bbeta^*\| > \epsilon} \Bigg \{ \Bigg ( \frac{1}{\overline{\kappa}_J} \sum_{j = 1}^J \sum_{i \in \epochgroup}\Bigg [  W_il_i \left (\bbeta \right ) -  \bbE_{\bbeta^*, \widehat{\bpi}} \left \{  W_il_i \left (\bbeta \right ) | \epochfield\right \}  + \\
        &\hspace{-3.5cm} \bbE_{\bbeta^*, \widehat{\bpi}} \left \{  W_il_i \left (\bbeta \right ) | \epochfield\right \} \Bigg ] \Bigg ) \\
		&\hspace{-3.5cm}-\frac{1}{\overline{\kappa}_J} \sum_{j = 1}^J \sum_{i \in \epochgroup}\left [  W_il_i \left (\bbeta^* \right ) -  \bbE_{\bbeta^*, \widehat{\bpi}} \left \{  W_il_i \left (\bbeta^* \right ) | \epochfield\right \}  + \bbE_{\bbeta^*, \widehat{\bpi}} \left \{  W_il_i \left (\bbeta^* \right ) | \epochfield\right \} \right ] \geq 0 \Bigg \} \Bigg ].
	\end{align*}
	By the triangle inequality,
	\begin{align}
		\begin{split}
			\label{eq:consist}
			\leq \bbP_{\bbeta^*, \widehat{\bpi}} \Bigg \{& \underbrace{\sup_{\bbeta \in \mathcal{B}: \|\bbeta - \bbeta^*\| > \epsilon} \left ( \frac{1}{\overline{\kappa}_J} \sum_{j = 1}^J \sum_{i \in \epochgroup}\left [  W_il_i \left (\bbeta \right ) -  \bbE_{\bbeta^*, \widehat{\bpi}} \left \{  W_il_i \left (\bbeta \right ) | \epochfield\right \} \right ] \right )}_{\mathrm{(a)}} \\
			&+ \underbrace{\sup_{\bbeta \in \mathcal{B}: \|\bbeta - \bbeta^*\| > \epsilon} \left \{ \frac{1}{\overline{\kappa}_J} \sum_{j = 1}^J \sum_{i \in \epochgroup} \left (\bbE_{\bbeta^*, \widehat{\bpi}} \left [  W_i \left \{ l_i \left (\bbeta \right ) -  l_i \left (\bbeta^* \right ) \right \} | \epochfield\right ] \right ) \right \}}_{\mathrm{(b)}} \\
			&\hspace{0.2cm}- \underbrace{\frac{1}{\overline{\kappa}_J} \sum_{j = 1}^J \sum_{i \in \epochgroup}\left [  W_il_i \left (\bbeta^* \right ) -  \bbE_{\bbeta^*, \widehat{\bpi}} \left \{  W_il_i \left (\bbeta^* \right ) | \epochfield\right \}\right ]}_{\mathrm{(c)}} \geq 0 \Bigg \}.
		\end{split}
	\end{align}
	We first analyze quantity (c), which is a martingale by construction. By Assumption~\ref{as:moments} (the moments condition), we know that there exists $B_{l^2} \in \mathbb{R}^+$ such that \\ $\bbE_{\bbeta^*, \widetilde{\bpi}} \left \{ l_i^2(\bbeta^*) | \epochfield\right \}  \leq B_{l^2}$. Consequently,
	by Theorem~\ref{thm:branchingSLLN}, $\lim_{J \to \infty}|\mathrm{(c)}| \to 0$ almost surely. We use a uniform martingale strong law of large numbers, Lemma~\ref{lem:USLLN}, to prove that $\lim_{ J \to \infty}|\mathrm{(a)}| \to 0$ almost surely.
	
	Therefore, it is sufficient for consistency to show that there exists $\delta' > 0$ such that
	\begin{equation*}
		\lim_{J \to \infty} \sup_{\bbeta \in \mathcal{B}: \|\bbeta - \bbeta^*\| > \epsilon} \left \{ \frac{1}{\overline{\kappa}_J} \sum_{j = 1}^J \sum_{i \in \epochgroup} \left (\bbE_{\bbeta^*, \widehat{\bpi}} \left [  W_i \left \{ l_i \left (\bbeta \right ) -  l_i \left (\bbeta^* \right ) \right \} | \epochfield\right ] \right ) \right \} \leq -\delta' \ \mathrm{a.s.}
	\end{equation*}
	By the law of iterated expectations,
	\begin{align*}
		&\sup_{\bbeta \in \mathcal{B}: \|\bbeta - \bbeta^*\| > \epsilon} \left \{ \frac{1}{\overline{\kappa}_J} \sum_{j = 1}^J \sum_{i \in \epochgroup} \left (\bbE_{\bbeta^*, \widehat{\bpi}} \left [  W_i \left \{ l_i \left (\bbeta \right ) -  l_i \left (\bbeta^* \right ) \right \} \mid \epochfield \right ] \right ) \right \} = \\
		&\sup_{\bbeta \in \mathcal{B}: \|\bbeta - \bbeta^*\| > \epsilon} \Bigg \{ \frac{1}{\overline{\kappa}_J} \sum_{j = 1}^J \sum_{i \in \epochgroup} \Big (\int_{\mathcal{A}_i} \bbP \left ( \bA_i = \ba \mid \epochfield \right ) \times \\
        &\hspace{2.5cm}W_i \bbE_{\bbeta^*} \left \{ l_i \left (\bbeta \right ) -  l_i \left (\bbeta^* \right ) \mid \bA_i = \ba, \epochfield\right \} d\ba \Big ) \Bigg \}. \ \ (*)
	\end{align*}
	First, by the restriction of information explained in Section~\ref{sec:rl}, we know that \\ $\bbP \left ( \bA_i = \ba \mid \epochfield \right ) = \bbP_{\widehat{\bpi}} \left ( \ba \mid H_i \right )$. For $\rho_{\min}, \rho_{\max} \in \mathbb{R}$, we know that $\rho_{\min} \leq W_i \leq \rho_{\max}$ by Assumption~\ref{as:clip}.
	Because $  \bbE_{\bbeta^*} \left \{ l_i \left (\bbeta \right ) -  l_i \left (\bbeta^* \right ) | \bA_i = \ba_i, \epochfield\right \}\leq 0$ with probability $1$, we find that 
	\begin{align*}
		& (*) \  \leq \sup_{\bbeta \in \mathcal{B}: \|\bbeta - \bbeta^*\| > \epsilon} \Bigg \{ \frac{1}{\rho_{\max} \overline{\kappa}_J} \sum_{j = 1}^J \sum_{i \in \epochgroup} \Bigg (\int_{\mathcal{A}_i} \bbP_{\widehat{\bpi}}(\ba \mid \bH_i) \times \\
        &\hspace{7.2cm}W_i^2 \bbE_{\bbeta^*} \left \{ l_i \left (\bbeta \right ) -  l_i \left (\bbeta^* \right ) | \bA_i = \ba, \epochfield \right \} d\ba \Bigg ) \Bigg \} \\
		&= \sup_{\bbeta \in \mathcal{B}: \|\bbeta - \bbeta^*\| > \epsilon} \Bigg \{ \frac{1}{\rho_{\max} \overline{\kappa}_J} \sum_{j = 1}^J \sum_{i \in \epochgroup} \Bigg ( \int_{\mathcal{A}_i} \bbP_{\widetilde{\bpi}}(\ba \mid \bH_i) \times \\ &\hspace{7.2cm}\bbE_{\bbeta^*} \left \{ l_i \left (\bbeta \right ) -  l_i \left (\bbeta^* \right ) | \bA_i = \ba, \epochfield\right \} d\ba  \Bigg ) \Bigg \} \\
		&= \sup_{\bbeta \in \mathcal{B}: \|\bbeta - \bbeta^*\| > \epsilon} \left [ \frac{1}{\rho_{\max} \overline{\kappa}_J} \sum_{j = 1}^J \sum_{i \in \epochgroup} \bbE_{\bbeta^*, \widetilde{\bpi}} \left \{ l_i \left (\bbeta \right ) -  l_i \left (\bbeta^* \right ) | \epochfield\right \}  \right ].
	\end{align*}
    From Assumption~\ref{as:wellseperated}, it follows that
		\begin{equation*}
			\frac{1}{\rho_{\max}} \lim_{J \to \infty} \sup_{\bbeta \in \mathcal{B}: \|\bbeta - \bbeta^*\| > \epsilon} \left [ \frac{1}{\overline{\kappa}_J}\sum_{j = 1}^J \sum_{i \in \epochgroup}\bbE_{\bbeta^*, \widetilde{\bpi}} \left \{ l_i \left (\bbeta \right ) -  l_i \left (\bbeta^* \right ) | \epochfield\right \} \right ] \leq -\delta \frac{1}{\rho_{\max}} \ \ \mathrm{a.s.}
	\end{equation*}
	Thus, Equation~\ref{eq:consist} holds with $\delta' \triangleq \delta \frac{1}{\rho_{\max}}$.

\subsection{Convergence Rate}
    First, we will prove that
    \begin{equation}
    \label{eq:almostrate}
        \Sigma_{J}^{1/2}  \left ( \widehat{\bbeta}_{J} - \bbeta^* \right ) = O_p \left ( 1 \right ).
    \end{equation}
    Define the weighted log-likelihood at $\bbeta \in \mathcal{B}$ as
    \begin{equation*}
      \mathcal{M}_{J}(\bbeta) \triangleq  \sum_{j = 1}^J \sum_{i \in \epochgroup}W_i l_i(\bbeta).
    \end{equation*}
    We begin with a Taylor series expansion of $\dot{\mathcal{M}_{J}}$ between $\widehat{\bbeta}_{J}$ and $\bbeta^*$,
	\begin{equation*}
		0 = \dot{\mathcal{M}_{J}}(\bbeta^*) + \ddot{\mathcal{M}}_{J}(\bar{\bbeta}^{J})(\widehat{\bbeta}_{J} - \bbeta^*),
	\end{equation*}
	where $\bar{\bbeta}^{J}$ is between $\widehat{\bbeta}_{J}$ and $\bbeta^*$. Multiplying and dividing by $\Sigma_{J}^{-1/2}$, we find that
	\begin{align*}
		-\Sigma_{J}^{-1/2}\dot{\mathcal{M}}_{J}(\bbeta^*) &=  \Sigma_{J}^{-1/2} \left \{ \ddot{\mathcal{M}}_{J}(\bar{\bbeta}^{J}) \right \} \Sigma_{J}^{-1/2} \Sigma_{J}^{1/2} \left (\widehat{\bbeta}_{J} - \bbeta^* \right ).
	\end{align*}
	If we show that
	\begin{equation}
		\label{eq:asconv}
		\Sigma_{J}^{-1/2}\dot{\mathcal{M}}_{J}(\bbeta^*) = O_p(1),
	\end{equation}
    and 
    \begin{equation}
        \label{eq:scale}
        \sigma_{\min} \left (\Sigma_{J}^{-1/2} \ddot{\mathcal{M}}_{J}(\bar{\bbeta}^{J}) \Sigma_{J}^{-1/2} \right )^{-1}  = O_p \left (1 \right ),
    \end{equation}
    then Equation~\ref{eq:almostrate} follows.
	Consequently, we will divide the proof of Equation~\ref{eq:almostrate} into proofs of Equations~\ref{eq:asconv} and \ref{eq:scale}. This proof only relies on Assumptions~\ref{as:1}-\ref{as:moments} and \ref{as:stabalizedvariance}.
	
	\subsubsection{Proof of Equation~\ref{eq:asconv}}
    \label{app_sec:asconv}
 
	  We first prove that
	\begin{equation*}
	\Sigma_{J}^{-1/2}\dot{\mathcal{M}}_{J}(\bbeta^*) = O_p(1).
    \end{equation*}
	Note that
	\begin{equation*}
		\Sigma_{J}^{-1/2}\dot{\mathcal{M}}_{J}(\bbeta^*) = \Sigma_{J}^{-1/2} \left \{ \sum_{j = 1}^J \sum_{i \in \epochgroup}W_i \dot{l}_i(\bbeta^*) \right \}.
	\end{equation*}
	Define $\bc_g  \in \mathbb{R}^k$ as a standard basis vector with $1$ in the $g^{th}$ position and $0$ elsewhere. We know that for any $g \in \{1,2,\ldots, k \}$,
	\begin{equation*}
		\left \{ \bc_g^\top \Sigma_{J}^{-1/2} \sum_{i \in \epochgroup} W_i \dot{l}_i(\bbeta^*)   \right \}_{j=1}^{J}
	\end{equation*}
	is a martingale difference sequence with respect to fields $\left \{ \mathcal{F}_j \right \}_{j=1}^{J}$. For any $j \in \mathbb{N}$ and $\bpi \in \Pi$,
	\begin{align*}
		\bbE_{\bbeta^*, \bpi} \left \{ \bc_g^\top \sum_{i \in \epochgroup} W_i \dot{l}_i(\bbeta^*) | \epochfield\right \} &=	\bc_g^\top \sum_{i \in \epochgroup} \bbE_{\bbeta^*, \bpi} \left \{  W_i \dot{l}_i(\bbeta^*) | \epochfield \right \} \\
		&= \bbE_{\bbeta^*, \bpi} \left [ \bc_g^\top \sum_{i \in \epochgroup}  W_i \bbE_{\bbeta^*}  \left \{ \dot{l}_i(\bbeta^*) | \epochfield, \bA_i \right \} \mid \epochfield \right ] \\
		&= 0.
	\end{align*}
	We now apply Theorem~\ref{thm:martconv} (Theorem~2.15 from \cite{hall2014martingale}) to this martingale difference sequence. By this theorem,
    \begin{equation*}
        \sum_{j = 1}^J  \bbE_{\bbeta^*, \widehat{\bpi}} \left [  \left \{ \bc_g^\top \Sigma_{J}^{-1/2} \sum_{i \in \epochgroup} W_i \dot{l}_i(\bbeta^*) \right \}^2  \mid \epochfield\right ] \overset{p}{\to} \bc^\top U \bc.
    \end{equation*}
    is sufficient for 
    \begin{equation*}
        \sum_{j=1}^J \bc_g^\top \Sigma_{J}^{-1/2} \sum_{i \in \epochgroup} W_i \dot{l}_i(\bbeta^*) 
    \end{equation*}
    to converge almost surely. We find that 
    \begin{align*}
        &\sum_{j = 1}^J \bbE_{\bbeta^*, \widehat{\bpi}} \left [  \left \{ \bc_g^\top \Sigma_{J}^{-1/2} \sum_{i \in \epochgroup}  W_i \dot{l}_i(\bbeta^*) \right \}^2  | \epochfield\right ] \\
        &=\sum_{j = 1}^J \left [  \left \{ \bc_g^\top \Sigma_{J}^{-1/2}\bbE_{\bbeta^*, \widehat{\bpi}} \left \{ \left ( \sum_{i \in \epochgroup} W_i \dot{l}_i(\bbeta^*) \right ) \left ( \sum_{i \in \epochgroup} W_i \dot{l}_i(\bbeta^*)^\top \right ) \mid \epochfield \right \}
        \Sigma_{J}^{-1/2} \bc_g \right \}\right ] \\
        &\overset{(a)}{=} \bc_g^\top \Sigma_{J}^{-1/2} \left \{ \sum_{j = 1}^J \bbE_{\bbeta^*, \widehat{\bpi}}  \left [ \sum_{i \in \epochgroup}  \left \{ W_i^2 \dot{l}(\bbeta^*) \dot{l}(\bbeta^*)^\top  \right \} | \epochfield\right ] \right \}\Sigma_{J}^{-1/2} \bc_g \\
        &\overset{(b)}{=} \bc_g^\top \Sigma_{J}^{-1/2} \left \{ \sum_{j = 1}^J \sum_{i \in \epochgroup} \bbE_{\bbeta^*, \widehat{\bpi}}  \left  \lbrace \frac{\bbP_{\widetilde{\pi}}(\bA_i)}{\bbP_{\widehat{\pi}}(\bA_i) }\dot{l}_i(\bbeta^*) \dot{l}_i(\bbeta^*)^\top | \epochfield\right \rbrace \right \}\Sigma_{J}^{-1/2} \bc_g \\
        &= \bc_g^\top \Sigma_{J}^{-1/2} \Bigg \{ \sum_{j = 1}^J \sum_{i \in \epochgroup} \int \frac{\bbP_{\widetilde{\pi}}(\bA_i  | \bH_i )}{\bbP_{\widehat{\pi}}(\bA_i | \bH_i ) } \dot{l}_i(\bbeta^*) \dot{l}_i(\bbeta^*)^\top \times \\
        &\hspace{2.5cm} f\left ( \bD_i|\vA_i, \epochfield; \bbeta^*\right ) \bbP_{\widehat{\pi}}(\bA_i | \epochfield) d\nu \Bigg \}\Sigma_{J}^{-1/2} \bc_g \\
        &\overset{(c)}{=} \bc_g^\top \Sigma_{J}^{-1/2} \Bigg \{ \sum_{j = 1}^J \sum_{i \in \epochgroup} \int \frac{\bbP_{\widetilde{\pi}}(\bA_i  | \bH_i )}{\bbP_{\widehat{\pi}}(\bA_i | \bH_i ) } \dot{l}_i(\bbeta^*) \dot{l}_i(\bbeta^*)^\top \times \\
        &\hspace{2.5cm}f\left ( \bD_i|\vA_i, \epochfield; \bbeta^*\right ) \bbP_{\widehat{\pi}}(\bA_i | \bH_i) d\nu \Bigg \}\Sigma_{J}^{-1/2} \bc_g \\
        &= \bc_g^\top \Sigma_{J}^{-1/2} \left \{ \sum_{j = 1}^J \sum_{i \in \epochgroup} \int \dot{l}_i(\bbeta^*) \dot{l}_i(\bbeta^*)^\top  f\left ( \bD_i|\vA_i, \epochfield; \bbeta^*\right ) \bbP_{\widetilde{\pi}}(\bA_i | \bH_i) d\nu \right \}\Sigma_{J}^{-1/2} \bc_g \\
        &= \bc_g^\top \Sigma_{J}^{-1/2} \left [ \sum_{j = 1}^J \sum_{i \in \epochgroup} \bbE_{\bbeta^*, \widetilde{\bpi}}  \left  \lbrace \dot{l}_i(\bbeta^*) \dot{l}_i(\bbeta^*)^\top | \epochfield\right \rbrace \right ]\Sigma_{J}^{-1/2} \bc_g \\
        &\overset{(d)}{=} \bc_g^\top \Sigma_{J}^{-1/2} \eta_{J}\Sigma_{J}^{-1/2} \bc_g.
    \end{align*}
        Equality $(a)$ follows from the law of iterated expectations applied to the cross terms of the sum (as explained in the proof of Theorem~\ref{thm:branchingSLLN}); equality $(b)$ follows from the definition of $W_i$; equality $(c)$ follows from the definition of historical information, $\bH_i$; and equality $(d)$ follows from the definition of $\eta_J$.
	By Assumption~\ref{as:stabalizedvariance},
	\begin{equation*}
		\bc_g^\top \Sigma_{J}^{-1/2} \eta_{J}\Sigma_{J}^{-1/2} \bc_g \overset{p}{\to} \bc_g^\top U \bc_g.
	\end{equation*}
    By Theorem~\ref{thm:martconv}, this implies that
    \begin{equation*}
        \bc_g^\top \Sigma_{J}^{-1/2} \sum_{i \in \epochgroup} W_i \dot{l}_i(\bbeta^*)
    \end{equation*}
    converges a.s. to a random variable, which we label $V_g$ for $g \in \left \{1,2,\ldots, k \right \}$. Labeling $\bV = (V_1, V_2, \ldots, V_k)$, we observe that
    \begin{equation*}
     \lim_{J \to \infty} \Sigma_{J}^{-1/2} \sum_{i \in \epochgroup} W_i \dot{l}_i(\bbeta^*) \overset{p}{\to} \bV.
    \end{equation*}
    This verifies equation~\ref{eq:asconv}.

	\subsubsection{Proof of Equation~\ref{eq:scale}}
    First, note that the differentiability and moment conditions of the log-likelihood (Assumptions~\ref{as:differentiable} and \ref{as:moments}) imply that
    \begin{equation*}
    \eta_J = \sum_{j=1}^J \sum_{i \in \mathcal{E}_{j-1}} \bbE_{\bbeta^*, \widetilde{\bpi}} \left \{ \dot{l}_i(\bbeta^*) \dot{l}_i(\bbeta^*)^\top \mid \epochfield \right \} = \sum_{j=1}^J \sum_{i \in \mathcal{E}_{j-1}} \bbE_{\bbeta^*, \widetilde{\bpi}} \left \{ -\ddot{l}_i(\bbeta^*) \mid \epochfield \right \}.
    \end{equation*}
    Define
    \begin{equation*}
        \alpha_J \triangleq \sum_{j = 1}^J -\bbE_{\bbeta^*, \widehat{\bpi}} \left \{  \sum_{i \in \epochgroup} W_i \ddot{l}_i(\bbeta^*) \mid \epochfield \right \}.
    \end{equation*}
    By Lemmas~\ref{lem:mineigen} and \ref{lem:submult} and the fact that $\rho_{\min} \leq  W_i \leq \rho_{\max} $,
    \begin{align*}
        \sigma_{\min} \left ( \Sigma_{J}^{-1/2} \left \{  -\ddot{\mathcal{M}}_{J}(\bar{\bbeta}^{J}) \right \} \Sigma_{J}^{-1/2} \right )
        & \geq \sigma_{\min} \left ( \Sigma_{J}^{-1/2} \left \{- \ddot{\mathcal{M}}_{J}(\bar{\bbeta}^{J}) +  \ddot{\mathcal{M}}_{J}(\bbeta^*)\right \}  \Sigma_{J}^{-1/2} \right ) + \\
        &\hspace{0.5cm}\sigma_{\min} \left ( \Sigma_{J}^{-1/2} \left \{ -\ddot{\mathcal{M}}_{J}(\bbeta^*) - \alpha_J \right \}  \Sigma_{J}^{-1/2} \right )  + \\
        &\hspace{0.5cm}  \sigma_{\min} \left ( \Sigma_{J}^{-1/2} \alpha_J \Sigma_{J}^{-1/2} \right ) \\
        & \geq \sigma_{\min}  \left  \{ \left (\Sigma_{J}/\overline{\kappa}_J \right )^{-1/2} \right \} \sigma_{\min} \left ( \frac{\ddot{\mathcal{M}}_{J}(\bar{\bbeta}^{J}) -  \ddot{\mathcal{M}}_{J}(\bbeta^*)}{\overline{\kappa}_J} \right ) \times \\
        &\hspace{0.5cm} \sigma_{\min} \left  \{ \left (\Sigma_{J}/\overline{\kappa}_J \right )^{-1/2} \right \}  + \\
        &\hspace{0.5cm} \sigma_{\min} \left \{ \left (\Sigma_{J}/\overline{\kappa}_J \right )^{-1/2} \right \} \sigma_{\min} \left \{  \frac{-\ddot{\mathcal{M}}_{J}(\bbeta^*) - \alpha_J}{\overline{\kappa}_J} \right \} \times \\
        &\hspace{0.5cm} \sigma_{\min} \left \{ \left (\Sigma_{J}/\overline{\kappa}_J \right )^{-1/2} \right  \} + \\
        &\hspace{0.5cm}   \sigma_{\min} \left ( \Sigma_{J}^{-1/2} \alpha_J \Sigma_{J}^{-1/2} \right ) \\
         & \overset{(a)}{\geq}  \sigma_{\min} \left \{ \left (\Sigma_{J}/\overline{\kappa}_J \right )^{-1/2} \right \} \sigma_{\min} \left \{ \frac{\ddot{\mathcal{M}}_{J}(\bar{\bbeta}^{J}) -  \ddot{\mathcal{M}}_{J}(\bbeta^*)}{\overline{\kappa}_J} \right  \} \times \\
         & \hspace{0.5cm} \sigma_{\min} \left \{ \left (\Sigma_{J}/\overline{\kappa}_J \right )^{-1/2} \right \} + \\
        &\hspace{0.5cm} \sigma_{\min} \left \{ \left (\Sigma_{J}/\overline{\kappa}_J \right )^{-1/2} \right \} \sigma_{\min} \left \{ \frac{-\ddot{\mathcal{M}}_{J}(\bbeta^*) - \alpha_J}{\overline{\kappa}_J} \right \} \times \\
        & \hspace{0.5cm} \sigma_{\min} \left \{ \left (\Sigma_{J}/\overline{\kappa}_J \right )^{-1/2} \right \} + \\
        &\hspace{0.5cm}  \frac{1}{\rho_{\max}} \sigma_{\min} \left ( \Sigma_{J}^{-1/2} \eta_J  \Sigma_{J}^{-1/2} \right ).
    \end{align*}
    By Assumption~\ref{as:stabalizedvariance}, $\sigma_{\min}\left (\Sigma_{J}^{-1/2} \eta_J \Sigma_{J}^{-1/2} \right )^{-1} = O_p(1)$ as $J \to \infty$. Inequality (a) will be verified by Property (4) below. In sum, it is sufficient to establish four properties:
    \begin{enumerate}
        \item $\sigma_{\min} \left \{ \left (\Sigma_{J}/\overline{\kappa}_J \right )^{-1/2} \right \} \leq \left \| \left (\Sigma_{J}/\overline{\kappa}_J \right )^{-1/2} \right \|_2  = O_p(1)$
        \item $ \left \| \left \{ \ddot{\mathcal{M}}_{J}(\bbeta^*) - \ddot{\mathcal{M}}_{J}(\bar{\bbeta}^{J}) \right \}/\overline{\kappa}_J \right \|_2  = o_p(1)$.
        \item $ \left  \| \left \{-\ddot{\mathcal{M}}_{J}(\bbeta^*) - \alpha_J \right \}/\overline{\kappa}_J \right \|_2= o_p(1)$
        \item $\alpha_J \succeq \frac{1}{\rho_{\max}} \eta_J$
    \end{enumerate}
    Properties (1)-(3) ensure that
    \begin{align*}
        &\sigma_{\min} \left \{ \left (\Sigma_{J}/\overline{\kappa}_J \right )^{-1/2} \right \} \sigma_{\min} \left \{ \frac{\ddot{\mathcal{M}}_{J}(\bar{\bbeta}^{J}) -  \ddot{\mathcal{M}}_{J}(\bbeta^*)}{\overline{\kappa}_J} \right  \} \sigma_{\min} \left \{ \left (\Sigma_{J}/\overline{\kappa}_J \right )^{-1/2} \right \} = o_p(1), \ \mathrm{and} \\
        &\sigma_{\min} \left \{ \left (\Sigma_{J}/\overline{\kappa}_J \right )^{-1/2} \right \} \sigma_{\min} \left \{ \frac{-\ddot{\mathcal{M}}_{J}(\bbeta^*) - \alpha_J}{\overline{\kappa}_J} \right \} \sigma_{\min} \left \{ \left (\Sigma_{J}/\overline{\kappa}_J \right )^{-1/2} \right \} = o_p(1).
    \end{align*}
    This allows us to conclude that
    \begin{align*}
        \sigma_{\min} \left ( \Sigma_{J}^{-1/2} \left \{  -\ddot{\mathcal{M}}_{J}(\bar{\bbeta}^{J}) \right \} \Sigma_{J}^{-1/2} \right )^{-1} &\leq \left [ \frac{1}{\rho_{\max}}\sigma_{\min} \left \{ \Sigma_{J}^{-1/2} \eta_J  \Sigma_{J}^{-1/2} \right \} + o_p(1) + o_p(1) \right ]^{-1} \\
        &= O_p(1).
    \end{align*}
    
    \textbf{We begin with Property (1).} By Assumption~\ref{as:IA}, there exists $\epsilon > 0$ such that
	\begin{align*}
		\label{eq:info}
		& \sigma_{\min}(\eta_{J}/\overline{\kappa}_J)  \geq \epsilon + o_p(1)  \Rightarrow \\
        &\overline{\kappa}_J \|\eta_{J}^{-1}\|_2 \leq 1/\epsilon + o_p(1).
	\end{align*}
        By Assumption~\ref{as:stabalizedvariance}, $ \left \| \Sigma_{J}^{-1/2} \eta_J \Sigma_{J}^{-1/2} \right \|_2 = O_p(1)$.
	Consequently, by the sub-multiplicativity of the spectral norm,
	\begin{align*}
		\overline{\kappa}_J \left \| \Sigma_{J}^{-1} \right \|_2 &= \overline{\kappa}_J  \left \| \Sigma_{J}^{-1}\eta_J \eta^{-1}_J \right \|_2  \\ 
        &\leq\overline{\kappa}_J  \left \| \Sigma_{J}^{-1}\eta_J \right \|_2 \left  \| \eta^{-1}_J \right \|_2  \\ 
        &\leq  O_p(1) \left \{\frac{1}{\epsilon} + o_p(1) \right \}
	\end{align*}
    as $J \to \infty$.
    Note that the last equality follows because the eigenvalues of $\Sigma_{J}^{-1} \eta_{J}$ are the same as $\Sigma_{J}^{-1/2} \eta_{J} \Sigma_{J}^{-1/2}$. 
    
	\noindent \textbf{We now prove Property (2).} 
    We analyze
	\begin{align*}
		\left \| \frac{ \ddot{\mathcal{M}}_{J}(\bbeta^*) - \ddot{\mathcal{M}}_{J}(\bar{\bbeta}^{J})}{\overline{\kappa}_J} \right \|_2 &= \left \| \frac{ \sum_{j = 1}^J \sum_{i \in \epochgroup} W_i \left \{\ddot{l}_i(\bbeta^*) - \ddot{l}_i(\bar{\bbeta}^{J}) \right \}}{\overline{\kappa}_J} \right \|_2 \\
		&\leq \frac{ \sum_{j = 1}^J \sum_{i \in \epochgroup} W_i \|\ddot{l}_i(\bbeta^*) - \ddot{l}_i(\bar{\bbeta}^{J})\|_2}{\overline{\kappa}_J},
	\end{align*}
	where the last inequality follows from the triangle inequality. We wish to show that for any $\epsilon >0$,
	\begin{equation}
		\label{eq:equicontinuity}
		\lim_{J \to \infty} \bbP\left (\frac{ \sum_{j = 1}^J \sum_{i \in \epochgroup} W_i \|\ddot{l}_i(\bbeta^*) - \ddot{l}_i(\bar{\bbeta}^{J})\|_2}{\overline{\kappa}_J} > \epsilon \right ) = 0.
	\end{equation}
	Recall that Assumption~\ref{as:equicontinuity} implies there exists an $\epsilon_{\ddot{l}} > 0$ and a function $f:\mathscr{D} \to \mathbb{R}$ such that for all 
    $0 < \epsilon^* \leq \epsilon_{\ddot{l}}$, there exists $\delta_{\epsilon^*}$ such that
		\begin{equation*}
			\sup_{j \in \mathbb{N}, i \in \mathcal{E}_{j-1}, \ \bbeta\in \mathcal{B}: \|\bbeta - \bbeta^*\|_2 \leq \delta_{\epsilon^*}} \| \ddot{l}_i(\bbeta) - \ddot{l}_i(\bbeta^*) \|_2 \leq  f(\bD_i) \epsilon^* \ \ \mathrm{a.s.},
		\end{equation*}
	and  $\bbE_{\bbeta^*, \widetilde{\bpi}} \left \{ f^2(\bD_i) | \epochfield\right \}$ is bounded almost surely.
	Because $\bar{\bbeta}^{J}$ is consistent for $\bbeta^*$, we know that for any $\delta_{\epsilon^*}$, $\lim_{J \to \infty} \|\bar{\bbeta}^{J} - \bbeta^* \| \leq \delta_{\epsilon^*}$ a.s. Consequently,
	\begin{align*}
		&\lim_{J \to \infty} \bbP\left (\frac{ \sum_{j = 1}^J \sum_{i \in \epochgroup} W_i \|\ddot{l}(\bbeta^*) - \ddot{l}(\bar{\bbeta}^{J})\|_2}{\overline{\kappa}_J} > \epsilon \right ) \\
        &\leq \lim_{J \to \infty} \bbP\left ( \epsilon^* \frac{ \sum_{j = 1}^J \sum_{i \in \epochgroup} W_i f(\bD_i)}{\overline{\kappa}_J} > \epsilon \right ).
	\end{align*}
	We now analyze the quantity
	\begin{align*}
		\frac{1}{\overline{\kappa}_J} \sum_{j = 1}^J \sum_{i \in \epochgroup} W_i f(\bD_i)  &= \frac{1}{\overline{\kappa}_J} \sum_{j = 1}^J \sum_{i \in \epochgroup} W_i f(\bD_i) - \frac{1}{\overline{\kappa}_J} \sum_{j = 1}^J \sum_{i \in \epochgroup}\bbE_{\bbeta^*, \widehat{\bpi}} \left \{ W_i f(\bD_i) | \epochfield\right \} \\
		&+ \frac{1}{\overline{\kappa}_J} \sum_{j = 1}^J \sum_{i \in \epochgroup}\bbE_{\bbeta^*, \widehat{\bpi}} \left \{ W_i f(\bD_i) | \epochfield\right \}.
	\end{align*}
	We know that
	\begin{equation}
		\label{eq:sllnequicont}
		\frac{1}{\overline{\kappa}_J} \sum_{j = 1}^J \sum_{i \in \epochgroup} W_i f(\bD_i) - \frac{1}{\overline{\kappa}_J} \sum_{j = 1}^J \sum_{i \in \epochgroup}\bbE_{\bbeta^*, \widehat{\bpi}} \left \{ W_i f(\bD_i) | \epochfield\right \} = o_p(1)
	\end{equation}
	by Theorem~\ref{thm:branchingSLLN} because Assumption~\ref{as:equicontinuity} states that $\bbE_{\bbeta^*, \widetilde{\bpi}} \left \{ f^2(\bD_i) | \epochfield\right \}$ is bounded almost surely.
	Therefore,
	\begin{equation*}
		\frac{1}{\overline{\kappa}_J} \sum_{j = 1}^J \sum_{i \in \epochgroup} W_i f(\bD_i) = o_p(1) + \frac{1}{\overline{\kappa}_J} \sum_{j = 1}^J \sum_{i \in \epochgroup}\bbE_{\bbeta^*, \widehat{\bpi}} \left \{ W_i f(\bD_i) | \epochfield\right \}.
	\end{equation*}
	By Assumption~\ref{as:clip}, we know that $\rho_{\min} \leq W_i \leq \rho_{\max}$. Because $f(\bD_i)$ is a positive function, we can conclude that
	\begin{align*}
		\frac{1}{\overline{\kappa}_J} \sum_{j = 1}^J \sum_{i \in \epochgroup} W_i f(\bD_i) &\leq o_p(1) + \frac{1}{\rho_{\min}\overline{\kappa}_J} \sum_{j = 1}^J \sum_{i \in \epochgroup} \bbE_{\bbeta^*, \widehat{\bpi}} \left \{ W_i^2 f(\bD_i) | \epochfield\right \} \\
		&\leq o_p(1) + B_f/\rho_{\min}.
	\end{align*}
	Defining $\epsilon^* \triangleq \epsilon \rho_{\min}/(2B_f)$, we find that
	\begin{align*}
		&\lim_{J \to \infty} \bbP\left (\frac{ \sum_{j = 1}^J \sum_{i \in \epochgroup} W_i \|\ddot{l}(\bbeta^*) - \ddot{l}(\bar{\bbeta}^{J})\|_2}{\overline{\kappa}_J} > \epsilon \right ) \\
        &\leq 
		\lim_{J \to \infty} \bbP\left (\epsilon \frac{ \sum_{j = 1}^J \sum_{i \in \epochgroup} W_i f(\bD_i)}{\overline{\kappa}_J} > \epsilon \right ) \\
		&\leq \lim_{J \to \infty} \bbP\Big ( \epsilon \rho_{\min}/(2B_f) \\
        &\hspace{1.5cm}\left \{o_p(1) + B_f/\rho_{\min} \right \} > \epsilon \Big ) \\
		&\leq \lim_{J \to \infty} \bbP\left ( o_p(1)  > \epsilon/2 \right ) \to 0.
	\end{align*}
	Consequently, Property (2) is satisfied. 
	
	\noindent \textbf{We now prove Property (3).}
	We verify the following condition,
    \begin{equation*}
        \lim_{J \to \infty} \left \| \frac{\sum_{j = 1}^J \sum_{i \in \epochgroup} -W_i \ddot{l}(\bbeta^*) - \sum_{j = 1}^J \bbE_{\bbeta^*, \widehat{\bpi}} \left \{  \sum_{i \in \epochgroup} -W_i \ddot{l}(\bbeta^*) \mid \epochfield \right \} }{\overline{\kappa}_J} \right \|_2 = 0 \ \ \mathrm{a.s.}
    \end{equation*}
    By Assumption~\ref{as:moments} (the moments assumption) and Assumption~\ref{as:clip}, for any $p,q \in [k]$, there exists $B_{\ddot{l}^2}$ such that for any $j \in \mathbb{N}$ and $i \in \epochgroup$,
	\begin{align*}
		\bbE_{\bbeta^*, \widetilde{\bpi}} \left \{[\ddot{l}_i(\bbeta^*)]_{p,q}^2 | \epochfield\right \} &\leq B_{\ddot{l}^2} \ \ \mathrm{a.s.}
	\end{align*}
	Consequently, we know that 
	\begin{equation*}
		\left \| \frac{ \sum_{j = 1}^J \sum_{i \in \epochgroup} -W_i \ddot{l}(\bbeta^*) - \alpha_{J}}{\overline{\kappa}_J} \right \|  = o_p(1)
	\end{equation*}
    by Theorem~\ref{thm:branchingSLLN} (component-wise).

    \noindent \textbf{Lastly, we prove Property (4).} First, we find that
    \begin{align*}
        \alpha_J &\triangleq -\sum_{j = 1}^J \sum_{i \in \epochgroup} \bbE_{\bbeta^*, \widehat{\bpi}} \left \{ W_i \ddot{l}(\bbeta^*) \mid \epochfield \right \} \\
        &= \sum_{j = 1}^J \sum_{i \in \epochgroup}  \bbE_{\bbeta^*, \widehat{\bpi}} \left [  W_i \bbE_{\bbeta^*} \left \{ -\ddot{l}(\bbeta^*) \mid \bA_i, \epochfield \right \} \mid \epochfield \right ].
    \end{align*}
    We know $\bbE_{\bbeta^*} \left \{ -\ddot{l}(\bbeta^*) \mid \bA_i, \epochfield \right \} \succeq 0$ (because this is equivalent to the variance of the score function), and therefore we can express
    \begin{align*}
        &\sum_{j = 1}^J \sum_{i \in \epochgroup}  \bbE_{\bbeta^*, \widehat{\bpi}} \left [  W_i \bbE_{\bbeta^*} \left \{ -\ddot{l}(\bbeta^*) \mid \bA_i, \epochfield \right \} \mid \epochfield \right ] \\
        &\succeq \frac{1}{\rho_{\max}} \sum_{j = 1}^J \sum_{i \in \epochgroup}  \bbE_{\bbeta^*, \widehat{\bpi}} \left [  W^2_i \bbE_{\bbeta^*} \left \{ -\ddot{l}(\bbeta^*) \mid \bA_i, \epochfield \right \} \mid \epochfield \right ] \\
        &= \frac{1}{\rho_{\max}} \sum_{j = 1}^J \sum_{i \in \epochgroup}  \bbE_{\bbeta^*, \widetilde{\bpi}} \left [ \bbE_{\bbeta^*} \left \{ -\ddot{l}(\bbeta^*) \mid \bA_i, \epochfield \right \} \mid \epochfield \right ] \\
        &= -\frac{1}{\rho_{\max}} \sum_{j = 1}^J \sum_{i \in \epochgroup}  \bbE_{\bbeta^*, \widetilde{\bpi}} \left \{ \ddot{l}(\bbeta^*) \mid \epochfield \right \} \\
       & = \frac{1}{\rho_{\max}} \eta_J.
    \end{align*}
    This satisfies Property (4).
    We have verified Equation~\ref{eq:scale}.

    \noindent\rule{16cm}{0.4pt}

    \noindent Equation~\ref{eq:almostrate},
    \begin{equation*}
        \Sigma_{J}^{1/2}  \left ( \widehat{\bbeta}_{J} - \bbeta^* \right ) = O_p \left ( 1 \right ),
    \end{equation*}
    follows from Equations~\ref{eq:asconv} and \ref{eq:scale}.
    Equation~\ref{eq:almostrate} is used to conclude that
    \begin{equation*}
        \left ( \widehat{\bbeta}_{J} - \bbeta^* \right ) = O_p \left ( 1/\sqrt{\overline{\kappa}_{J} } \right )
    \end{equation*}
    by showing
    \begin{align*}
          \left \| \sqrt{\overline{\kappa}_{J} } \left ( \widehat{\bbeta}_{J} - \bbeta^* \right ) \right \|_2 &= \left \| \overline{\kappa}_{J}^{1/2} \Sigma_{J}^{-1/2} \Sigma_{J}^{1/2}   \left ( \widehat{\bbeta}_{J} - \bbeta^* \right ) \right \|_2 \\
          &\leq \left \| \overline{\kappa}_{J}^{1/2} \Sigma_{J}^{-1/2} \right \|_2 \left \| \Sigma_{J}^{1/2}   \left ( \widehat{\bbeta}_{J} - \bbeta^* \right ) \right \|_2 \\
          &= O_p(1).
    \end{align*}
    The last equality follows from Property (1) and Equation~\ref{eq:almostrate}.

	\begin{lemma}
		\label{lem:USLLN}
		Let $e_i(\bbeta) \triangleq W_il_i(\bbeta)$. Under Assumptions~\ref{as:1}-\ref{as:equicontinuity},
		\begin{equation*}
			\sup_{\bbeta \in \mathcal{B}} \left (\frac{1}{\overline{\kappa}_J}\sum_{j=1}^J \sum_{i \in \epochgroup} \left [ e_i(\bbeta) -  \bbE_{\bbeta^*, \widehat{\bpi}} \left \{ e_i(\bbeta) | \epochfield\right \} \right ] \right ) = o_p(1).
		\end{equation*} 
	\end{lemma}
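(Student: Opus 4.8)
The plan is to deduce this uniform law of large numbers from the pointwise branching-process law of large numbers (Theorem~\ref{thm:branchingSLLN}) by a finite-net (chaining) argument, using the Lipschitz envelope $g$ of Assumption~\ref{as:lipchitz} to control the oscillation of $e_i(\bbeta)=W_il_i(\bbeta)$ between net points. All convergence statements will be established on the event $E_\branchasymp$, on which Theorem~\ref{thm:branchingSLLN} yields almost-sure convergence; since $\bbP(E_\branchasymp)\ge 1-\epsilon$ for small $\delta$, almost-sure convergence on $E_\branchasymp$ delivers the claimed $o_p(1)$ bound. Because $\mathcal{B}$ is a bounded subset of $\mathbb{R}^k$, its closure is totally bounded, so for each $r>0$ there is a finite set $\{\bbeta_1,\dots,\bbeta_{N(r)}\}\subseteq\mathcal{B}$ whose $r$-balls cover $\mathcal{B}$; write $m(\bbeta)$ for the index of a center within distance $r$ of $\bbeta$, and abbreviate $G_J(\bbeta)\triangleq\overline{\kappa}_J^{-1}\sum_{j=1}^J\sum_{i\in\epochgroup}[e_i(\bbeta)-\bbE_{\bbeta^*,\widehat{\bpi}}\{e_i(\bbeta)\mid\epochfield\}]$, whose supremum (in absolute value, which dominates the one-sided supremum in the statement) we must control.

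The second step is the triangle-inequality split $|G_J(\bbeta)|\le |G_J(\bbeta_{m(\bbeta)})| + R_J(\bbeta)$, where $R_J(\bbeta)$ collects $\overline{\kappa}_J^{-1}\sum_{j=1}^J\sum_{i\in\epochgroup}W_i|l_i(\bbeta)-l_i(\bbeta_{m(\bbeta)})|$ and its conditional-mean counterpart. Assumption~\ref{as:lipchitz} bounds each increment by $W_ig(\bD_i)\|\bbeta-\bbeta_{m(\bbeta)}\|_2\le r\,W_ig(\bD_i)$, and the same bound applies inside the conditional expectation, so that uniformly in $\bbeta$, $\sup_{\bbeta}R_J(\bbeta)\le r\,\overline{\kappa}_J^{-1}\sum_{j=1}^J\sum_{i\in\epochgroup}[W_ig(\bD_i)+\bbE_{\bbeta^*,\widehat{\bpi}}\{W_ig(\bD_i)\mid\epochfield\}]$.

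Both pieces are then handled by Theorem~\ref{thm:branchingSLLN}. For the remainder, applying that theorem with $z=g$ (legitimate since $\bbE_{\bbeta^*,\widetilde{\bpi}}\{g(\bD_i)^2\mid\epochfield\}\le B_g$ a.s.\ by Assumption~\ref{as:lipchitz}) shows that $\overline{\kappa}_J^{-1}\sum\sum W_ig(\bD_i)$ and its conditional-mean process share a common finite a.s.\ limit on $E_\branchasymp$; moreover, by the conditional Cauchy--Schwarz inequality together with the reweighting identity $\bbE_{\bbeta^*,\widehat{\bpi}}\{W_i^2(\cdot)\mid\epochfield\}=\bbE_{\bbeta^*,\widetilde{\bpi}}\{(\cdot)\mid\epochfield\}$, each conditional mean obeys $\bbE_{\bbeta^*,\widehat{\bpi}}\{W_ig(\bD_i)\mid\epochfield\}\le\sqrt{B_g}$, so the common limit is at most $\sqrt{B_g}$; hence $\limsup_J\sup_{\bbeta}R_J(\bbeta)\le 2r\sqrt{B_g}$ a.s.\ on $E_\branchasymp$. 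For each fixed net point $\bbeta_m$, I apply Theorem~\ref{thm:branchingSLLN} with $z=l_{\cdot}(\bbeta_m)$, whose second-moment hypothesis holds because $l_i(\bbeta_m)^2\le 2l_i(\bbeta^*)^2+2g(\bD_i)^2\,\mathrm{diam}(\mathcal{B})^2$ has bounded conditional second moment under $\widetilde{\bpi}$ by Assumptions~\ref{as:moments} and~\ref{as:lipchitz}; thus $G_J(\bbeta_m)\to 0$ a.s.\ on $E_\branchasymp$ for every $m$, and since $N(r)$ is finite, $\max_m|G_J(\bbeta_m)|\to 0$. Combining the two pieces gives $\limsup_J\sup_{\bbeta}|G_J(\bbeta)|\le 2r\sqrt{B_g}$ a.s.\ on $E_\branchasymp$, and letting $r\downarrow 0$ forces this $\limsup$ to zero.

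The main obstacle is not the chaining itself, which is routine once the ingredients are assembled, but the delicate interplay between the adaptive weights $W_i$ and the change of measure from $\widehat{\bpi}$ to $\widetilde{\bpi}$: Theorem~\ref{thm:branchingSLLN} only controls $W_i$-weighted sums whose envelopes have bounded second moments under the uniform policy $\widetilde{\bpi}$, so every invocation must first route the relevant envelope ($g$ for the remainder, $l_{\cdot}(\bbeta_m)$ at the net points) through the reweighting identity and Assumptions~\ref{as:moments}--\ref{as:lipchitz}. Ensuring that the resulting bound on $\sup_{\bbeta}R_J(\bbeta)$ is genuinely uniform in $\bbeta$, free of $J$, and valid on $E_\branchasymp$ — and in particular using the conditional Cauchy--Schwarz step above to pass from the single power of $W_i$ appearing in the remainder to the squared-weight form required by the reweighting identity — is where the care is concentrated.
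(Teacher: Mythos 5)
Your proof is correct and follows essentially the same route as the paper's: both reduce the supremum to a finite grid of parameter values via the Lipschitz envelope $g$ of Assumption~\ref{as:lipchitz} (the paper phrases this as a bracketing set $B_\delta$ with brackets $l(\bbeta_m,\cdot)\pm\lambda g(\cdot)$, which is exactly your net-point-plus-remainder decomposition), apply Theorem~\ref{thm:branchingSLLN} pointwise at each grid point and to the envelope, and route the stray single power of $W_i$ through the reweighting identity. The only cosmetic difference is that you control $\bbE_{\bbeta^*,\widehat{\bpi}}\{W_i g(\bD_i)\mid\epochfield\}$ by conditional Cauchy--Schwarz, whereas the paper uses the lower bound $W_i\ge\rho_{\min}$ from Assumption~\ref{as:clip}; both are valid.
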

	Lemma~\ref{lem:USLLN} is a martingale uniform law of large numbers, and its proof mirrors Lemma~2 of \cite{zhang2021statistical}. 
	\begin{proof}
		Like \cite{zhang2021statistical}, we begin by establishing a finite bracketing number for the log-likelihood function based on Assumption~\ref{as:lipchitz} and the fact that $\mathcal{B}$ is bounded.
		\paragraph{Finite Bracketing Number.} Let $\delta > 0$. We construct a set $B_\delta$ that is made up of pairs of functions $(b,u)$ and satisfies the following criteria.
		\begin{enumerate}
			\item We denote $l_i(\bbeta)$ as $l(\bbeta, \bD_i)$ to emphasize its dependence on data $\bD_i \in \mathscr{D}$. 
                We show that for any $\bbeta \in \mathcal{B}$, we can find functions $(b,u) \in B_\delta$ such that $b(\bD_i) \leq l(\bbeta, \bD_i) \leq u(\bD_i)$ and 
			\begin{equation*}
				\sup_{j \in \mathbb{N}, i \in \mathcal{E}_{j-1}} \bbE_{\bbeta^*, \widetilde{\bpi}} \left \{ u(\bD_i) - b(\bD_i) | \epochfield \right \} \leq \delta \ \ \mathrm{a.s.}
			\end{equation*}
			\item There are a finite number of pairs in $B_\delta$, $|B_\delta| < \infty$.
			\item For any $(b,u) \in B_\delta$, there exists $m_{g} < \infty$ that does not depend on $\delta$ such that
			\begin{align*}
				&\sup_{j \in \mathbb{N}, i \in \mathcal{E}_{j-1}} \bbE_{\bbeta^*, \widetilde{\bpi}} \left \{ b( \bD_i)^2 | \epochfield  \right \} \leq m_{g} \ \ \mathrm{a.s}, \ \mathrm{and} \\
				&\sup_{j \in \mathbb{N}, i \in \mathcal{E}_{j-1}} \bbE_{\bbeta^*, \widetilde{\bpi}} \left \{ u(\bD_i)^2 | \epochfield \right \} \leq m_{g} \ \ \mathrm{a.s}.
			\end{align*}
			
		\end{enumerate}
		We now construct $B_\delta$. Create a grid over $\mathcal{B}$ with a meshwidth of $\lambda > 0$, and let the points in this grid be the set $ G_{\lambda} \subseteq \mathcal{B}$. By construction, this implies that for any $\bbeta \in \mathcal{B}$, we can find $\bbeta' \in G_{\lambda}$ such that $\|\bbeta' - \bbeta \| \leq \lambda$.
		
		By Assumption~\ref{as:lipchitz}, we know that for any $\bbeta, \bbeta' \in \mathcal{B}$, $j \in \mathbb{N}$, and $i \in \mathcal{E}_{j-1}$, there exists function $g$ and constant $m_g < \infty$ such that $|l(\bbeta, \bD_i) - l(\bbeta', \bD_i)| \leq g(\bD_i) \|\bbeta - \bbeta'\|_2$ and
		\begin{equation}
			\label{eq:boundg}
			\bbE_{\bbeta^*, \widetilde{\bpi}} \left \{ g(\bD_i)^2 | \epochfield\right \} \leq m_g \ \ \mathrm{a.s.}
		\end{equation}
		We specify $B_\delta = \left \{(l(\bbeta, \bD_i) - \lambda g(\bD_i),  l(\bbeta, \bD_i) + \lambda g(\bD_i) ): \bbeta \in G_\lambda \right \}$ and show that this set satisfies properties (1)-(3) for a certain value of $\lambda$. Note that because $\mathcal{B}$ is bounded, property (2) is satisfied because the number of points in $G_{\lambda}$ is finite.
		To show that (1) holds for $B_\delta$, recall that for any $\bbeta \in \mathcal{B}$, we can find $\bbeta' \in G_\lambda$ such that $\|\bbeta - \bbeta'\| \leq \lambda$. Define $\bbeta_{\lambda}'(\bbeta) \triangleq \min_{\bbeta' \in G_{\lambda}} \left \| \bbeta - \bbeta' \right \|$.
        By Assumption~\ref{as:lipchitz}, this implies that 
		\begin{equation*}
			\left |l(\bbeta, \bD_i) - l\left \{ \bbeta_{\lambda}'(\bbeta), \bD_i \right \} \right | \leq g(\bD_i) \| \bbeta - \bbeta_{\lambda}'(\bbeta) \|_2 \leq g(\bD_i) \lambda.
		\end{equation*}
		Therefore,
		\begin{equation*}
			l \left \{ \bbeta_{\lambda}'(\bbeta), \bD_i \right \} - \lambda g(\bD_i) \leq l(\bbeta, \bD_i) \leq l \left \{ \bbeta_{\lambda}'(\bbeta), \bD_i \right \} + \lambda g(\bD_i),
		\end{equation*}
		implying that there exists $(b,u) \in B_\delta$ such that  $b(\bD_i) \leq l(\bbeta, \bD_i) \leq u(\bD_i)$ for all $\bD_i \in \mathcal{D}$.
		Additionally, note that for any  $j \in \mathbb{N}$ and $i \in \mathcal{E}_{j-1}$,
		\begin{align*}
			&\bbE_{\bbeta^*, \widetilde{\bpi}} \left [ \left |l \left \{ \bbeta_{\lambda}'(\bbeta), \bD_i \right \} + \lambda g(\bD_i) - \left [l \left \{ \bbeta_{\lambda}'(\bbeta), \bD_i \right \} - \lambda g(\bD_i) \right ] \right | \Big | \epochfield\right ] \\
			&= 2 \lambda \bbE_{\bbeta^*, \widetilde{\bpi}} \left ( |g(\bD_i)| |  \epochfield\right ) \leq 2\lambda \sqrt{m_g} \ \ \mathrm{a.s.}
		\end{align*}
		The inequality above holds by Jensen's inequality,
        \begin{equation*}
            \bbE_{\bbeta^*, \widetilde{\bpi}}  \left \{ |g(\bD_i)| | \epochfield\right \} \leq \sqrt{\bbE_{\bbeta^*, \widetilde{\bpi}} \left \{ g(\bD_i)^2 | \epochfield\right \}} \ \mathrm{a.s.,}
        \end{equation*}
         and Equation~\ref{eq:boundg}. We conclude that (1) holds for our choice of $B_\delta$ by letting the meshwidth $\lambda \triangleq \delta/(2\sqrt{m_g})$.
		
		Lastly, we show that (3) holds. Note that for any $\bbeta \in \mathcal{B}$,
		\begin{align*}
			&\sup_{j \in \mathbb{N}, i \in \mathcal{E}_{j-1}} \bbE_{\bbeta^*, \widetilde{\bpi}} \left [ \left \{l(\bbeta, \bD_i) + \lambda g(\bD_i) \right \}^2 | \epochfield \right ] \\
			&\leq 2 \sup_{j \in \mathbb{N}, i \in \mathcal{E}_{j-1}} \bbE_{\bbeta^*, \widetilde{\bpi}} \left \{ l(\bbeta, \bD_i)^2 | \epochfield \right \} + 
			2 \lambda^2 \sup_{j \in \mathbb{N}, i \in \mathcal{E}_{j-1}} \bbE_{\bbeta^*, \widetilde{\bpi}} \left \{  g(\bD_i)^2 | \epochfield\right \} \ \ \mathrm{a.s.}
		\end{align*}
		by the triangle inequality and the fact that for any $p,q \in \mathbb{R}$,
		\begin{align*}
			(p+q)^2 &= p^2 + 2pq + q^2 \\
			&= 2p^2 + 2q^2 -p^2+2pq-q^2 \\
			&= 2p^2 + 2q^2 - (p-q)^2 \\
			&\leq 2p^2 + 2q^2.
		\end{align*}
		By the same logic,
		\begin{align*}
			\sup_{j \in \mathbb{N}, i \in \mathcal{E}_{j-1}} \bbE_{\bbeta^*, \widetilde{\bpi}} \left [ \left \{l(\bbeta, \bD_i) - \lambda g(\bD_i) \right \}^2 | \epochfield\right ] &\leq
			2 \sup_{j \in \mathbb{N}, i \in \mathcal{E}_{j-1}} \bbE_{\bbeta^*, \widetilde{\bpi}} \left \{ l(\bbeta, \bD_i)^2 | \epochfield\right \} \\
			&+ 2 \lambda^2  \sup_{j \in \mathbb{N}, i \in \mathcal{E}_{j-1}} \bbE_{\bbeta^*, \widetilde{\bpi}} \left \{ g(\bD_i)^2 | \epochfield \right \} \ \ \mathrm{a.s.}
		\end{align*}
		Because $l(\bD_i, \bbeta) = l(\bD_i, \bbeta) - l(\bD_i, \bbeta^*) + l(\bD_i, \bbeta^*)$, we upper bound
		\begin{align*}
			&2 \sup_{j \in \mathbb{N}, i \in \mathcal{E}_{j-1}} \bbE_{\bbeta^*, \widetilde{\bpi}} \left \{ l(\bbeta, \bD_i)^2 | \epochfield\right \} + 
			2 \lambda^2 \sup_{j \in \mathbb{N}, i \in \mathcal{E}_{j-1}} \bbE_{\bbeta^*, \widetilde{\bpi}} \left \{  g(\bD_i)^2 | \epochfield \right \} \\
			&\leq 4 \sup_{j \in \mathbb{N}, i \in \mathcal{E}_{j-1}} \bbE_{\bbeta^*, \widetilde{\bpi}} \left \{ l(\bbeta^*, \bD_i)^2 | \epochfield\right \} +
			4 \sup_{j \in \mathbb{N}, i \in \mathcal{E}_{j-1}} \bbE_{\bbeta^*, \widetilde{\bpi}} \left [ \left \{ l(\bbeta, \bD_i) - l(\bbeta^*, \bD_i) \right \}^2 | \epochfield \right ] \\
			&\hspace{5.8cm}+ 2 \lambda^2 \sup_{j \in \mathbb{N}, i \in \mathcal{E}_{j-1}} \bbE_{\bbeta^*, \widetilde{\bpi}} \left \{ g(\bD_i)^2 | \epochfield\right \} \ \ \mathrm{a.s.}
		\end{align*}
		Note that $\sup_{ j \in \mathbb{N}, i \in \mathcal{E}_{j-1}} \bbE_{\bbeta^*, \widetilde{\bpi}} \left \{  g(\bD_i)^2 |\epochfield\right \} \leq m_g$ a.s., and $\bbE_{\bbeta^*, \widetilde{\bpi}} \left \{ l(\bbeta^*, \bD_i)^2 |\epochfield\right \}$ is bounded by Assumption~\ref{as:moments}. 
		
		By Assumption~\ref{as:lipchitz}, for any $\bbeta \in \mathcal{B}$, 
		\begin{align*}
			&|l(\bbeta, \bD_i) - l(\bbeta^*, \bD_i)| \leq g(\bD_i)\|\bbeta - \bbeta^* \|_2 \Rightarrow \\
			& \left \{l(\bbeta, \bD_i) - l(\bbeta^*, \bD_i) \right \}^2 \leq g(\bD_i)^2 \|\bbeta - \bbeta^* \|_2^2 \Rightarrow \\
			&\bbE_{\bbeta^*, \widetilde{\bpi}} \left [ \left \{l(\bbeta, \bD_i) - l(\bbeta^*, \bD_i) \right \}^2 | \epochfield\right ] \leq \bbE_{\bbeta^*, \widetilde{\bpi}} \left \{ g(\bD_i)^2 | \epochfield\right \}\|\bbeta - \bbeta^* \|_2^2 \ \ \mathrm{a.s.}
		\end{align*}
		$\|\bbeta - \bbeta^* \|_2^2$ is bounded because  $\mathcal{B}$ is bounded, and $\sup_{ j \in \mathbb{N}, i \in \mathcal{E}_{j-1}} \bbE_{\bbeta^*, \widetilde{\bpi}} \left \{  g(\bD_i)^2 | \epochfield\right \} \leq m_g$ a.s. We conclude that property (3) holds.
		We now use $B_\delta$ in the main argument for the proof of uniform convergence.
		
		\paragraph{Main Argument:} We now show that for any $\epsilon> 0$, 
		\begin{equation}
			\label{eq:unsllngoal}
			\bbP_{\bbeta^*, \widehat{\bpi}} \left \{ \sup_{\bbeta \in \mathcal{B}} \left (\frac{1}{\overline{\kappa}_J} \sum_{j=1}^J \sum_{i \in \epochgroup} \left [ e_i(\bbeta) -  \bbE_{\bbeta^*, \widehat{\bpi}} \left \{ e_i(\bbeta) | \epochfield\right \} \right ] \right ) > \epsilon \right \} = o(1).
		\end{equation}
		Let $\delta > 0$ (we will choose $\delta$ later). Let $B_\delta$ be the set of pairs of functions as constructed earlier. Note that by property (1) of $B_\delta$, we get the following upper bound,
		\begin{align*}
			&\sup_{\bbeta \in \mathcal{B}} \left (\frac{1}{\overline{\kappa}_J} \sum_{j=1}^J \sum_{i \in \epochgroup} \left [ W_i l_i(\bbeta) -  \bbE_{\bbeta^*, \widehat{\bpi}} \left \{ W_i l_i(\bbeta) | \epochfield\right \} \right ] \right ) \\
			&\leq 
			\max_{(b, u) \in B_\delta} \left (\frac{1}{\overline{\kappa}_J} \sum_{j=1}^J \sum_{i \in \epochgroup} \left [ W_i u(\bD_i) -  \bbE_{\bbeta^*, \widehat{\bpi}} \left \{ W_i b(\bD_i) | \epochfield\right \} \right ] \right ) \ \ \mathrm{a.s.} \ \ (*)
		\end{align*}
		By subtracting and adding $\bbE_{\bbeta^*, \widehat{\bpi}}\left \{ W_i u(\bD_i) | \epochfield\right \}$, and using the triangle inequality, we find that
		\begin{align*}
			(*) \ \leq &\max_{(b, u) \in B_\delta} \left [\frac{1}{\overline{\kappa}_J} \sum_{j=1}^J \sum_{i \in \epochgroup}  \bbE_{\bbeta^*, \widehat{\bpi}} \left [W_i \left \{ u(\bD_i) -   b(\bD_i) \right \} | \epochfield\right ] \right ] + \\
			&\max_{(b, u) \in B_\delta} \left (\frac{1}{\overline{\kappa}_J} \sum_{j=1}^J \sum_{i \in \epochgroup} \left [ W_i u(\bD_i) -  \bbE_{\bbeta^*, \widehat{\bpi}} \left \{ W_i u(\bD_i) | \epochfield\right \} \right ] \right ) \ \ \mathrm{a.s.}
		\end{align*}
		By Assumption~\ref{as:clip} and the fact that $u(\bD_i) - b(\bD_i) \geq 0$,
		\begin{align*}
			\bbE_{\bbeta^*, \widehat{\bpi}} \left [W_i \left \{u(\bD_i) -   b(\bD_i) | \epochfield\right \} \right] &\leq \frac{1}{\rho_{\min}} \bbE_{\bbeta^*, \widehat{\bpi}} \left [W_i^2 \left \{u(\bD_i) -   b(\bD_i) | \epochfield\right \} \right] \\
			&= \frac{1}{\rho_{\min}} \bbE_{\bbeta^*, \widetilde{\bpi}} \left \{u(\bD_i) -   b(\bD_i) | \epochfield\right \}  \\
			&\leq \frac{1}{\rho_{\min}} \delta.
		\end{align*}
		The last inequality holds by property (1) of $B_\delta$. Because $\max_{i \in  [n]} a_i \leq \sum_{i=1}^n |a_i|$,
		\begin{align*}
			&\sup_{\bbeta \in \mathcal{B}} \left (\frac{1}{\overline{\kappa}_J} \sum_{j=1}^J \sum_{i \in \epochgroup} \left [ W_i l_i(\bbeta) -  \bbE_{\bbeta^*, \widehat{\bpi}} \left \{ W_i l_i(\bbeta) | \epochfield\right \} \right ] \right ) \\
			&\leq \frac{1}{\rho_{\min}} \delta + \sum_{(b, u) \in B_\delta} \left | \frac{1}{\overline{\kappa}_J} \sum_{j=1}^J \sum_{i \in \epochgroup} \left [ W_i u(\bD_i) -  \bbE_{\bbeta^*, \widehat{\bpi}} \left \{ W_i u(\bD_i) | \epochfield\right \} \right ] \right |.
		\end{align*}
		By property (3) of $B_\delta$, we know that 
        \begin{equation*}
            \bbE_{\bbeta^*, \widehat{\bpi}} \left \{ W_i^2 u(\bD_i)^2 | \epochfield\right \} = \bbE_{\bbeta^*, \widetilde{\bpi}} \left \{ u(\bD_i)^2 | \epochfield\right \} \leq m_g.
        \end{equation*}
		Therefore, by Theorem~\ref{thm:branchingSLLN}, for any $(b, u) \in B_\delta$,
		\begin{equation*}
			\left |\frac{1}{\overline{\kappa}_J} \sum_{j=1}^J \sum_{i \in \epochgroup} \left [ W_i u(\bD_i) -  \bbE_{\bbeta^*, \widehat{\bpi}} \left \{ W_i u(\bD_i) | \epochfield\right \} \right ] \right | = o_p(1).
		\end{equation*}
		Because $|B_\delta| < \infty$ by Property (2), the convergence holds for all $(b, u) \in B_\delta$ simultaneously, so
		\begin{equation*}
			(*) \leq \frac{1}{\rho_{\min}} \delta + o_p(1).
		\end{equation*}
		Equation~\ref{eq:unsllngoal} is satisfied by choosing $\delta = \rho_{\min} \epsilon/2$.
	\end{proof}
	
	\section{Proof of Theorem~\ref{thm:budg}}
	\label{app_sec:regret}
	
	We establish asymptotic regret bounds for RL-RDS by proving Theorem~\ref{thm:budg}.
    The following consequence of Markov's inequality will be useful.
	
	\begin{lemma}
		\label{lem:op}
		Let $\{X_n\}_{n\geq1}$ be a sequence of random variables where $X_n \in \mathcal{X} \subset \mathbb{R}^p$ and $||\cdot||$ is an arbitrary norm on $\mathbb{R}^p$. If $\mathbb{E}||X_n|| = O(1) $ for all $n \geq 1$, then $\left \{ X_n \right \}_{n \geq 1} = O_p(1)$.
	\end{lemma}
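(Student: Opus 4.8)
The plan is to derive the conclusion directly from Markov's inequality applied to the nonnegative scalar random variable $\|X_n\|$, then unwind the definition of boundedness in probability. The argument is elementary, and the whole content lies in exploiting that the bound $\mathbb{E}\|X_n\| = O(1)$ is \emph{uniform} in $n$.

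First I would make the hypothesis concrete. Since $\mathbb{E}\|X_n\| = O(1)$, there is a finite constant $C$ with $\mathbb{E}\|X_n\| \le C$ for all $n \ge 1$; if one reads $O(1)$ as holding only eventually, the finitely many initial terms each have finite expected norm and can be absorbed into $C$ by taking a maximum over that finite index set, so assuming a uniform bound for every $n$ costs no generality. Next comes the core step: because $\|X_n\| \ge 0$ under the given (arbitrary) norm, Markov's inequality gives, for any threshold $M > 0$ and every $n$,
\begin{equation*}
\mathbb{P}(\|X_n\| > M) \le \frac{\mathbb{E}\|X_n\|}{M} \le \frac{C}{M}.
\end{equation*}
The essential feature is that the right-hand side does not depend on $n$.

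Finally I would match this to the definition of $O_p(1)$, namely that for every $\epsilon > 0$ there exists a finite $M_\epsilon$ with $\sup_n \mathbb{P}(\|X_n\| > M_\epsilon) \le \epsilon$. Given $\epsilon > 0$, taking $M_\epsilon = C/\epsilon$ in the displayed bound yields $\sup_n \mathbb{P}(\|X_n\| > M_\epsilon) \le \epsilon$, which is exactly the required statement. There is no real obstacle here: the only points warranting a moment's care are the uniformity in $n$ (which is what separates $O_p(1)$ from mere tightness of each individual $X_n$, and is furnished precisely because $C$ is uniform) and the bookkeeping for the finitely many low-index terms noted above.
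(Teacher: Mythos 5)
Your proposal is correct and follows essentially the same route as the paper's own proof: apply Markov's inequality to the nonnegative scalar $\|X_n\|$, use the uniform bound $\mathbb{E}\|X_n\| \le C$, and take the threshold $M_\epsilon = C/\epsilon$. The only addition is your remark about absorbing finitely many low-index terms, which is a harmless refinement of the same argument.
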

	
	\begin{proof}
		If $\mathbb{E}||X_n|| = O(1)$ for all $n \geq 1$, $\exists M>0$ such that $\forall n \geq 1$,
		$$\mathbb{E}||X_n|| \leq M.$$
		We need to show that given $\epsilon$, $\exists V_\epsilon$ such that $P(||X_n|| \geq V_\epsilon) \le \epsilon$ for all $n \geq 1$. We know that
		$$\mathbb{P}(||X_n|| \geq  V_\epsilon) \leq \frac{\bbE||X_n||}{V_\epsilon} \leq \frac{M}{V_\epsilon} $$
		by the Markov Inequality. Choosing $V_\epsilon = M/\epsilon$ gives the desired result.
	\end{proof}

	In Assumption~\ref{as:budg}, we define $S^n$ as the budget left when individual $n$ is recruited, where $S^n$ satisfies $0 < S^n - \sum_{v=1}^n C^v < C^*$ for $C^* \in \mathbb{R}^+$ and each $n \in \mathbb{N}$.  We define an upper bound on the
	potential number of additional recruits under allocation
	strategy $\pmb{\pi}$ at any state $\bhistory$ for $J \in \mathbb{N}$ as 
	\begin{equation*}
		n^*(\pmb{\pi}, \bh^k) \triangleq \inf
		\left\lbrace
		n\,:\, \sum_{v=k}^{n}
		C^{v*}(\pmb{\pi}, \bh^k)
		> C^*
		\right\rbrace.
	\end{equation*}
	We assume that there exists $S^* \in \mathbb{N}$ such that for any $\bhistory \in \bHistory$,
	$\sup_{\bpi \in \Pi} n^*(\bpi, \bhistory) \le S^*$ a.s.

	\begin{proof}[Proof of Theorem \ref{thm:budg}]
		\begin{align*}
			\branchingvalue(\bhistory, \bpi; \bbeta^*) &= \branchingvalue(\bhistory,\bpi; \widehat{\bbeta}_{J}) + \branchingvalue(\bhistory,\bpi; \bbeta^*) - \branchingvalue(\bhistory,\bpi; \widehat{\bbeta}_{J}) \\
			&\leq  \branchingvalue(\bhistory, \widehat{\bpi}_J; \widehat{\bbeta}_{J}) + \branchingvalue(\bhistory,\bpi; \bbeta^*) - \branchingvalue(\bhistory, \bpi; \widehat{\bbeta}_{J}) \\
			& = \branchingvalue(\bhistory, \widehat{\bpi}_J; {\bbeta^*}) + \branchingvalue(\bhistory, \widehat{\bpi}_J; \widehat{\bbeta}_{J}) - \branchingvalue(\bhistory, \widehat{\bpi}_J; \bbeta^*) + \\
                & \hspace{0.6cm} \branchingvalue(\bhistory,\bpi; \bbeta^*) - \branchingvalue(\bhistory, \bpi; \widehat{\bbeta}_{J}) \\
			&\leq \branchingvalue(\bhistory, \widehat{\bpi}_J; \bbeta^*) + 2\sup_{\bpi \in \Pi} \left |\branchingvalue(\bhistory,\bpi; \bbeta^*) - \branchingvalue(\bhistory, \bpi; \widehat{\bbeta}_{J}) \right |,
		\end{align*}
		where the first inequality follows from the definition of $\widehat{\bpi}_J$. The upper bound above does not depend on $\bpi$, and so it holds for $\bpi^{\mathrm{opt}}$,
		\begin{equation*}
			\branchingvalue(\bhistory, \bpi^{\mathrm{opt}}; \bbeta^*) -  \branchingvalue(\bhistory, \widehat{\bpi}_J; \bbeta^*)  \leq  2\sup_{\bpi \in \Pi} |\branchingvalue(\bhistory,\bpi; \bbeta^*) - \branchingvalue(\bhistory, \bpi; \widehat{\bbeta}_{J})|.
		\end{equation*}
		We know that $\branchingvalue(\bhistory, \bpi^\mathrm{opt}; \bbeta^*) -  \branchingvalue(\bhistory, \widehat{\bpi}_J; \bbeta^*) \geq 0$ because $\bpi^\mathrm{opt} \in \arg \max_{\bpi \in \Pi} \branchingvalue(\bhistory, \bpi; \bbeta^*) $, so 
		\begin{equation*}
			|\branchingvalue(\bhistory, \bpi^\mathrm{opt}; \bbeta^*) -  \branchingvalue(\bhistory, \widehat{\bpi}_J; \bbeta^* )|  \leq  2\sup_{\bpi \in \Pi} |\branchingvalue(\bhistory,\bpi; \bbeta^*) - \branchingvalue(\bhistory, \bpi; \widehat{\bbeta}_{J})|.
		\end{equation*}
		We now show that $\sup_{\bpi \in \Pi} |\branchingvalue(\bhistory,\bpi; \bbeta^*) - \branchingvalue(\bhistory, \bpi; \widehat{\bbeta}_{J})| = O_p(1/\sqrt{\overline{\kappa}_J})$, which proves the desired result. We expand $\branchingvalue(\bhistory, \bpi; \bbeta^*) -  \branchingvalue(\bhistory, \bpi; \widehat{\bbeta}_{J})$ as follows. Recall that $\Delta^{v}$ is an indicator that the budget has not been exceeded when individual $v$ enters the study. Note that $\exists S^*$ such that we can express
		\begin{align}
			\begin{split}
				&\branchingvalue(\bhistory, \bpi; \bbeta^*) -  \branchingvalue(\bhistory, \bpi; \widehat{\bbeta}_{J}) = \\
				&\int\left ( \sum_{k=1}^{S^*} \Delta^k y^k \right ) \left \{ \prod_{m=1}^{S^*} f(\bh^{m+1} | \bh^{m}, \ba^m; \bbeta^*) \pi(\ba^m | \bh^m) \right \} d\lambda(\bh^{S^*}) - \\ &\int\left ( \sum_{k=1}^{S^*} \Delta^k y^k \right ) \left \{ \prod_{m=1}^{S^*} f(\bh^{m+1} | \bh^{m}, \ba^m; \widehat{\bbeta}_{J}) \pi(\ba^m | \bh^m) \right \} d\lambda(\bh^{S^*}),
				\label{eq:Int}
			\end{split}  
		\end{align}
		where $\bh^1 = \bhistory$.
		Refactoring this expression and applying the definition of the log-likelihood results in
		\begin{align*}
			&\branchingvalue(\bhistory, \bpi; \bbeta^*) -  \branchingvalue(\bhistory, \bpi; \widehat{\bbeta}_{J}) = \\
			& \int\left ( \sum_{k=1}^{S^*} \Delta^k y^k \right ) \left \{ 1- \frac{\prod_{m=1}^{S^*} f(\bh^{m+1} | \bh^{m}, \ba^m; \widehat{\bbeta}_{J}) \pi(\ba^m | \bh^m) } {\prod_{m=1}^{S^*} f(\bh^{m+1} | \bh^{m}, \ba^m; \bbeta^*) \pi(\ba^m | \bh^m) } \right \} \times \\
			&\prod_{m=1}^{S^*} f(\bh^{m+1} | \bh^{m}, \ba^m; \bbeta^*) \pi(\ba^m | \bh^m)  d\lambda(\bh^{S^*}) = \\
			&\int\left ( \sum_{k=1}^{S^*} \Delta^k y^k \right ) \left [ 1- \exp \left \{ \ell_{S^*} (\widehat{\bbeta}_{J}) -  \ell_{S^*}(\bbeta^*) \right \} \right ] \times \\
                &\hspace{0.5cm}\left \{ \prod_{m=1}^{S^*} f(\bh^{m+1} | \bh^{m}, \ba^m; \bbeta^*) \pi(\ba^m | \bh^m) \right \} d\lambda(\bh^{S^*}).
		\end{align*}
		By Assumption~\ref{as:lipchitz2}, we know that 
		\begin{align*}
			\sqrt{\overline{\kappa}_J}\left |\ell_{S^*} (\widehat{\bbeta}_{J}) -  \ell_{S^*}(\bbeta^*) \right |
			& \leq \sqrt{\overline{\kappa}_J} \sum_{v=1}^{S^*} |q^v(\widehat{\bbeta}_{J}) - q^v(\bbeta^*) | \\
			&\leq \left [ \sum_{v=1}^{S^*} |e \left \{ (R^v, T^v, \bX^v, Y^v , \bA^v) \right \} | \right ] \left \| \sqrt{\overline{\kappa}_J}  \left (\widehat{\bbeta}_{J} - \bbeta^* \right ) \right \|_2.
		\end{align*}
		We know $\sqrt{\overline{\kappa}_J}(\widehat{\bbeta}_{J} -  \bbeta^*) = O_p(1)$ by Theorem~\ref{thm:asympNorm}. Labeling $e^v \triangleq e \left \{ (R^v, T^v, \bX^v, Y^v , \bA^v) \right \} $ and noting that $e^v$ is strictly positive, we also know that for any $\bpi \in \Pi$, 
        \begin{equation}
            \label{eq:o1bound}
            \bbE_{\bbeta^*, \bpi}  \left (| e^v | \right ) = O(1)
        \end{equation}
        by Jensen's Inequality. 
		By Lemma~\ref{lem:op}, this implies that $\sum_{v=1}^{S^*} \left | e^v \right | = O_p(1)$. We conclude that 
		\begin{equation*}
			\sqrt{\overline{\kappa}_J}\left |\ell_{{S^*}} (\widehat{\bbeta}_{J}) -  \ell_{{S^*}}(\bbeta^*) \right | = O_p(1) \Rightarrow 
			\left |\ell_{{S^*}} (\widehat{\bbeta}_{J}) -  \ell_{{S^*}}(\bbeta^*) \right | = O_p(1/\sqrt{\overline{\kappa}_J}).
		\end{equation*}
            By Taylor expansion, 
		\begin{equation}
                \label{OpLim}
			1- \exp \left \{ \ell_{S^*} (\widehat{\bbeta}_{J}) -  \ell_{S^*}(\bbeta^*) \right \} = \ell_{S^*} (\widehat{\bbeta}_{J}) -  \ell_{S^*}(\bbeta^*) + o_p\left ( 1/\sqrt{\overline{\kappa}_J} \right ).
		\end{equation}
		Because $Y^{v}  \in [0,1]$ for all $v \in \mathbb{N}$,
		\begin{equation*}
			\left |\sum_{k=1}^{S^*} Y^{k} \Delta^k \right | = \sum_{k=1}^{S^*} |Y^{k} \Delta^k| \leq S^*,
		\end{equation*}
		and
		\begin{align}
			\begin{split}
				|(\ref{eq:Int})|\leq  & S^* \times \left |\ell_{S^*} (\widehat{\bbeta}_{J}) -  \ell_{S^*}(\bbeta^*) + o_p\left (  1/\sqrt{\overline{\kappa}_J}   \right ) \right| \\ & \times \left \{ \prod_{m=1}^{S^*} f(\bh^{m+1} | \bh^{m}, \ba^m; \bbeta^*) \pi(\ba^m | \bh^m) \right \} d\lambda(\bh^{S^*}).
				\label{RewLim}
			\end{split}
		\end{align}
		Leveraging equations (\ref{eq:o1bound}), (\ref{OpLim}), and (\ref{RewLim}) as well as the Cauchy-Schwartz Inequality, we know that as $J \to \infty$
		\begin{align*}
			|(\ref{eq:Int})|\leq  & \ {S^*} \times \frac{1}{\sqrt{\overline{\kappa}_J}} \left \|\sqrt{\overline{\kappa}_J} \left \{ \widehat{\bbeta}_{J} - \bbeta^* \right \}\right \|_2  \left \{ \sum_{v=1}^{S^*} | e^v | \right \}  \\ & \times \left \{ \prod_{m=1}^{{S^*}} f(\bh^{m+1} | \bh^{m}, \ba^m; \bbeta^*) \pi(\ba^m | \bh^m) \right \} d\lambda(\bh^{{S^*}}) + \\ 
            & o_p\left ( 1/\sqrt{\overline{\kappa}_J} \right ) \times \frac{1}{\sqrt{\overline{\kappa}_J}} \left \|\sqrt{\overline{\kappa}_J} \left \{ \widehat{\bbeta}_{J} - \bbeta^* \right \}\right \|_2  \left \{ \sum_{v=1}^{S^*} | e^v | \right \} \\
            & \times \left \{ \prod_{m=1}^{{S^*}} f(\bh^{m+1} | \bh^{m}, \ba^m; \bbeta^*) \pi(\ba^m | \bh^m) \right \} d\lambda(\bh^{{S^*}}) \\ 
            \leq & {S^*} \times \frac{1}{\sqrt{\overline{\kappa}_J}} \left \|\sqrt{\overline{\kappa}_J} \left \{ \widehat{\bbeta}_{J} - \bbeta^* \right \}\right \|_{2}  \bbE_{\bbeta^*, \widetilde{\bpi}}  \left \{ \sum_{v=1}^{S^*} | e^v | \right \} +  \\
            & o_p\left ( 1/\sqrt{\overline{\kappa}_J} \right )  \times \frac{1}{\sqrt{\overline{\kappa}_J}} \left \|\sqrt{\overline{\kappa}_J} \left \{ \widehat{\bbeta}_{J} - \bbeta^* \right \}\right \|_{2}  \bbE_{\bbeta^*, \widetilde{\bpi}}  \left \{ \sum_{v=1}^{S^*} | e^v | \right \}.
		\end{align*}
		This follows because $\sqrt{\overline{\kappa}_J}(\widehat{\bbeta}_{J} -  \bbeta^*) = O_p(1)$, and $\bbE_{\bbeta^*, \widetilde{\bpi}}  \left ( \sum_{v=1}^{S^*} |e^v| \right ) = O(1)$. \\ Therefore, $|\branchingvalue(\bhistory, \bpi; \bbeta^*) -  \branchingvalue(\bhistory, \bpi; \widehat{\bbeta}_{J})| = O_p(1/\sqrt{\overline{\kappa}_J})$ as $J \to \infty$. This proves the result.
	\end{proof}

\section{Epoch Asymptotics}

As mentioned in Section~\ref{sec:regretbounds} in the main text,  we observe the branching process data in order of arrival time, $\mathcal{D}^{\noepochsamplesize} \triangleq \left\lbrace \left (R^v, T^v, \bX^v, Y^v, \bA^v, C^v \right )\right\rbrace_{v=1}^{\kappa}$. Define $J_{\noepochsamplesize}$ as the last complete epoch induced by coupon expiration, $J_{\kappa} \triangleq \max\{j: \forall i \in \mathcal{E}_j, \ T_i + t_{\max} < T^{\noepochsamplesize} \}$. Lemma~\ref{lem:generation_asymptotics_2} ensures that $\lim_{\kappa \to \infty} J_{\noepochsamplesize} \to \infty$ a.s. under Assumption~\ref{as:generation_asymptotics}, implying that inference based on the last complete epoch will achieve the asymptotic guarantees of Theorem~\ref{thm:rlbranching}.

\begin{lemma}
    \label{lem:generation_asymptotics_2}
    Assume that the working model specified in Equation~\ref{eq:rdsmod} is the true generative model. Under Assumption~\ref{as:generation_asymptotics},
    \begin{equation*}
        J_{\noepochsamplesize} \to \infty \ \mathrm{a.s.}
    \end{equation*}
\end{lemma}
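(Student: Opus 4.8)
The plan is to reduce the claim to a statement about arrival times and then exploit the two quantitative ingredients of Assumption~\ref{as:generation_asymptotics}: the strictly positive minimum inter-recruitment gap $\alpha$ and the uniform bound $L_*$ on the number of recruits. First I would observe that it suffices to show that for every fixed $j_0 \in \mathbb{N}$ one eventually has $J_{\noepochsamplesize} \geq j_0$. Since $J_{\noepochsamplesize}$ is by definition the largest epoch all of whose members have had their coupons expire by time $T^{\noepochsamplesize}$, membership of $j_0$ in the defining set already forces $J_{\noepochsamplesize} \geq j_0$; hence it is enough to guarantee that $\max_{i \in \mathcal{E}_{j_0}} T_i + t_{\max} < T^{\noepochsamplesize}$.

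The two facts I would then establish are: (i) $T^{\noepochsamplesize} \to \infty$ almost surely as $\noepochsamplesize \to \infty$, and (ii) the last arrival time in any fixed epoch, $\tau_{j_0} \triangleq \max_{i \in \mathcal{E}_{j_0}} T_i$, is finite almost surely. Given these, fixing $j_0$ and choosing $\noepochsamplesize$ large enough that $T^{\noepochsamplesize} > \tau_{j_0} + t_{\max}$ (possible by (i)) yields $J_{\noepochsamplesize} \geq j_0$; letting $j_0 \to \infty$ gives the claim. Fact (ii) is the routine half: recruits arrive at most $t_{\max}$ after their recruiter, so an induction along the recruitment chain gives $T_i \leq T_{\max,0} + j_0 t_{\max}$ for $i \in \mathcal{E}_{j_0}$, where $T_{\max,0} = \max_{i \in \mathcal{E}_0} T_i < \infty$ because the initial sample is finite; hence $\tau_{j_0} < \infty$.

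The crux is fact (i), and this is where the main obstacle lies: a priori a super-critical branching process could pack infinitely many arrivals into a bounded time window, which would leave $T^{\noepochsamplesize}$ bounded. I would rule this out by a counting argument that combines both halves of Assumption~\ref{as:generation_asymptotics}. Applying the lower bound $T^v_j - T^v \geq \alpha$ repeatedly along the recruitment chain shows that any $i \in \mathcal{E}_j$ satisfies $T_i \geq T_{\min,0} + j\alpha$ with $T_{\min,0} = \min_{i \in \mathcal{E}_0} T_i$; consequently only epochs $0,1,\dots,\lfloor (t - T_{\min,0})/\alpha \rfloor$ can contribute arrivals before any fixed time $t$. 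Since each recruiter has at most $L_*$ recruits, $|\mathcal{E}_j| \leq |\mathcal{E}_0|\, L_*^{\,j}$, so the total number of individuals arriving by time $t$ is finite. Thus $\{v : T^v \leq t\}$ is finite for every $t$, which forces $T^{\noepochsamplesize} \to \infty$ whenever infinitely many individuals are recruited.

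Finally I would address the survival/extinction dichotomy, since the limit $\noepochsamplesize \to \infty$ is only meaningful when the sample grows without bound. Super-criticality, encoded through the existence of $\branchasymp$ with $\bbP(\branchasymp \geq \delta) \geq 1-\epsilon$ in Assumption~\ref{as:generation_asymptotics}, guarantees that on the survival event $\{\branchasymp > 0\}$ the process produces infinitely many individuals and hence infinitely many non-empty epochs; on this event the counting argument applies and $J_{\noepochsamplesize} \to \infty$, while on the complementary extinction event $\noepochsamplesize$ is bounded and the statement is vacuous. Assembling (i), (ii), and this observation completes the argument.
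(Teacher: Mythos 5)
Your proposal is correct and follows essentially the same route as the paper: both arguments establish $T^{\noepochsamplesize}\to\infty$ by combining the positive minimum inter-recruitment gap with the bound $L_*$ on offspring to show only finitely many individuals can arrive by any fixed time (the paper phrases this as an induction on $T^{|\mathcal{E}_0|L_*^{g-1}+1}\geq g t_{\min}$, you as a direct count — the same argument in contrapositive form), and then both combine this with the a.s. finiteness of $\max_{i\in\mathcal{E}_{j}}T_i$ to conclude $J_{\noepochsamplesize}\to\infty$. Your explicit treatment of the extinction event is a small point the paper leaves implicit, but it does not change the substance of the argument.
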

\begin{proof}
First, we show that 
    \begin{equation*}
        \lim_{ \noepochsamplesize \to \infty} T^{\noepochsamplesize} = \infty.
    \end{equation*}
    Define $V_S$ as the seed sample set. Under Assumption~\ref{as:generation_asymptotics}, the number of coupons in a coupon allocation is upper bounded by $L_*$. It is sufficient to show that for any $g \in \mathbb{N}$, $T^{|V_S|L_*^{g-1} + 1} \geq gt_{\min}$.
    
    We employ a proof by induction. First, we show the base case. Suppose $g=1$. We know that $T^{|V_S| + 1} \geq t_{\min}$ by Assumption~\ref{as:generation_asymptotics}. We now assume that for any $g \in \mathbb{N}$, 
    \begin{equation*}
        T^{|V_S|*L_*^{g-1} + 1} \geq gt_{\min}.
    \end{equation*}
    Note that $t_{\min} > 0$.
    Consequently, an upper bound on the number of active coupons in the branching process at time $T^{|V_S|*L_*^{g-1}}$ is
    \begin{equation*}
        |V_S|L_*^{g-1}*L_*= |V_S|L_*^{g}.
    \end{equation*}
    Note that any study participant recruited by a new member of the sample (an individual recruited after the first $|V_S|L_*^{g-1}$ study participants) will have a recruitment time greater than $(g+1)t_{\min}$; i.e., for any $\noepochsamplesize$ such that $R^{\noepochsamplesize} > |V_S|L_*^{g-1}$, $T^{\noepochsamplesize} \geq (g+1)t_{\min}$.
    Because $|V_S|L_*^{g} - |V_S|L_*^{g-1}$ is an upper bound on the number of active coupons at time $T^{|V_S|L_*^{g-1}}$, we know that
    $ R^{|V_S|L_*^{g}}  \geq |V_S|L_*^{g-1}$. Consequently,
    \begin{equation*}
        T^{|V_S|L_*^{g} + 1} > (g+1)t_{\min}.
    \end{equation*}
    We conclude that
    \begin{equation*}
        \lim_{ \noepochsamplesize \to \infty} T^{\noepochsamplesize} = \infty
    \end{equation*}
    by induction. Now we establish that 
    \begin{equation*}
        \lim_{\noepochsamplesize \to \infty} J_{\noepochsamplesize} = \infty.
    \end{equation*}
  Recall that $\overline{\kappa}_J \triangleq |\overline{\mathcal{E}}_J|$. 
  Additionally, recall that $|\bA^v| \leq L_*$ for all $v \in N$. 
  Consequently, for all $j \in \mathbb{N}$, $ \left | \overline{\mathcal{E}}_j \right | \leq L_*^j < \infty$. 
  This implies that $\max_{i \in \overline{\mathcal{E}}_j} T_i \leq t_{\max} L_*^j$. Consequently, for any $j \in \mathbb{N}$, 
  \begin{equation*}
      1 = P \left (\lim_{\noepochsamplesize \to \infty} \left \{ t_{\max}L_*^j + t_{\max} < T^{\noepochsamplesize} \right \} \right )  \leq P \left (\lim_{\noepochsamplesize \to \infty} \left \{ \forall i \in \overline{\mathcal{E}}_j, \ T_i + t_{\max} < T^{\noepochsamplesize} \right \} \right ).
  \end{equation*}
   We conclude that $\lim_{\noepochsamplesize \to \infty} J_{\noepochsamplesize} = \infty$. 
\end{proof}

\section{Proof of Theorem~\ref{thm:rlbranching}}

In this section, we prove Theorem~\ref{thm:rlbranching} for the branching model described in Equation~\ref{eq:rdsmod}. 
Under Conditions (C1)-(C6), we verify Assumptions~\ref{as:clip}, \ref{as:generation_asymptotics}, and \ref{as:differentiable}-\ref{as:equicontinuity}. This will allow us to invoke Theorems~\ref{thm:asympNorm} and \ref{thm:budg} for the branching model described by Equation~\ref{eq:rdsmod}.

\paragraph{General notation for this section.} 
For matrix $A \in \mathbb{R}^{n \times n}$, $\sigma_{\min}(A)$ and $\sigma_{\max}(A)$ are the minimum and maximum singular values of $A$ respectively. Define $\mathrm{vec}(A)$ as the vectorization of $A$. For a symmetric matrix $X \in \mathbb{R}^{d\times d}$, define $\mathrm{vech}(X)$ as the vectorization of the lower-triangular elements of $X$. 
Label $I_d \in \mathbb{R}^{d \times d}$ as the $d$-dimensional identity matrix.

\subsection{The Branching Models Derivatives}
\label{app_sec:likelihood}

Before we begin verifying assumptions, it will be useful to calculate the derivatives of the log-likelihood of the branching process specified in Equation~\ref{eq:rdsmod}. First, we define the data 
\begin{equation*}
    \bD_i =\left  (\underline{\bY}_i, \underline{\bX}_i, \underline{\bT}_i, M_i, \bA_i \right ).
\end{equation*}
For example, $\bX_i$ is the covariate vector of recruiter $i$, and $\underline{\bX}_i = \left ( \bX_{i,1}, \bX_{i,2}, \ldots,\bX_{i,M_i} \right )$ are the recruits of recruiter $i$.
The complete
likelihood for the branching process is
\begin{align}
    \label{eq:rdsmod_likelihood_first}
    \begin{split}
	&\mathcal{L}_{J}( \bbeta) \triangleq \\
	& \mathcal{L}_{J}(\bbeta_y)  \mathcal{L}_{J}( \left \{\bphi_{\ba}, G_{a}, \Sigma_{a} \right \}_{a \in \mathbb{A}}) \mathcal{L}_{J}(\zeta , \lambda) \triangleq \\
	&\prod_{j =1}^J  \prod_{i \in \epochgroup} \prod_{l=1}^{M_i} \left [ \frac{1}{1 + \exp(-{\bZ_{i,l}}^\top\bbeta_y)} \right ]^{Y_{i,l}} \left [ \frac{1}{1 + \exp({\bZ_{i,l}}^\top \bbeta_y)} \right ]^{1-Y_{i,l}} \times \\
	&\prod_{j =1}^J  \prod_{i \in \epochgroup} \prod_{l=1}^{M_i} (2\pi)^{-p/2} |\sum_{a \in \mathbb{A}}\Sigma_{a} \mathbb{I}\left (A_{i,l} = a \right )|^{-1/2} \times \\
        &\hspace{0.5cm} \exp \left [ -\frac{1}{2} \sum_{a \in \mathbb{A}} \left \{  (\bX_{i,l} - \bphi_{a} - G_{a}\bX_i)^\top \Sigma^{-1}_{a}(\bX_{i,l} - \bphi_{a} - G_{a}\bX_i) \right \} \mathbb{I}(A_{i,l} = a) \right ] \\
	&\prod_{j =1}^J  \prod_{i \in \epochgroup} M_i! \left [ \frac{ \prod_{l=1}^{M_i} \zeta e^{-\zeta U_{i,l}} } {[e^{-\zeta t_{\min}}-e^{-\zeta t_{\max}}]^{M_i} } \right ] \frac{
		\lambda^{M_i}/M_i!
	}{
		\sum_{\ell=k}^{|\mathbf{A}^v|} (\lambda^\ell/\ell!)
	}.
    \end{split}
\end{align}
Lastly, we rewrite the logarithm of the part of the likelihood that involves $\bbeta_t = \zeta$ and $\bbeta_m = \lambda$ as
\begin{align}
\label{eq:time}
    \begin{split}
    \ell_{J}(\zeta, \lambda) \triangleq &\sum_{j = 1}^J \sum_{i \in \epochgroup} \sum_{l=1}^{M_i}  \left \{  \log ( \zeta ) - \zeta U_{i,l} - \log \left (e^{-\zeta t_{\min}}-e^{-\zeta t_{\max}}\right ) \right \} + \\
    &\sum_{j = 1}^J \sum_{i \in \epochgroup}\left [ M_i \log (\lambda)  - \log \{ \sum_{\ell=k}^{|\mathbf{A}_i|} (\lambda^\ell/\ell!)  \} \right ].
    \end{split}
\end{align}

We will need the hessian of the log-likelihood of this branching process for the proofs that follow. Recall that $\mathbb{A}$ is the set of possible coupon types. We note that each coupon allocation is a set of identical coupons, implying that the sets $\mathcal{A}$ and $\mathbb{A}$ have a one-to-one correspondence. Consequently, for $\ba \in \mathcal{A}$, there exists $a \in \mathbb{A}$ such that
\begin{equation*}
    \bbI (\bA_i = \ba) = \bbI (A_{i,l} = a)
\end{equation*}
for every $l \in \left \{1,2, \ldots, M_i \right \}$. Consequently, we can represent the complete branching process likelihood as
\begin{align}
    \label{eq:rdsmod_likelihood}
    \begin{split}
	&\mathcal{L}_{J}( \bbeta) \triangleq \\
	& \mathcal{L}_{J}(\bbeta_y)  \mathcal{L}_{J}( \left \{\bphi_{\ba}, G_{\ba}, \Sigma_{\ba} \right \}_{\ba \in \mathcal{A}}) \mathcal{L}_{J}(\zeta , \lambda) \triangleq \\
	&\prod_{j =1}^J  \prod_{i \in \epochgroup} \prod_{l=1}^{M_i} \left [ \frac{1}{1 + \exp(-{\bZ_{i,l}}^\top\bbeta_y)} \right ]^{Y_{i,l}} \left [ \frac{1}{1 + \exp({\bZ_{i,l}}^\top \bbeta_y)} \right ]^{1-Y_{i,l}} \times \\
	&\prod_{j =1}^J  \prod_{i \in \epochgroup} \prod_{l=1}^{M_i} (2\pi)^{-p/2} |\sum_{\ba \in \mathbb{A}}\Sigma_{\ba} \mathbb{I}(\bA_{i} = \ba)|^{-1/2} \times \\
        &\hspace{0.5cm} \exp \left [ -\frac{1}{2} \sum_{\ba \in \mathbb{A}} \left \{  (\bX_{i,l} - \bphi_{\ba} - G_{\ba}\bX_i)^\top \Sigma^{-1}_{\ba}(\bX_{i,l} - \bphi_{\ba} - G_{\ba}\bX_i) \right \} \mathbb{I}(\bA_{i} = \ba) \right ] \\
	&\prod_{j =1}^J  \prod_{i \in \epochgroup} M_i! \left [ \frac{ \prod_{l=1}^{M_i} \zeta e^{-\zeta U_{i,l}} } {[e^{-\zeta t_{\min}}-e^{-\zeta t_{\max}}]^{M_i} } \right ] \frac{
		\lambda^{M_i}/M_i!
	}{
		\sum_{\ell=k}^{|\mathbf{A}^v|} (\lambda^\ell/\ell!)
	}.
    \end{split}
\end{align}
We will use this likelihood for the proofs that follow.

\paragraph{The Hessian for the log-likelihood of the covariate model.}
Define $\bX_i^* = \left (1, \bX_i \right )$ and $G^{\dagger}_{\ba}= \left (\phi_{\ba}^\top, G_{\ba}^\top \right )^\top$. The likelihood of the covariate model parameter, \\ $\bbeta_\bx = \left \{G^{\dagger}_{\ba}, \Sigma_{\ba} \right \}_{\ba \in \mathcal{A}}$, is
\begin{align*}
    \mathcal{L}_{J} \left ( \left \{G^{\dagger}_{\ba}, \Sigma_{\ba} \right \}_{\ba \in \mathcal{A}} \right ) &\triangleq \prod_{j =1}^J  \prod_{i \in \epochgroup} \prod_{l=1}^{M_i} (2\pi)^{-p/2} \left |\sum_{\ba \in \mathcal{A}}\Sigma_{\ba} \mathbb{I}(c = \ba) \right |^{-1/2} \times \\
    &\hspace{0.5cm} \exp \left [ -\frac{1}{2} \sum_{\ba \in \mathcal{A}} \left \{  (\bX_{i,l} -  G^{\dagger}_{\ba}\bX^*_i)^\top \Sigma^{-1}_{\ba}(\bX_{i,l} - G^{\dagger}_{\ba}\bX^*_i) \right \} \mathbb{I}(\bA_i = \ba) \right ] \\
    &= \prod_{j =1}^J  \prod_{i \in \epochgroup} \prod_{l=1}^{M_i}  (2\pi)^{-p/2} \left |\sum_{\ba \in \mathcal{A}}\Sigma_{\ba} \mathbb{I}(\bA_i = \ba) \right |^{-1/2} \times \\
    &\hspace{0.5cm} \exp \Bigg [ -\frac{1}{2} \sum_{\ba \in \mathcal{A}} \Big \{  {\bX_{i,l}}^\top \Sigma^{-1}_{\ba} \bX_{i,l} - 2 {\bX^*_i}^\top {G^{\dagger}_{\ba}}^\top  \Sigma^{-1}_{\ba}\bX_{i,l}  + \\
    &\hspace{1.5cm} {\bX^*_i}^\top {G^{\dagger}_{\ba}}^\top  \Sigma^{-1}_{\ba} G^{\dagger}_{\ba} \bX^*_i  \Big \} \mathbb{I}(\bA_i = \ba) \Bigg ].
\end{align*}
The equality above simply follows from distributing.
For $\ba \in \mathcal{A}$, we reparameterize $\Omega_{\ba} = - \frac{1}{2} \Sigma_{\ba}^{-1}$ and $\Gamma_{\ba} = \Sigma_{\ba}^{-1}G^{\dagger}_{\ba}$.
\begin{align*}
    \mathcal{L}_{J} \left ( \left \{\Gamma_{\ba}, \Omega_{\ba} \right \}_{\ba \in \mathcal{A}} \right ) &= \prod_{j =1}^J  \prod_{i \in \epochgroup} \prod_{l=1}^{M_i}  (2\pi)^{-p/2} |\sum_{\ba \in \mathcal{A}}-2\Omega_{\ba} \mathbb{I}(\bA_i = \ba)|^{1/2} \times \\
    &\hspace{0.5cm} \exp \Bigg [ \sum_{\ba \in \mathcal{A}} \Big \{  {\bX_{i,l}}^\top \Omega_{\ba} \bX_{i,l} + {\bX^*_i}^\top \Gamma_{\ba}^\top\bX_{i,l}  + \\
    &\hspace{1.5cm} \frac{1}{4}{\bX^*_i}^\top \Gamma_{\ba}^\top {\Omega_{\ba}}^{-1} \Gamma_{\ba} \bX^*_i  \Big \} \mathbb{I}(\bA_i = \ba) \Bigg ] .
\end{align*}
The reparameterized log-likelihood is
\begin{align*}
    \mathcal{\ell}_{J} \left ( \left \{\Gamma_{\ba}, \Omega_{\ba} \right \}_{\ba \in \mathcal{A}} \right ) &\triangleq \sum_{j = 1}^J \sum_{i \in \epochgroup} \sum_{l=1}^{M_i}  \sum_{\ba \in \mathcal{A}} \Bigg [ (-p/2)\log(2\pi) + \frac{1}{2}\log|-2\Omega_{\ba}| + \\
    &\hspace{2cm} {\bX_{i,l}}^\top \Omega_{\ba} \bX_{i,l} + {\bX^*_i}^\top \Gamma_{\ba}^\top\bX_{i,l}  + \\
    &\hspace{2cm} \frac{1}{4}{\bX^*_i}^\top \Gamma_{\ba}^\top {\Omega_{\ba}}^{-1} \Gamma_{\ba} \bX^*_i \Bigg ] \mathbb{I}(\bA_i = \ba)\\
    &= \sum_{j = 1}^J \sum_{i \in \epochgroup} \sum_{l=1}^{M_i} \sum_{\ba \in \mathcal{A}} \Bigg [ (-p/2)\log(2\pi) + \frac{1}{2}\log|-2\Omega_{\ba}| + \\
    &\hspace{2cm} \mathrm{tr} \left ( \Omega_{\ba} \bX_{i,l} {\bX_{i,l}}^\top \right ) + \mathrm{tr} \left ( \Gamma_{\ba}^\top\bX_{i,l} {\bX^*_i}^\top \right ) + \\
    &\hspace{2cm} \mathrm{tr} \left (\frac{1}{4} \Gamma_{\ba}^\top {\Omega_{\ba}}^{-1} \Gamma_{\ba} \bX^*_i {\bX^*_i}^\top \right )  \Bigg ] \mathbb{I}(\bA_i = \ba),
\end{align*}
where the second equality follows from rearranging terms and using the properties of the trace operator.
For $\ba \in \mathcal{A}$, define $n_{\ba} \triangleq \sum_{j = 1}^J \sum_{i \in \epochgroup} M_i \mathbb{I}(\bA_i = \ba)$ and $\bV^J_{\ba}\in \mathbb{R}^{(p+1)\times (p+1)}$ such that $\bV^J_{\ba} \triangleq \sum_{j = 1}^J \sum_{i \in \epochgroup} M_i \bX^*_i {\bX^*_i}^\top  \mathbb{I}(\bA_i = \ba) $. We apply the differential operator two times and find
\begin{align*}
    \bd^2\mathcal{\ell}_{J} \left ( \left \{\Gamma_{\ba}, \Omega_{\ba} \right \}_{\ba \in \mathcal{A}} \right ) &= \sum_{j = 1}^J \sum_{i \in \epochgroup} \sum_{l=1}^{M_i} \sum_{\ba \in \mathcal{A}} -\bd^2 \Bigg \{ -\frac{1}{2}\log|-2\Omega_{\ba}| - \\
      &\hspace{2.7cm} \frac{1}{4} \mathrm{tr} \left ( \Gamma_{\ba}^\top {\Omega_{\ba}}^{-1} \Gamma_{\ba} \bX^*_i {\bX^*_i}^\top \right ) \Bigg \} \mathbb{I}(\bA_i = \ba) \\
      &= -\bd \Bigg \{ -\frac{n_{\ba}}{2}\mathrm{tr} \left (\Omega_{\ba}^{-1} \bd \Omega_{\ba} \right ) - \frac{1}{4} \mathrm{tr} \left (2 \Gamma_{\ba}^\top \Omega_{\ba}^{-1} \bd \Gamma_{\ba} \bV^J_{\ba}\right ) + \\
      &\hspace{1.5cm} \frac{1}{4} \mathrm{tr} \left ( \Gamma_{\ba}^\top {\Omega_{\ba}}^{-1} \bd \Omega_{\ba} {\Omega_{\ba}}^{-1} \Gamma_{\ba} \bV^J_{\ba}\right ) \Bigg \}  \\
      &= -\Bigg \{ \frac{n_{\ba}}{2}\mathrm{tr} \left (\Omega_{\ba}^{-1} \bd \Omega_{\ba} \Omega_{\ba}^{-1} \bd \Omega_{\ba} \right ) - \\
      &\hspace{1.2cm}\frac{1}{2} \mathrm{tr} \left ( \Gamma_{\ba}^\top \Omega_{\ba}^{-1} \bd \Omega_{\ba} \Omega_{\ba}^{-1} \bd \Omega_{\ba} \Omega_{\ba}^{-1} \Gamma_{\ba} \bV^J_{\ba}\right ) - \\ 
      &\hspace{1.2cm}\frac{1}{2} \mathrm{tr} \left ( \bd \Gamma_{\ba}^\top \Omega_{\ba}^{-1} \bd \Gamma_{\ba} \bV^J_{\ba}\right ) + \\
      &\hspace{1.2cm} \mathrm{tr} \left ( \Gamma_{\ba}^\top \Omega_{\ba}^{-1} \bd \Omega_{\ba} \Omega_{\ba}^{-1} \bd \Gamma_{\ba} \bV^J_{\ba}\right ) +  \\
      &\hspace{1.2cm} \frac{1}{4} \mathrm{tr} \left ( \Gamma_{\ba}^\top {\Omega_{\ba}}^{-1} \bd \Omega_{\ba} {\Omega_{\ba}}^{-1} \Gamma_{\ba} \bV^J_{\ba}\right ) \Bigg \}.
\end{align*}
We observe that for $\ba, \ba' \in \mathcal{A}$ and $\ba \neq \ba'$,  
\begin{equation*}
    \frac{\partial \mathcal{\ell}_{J} \left ( \left \{\Gamma_{\ba}, \Omega_{\ba} \right \}_{\ba \in \mathcal{A}} \right )} { \partial \left (  \mathrm{vec}(\Gamma_{\ba}),  \mathrm{vec}(\Omega_{\ba}) \right )\partial \left (  \mathrm{vec}(\Gamma_{\ba'}),  \mathrm{vec}(\Omega_{\ba'}) \right )^{\top}} = [0]_{(p+1)^2 \times p^2}. 
\end{equation*}
We express
\begin{align*}
    &\ddot{\ell}_{J} \left ( \left \{\Gamma_{\ba}, \Omega_{\ba} \right \}_{\ba \in \mathcal{A}} \right )= \\
    &-\mathrm{diag} \left [\left \{\begin{pmatrix}
        \bV^J_{\ba}\otimes \Sigma_{\ba} & 2\left ( \bV^J_{\ba}{G^*_{\ba}}^\top  \otimes \Sigma_{\ba} \right ) \\
        2 \left ( G^*_{\ba}\bV^J_{\ba}\otimes \Sigma_{\ba} \right ) & 4 \left ( G^*_{\ba} \bV^J_{\ba}{G^*_{\ba}}^\top  \otimes \Sigma_{\ba} \right ) + 2 n_{\ba} \left ( \Sigma_{\ba} \otimes \Sigma_{\ba} \right )
    \end{pmatrix} \right \}_{\ba \in \mathcal{A}}
    \right ] \\
    &= -\mathrm{diag} \left [\left \{ \begin{pmatrix}
        \bV^J_{\ba}& 2 \bV^J_{\ba}{G^*_{\ba}}^\top \\
        2 G^*_{\ba}\bV^J_{\ba} & 4 G^*_{\ba} \bV^J_{\ba}{G^*_{\ba}}^\top   + 2 n_{\ba}\Sigma_{\ba}
    \end{pmatrix} \otimes \Sigma_{\ba} \right \}_{\ba \in \mathcal{A}}
    \right ].
\end{align*}
The second equality follows from properties of Kronecker products.
\paragraph{The Hessian for the log-likelihood of the reward model.}
The estimating equation for the reward model parameter, $\bbeta_y$, is 
\begin{equation*}
	\frac{\partial \ell_{J} (\bbeta_y)}{\partial \bbeta_y} = \sum_{j = 1}^J \sum_{i \in \epochgroup} \sum_{l=1}^{M_i}  \left [ Y_{i,l}  \bZ_{i,l} -  \left \{\frac{1}{1+\exp(-{\bZ_{i,l}}^\top \bbeta_y)} * \bZ_{i,l} \right \} \right ].
\end{equation*}
The derivative of the estimating equation (the hessian of the log-likelihood) is
\begin{align*}
    \frac{\partial^2 \ell_{J} (\bbeta_y)} {\partial\bbeta_y \partial \bbeta_y^\top} = - \sum_{j = 1}^J \sum_{i \in \epochgroup} \sum_{l=1}^{M_i} \bZ_{i,l} {\bZ_{i,l}}^\top \left \{ \frac{1}{1 + \exp(-{\bZ_{i,l}}^\top\bbeta_y)} \right \} \left \{\frac{1}{1 + \exp \left ({\bZ_{i,l}}^\top \bbeta_y \right )} \right \}.
\end{align*}

\paragraph{The Hessian for the log-likelihood of the arrival model.}
Appealing to Equation~\ref{eq:time}, the estimating equation for the arrival model parameter, $\bbeta_t = \zeta$, is 
\begin{align*}
	&\frac{\partial \ell_{J}(\zeta)}{\partial \zeta} = \sum_{j = 1}^J \sum_{i \in \epochgroup} \sum_{l=1}^{M_i}   \frac{1}{\zeta} - U_{i,l} - \frac{t_{\min}e^{-\zeta t_{\min}} -  t_{\max}e^{-\zeta t_{\max}}}{e^{-\zeta t_{\min}}-e^{-\zeta t_{\max}}}.
\end{align*}
The derivative of the estimating equation (the hessian of the log-likelihood) is
\begin{align*}
    &\frac{\partial^2 \ell_{J}(\zeta)}{\partial\zeta ^2}  =  \sum_{j = 1}^J \sum_{i \in \epochgroup}  -M_i\left ( \frac{1}{\zeta^2}  + \frac{t_{\min}e^{-\zeta t_{\min}} - t_{\max}^2 e^{-\zeta t_{\max}}}{ [e^{-\zeta t_{\min}} -e^{-\zeta t_{\max}}]^2 } \right ).
\end{align*}

\paragraph{The Hessian for the log-likelihood of the family model.}
We reparameterize the log-likelihood in Equation~\ref{eq:time}, where $\tau = \log(\lambda)$:
\begin{equation*}
     \ell_{J}(\tau) = \sum_{j = 1}^J \sum_{i \in \epochgroup}\left [ M_i \tau - \log \{ \sum_{\ell=k}^{|\mathbf{A}_i|} (e^{\tau \ell}/\ell!)  \} \right ].
\end{equation*}
The score function is
\begin{equation*}
    \frac{\partial \ell_{J}(\tau)}{\partial \tau} = \sum_{j = 1}^J \sum_{i \in \epochgroup}\left [ M_i - \tau\frac{\sum_{\ell=k}^{|\mathbf{A}_i|} \ell e^{\tau \ell}/\ell!)}{\sum_{\ell=k}^{|\mathbf{A}_i|} e^{\tau \ell}/\ell!}   \right ].
\end{equation*}
This makes the Hessian of the log-likelihood
\begin{align*}
    &\frac{\partial^2 \ell_{J}(\tau)}{\partial \tau^2} = \sum_{j = 1}^J \sum_{i \in \epochgroup} - \tau^2 \frac{ \left \{ \sum_{\ell=k}^{|\mathbf{A}_i|} e^{\tau \ell}/\ell! \right \} \left \{  \sum_{\ell=k}^{|\mathbf{A}_i|} \ell^2 e^{\tau \ell}/\ell! \right \} -  \left \{  \sum_{\ell=k}^{|\mathbf{A}_i|} \ell e^{\tau \ell}/\ell! \right \}^2 }{\left \{ \sum_{\ell=k}^{|\mathbf{A}_i|} e^{\tau \ell}/\ell! \right \}^2}.
\end{align*}

\subsection{Verification of Assumptions~\ref{as:clip}-\ref{as:generation_asymptotics} and \ref{as:differentiable}-\ref{as:moments}}
\label{app_sec:branchinginference}

\subsubsection{Assumption~\ref{as:clip}}

\textbf{Assumption~\ref{as:clip}} is satisfied by Thompson sampling with a clipping constraint. We show this in Lemma~\ref{lem:clip_weights}.
\begin{lemma}
    \label{lem:clip_weights}
    For any $j \in \mathbb{N}$ and $i \in \epochgroup$, assume $\bA_i$ is assigned by RL-RDS with a clipping constraint at level $\epsilon_1 \in (0,1)$ (as specified in Section~\ref{sec:rl}). Additionally, assume Conditions (C1)-(C6). There exists $\epsilon_{\min}, \epsilon_{\max} > 0$ such that for all $\ba \in \mathcal{A}$,
    \begin{equation*}
        \epsilon_{\min} \leq \bbP(\bA_i = \ba | \bH_i ) \leq \epsilon_{\max}\ \ \mathrm{w.p. \ 1}.
    \end{equation*} 
    This implies that for $\rho_{\min} = \sqrt{\epsilon_{\min}}$ and $\rho_{\max} = \sqrt{\epsilon_{\max}}$,
    \begin{equation*}
        \rho_{\min} \leq \bW_i \leq \rho_{\max} \ \ \mathrm{w.p. \ 1}.
    \end{equation*}
    \begin{proof}
        Note that by Condition (C3), the set of possible coupon allocations is constant for the duration of the study.
        For any $j \in \mathbb{N}$ and $i \in \epochgroup$, Thompson sampling with clipping at level $\epsilon_1 \in (0,1)$
        selects allocation $\ba \in \mathcal{A}$
        with probability
        \begin{equation*}
            \bbP \left (  \bA_i = \ba \right ) = \frac{p_i^{\ba}}{ \sum_{\ba' \in \mathcal{A}} p_i^{\ba'}},
        \end{equation*}
        where $p^{\ba}_i \triangleq \min  \left [1-\epsilon_1, \max \left \{\epsilon_1,\widehat{\xi}^{i}_{B} (\bH_i, \ba) \right \} \right ]$ (refer to Section~\ref{sec:rl} for the definition of $\widehat{\xi}^{i}_{B}$). This implies that
        \begin{equation*}
            \bbP \left (  \bA_i = \ba \right ) = \frac{p_i^{\ba}}{ \sum_{\ba' \in \mathcal{A}} p_i^{\ba'}} \geq \frac{\epsilon_1}{(1 - \epsilon_1) |\mathcal{A}|}.
        \end{equation*}
        Consequently, for any $\ba \in \mathcal{A}$,
        \begin{equation*}
            \frac{ 1- \epsilon_1}{1 + (|\mathcal{A}| -2)\epsilon_1 } \geq \bbP \left (  \bA_i = \ba \right ). 
        \end{equation*}
            The inequality follows from the fact that $ 1-\epsilon_1 \geq p_i^{\ba} \geq \epsilon_1$.
        We define
        \begin{equation*}
            \epsilon_{\min} \triangleq \epsilon_1/ \left \{ (1-\epsilon_1) |\mathcal{A}| \right \}, \ \ \epsilon_{\max} \triangleq \frac{ 1- \epsilon_1}{1 + (|\mathcal{A}| -2)\epsilon_1 }.
        \end{equation*}
    \end{proof}
\end{lemma}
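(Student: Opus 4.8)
The plan is to work directly from the definition of Thompson sampling with clipping and reduce everything to an elementary estimate on a normalized ratio. Given history $\bH_i = \bh_i$, the algorithm selects allocation $\ba \in \mathcal{A}$ with probability
\[
\bbP(\bA_i = \ba \mid \bH_i) = \frac{p_i^{\ba}}{\sum_{\ba' \in \mathcal{A}} p_i^{\ba'}}, \qquad p_i^{\ba} \triangleq \min\!\left[\,1-\epsilon_1,\ \max\{\epsilon_1,\ \widehat{\xi}^{i}_B(\bH_i,\ba)\}\,\right].
\]
The key structural observation is that the clipping operation deterministically pins every unnormalized weight into the interval $[\epsilon_1,\,1-\epsilon_1]$, irrespective of the (random) value of $\widehat{\xi}^{i}_B$. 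Alongside this, I would invoke Condition (C3): the feasible allocation set $\mathcal{A}$ is fixed for the duration of the study, so $|\mathcal{A}|$ is a finite constant that depends neither on $i$ nor on $\bh_i$. These two facts are precisely what make the resulting bounds uniform.

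From here the two bounds follow by elementary estimates. For the lower bound I would keep $p_i^{\ba} \ge \epsilon_1$ in the numerator and bound each of the $|\mathcal{A}|$ terms in the denominator by $p_i^{\ba'} \le 1-\epsilon_1$, which yields $\bbP(\bA_i = \ba \mid \bH_i) \ge \epsilon_{\min}$ with $\epsilon_{\min} \triangleq \epsilon_1/\{(1-\epsilon_1)|\mathcal{A}|\}$. For the upper bound I would bound the numerator by $1-\epsilon_1$ and the denominator below by retaining the selected term and applying $p_i^{\ba'} \ge \epsilon_1$ to the remaining $|\mathcal{A}|-1$ terms, giving a denominator of at least $(1-\epsilon_1) + (|\mathcal{A}|-1)\epsilon_1 = 1 + (|\mathcal{A}|-2)\epsilon_1$, hence $\bbP(\bA_i = \ba \mid \bH_i) \le \epsilon_{\max}$ with $\epsilon_{\max} \triangleq (1-\epsilon_1)/\{1 + (|\mathcal{A}|-2)\epsilon_1\}$. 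Since $\epsilon_1 \in (0,1)$ and $|\mathcal{A}|$ is finite, both constants are strictly positive and finite and are free of $i$ and of the conditioning history, so the bounds hold with probability one for every $j \in \mathbb{N}$ and $i \in \epochgroup$.

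Finally I would transfer these bounds to the stabilizing weights. Under Condition (C3) the uniform-exploration policy assigns $\bbP_{\widetilde{\pi}}(\ba \mid \bh_i) = 1/|\psi_i(\bh_i)| = 1/|\mathcal{A}|$, a constant, while $\bbP_{\widehat{\pi}}(\ba \mid \bh_i) = \bbP(\bA_i = \ba \mid \bH_i)$. Consequently $W_i = \sqrt{\bbP_{\widetilde{\pi}}/\bbP_{\widehat{\pi}}}$ is squeezed between $\sqrt{(1/|\mathcal{A}|)/\epsilon_{\max}}$ and $\sqrt{(1/|\mathcal{A}|)/\epsilon_{\min}}$, which supplies the claimed constants $\rho_{\min}, \rho_{\max}$ and verifies Assumption~\ref{as:clip}. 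I expect no serious obstacle: the computation is a routine normalization estimate, and the only point demanding care is \emph{uniformity} of the constants over the random history and over the indices $(j,i)$. This is exactly what Condition (C3) (fixed, finite $\mathcal{A}$) together with the deterministic clipping into $[\epsilon_1, 1-\epsilon_1]$ guarantees, ensuring the bounds hold simultaneously for all $i$ with probability one rather than merely pointwise.
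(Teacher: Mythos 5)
Your proposal is correct and follows essentially the same route as the paper's proof: the same clipping bound $p_i^{\ba}\in[\epsilon_1,1-\epsilon_1]$, the same normalization estimates, and the same constants $\epsilon_{\min}$ and $\epsilon_{\max}$. Your explicit final step transferring the bounds to $W_i$ via $\bbP_{\widetilde{\pi}}(\ba\mid\bh_i)=1/|\mathcal{A}|$ is a small addition the paper leaves implicit, and it is done correctly.
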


\subsubsection{Assumption~\ref{as:generation_asymptotics}}

\textbf{Assumption~\ref{as:generation_asymptotics}} is satisfied by Equation~\ref{eq:branching_converge_2} (in Section~\ref{sec:regretbounds} of the main text) and Condition~(C1). See the verification of Assumption~\ref{as:stabalizedvariance} for a further discussion of Equation~\ref{eq:branching_converge_2}.

\subsubsection{Assumptions~\ref{as:differentiable} and \ref{as:moments}}

\textbf{Assumption~\ref{as:differentiable}} is verified by the derivatives calculated in Section~\ref{app_sec:likelihood} and the fact that each component of this branching model is a full rank exponential family. Consequently, the natural parameters are identifiable \citep{brown1986fundamentals}. \textbf{Assumption~\ref{as:moments}} follows from the score and Hessian functions calculated in Section~\ref{app_sec:likelihood}. Observe that the likelihood and Hessian components are continuous and finite in the data for a given parameter value. Consequently, the fact that the data are bounded implies that for any $j \in \mathbb{N}$ and $i \in \epochgroup$, $\dot{l}_i(\bbeta)$ and  $\ddot{l}_i(\bbeta)$ are bounded (by the extreme value theorem). This implies that the conditional expectations of these functions are bounded and $\textbf{Assumption~\ref{as:moments}}$ is satisfied.

\subsection{Verification of Assumptions~\ref{as:lipchitz} and \ref{as:stabalizedvariance}-\ref{as:lipchitz2}}
We now verify Assumptions~\ref{as:lipchitz} and \ref{as:stabalizedvariance}-\ref{as:lipchitz2}.

\subsubsection{Note on Assumption~\ref{as:wellseperated}}
Note that Assumption~\ref{as:wellseperated} is only used in the proof of consistency in Section~\ref{app_sec:convrate}. It is unnecessary because the log-likelihood of the branching process specified in Equation~\ref{eq:rdsmod} is concave. We will provide another proof of consistency in Section~\ref{app_sec:branch_consistency} that does not use Assumption~\ref{as:wellseperated}.

\subsubsection{Supporting Lemmas}

Lemmas~\ref{lem:mineigen} and \ref{lem:submult} are useful properties of positive semi-definite matrices.
	
	\begin{lemma}
		\label{lem:mineigen}
		Define positive semi-definite matrices $A,B \in \mathbb{R}^{d\times d}$. It follows that
		\begin{equation*}
			\sigma_{\min}(A + B) \geq \sigma_{\min}(A) + \sigma_{\min}(B).
		\end{equation*}
	\end{lemma}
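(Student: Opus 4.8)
The plan is to use the variational (Rayleigh quotient) characterization of the minimum eigenvalue. Since $A$ and $B$ are symmetric and positive semi-definite, their minimum singular values coincide with their minimum eigenvalues, so for any such matrix $M$ one has the Courant--Fischer identity
\[
\sigma_{\min}(M) = \min_{\|x\|_2 = 1} x^\top M x .
\]
The only point requiring a word of care is precisely this identification of the minimum singular value with the minimum eigenvalue, which is valid here because $A$, $B$, and $A+B$ are all symmetric and positive semi-definite (so all their eigenvalues are nonnegative).

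First I would apply the identity to $A+B$, which is again positive semi-definite as a sum of positive semi-definite matrices, to obtain
\[
\sigma_{\min}(A+B) = \min_{\|x\|_2 = 1} x^\top (A+B) x = \min_{\|x\|_2 = 1} \left( x^\top A x + x^\top B x \right).
\]
Next, for each fixed unit vector $x$ I would bound the two quadratic forms separately by the same variational characterization: $x^\top A x \geq \sigma_{\min}(A)$ and $x^\top B x \geq \sigma_{\min}(B)$, since each is a Rayleigh quotient of a positive semi-definite matrix evaluated at a unit vector and is therefore no smaller than the corresponding minimum eigenvalue.

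Summing these two lower bounds gives $x^\top A x + x^\top B x \geq \sigma_{\min}(A) + \sigma_{\min}(B)$ for \emph{every} unit vector $x$. Because the right-hand side does not depend on $x$, it is preserved under the minimization over $x$, and I would conclude
\[
\sigma_{\min}(A+B) = \min_{\|x\|_2 = 1}\left( x^\top A x + x^\top B x \right) \geq \sigma_{\min}(A) + \sigma_{\min}(B),
\]
as claimed. There is no substantive obstacle in this argument; it is a direct consequence of the Rayleigh quotient characterization together with the positive semi-definiteness hypothesis, and the whole proof is essentially the three displays above.
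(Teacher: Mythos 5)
Your proof is correct and follows essentially the same route as the paper's: both use the Rayleigh-quotient (variational) characterization of the minimum eigenvalue of a positive semi-definite matrix, split the quadratic form of $A+B$ into the two summands, bound each separately, and pass the bound through the minimization. The only cosmetic difference is that you restrict to unit vectors while the paper divides by $x^\top x$; the argument is identical.
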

	\begin{proof}
		\begin{align*}
			\sigma_{\min}(A + B) &= \min_{x\neq 0} \frac{x^\top(A + B) x}{x^\top x} \\
			&= \min_{x\neq 0} \left ( \frac{x^\top A x}{x^\top x} + \frac{x^\top B x}{x^\top x} \right ) \\
			&\geq \min_{x\neq 0} \frac{x^\top A x}{x^\top x} + \min_{x\neq 0}  \frac{x^\top B x}{x^\top x} \\
			&= \sigma_{\min}(A) + \sigma_{\min}(B).
		\end{align*}
        The first line follows from the definition of the minimum singular value for positive semi-definite matrices, and the second from distributing. For any functions $f$ and $g$, $\min_{x \in \mathcal{X}} \left \{f(x) + g(x) \right \} \geq \min_{x \in \mathcal{X}} f(x) +  \min_{x \in \mathcal{X}} g(x)$. Consequently, line 3 is true. Line 4 applies the definiton of the minimum singular value again.
	\end{proof}
	
	\begin{lemma}
		\label{lem:submult}
		Let $A,B \in \mathbb{R}^{d \times d}$ be positive semi-definite matrices. It follows that
		\begin{equation*}
			\sigma_{\min}(A)\sigma_{\min}(B) \leq \sigma_{\min}(AB).
		\end{equation*}
		
		\begin{proof}
			First, we assume that martices $A$ and $B$ are invertible.
                By the sub-multiplicativity of the spectral norm, we know that
			\begin{equation*}
				\sigma_{\max}(AB) \leq \sigma_{\max}(A)\sigma_{\max}(B).
			\end{equation*}
			Consequently,
			\begin{align*}
				\sigma_{\max}\{ (AB)^{-1} \} \leq \sigma_{\max}(A^{-1})\sigma_{\max}(B^{-1}) \Rightarrow \\
				\sigma_{\min}( AB )^{-1} \leq \sigma_{\min}(A)^{-1}\sigma_{\min}(B)^{-1} \Rightarrow \\
				\sigma_{\min}(A)\sigma_{\min}(B) \leq \sigma_{\min}(AB).
			\end{align*}
			The first line applies the sub-multiplicativity of the spectral norm. For any matrix $A$, $\sigma_{\max}(A^{-1}) = \sigma_{\min}(A)^{-1}$. Consequently, line 2 is true. 

            If matrices $A$ and $B$ are not invertible, then $\sigma_{\min}(A) = 0 $, $\sigma_{\min}(B) = 0$, and $\sigma_{\min}(AB) = 0 $. Therefore, the result follows trivially.
		\end{proof}
	\end{lemma}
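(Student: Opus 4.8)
The plan is to bypass the invertible/non-invertible casework entirely and argue directly from the variational characterization of the minimum singular value. Recall that for any matrix $M \in \mathbb{R}^{d\times d}$ one has $\sigma_{\min}(M) = \min_{\|x\|_2 = 1}\|Mx\|_2$, and equivalently that $\|My\|_2 \ge \sigma_{\min}(M)\|y\|_2$ for every $y \in \mathbb{R}^d$. This lower bound is the only ingredient needed, and notably it does not require $A$ or $B$ to be positive semi-definite; the conclusion therefore holds for arbitrary square matrices, with the positive semi-definiteness in the statement playing no essential role.

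First I would establish the lower-bound characterization from the singular value decomposition. Writing $M = U\Sigma V^\top$ with $U,V$ orthogonal and $\Sigma = \mathrm{diag}(\sigma_1,\ldots,\sigma_d)$, orthogonal invariance of the Euclidean norm gives $\|My\|_2^2 = \|\Sigma V^\top y\|_2^2 = \sum_{i=1}^d \sigma_i^2 (V^\top y)_i^2 \ge \sigma_{\min}(M)^2 \sum_{i=1}^d (V^\top y)_i^2 = \sigma_{\min}(M)^2\|y\|_2^2$, where the inequality bounds each $\sigma_i^2$ below by the smallest squared singular value.

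Next I would apply this bound twice. Fixing a unit vector $x$ and setting $y = Bx$, the bound for $A$ gives $\|ABx\|_2 = \|Ay\|_2 \ge \sigma_{\min}(A)\|Bx\|_2$, while the bound for $B$ gives $\|Bx\|_2 \ge \sigma_{\min}(B)\|x\|_2 = \sigma_{\min}(B)$. Chaining these yields $\|ABx\|_2 \ge \sigma_{\min}(A)\sigma_{\min}(B)$ for every unit vector $x$. Taking the minimum over $\|x\|_2 = 1$ and invoking the variational characterization of $\sigma_{\min}(AB)$ delivers $\sigma_{\min}(AB) \ge \sigma_{\min}(A)\sigma_{\min}(B)$, as claimed.

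There is essentially no obstacle here: the argument reduces to two applications of a single elementary inequality, and it is cleaner than the inverse-based split in the surrounding text because it avoids passing to $A^{-1}$ and $B^{-1}$ together with the attendant degeneracy when a matrix is singular. The only point requiring a moment's care is the direction of the inequality in the decomposition step—one must bound below by the smallest squared singular value rather than above—after which the chaining is immediate.
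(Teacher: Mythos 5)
Your proof is correct, and it takes a genuinely different route from the paper's. The paper works through inverses: in the invertible case it applies sub-multiplicativity of the spectral norm to $(AB)^{-1} = B^{-1}A^{-1}$ and uses $\sigma_{\max}(M^{-1}) = \sigma_{\min}(M)^{-1}$, then disposes of the singular case separately (and, as written, slightly sloppily -- it only addresses the case where \emph{both} matrices are singular, though the fix is immediate since the left-hand side vanishes whenever \emph{either} factor is singular). You instead chain the elementary lower bound $\|My\|_2 \geq \sigma_{\min}(M)\|y\|_2$ twice and take the minimum over unit vectors, which handles all cases uniformly with no inversion and no casework. Your approach buys two things: it is cleaner (no degenerate split, no appeal to the inverse-spectrum identity), and it is strictly more general, since the bound $\|My\|_2 \geq \sigma_{\min}(M)\|y\|_2$ holds for any square $M$ -- so, as you observe, the positive semi-definiteness hypothesis in the lemma is never used, and the inequality $\sigma_{\min}(AB) \geq \sigma_{\min}(A)\sigma_{\min}(B)$ holds for arbitrary square matrices. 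The paper's route, by contrast, is a natural companion to its use of Lemma~\ref{lem:mineigen} in the eigenvalue bounds of Section~\ref{app_sec:convrate}, but gains nothing here that your direct argument does not already provide.
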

	
	\begin{lemma}\label{lem:blckpd}
		Let $M \in \mathbb{R}^{d \times d}$ be a matrix such that
		\begin{align*}
			M = \begin{pmatrix}
				A & B \\
				B^\top & C
			\end{pmatrix}
		\end{align*}
		where $A \in \mathbb{R}^{d_1 \times d_1}$ and $C \in \mathbb{R}^{d_2 \times d_2}$ are symmetric and $d_1 + d_2 = d$.
		$M$ is positive definite if and only if $A$ is positive definite and $C - B^\top A^{-1} B$ is positive definite. Additionally, 
		\begin{equation*}
			\sigma_{\min}(M) \geq \sigma_{\min}(A) \sigma_{\min}(C - B^\top A^{-1} B).
		\end{equation*}
	\end{lemma}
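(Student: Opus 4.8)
The plan is to diagonalize $M$ by a congruence built from the Schur complement $S \triangleq C - B^\top A^{-1}B$ and to read off both claims from it. Whenever $A$ is invertible one has the factorization
\begin{equation*}
M = \begin{pmatrix} I & 0 \\ B^\top A^{-1} & I \end{pmatrix}\begin{pmatrix} A & 0 \\ 0 & S \end{pmatrix}\begin{pmatrix} I & A^{-1}B \\ 0 & I \end{pmatrix} \triangleq P^\top \begin{pmatrix} A & 0 \\ 0 & S\end{pmatrix} P,
\end{equation*}
where $P$ is unit upper-triangular and therefore nonsingular. For the equivalence I would first establish necessity of $A \succ 0$ by restricting the quadratic form $v^\top M v$ to vectors $v = (x^\top, 0)^\top$, which forces $x^\top A x > 0$; once $A$ is invertible the displayed identity is valid, and since $P$ is nonsingular, congruence preserves definiteness, so $M \succ 0$ holds if and only if $\mathrm{diag}(A,S) \succ 0$, i.e. if and only if $A \succ 0$ and $S \succ 0$. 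This is the standard and least delicate part of the argument.

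For the singular-value bound I would work from the variational characterization of $\sigma_{\min}$ together with the completion of squares induced by the same factorization. Writing $v = (x^\top, y^\top)^\top$ and $u \triangleq x + A^{-1}B y$,
\begin{equation*}
v^\top M v = u^\top A u + y^\top S y \ \geq\ \sigma_{\min}(A)\,\|u\|_2^2 + \sigma_{\min}(S)\,\|y\|_2^2 .
\end{equation*}
The goal is then to convert this into a lower bound on $v^\top M v / \|v\|_2^2$ of the asserted product form. Here I would lean on the two preceding facts about positive (semi-)definite matrices: Lemma~\ref{lem:mineigen} to split $\sigma_{\min}$ across the two additive contributions coming from the block-diagonal $\mathrm{diag}(A,S)$, and Lemma~\ref{lem:submult} to turn products of minimum singular values of PSD factors into the minimum singular value of their product, applied to $\mathrm{diag}(A,S)$ and the triangular factors $P, P^\top$.

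The main obstacle is precisely this last conversion. The change of variables $u = x + A^{-1}B y$ is not norm-preserving, so $\|u\|_2^2 + \|y\|_2^2$ and $\|v\|_2^2 = \|x\|_2^2 + \|y\|_2^2$ differ by a factor governed by $\sigma_{\min}(P)^2 = \sigma_{\min}(P P^\top)$, and controlling this cleanly is the crux: a naive submultiplicativity pass yields only $\sigma_{\min}(M) \geq \sigma_{\min}(P)^2 \min\{\sigma_{\min}(A), \sigma_{\min}(S)\}$, so the work lies in showing the triangular factors do not degrade the estimate below the claimed product. To keep the triangular factors maximally transparent to the PSD submultiplicativity lemma already in hand, I would route the bound through $M^{-1} = P^{-1}\,\mathrm{diag}(A^{-1}, S^{-1})\,P^{-\top}$ and control $\sigma_{\max}(M^{-1}) = \sigma_{\min}(M)^{-1}$ via Lemma~\ref{lem:submult}, since the inverse form interacts most cleanly with the block-diagonal structure and is the step I expect to require the most care.
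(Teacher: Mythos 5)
Your treatment of the equivalence is correct and is essentially the paper's argument: restrict the quadratic form to vectors $(x^\top,0)^\top$ to force $A\succ 0$, then use the Schur-complement congruence $M=P^\top\,\mathrm{diag}(A,S)\,P$ with $P$ unit-triangular and hence nonsingular.

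The singular-value inequality, however, is left unproved, and the obstacle you flag is not one that more care will remove. Your proposed detour through $M^{-1}=P^{-1}\,\mathrm{diag}(A^{-1},S^{-1})\,P^{-\top}$ and Lemma~\ref{lem:submult} delivers exactly the same estimate as the direct pass, namely $\sigma_{\min}(M)\geq \sigma_{\min}(P)\,\sigma_{\min}(P^{\top})\,\min\{\sigma_{\min}(A),\sigma_{\min}(S)\}$: a minimum rather than a product, multiplied by $\sigma_{\min}(P)^2$, which is at most $1$ and strictly less than $1$ whenever $B\neq 0$ (since $\det(PP^\top)=1$ while $\lambda_{\max}(PP^\top)>1$). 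More importantly, the asserted product bound is false in the stated generality even when the triangular factors are trivial: take $d_1=d_2=1$, $B=0$, $A=C=2$, so $M=2I_2$, $\sigma_{\min}(M)=2$, yet $\sigma_{\min}(A)\,\sigma_{\min}(C-B^\top A^{-1}B)=4$. A product of two minimum eigenvalues can only lower-bound a single minimum eigenvalue when the relevant quantities are at most one. For what it is worth, the paper's own proof obtains the product by asserting that the unit-triangular factors have minimum singular value $\sigma_{\min}(I_{d_1})^2\sigma_{\min}(I_{d_2})^2=1$ and that $\sigma_{\min}\{\mathrm{diag}(A,S)\}\geq\sigma_{\min}(A)\,\sigma_{\min}(S)$; the first claim fails for $B\neq 0$ and the second holds only if $\max\{\sigma_{\min}(A),\sigma_{\min}(S)\}\leq 1$. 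So your instinct that the triangular factors degrade the estimate is right, but the deeper problem is that the product form is already wrong at $P=I$; the correct conclusion available from the factorization is $\sigma_{\min}(M)\geq\sigma_{\min}(P)^2\min\{\sigma_{\min}(A),\sigma_{\min}(S)\}$, not the stated inequality.
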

	\begin{proof}
		Define $I_{d_1} \in \mathbb{R}^{d_1 \times d_1}$ and $I_{d_2} \in \mathbb{R}^{d_2 \times d_2}$ as the $d_1$ and $d_2$-dimensional identity matrices respectively.
		We express $M$ as
		\begin{align*}
			M &= \begin{pmatrix}
				A & B \\
				B^\top & C
			\end{pmatrix} \\
			&= \begin{pmatrix}
				I_{d_1} & 0\\
				B^\top A^{-1} & I_{d_2}
			\end{pmatrix}
			\begin{pmatrix}
				A & 0 \\
				0 & C - B^\top A^{-1} B
			\end{pmatrix}
			\begin{pmatrix}
				I_{d_1} & A^{-1}B \\
				0 & I_{d_2}
			\end{pmatrix}.
		\end{align*}
		Because
		\begin{align*}
			\begin{pmatrix}
				I_{d_1} & 0\\
				B^\top A^{-1} & I_{d_2}
			\end{pmatrix}
		\end{align*}
		and its transpose are invertible, we know that $M$ is positive definite if and only if
		\begin{align*}
			Q= \begin{pmatrix}
				A & 0 \\
				0 & C - B^\top A^{-1} B
			   \end{pmatrix}
		\end{align*}
		is positive definite. The matrix $Q$ is block diagonal, so it is positive definite if and only if
		$A$ and $C - B^\top A^{-1}B$ are positive definite. Additionally, by Lemma~\ref{lem:submult},
		\begin{align*}
			\sigma_{\min} (M) &\geq \sigma_{\min} \left \{ \begin{pmatrix} I_{d_1} & 0\\
				B^\top A^{-1} & I_{d_2}
			\end{pmatrix} \right \}
			\sigma_{\min} \left \{ 
			\begin{pmatrix}
				A & 0 \\
				0 & C - B^\top A^{-1} B
			\end{pmatrix}
			\right \} \times \\
			&\hspace{0.5cm}\sigma_{\min} \left \{
			\begin{pmatrix}
				I_{d_1} & A^{-1}B \\
				0 & I_{d_2}
			\end{pmatrix}
			\right \} \\
			&\geq \sigma_{\min}(A) \sigma_{\min} \left ( C - B^\top A^{-1} B \right ) \sigma_{\min} \left ( I_{d_1} \right )^2 \sigma_{\min} \left ( I_{d_2} \right )^2 \\
                &= \sigma_{\min}(A) \sigma_{\min} \left ( C - B^\top A^{-1} B  \right ) 
		\end{align*}
	\end{proof}

\subsubsection{Assumptions~\ref{as:lipchitz} and \ref{as:lipchitz2}}

We verify \textbf{Assumption~\ref{as:lipchitz}} first. We write $\dot{l}_i(\bbeta^*) = \dot{l}(\bbeta^*, \bD_i) $ as a function $\dot{l}:\mathcal{B}\times \mathscr{D} \to \mathbb{R}$. From Section~\ref{app_sec:branchinginference}, we know that $\dot{l}$ is continuous over both $\mathcal{B}$ and $\mathscr{D}$. Because $\mathcal{X}$ is compact (and the other data types are bounded by definition), we know that $\mathscr{D}$ is compact. By the extreme value theorem, we can define $\gamma < \infty$,
\begin{equation*}
    \max_{\bd \in \mathscr{D}} \max_{\bbeta \in \mathcal{B}} \left  \| \dot{l}(\bbeta, \bd) \right \|_2 \leq \gamma.
\end{equation*}
$\mathcal{B}$ is a convex subset of $\mathbb{R}^q$. Consequently, by the mean value theorem, we know that for all $j \in \mathbb{N}$, $i \in \epochgroup$, and $\bbeta, \bbeta' \in \mathcal{B}$,
\begin{equation*}
    \left |l_i(\bbeta) - l_i(\bbeta') \right | \leq |\dot{l}_i(\bar{\bbeta})| \|\bbeta - \bbeta'\|_2,
\end{equation*}
for some $\bar{\bbeta} =  t\bbeta + (1-t)\bbeta'$ where $t \in [0,1]$. Because $|\dot{l}_i(\bar{\bbeta})| = |\dot{l}(\bar{\bbeta}, \bD_i)| \leq \gamma$, we know that
\begin{equation*}
    \left  |l_i(\bbeta) - l_i(\bbeta') \right | \leq \gamma \|\bbeta - \bbeta'\|_2
\end{equation*}
and \textbf{Assumptions~\ref{as:lipchitz}} is satisfied.

The likelihood of the arrival process implied by data $\mathcal{D}^{\noepochsamplesize} = \left \{ (R^v, T^v, \bX^v, Y^v , \bA^v) \right \}_{v=1}^{\noepochsamplesize}$ is simply a censored (or integrated) version of the complete generation likelihood of Section~\ref{app_sec:branchinginference}. 
Consequently, write $\dot{q}^v(\bbeta) = \dot{q}(\bbeta, \bD^v)$ (the components of this likelihood as defined in Section~\ref{sec:rl}) as a function $\dot{q}:\mathcal{B}\times \mathscr{D} \to \mathbb{R}$, we know that $\dot{q}$ is continuous over both $\mathcal{B}$ and $\mathscr{D}$. Therefore, \textbf{Assumption~\ref{as:lipchitz2}} is satisfied by the logic presented in the previous paragraph.

\subsubsection{Assumption~\ref{as:stabalizedvariance}}
In this section, we verify \textbf{Assumption~\ref{as:stabalizedvariance}}. To do this, we introduce the branching process in a manner consistent with \cite{delmas2010detection}. Most of the following setup is taken directly from their paper; we reproduce it here for the reader's convenience. 
We modify the background and theory where necessary to generalize their proofs to more than two possible recruits (or cells in their case).

Let $\mathbb{G}_0 = \emptyset$, $\bbG_j  = \{1,  \ldots, L \}^j$ for $j \in \mathbb{N}$, $\mathbb{T}_r = \cup_{0\leq j \leq r} \mathbb{G}_j$. The set $\bbG_j$ contains all possible recruits in the $j$-th generation. Note that for participant $i \in \bbG_j$, $j = |i|$ denotes the generation of $i$. 
Additionally, for $i,j \in \bbT$, we can represent the concatenation of their positions as $ij = (i, j)$.
For recruit $i \in \mathbb{T}$, we still denote $\bX_i \in \mathbb{R}^p$ as an individual's covariates as specified in the model described by Equation~\ref{eq:rdsmod}. We define the transition kernel implied by this covariate process over the branching tree as follows. 
Let $(E, \mathcal{E})$ be a measurable space, and let $P$ be a probability kernel on $E \times \mathcal{E}^L$ with values in $[0,1]$. This implies that $P(\cdot, A)$ is measurable for all $A \in \mathcal{E}^L$ and $P(\bx, \cdot )$ is a probability measure on $(E^L, \mathcal{E}^L)$. Additionally, for any real-valued function $g$ defined on $E^L$ and $\bq = (\bq_1, \bq_2, \ldots, \bq_L)$, we set
\begin{equation*}
    Pg(\bx) = \int_{E^L} g(\bq) P(\bx,  d\bq).
\end{equation*}

We call a stochastic process indexed by $\bbT$, $\left \{ \bX_i, i \in \bbT \right \}$, a fixed branching Markov chain on a measurable space $(E, \mathcal{E} )$ with initial distribution $\nu$ and probability kernel $P$ if:
\begin{itemize}
    \item $\bX_{\emptyset}$ is distributed as $\nu$
    \item It satisfies a Markov property: for any measureable real-valued bounded functions $(g_i, i \in \bbT)$ defined on $E^L$, we have that for all $j \geq 0$
    \begin{equation*}
        \bbE \left [ \prod_{i \in \bbG_j} g_i \left (  \bX_{i1}, \bX_{i2}, \ldots, \bX_{iL} \right ) | \sigma(\bX_k, k \in \bbT_j ) \right ] = \prod_ { i \in \bbG_j} Pg_i \left (\bX_i \right ). 
    \end{equation*}
\end{itemize}
We consider the metric measurable space $\left ( S, \mathcal{S}\right )$, and add a cemetery point to $S$, $\delta$. Let $\overline{S} = \mathbb{R}^p \cup \{\delta\}$ and $\overline{\mathcal{S}}$ be the $\sigma$-field generated by $\mathcal{S}$ and $\{\delta\}$. Let $P^*$ be a probability kernel defined on $\overline{S} \times \overline{\mathcal{S}}^L$ such that
\begin{equation}
    \label{eq:absorb}
    P^*\left (\delta, \left  \{(\delta, \delta, \ldots, \delta ) \right \}  \right ) = 1.
\end{equation}
Note that this condition means that $\delta$ is an absorbing state.

The covariate process specified in Equation~\ref{eq:rdsmod}, $\{ \bX_i, i \in \bbT^* \}$ with $\bbT^* = \left \{ i \in \bbT: \bX_i \neq \delta \right \}$ is a fixed branching Markov chain on $(\overline{S}, \overline{\mathcal{S}})$ with $P^*$ satisfying Equation~\ref{eq:absorb}. Note that the recruit covariate distributions are independently and identically distributed. In other words, for $V \triangleq (V_1 \times V_2 \times \cdots \times V_L) \in \mathcal{S}^L$ and an induced kernel $P_1$,
\begin{equation*}
    \int_{V} P(\bx, \bq)  d\bq = \int_{V^1} \int_{V^2} \cdots \int_{V^L} P_1(\bx, d\bq_1 )P_1(\bx, d\bq_2) \cdots P_1(\bx, d\bq_L) d\bq_1 d\bq_2\cdots d\bq_L.
\end{equation*}
Additionally, the family size distribution is only dependent on the total number of recruits. This fact, along with the i.i.d. property described above, implies that, for $\overline{V} = ( \overline{V}^1, \overline{V}^2, \ldots, \overline{V}^L) \in \overline{S}^L$,
\begin{align*}
    \int_{\overline{V}} P^*(\bx, \bq)  d\bq &= \int_{V^1} \int_{V^2} \cdots \int_{V^L} P^*(\bx, \bq)  d\bq \\
    &= \int_{V^{z(1)}} \int_{V^{z(2)}} \cdots \int_{V^{z(L)}} P^*(\bx, \bq) d\bq 
\end{align*}
for any permutation $z$ of $\{1,2, \ldots, L \}$. We will call this property the ``exchangeability" of $P^*$.
Lastly, the process is ``spatially homogeneous;" i.e., for $\overline{V} \in \left \{ S, \delta \right \}^L$ and $\bx_1, \bx_2 \in S$,
\begin{equation*}
    P^* \left (\bx_1, \overline{V} \right ) = P^* \left (\bx_2, \overline{V} \right).
\end{equation*}
In other words, the family size distribution does not depend on the covariate value of the parent. 
By Condition (C1), the Galton-Watson (GW) tree implied by this process is super-critical.

We now cite some well known results from the branching process literature \citep{athreya2004branching}. For any subset $O \subset \bbT$, let
\begin{equation*}
    O^* = O \cap \bbT^* = \left \{ o \in O, \bX_k \neq \delta \right \}
\end{equation*}
be the subset of $O$ that are ``realized" recruits. Labeling $Z_j \triangleq | \bbG_j^* |$, $\left \{ Z_j, j \in \mathbb{N} \right \}$ is a GW process. For $j \geq 0$, we know
\begin{equation*}
    \bbE \left ( \left | \bbG_j^* \right | \right ) = m^j.
\end{equation*}
And, for $r \geq 0$, we know
\begin{equation*}
     \bbE \left ( \left | \bbT^*_r \right |  \right ) = \sum_{j=0}^r \bbE \left ( \left | \bbG_j^* \right | \right ) = \sum_{j = 0}^r m^j = \frac{m^{r+1} - 1}{m-1}.
\end{equation*}
Additionally, there exists a random variable $\branchasymp$ s.t.
\begin{equation}
    \label{eq:branching_converge}
    \branchasymp = \lim_{j \to \infty } m^{-j} \left | \bbG_j^* \right | \quad \mathrm{a.s. \ and \ in} \ L^2
\end{equation}
where $\bbE \left ( \branchasymp \right ) = 1$.

We now define a series of useful sub-probability kernels. The first is on $S \times \mathcal{S}^2$  such that 
\begin{equation*}
   P^*_2 = P^* \left \{ \cdot, \left ( \cdot \cap S^2 \right ) \times \overline{S}^{L-2} \right \}.
\end{equation*}
Note that by the exchangeability of $P^*$, for any $i,j \in \left \{0,1,\ldots, L-2 \right \}$ such that $ i + j \leq L-2$, 
\begin{equation*}
    P^*_2 = P^* \left \{ \cdot, \left ( \cdot \cap S^2 \right ) \times \overline{S}^{L-2} \right \} = P^* \left \{ \cdot, \overline{S}^i \times  \left ( \cdot \cap S \right ) \times \overline{S}^j \times \left ( \cdot \cap S \right ) \times \overline{S}^{L-2 -i - j}   \right \}.
\end{equation*}
The second sub-probability kernel is defined on  $S \times \mathcal{S}$ in the following manner,
\begin{equation*}
    P^*_1 = P^* \left \{ \cdot, \left ( \cdot \cap S \right ) \times \overline{S}^{L-1} \right \}.
\end{equation*}
By the same exchangeability property, we know that for any $0 \leq i \leq L-1$, 
\begin{equation*}
    P^*_1 = P^* \left \{ \cdot, \left ( \cdot \cap S \right ) \times \overline{S}^{L-1} \right \} = P^* \left \{ \cdot, \overline{S}^{i} \times \left ( \cdot \cap S \right ) \times \overline{S}^{L-1-i} \right \}.
\end{equation*}

We introduce an auxiliary Markov chain. Let $\left \{ \bK_j, j \in \mathbb{N} \right \} $ be a Markov chain on $S$ with $\bK_0$ distributed as $\bX^{\emptyset}$ and transition kernel
\begin{equation*}
    Q = \frac{L}{m}P_1^*.
\end{equation*}
The distribution of $\bK_j$ corresponds to the distribution of $\bX_I$ conditional on $\{ I \in \bbT^* \}$, where $I$ is chosen at random from $\bbG_j$. 

We explore the nature of transition kernel $Q$ as it will be useful in theory later. We define $P_1$ as a probability kernel on $S \times \mathcal{S}$ that represents the marginal of $P$ (note the use of $P$ instead of $P^*$ here). Consequently, it represents the dynamics of a single recruiter, recruit covariate process
(with the guarantee that the recruit exists). Formally, $P_1(\cdot, V)$ is measurable for $V \in \mathcal{S}$, where
\begin{equation*}
    P_1(\cdot, V) = P(\cdot, V \times S^{L-1}),
\end{equation*}
and $P_1(\bx, \cdot)$ is a probability measurable on $\left (S, \mathcal{S} \right )$ such that
\begin{equation*}
    P_1(\bx, \cdot) = P(\bx, \cdot \times S^{L-1}).
\end{equation*}
We can also define $P_1$ as the sub-probability kernel $P_1^*$ re-normalized over $S$. We see that for $\bx \in S$,
\begin{equation*}
    P_1(\bx, \cdot) = P^*_1(\bx, \cdot | \cdot \in S) =  \frac{P_1^*(\bx, \cdot )}{P_1^*(\bx, S)},
\end{equation*}
or in other words $P_1 = P_1^*/P_1^*(\bx, S)$. We now show that 
\begin{equation*}
    Q = \frac{L}{m} P_1^*(\bx, S) = P_1^*/P_1^*(\bx, S) = P_1,
\end{equation*}
where $Q$ is defined as the transition kernel for the auxiliary Markov chain $(\bK_j, j \in \mathbb{N})$ above.
\begin{lemma}
    Let $Q$ be defined as above, then
    \begin{equation*}
        Q = \frac{L}{m} P_1^*= P^*_1/P_1^*(\bx, S) = P_1
    \end{equation*}
\end{lemma}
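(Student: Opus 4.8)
The plan is to reduce the three-fold chain of equalities to the single identity $P_1^*(\bx, S) = m/L$, valid for every $\bx \in S$. Once this is established the outer two equalities come for free: $Q = \tfrac{L}{m} P_1^*$ holds by the very definition of the auxiliary chain's transition kernel, while the last equality $P_1^*/P_1^*(\bx, S) = P_1$ is exactly the renormalization recorded in the display immediately preceding the statement. The middle equality $\tfrac{L}{m} P_1^* = P_1^*/P_1^*(\bx, S)$ is then equivalent to $P_1^*(\bx, S) = m/L$, so this is the only thing left to verify.

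To verify it, I would compute the expected number of realized recruits of a single parent in two ways. Fix $\bx \in S$ and let $(\bX_1, \ldots, \bX_L) \in \overline{S}^L$ be the covariates of the $L$ potential-recruit positions of a parent with covariate $\bx$, distributed according to $P^*(\bx, \cdot)$. Since $\delta$ is the absorbing cemetery (unrealized) state, the number of genuine recruits is $M = \sum_{l=1}^L \mathbb{I}(\bX_l \in S)$, and taking expectations gives
\[
\bbE_{P^*(\bx,\cdot)}(M) = \sum_{l=1}^L P^*\bigl(\bx,\, \overline{S}^{\,l-1} \times S \times \overline{S}^{\,L-l}\bigr) = L\, P_1^*(\bx, S),
\]
where the second equality is the exchangeability of $P^*$, which forces every position's marginal realization probability to equal $P_1^*(\bx, S)$.

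On the other hand, the spatial homogeneity of $P^*$ guarantees that the offspring law does not depend on the parent's covariate, so $\bbE_{P^*(\bx,\cdot)}(M)$ equals the Galton--Watson mean offspring number $m = \bbE(|\bbG_1^*|)$ established earlier. Equating the two expressions yields $P_1^*(\bx, S) = m/L$, uniformly in $\bx$, which closes the argument. The only real obstacle is bookkeeping: one must confirm that ``realized'' ($\bX_l \in S$) is precisely the complement of the absorbing state $\delta$, and that exchangeability legitimately collapses all $L$ coordinatewise realization probabilities to the single quantity $P_1^*(\bx, S)$. Both follow directly from the structural properties of $P^*$ already recorded, so no new probabilistic estimate is required.
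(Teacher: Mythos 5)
Your proposal is correct and follows essentially the same route as the paper: both reduce the chain of equalities to the single identity $P_1^*(\bx, S) = m/L$ and verify it using exchangeability and spatial homogeneity of $P^*$. The only (cosmetic) difference is that you obtain the identity by summing indicators over the $L$ positions and invoking linearity of expectation, whereas the paper conditions on the number $\gamma$ of realized recruits and uses the ratio $\binom{L-1}{\gamma-1}/\binom{L}{\gamma} = \gamma/L$; these are the same double-counting argument.
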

\begin{proof}
    To prove the following statement, we need to show that
    \begin{equation*}
        \frac{m}{L} = P_1^*(\bx, S).
    \end{equation*}
    Due to the spatial homogeneity of the Markov process, we can define $p_\gamma$ for \\ $\gamma \in \left  \{0,1,\ldots, L \right \}$ as the probability that $\gamma$ people are recruited (regardless of the recruiter's covariates). We evaluate $P_1^*(\bx, S)$ in terms of these probabilities (keeping in mind that the recruits are identically distributed),
    \begin{align*}
        P_1^*(\bx, S) &= \frac{p_1}{L!/\left \{ (L-1)!1! \right \}} + \frac{(L-1)!/\left \{ (L-2)!1! \right \} p_2 }{L!/\left \{ (L-2)!2! \right \}} + \\
        &\hspace{0.5cm}\frac{ (L-1)!/\left \{ (L-3)!2! \right \} p_3}{L!/\left \{ (L-3)!3! \right \}} + \cdots + p_L \\
        &= \sum_{i=1}^L \frac{ip_i}{L} \\
        &= \frac{m}{L}.
    \end{align*}
\end{proof}

If $(E, \mathcal{E})$ is a metric measurable space, then define $\mathcal{B}_b(E)$ (resp. $\mathcal{B}_+(E)$) to be the set of bounded (resp. non-negative) real-valued measurable functions on $E$. The set $\mathcal{C}_b(E)$ (resp. $C_+(E)$) denotes the set of bounded (resp. non-negative) real-valued continuous functions defined on $E$. For a finite measure $\lambda$ on $\left (E, \mathcal{E} \right )$ and $f \in \mathcal{B}_b(E) \cup \mathcal{B}_+(E)$ we shall write
\begin{equation*}
    \langle \lambda,f \rangle = \int f(\bx) d\lambda (\bx).
\end{equation*}
Additionally, we write $\bbE_{\bx}$ when $\bX_{\emptyset} = \bx$. We end this preamble with the definition of ergodicity for the Markov chain $\left \{ \bK_j, j \in \mathbb{N} \right \} $.
\begin{definition}
    The Markov chain $\left \{ \bK_j, j \in \mathbb{N} \right \}$ is ergodic if there exists a probability measure $\mu$ on $ \left (S, \mathcal{S} \right )$ such that for all $f\in \mathcal{C}_b (S)$ and all $\bx \in S$,
    \begin{equation*}
        \lim_{j \to \infty} \bbE_{\bx} \left \{ f(\bK_j)  \right \} = \langle \mu, f \rangle.
    \end{equation*}
\end{definition}
Before tackling the strong law of large numbers for the branching process covariate model, we reproduce a helpful Lemma from \cite{delmas2010detection} with slight adaptations.
\begin{lemma}
\label{lem:spine}
    For $f \in \mathcal{B}_b(S) \cup \mathcal{B}_+(S)$, 
    \begin{align*}
        \bbE \left \{ f(\bK_j) \right \} &\overset{(a)}{=} m^{-j} \sum_{i \in \bbG_j} \bbE \left \{ f(\bX_i) \mathbb{I} \left (i \in \bbT^* \right ) \right \} \overset{(b)}{=} \frac{ \sum_{i \in \bbG_j} f(\bX_i) \mathbb{I} \left (i \in \bbT^* \right ) }{ \sum_{i \in \bbG_j} \bbP \left (i \in \bbT^* \right )} \\
        &= \bbE \left \{ f(\bX_I) \mid I \in \bbT^* \right \}.
    \end{align*}
    where $I$ is a uniform random variable on $\bbG_j$ independent of $\bX$.
\end{lemma}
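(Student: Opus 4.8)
The plan is to recognize this statement as the ``many-to-one'' (spine) identity for branching Markov chains and to prove the chain of equalities by interpreting each summand $\bbE\{f(\bX_i)\mathbb{I}(i\in\bbT^*)\}$ as the value obtained by iterating the single-child sub-probability kernel $P_1^*$ along the line of descent from the root to $i$. The three equalities will then follow, respectively, from the definition of the auxiliary chain's kernel $Q=(L/m)P_1^*$, from the first-moment formula $\bbE(|\bbG_j^*|)=m^j$, and from an elementary computation with the uniform index $I$.

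First I would establish equality (a). Fix a node $i=(i_1,\dots,i_j)\in\bbG_j$. Since $\delta$ is absorbing by Equation~\ref{eq:absorb}, the event $\{i\in\bbT^*\}=\{\bX_i\neq\delta\}$ already forces every ancestor of $i$ to be realized, so the constrained quantity $\bbE\{f(\bX_i)\mathbb{I}(i\in\bbT^*)\}$ is governed entirely by the spine from $\emptyset$ to $i$. Applying the branching Markov property one generation at a time, and using the exchangeability and spatial homogeneity of $P^*$ to guarantee that the contribution of each step down this designated line is governed by $P_1^*$ (and not by a child-position-dependent kernel), I would show $\bbE\{f(\bX_i)\mathbb{I}(i\in\bbT^*)\}=\langle\nu,(P_1^*)^j f\rangle$, where $\nu$ is the law of $\bX_\emptyset$ and $(P_1^*)^j$ is the $j$-fold composition. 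Exchangeability makes this value common to all $i\in\bbG_j$, so summing over the $|\bbG_j|=L^j$ nodes gives $\sum_{i\in\bbG_j}\bbE\{f(\bX_i)\mathbb{I}(i\in\bbT^*)\}=L^j\langle\nu,(P_1^*)^j f\rangle$. On the other hand, since $\bK_j$ has transition kernel $Q=(L/m)P_1^*$ and initial law $\nu$, we have $\bbE\{f(\bK_j)\}=\langle\nu,Q^j f\rangle=(L/m)^j\langle\nu,(P_1^*)^j f\rangle$; multiplying the previous display by $m^{-j}$ then yields equality (a).

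Equality (b) reduces to identifying the normalizing constant: by the recorded first-moment formula, $\sum_{i\in\bbG_j}\bbP(i\in\bbT^*)=\bbE(|\bbG_j^*|)=m^j$, so dividing the numerator by $\sum_{i\in\bbG_j}\bbP(i\in\bbT^*)$ is the same as multiplying by $m^{-j}$ (here the numerator is read with the expectation applied termwise). Finally, introducing the uniform index $I$ on $\bbG_j$ independent of $\bX$, conditioning on $I=i$ gives $\bbE\{f(\bX_I)\mathbb{I}(I\in\bbT^*)\}=L^{-j}\sum_{i\in\bbG_j}\bbE\{f(\bX_i)\mathbb{I}(i\in\bbT^*)\}$ and $\bbP(I\in\bbT^*)=L^{-j}\sum_{i\in\bbG_j}\bbP(i\in\bbT^*)=m^j/L^j$; taking the ratio recovers $m^{-j}\sum_{i\in\bbG_j}\bbE\{f(\bX_i)\mathbb{I}(i\in\bbT^*)\}$, closing the loop. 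The step I expect to be the main obstacle is equality (a), namely the careful justification that the constrained path expectation factorizes as $\langle\nu,(P_1^*)^j f\rangle$: this demands invoking the branching Markov property generation by generation while correctly bookkeeping the cemetery state, and appealing to exchangeability and spatial homogeneity both to pin the per-step spine kernel to $P_1^*$ and to ensure the value is identical across all $i\in\bbG_j$. Everything downstream is routine algebra.
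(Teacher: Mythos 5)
Your proposal is correct and follows essentially the same route as the paper's proof: equality (a) via the absorbing property and iteration of $P_1^*$ along the spine to get $\langle\nu,(P_1^*)^j f\rangle$ per node, summed over the $L^j$ nodes and matched against $Q^j=(L/m)^jP_1^{*j}$; equality (b) via the normalization $\sum_{i\in\bbG_j}\bbP(i\in\bbT^*)=m^j$ (the paper obtains this by setting $f=1$, you cite the first-moment formula, which is the same computation); and the final equality via conditioning on the uniform index $I$. Your reading of the numerator in (b) as having the expectation applied termwise is also the correct interpretation of the paper's statement.
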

\begin{proof}
    We consider equality (a). Recall that $\bK_0$ has distribution $\nu$. For $i \in \bbG_j$, we know
    \begin{equation*}
         \bbE \left \{ f(\bX_i) \mathbb{I} \left (i \in \bbT^* \right ) \right \} = \bbE \left \{ f(\bX_i) \mathbb{I} \left (\bX_i \neq \delta \right ) \right \} = \langle \nu, (P^*_1)^j f \rangle, 
    \end{equation*}
    following from Equation~\ref{eq:absorb} and the definition of $P^*_1$.
    Consequently,
    \begin{align*}
        \sum_{i \in \bbG_j} \bbE \left \{ f(\bX_i) \mathbb{I} \left (i \in \bbT^* \right ) \right \} &= \sum_{i \in \{1,2,\ldots, L \}^j } \langle \nu, (P^*_1)^j f \rangle = \langle \nu, (L P^*_1)^j f \rangle = m^j \langle \nu, Q^j f \rangle \\
        &= m^j \bbE \left \{ f(\bK_j) \right \}.
    \end{align*}
    This gives the first equality. Then take $f=1$ in the previous equality to get $m^j = \sum_{i \in \bbG_j} \bbP \left  ( i \in \bbT^* \right ) $ and equality (b). To prove Equality (c), we show by the law of total probability that
    \begin{align*}
        \bbE \left \{ f(\bX_I) | I \in \bbT^* \right \} &= \frac{ f(\bX_I) \mathbb{I} \left (I \in \bbT^* \right ) }{\bbP \left (I \in \bbT^* \right )} \\
        &= \frac{ \sum_{i \in \bbG_j} f(\bX_i) \mathbb{I} \left (i \in \bbT^* \right ) \bbP(I = i) }{ \sum_{i \in \bbG_j} \bbP \left (i \in \bbT^* \right ) \bbP(I = i)} \\
        &= \frac{ \sum_{i \in \bbG_j} f(\bX_i) \mathbb{I} \left (i \in \bbT^* \right ) }{ \sum_{i \in \bbG_j} \bbP \left (i \in \bbT^* \right )} .
    \end{align*}
\end{proof}

We recall that $\nu$ denotes the distribution of $\bX_{\emptyset}$. Any function $f$ defined on $S$ is extended to $\overline{S}$ by $f(\delta) = 0$. Let $F$ be a vector subspace of $\mathcal{B}(S)$ s.t.
\begin{enumerate}
    \item $F$ contains the constants
    \item $F^2 \triangleq \left \{f^2: f \in F \right \} \subset F$
    \item $F \otimes F \subset L^1(P(\bx, \cdot ))$ for all $x \in S$ and $P \left (f_0 \otimes f_1  \right ) \in F$ for all $f_0, f_1 \in F$
    \item For $\delta \in [0,1]$, $F \subset L^1 ( P^*_1(\bx, \cdot))$ for all $\bx \in S$ and $P_1^*(f) \in F$ for all $f \in F$
    \item There exists a probability measure $\mu$ on $(S, \mathcal{S})$ such that $F \subset L^1(\mu)$ and \\ $\lim_{j \to \infty} \bbE_{\bx} \left \{ f(\bK_j) \right \} = \langle \mu, f \rangle$ for all $\bx \in S$ and $f \in F$
    \item For all $f \in F$, there exists $g \in F$ such that for all $j \in \mathbb{N}$, $|Q^j f | \leq g$.
    \item $F \subset L^1(\nu)$
\end{enumerate}
By convention, a function defined on $\overline{S}$ is said to belong to $F$ if its restriction to $S$ belongs to $F$. 
\textbf{Note that if $\bK_j$ is ergodic and $\bx \to P^*g$ is continuous on $S$ for all $g \in \mathcal{C}_b(\overline{S}^2)$ then the set $F=\mathcal{C}_b(S)$ fulfills Properties~(1)-(7). Additionally, the stipulation that $\bx \to P^*g$ is continuous on $S$ for all $g \in \mathcal{C}_b(\overline{S}^2)$ is satisfied if $\bK_j$ has a continuous density (because integrals of continuous functions are continuous).} 

\noindent\rule{16cm}{0.4pt}

\paragraph{The Covariate Process of Equation~\ref{eq:rdsmod}.}

First, note that any continuous $f$ is automatically bounded since $\mathcal{X}$ is bounded. Additionally, $P^*g$ is an integral over a continuous density, so we know that $\bx \to P^*g$ is continuous on $S$ for all $g \in \mathcal{C}_b(\overline{S}^2)$. Consequently, in order for $F$ to satisfy Properties (1)-(7) above for the covariate process specified in Equation~\ref{eq:rdsmod}, we simply require that $F$ is the space of continuous functions and $\bK_j$ is Ergodic.

\textbf{Assume that the probability distributions (and expectations) of this section are with respect to the true branching parameters, $\bbeta^*$, and the uniform stabilizing policy, $\widetilde{\bpi}$.}
We now examine the covariate process specified in Equation~\ref{eq:rdsmod} under $\widetilde{\bpi}$, in which the coupon allocations are assigned uniformly at random. 
Define $\mathscr{R}^p$ as the Borel $\sigma$-algebra generated by $\mathbb{R}^p$.
For $\bx \in \mathbb{R}^p$ and $B \subset \mathscr{R}^p$, we know that
\begin{equation*}
    P_1(\bx, B) =  Q(\bx, B) = \frac{1}{|\mathcal{A}|}  \sum_{\ba \in \mathcal{A}} \mathrm{Normal}\left (\bphi_{\ba} + G_{\ba}\bx,\Sigma_{\ba} \right )
\end{equation*}
Consequently, the auxiliary Markov chain $(\bK_j, j \in \mathbb{N})$ is
\begin{equation*}
    f(\bK_{j} \mid \bK_{j-1}  = \bk_{j-1}) = \frac{1}{|\mathcal{A}|}  \sum_{\ba \in \mathcal{A}} \mathrm{Normal}\left (\bphi_{\ba} + G_{\ba}\bk_{j-1},\Sigma_{\ba} \right ).
\end{equation*}
This is a Markov-Switching Autoregressive Model \citep{hamilton1989new, francq2001stationarity, stelzer2009markov}. According to \cite{fong2007mixture}, Condition (C6), which states that
\begin{equation*}
   \frac{1}{|\mathcal{A}|} \sum_{\ba \in \mathcal{A}} \log \| G_{\ba} \|_2 < 0,
\end{equation*}
implies that the $(\bK_j, j \in \mathbb{N})$ is Ergodic. 

We label the stationary distribution for $\bK_j$ as $\mu$ such that for all $\bx \in S$,
\begin{equation*}
    \lim_{j \to \infty} \bbE_{\bx} \left \{ f(\bK_j)  \right \} = \langle \mu, f \rangle.
\end{equation*}

We lower bound the  variance (with the law of total variation) of the stationary distribution, $\mu$,
\begin{equation*}
    \mathrm{Var}_{\mu}(\bK_j) \succeq \mathrm{Var}_{\mu}\left \{ \bbE \left ( \bK_j \mid \bK_{j-1} \right ) \right \} \succeq \frac{1}{\left | \mathcal{A} \right  | } \sum_{\ba \in \mathcal{A}}\Sigma_{\ba}.
\end{equation*}
Define $\Sigma_{\bk} \triangleq \frac{1}{\left | \mathcal{A} \right  | } \sum_{\ba \in \mathcal{A}}\Sigma_{\ba}$.

We now prove a Weak Law of Large numbers over the function class $F$. We again follow \cite{delmas2010detection} with slight additions. Because $(\bK_j, j \in \mathbb{N})$ is Ergodic and $\bK_j$ has a continuous density for all $j \in \mathbb{N}$, we only need $F = \mathcal{C}_b(S)$ to satisfy Properties (1)-(7).
\begin{theorem}
    \label{thm:BranchingWLLN1}
    Let $\{\bX_i, i \in \bbT^* \}$ be the covariate process specified in Equation~\ref{eq:rdsmod}. 
     Let $F = \mathcal{C}_b(S)$, which implies it satisfies Properties (1)-(7).
    Then,
    \begin{equation*}
       \left \{  \frac{1}{m^n}  \sum_{i \in \bbG_j^*} f(\bX_i),  j \in \mathbb{N} \right \} \to \langle \mu, f \rangle \branchasymp
    \end{equation*}
    in $L^2$ as $j \to \infty$, where $\branchasymp$ is defined by Equation~\ref{eq:branching_converge}.
\end{theorem}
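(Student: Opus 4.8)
The plan is to reduce the statement to a second-moment estimate for a \emph{centered} functional and then evaluate that estimate by a many-to-two computation over the branching tree. Write $\bar f \triangleq f - \langle \mu, f\rangle$, so that $\langle \mu, \bar f\rangle = 0$, and set $M_j^f \triangleq m^{-j}\sum_{i \in \bbG_j^*} f(\bX_i)$. Since
\begin{equation*}
M_j^f = \langle\mu,f\rangle\, m^{-j}|\bbG_j^*| + M_j^{\bar f}
\quad\text{and}\quad
m^{-j}|\bbG_j^*| \to \branchasymp \ \text{in } L^2
\end{equation*}
by Equation~\ref{eq:branching_converge}, the triangle inequality in $L^2$ shows it suffices to prove $M_j^{\bar f} \to 0$ in $L^2$; the claimed limit $\langle\mu,f\rangle\,\branchasymp$ then follows. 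This isolates the real content into showing that the centered average is asymptotically negligible.

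Next I would expand
\begin{equation*}
\bbE\!\left[(M_j^{\bar f})^2\right] = \frac{1}{m^{2j}}\sum_{i,i'\in\bbG_j}\bbE\!\left[\bar f(\bX_i)\,\bar f(\bX_{i'})\,\mathbb{I}(i,i'\in\bbT^*)\right]
\end{equation*}
and split into diagonal and off-diagonal terms. The diagonal sum equals $m^{-j}\,\bbE[\bar f^2(\bK_j)]$ by Lemma~\ref{lem:spine} (valid since $\bar f^2\in F$ by Property~(2) and is bounded on the compact state space), hence is $O(m^{-j})\to 0$ because the process is supercritical, $m>1$. For the off-diagonal terms I would group ordered pairs $(i,i')$ by the generation $r$ of their most recent common ancestor. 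The branching Markov property, spatial homogeneity, and the exchangeability of $P^*$ let me collapse every pair diverging at generation $r$ to the single value $\langle\nu,(P_1^*)^r P_2^*(\psi,\psi)\rangle$ with $\psi=(P_1^*)^{j-r-1}\bar f$. Substituting $P_1^*=(m/L)Q$ (established in the lemma showing $Q=P_1$), writing $\phi_s\triangleq Q^s\bar f$, and carrying out the pair count $(L-1)L^{2j-r-1}$, all powers of $L$ cancel and the off-diagonal contribution becomes
\begin{equation*}
(L-1)L\sum_{r=0}^{j-1} m^{-r-2}\,\big\langle\nu,\; Q^r P_2^*(\phi_{j-r-1},\phi_{j-r-1})\big\rangle .
\end{equation*}

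To finish, I would bound $|\phi_s|\le C$ uniformly in $s$ (Property~(6) together with boundedness on $\mathcal{X}$), so that each summand is dominated by $C^2 m^{-r-2}$, a summable sequence independent of $j$; and for each fixed $r$, ergodicity of $\bK_j$ with $\langle\mu,\bar f\rangle=0$ gives $\phi_{j-r-1}\to 0$ pointwise as $j\to\infty$, whence $\langle\nu,Q^r P_2^*(\phi_{j-r-1},\phi_{j-r-1})\rangle\to 0$ by dominated convergence inside the kernels. Dominated convergence for series (Tannery's theorem) then forces the entire sum, and hence the off-diagonal contribution, to $0$. Combining the diagonal and off-diagonal estimates yields $\bbE[(M_j^{\bar f})^2]\to 0$, completing the argument. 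The main obstacle is the many-to-two bookkeeping: conditioning on the common ancestor through $P_2^*$, correctly tracking the realization indicators via the absorbing sub-kernels $P_1^*$, and arranging the exponent arithmetic so that the powers of $L$ cancel and leave the summable geometric weight $m^{-r-2}$. The ergodicity of the auxiliary chain $\bK_j$ (already obtained for this model from Condition~(C6)) and the domination of Property~(6) are precisely what make the per-generation limits and the dominated-convergence step valid.
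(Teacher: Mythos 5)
Your proposal is correct and follows essentially the same route as the paper's proof: center $f$, split the second moment of $m^{-j}\sum_{i\in\bbG_j^*}f(\bX_i)$ into a diagonal term handled by Lemma~\ref{lem:spine} and an off-diagonal term organized by the most recent common ancestor, collapse via the branching Markov property and $P_1^*=(m/L)Q$ to the geometric weights $(L-1)L\,m^{-r-2}$, and conclude by uniform domination plus per-$r$ convergence (your Tannery step is the paper's explicit $\epsilon/2$ truncation argument). The only cosmetic difference is that you perform the reduction to $\langle\mu,f\rangle=0$ at the outset rather than at the end.
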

\begin{proof}
    We first assume that $\langle \mu, f \rangle = 0$. We have
    \begin{align*}
        \left \| \sum_{i \in \bbG_j^*} f(\bX_i) \right \|_{2}^2 &= \bbE \left [ \left (  \sum_{i \in \bbG_j^*} f(\bX_i) \bbI(i \in \bbT^*) \right )^2 \right ] \\
        &=  \sum_{i \in \bbG_j^*} \bbE \left \{ f^2(\bX_i) \bbI(i \in \bbT^*) \right \} + B_j \\
        &= m^j \bbE \left \{ f^2(\bK_j) \right \} + B_j,
    \end{align*}
    with $B_j = \sum_{(p,q) \in \bbG^2_j, i \neq j} \bbE \left [ f(\bX_p) f(\bX_q) \bbI \left \{ (p,q) \in {\bbT^*}^2\right \} \right ]$. We used Lemma~\ref{lem:spine} for the last equality. For $p, q \in \bbG_j^*$, define $p \wedge q$ as the most recent common ancestor of $p$ and $q$. We compute $B_q$ by decomposing the sum according to $k = p \wedge q$: $B_j = \sum_{r=0}^{j-1} \sum_{k \in \bbG_r} C_k$ where
    \begin{equation*}
        C_k = \sum_{(p,q) \in \bbG_j^2. p \wedge q = k} \bbE \left [ f(\bX_p) f(\bX_q) \bbI\left \{ (p,q) \in {\bbT^*}^2 \right \} \right ].
    \end{equation*}
    If $|k| = q-1$, we get
    \begin{align*}
        C_k &= \sum_{(p,q) \in \bbG_1^2, p \wedge q = \emptyset} \bbE \left ( \bbE_{\bX_k} \left [ f \left (\bX_{kp} \right ) f \left (\bX_{kq} \right ) \bbI \left \{ (kp, kq) \in {\bbT^*}^2 \right \}    \right ] \bbI( k \in \bbT^*)  \right ) \\
        &= L(L-1)\bbE \left \{ P_2^*(f \otimes f) (\bX_k) \bbI(k \in \bbT^*) \right \}
    \end{align*}
    If $|k| < q-1$, we set $r = |k|$ and derive
    \begin{align*}
        C_k &= \sum_{(w,z) \in G_1^2, w \wedge z = \emptyset} \sum_{(p,q) \in \bbG^2_{j-r-1}} \bbE \Bigg [ \bbE_{\bX_{kw}} \left \{ f\left (\bX_{kwp} \right ) \bbI \left (kwp \in \bbT^* \right )  \right \} \times \\
        & \hspace{5.5cm} \bbE_{\bX_{kz}} \left \{ f(\bX_{kzq} ) \bbI(kzq \in \bbT^* ) \right \} \bbI(kw \in \bbT^*, kz \in \bbT^*) \Bigg ] \\
        &= \sum_{(w,z) \in \bbG_1^2, w \wedge z = \emptyset} \bbE \Bigg ( \left [  \sum_{p \in \bbG_{j-r-1}} \bbE_{\bX_{kw}} \left \{ f\left (\bX_{kwp} \right ) \bbI \left (kwp \in \bbT^* \right )  \right \} \right ] \times \\
        & \hspace{3.5cm}\left [ \sum_{q \in \bbG_{j-r-1}}  \bbE_{\bX_{kz}} \left \{ f(\bX_{kzq} ) \bbI(kzq \in \bbT^* ) \right \} \right ] \bbI(kw \in \bbT^*, kz \in \bbT^*) \Bigg ) \\
         &\overset{(a)}{=} m^{2(j-r-1)} \sum_{(w,z) \in \bbG_1^2, w \wedge z = \emptyset} \bbE \Bigg [ \bbE_{\bX_{kw}} \left \{ f\left (\bK_{j-r-1} \right )  \right \} \times \\
        & \hspace{5.5cm}  \bbE_{\bX_{kz}} \left \{ f(\bK_{j-r-1} ) \right \} \bbI(kw \in \bbT^*, kz \in \bbT^*) \Bigg ] \\ 
         &\overset{(b)}{=} m^{2(j-r-1)} \sum_{(w,z) \in \bbG_1^2, w \wedge z = \emptyset} \bbE \left [ Q^{j-r-1} f\left (\bX_{kw}\right )  Q^{j-r-1} f\left (\bX_{kz}\right )  \bbI(kw \in \bbT^*, kz \in \bbT^*) \right ]  \\
         &= m^{2(j-r-1)} \sum_{(w,z) \in \bbG_1^2, w \wedge z = \emptyset} \bbE \left [ \left ( Q^{j-r-1} f \otimes Q^{j-r-1} f \right ) \left (\bX_{kw}, \bX_{kz} \right ) \bbI(kw \in \bbT^*, kz \in \bbT^*) \right ] \\
         &\overset{(c)}{=} m^{2(j-r-1)} \sum_{(w,z) \in \bbG_1^2, w \wedge z = \emptyset} \bbE \left [ P^*_2 \left ( Q^{j-r-1} f \otimes Q^{j-r-1} f \right ) \left (\bX_{k} \right ) \bbI(k \in \bbT^*) \right ] \\
         &= L(L-1)m^{2(j-r-1)} \bbE \left [ P^*_2 \left ( Q^{j-r-1} f \otimes Q^{j-r-1} f \right ) \left (\bX_{k} \right ) \bbI(k \in \bbT^*) \right ]
    \end{align*}
    Equality (a) follows from Lemma~\ref{lem:spine} and (b) through the definition of $Q$. Equality (c) follows from the definition of $P^*_2$. Consequently, by Lemma~\ref{lem:spine},
    \begin{align*}
       B_j &= L(L-1) \sum_{r=0}^{j-1} m^{2(j-r-1)} \sum_{k \in \bbG_r}  \bbE \left [ P^*_2 \left ( Q^{j-r-1} f \otimes Q^{j-r-1} f \right ) \left (\bX_{k} \right ) \bbI(k \in \bbT^*) \right ] \\
       &\overset{(a)}{=} L(L-1) \sum_{r=0}^{j-1} m^{2(j-r-1)}m^r\langle \nu, Q^rP_2^* \left ( Q^{j-r-1} f \otimes Q^{j-r-1} f \right )\rangle \\
       &= L(L-1) \sum_{r=0}^{j-1} m^{2j-r-2}\langle \nu, Q^rP_2^* \left ( Q^{j-r-1} f \otimes Q^{j-r-1} f \right )\rangle,
    \end{align*}
    where equality (a) follows from Lemma~\ref{lem:spine}. Therefore, we find that
    \begin{align*}
         \left \| m^{-j} \sum_{i \in \bbG_j^*} f(\bX_i) \right \|_{2}^2 &= m^{-2j} \left \| \sum_{i \in \bbG_j^*} f(\bX_i) \right \|_{2}^2 \\
         &= m^{-j} \bbE \left \{ f^2(\bK_j) \right \} + \\
         &\hspace{0.5cm}L(L-1)m^{-2}\sum_{r=0}^{q-1} m^{-r} \langle \nu, Q^rP^* \left ( Q^{j-r-1} f \otimes Q^{j-r-1} f \right )\rangle
    \end{align*}
    Because $f \in F$, properties (2), (4), and (6) imply that $E\left \{ f^2(\bK_j) \right \} < \infty$ for any $j \in \mathbb{N}$ and
    \begin{equation*}
        \lim_{j \to \infty} m^{-j}\bbE \left \{ f^2(\bK_j) \right \} = 0.
    \end{equation*}
    Properties (3), (4), and (5) along with $\langle \mu, f \rangle = 0$ imply that $P \left (Q^{j-r-1}f \otimes Q^{j-r-1}f \right  )$ converges to $0$ as $j \to \infty$ (with $r$ fixed) and is bounded uniformly in $j>r$ by a function in $F$. Thus, properties (5) and (6) imply that $\langle \nu, Q^rP^*_2 \left ( Q^{j-r-1} f \otimes Q^{j-r-1} f \right )\rangle$ converges to $0$ as $j \to \infty$ (with $r$ fixed) and is bounded uniformly in $j>r$ by a fixed constant, say $C$. 

    For any $\epsilon >0$, we can choose $r_0$ such that $\sum_{r > r_0} m^{-r}C \leq \epsilon/2$. Additionally, choose $j_0 > r_0$ such that for $j \geq j_0$ and $r \leq r_0$, we have
    \begin{equation*}
        \left |  \langle \nu, Q^rP^*_2 \left ( Q^{j-r-1} f \otimes Q^{j-r-1} f \right )\rangle \right | \leq \epsilon/(2r_0).
    \end{equation*}
    We get that for all $j \geq j_0$
    \begin{equation*}
        \sum_{r=0}^{j-1} m^{-r} \left | \langle \nu, Q^rP^*_2 \left ( Q^{j-r-1} f \otimes Q^{j-r-1} f \right )\rangle \right | \leq \sum_{r=0}^{r_0} \epsilon/(2r_0) + \sum_{r > r_0}^{j-1} m^{-r} C \leq \epsilon/2 + \epsilon/2 = \epsilon.
    \end{equation*}
    Therefore,
    \begin{equation*}
        \lim_{j \to \infty} \sum_{r=0}^{j-1} m^{-r} \langle \nu, Q^rP^*_2 \left ( Q^{j-r-1} f \otimes Q^{j-r-1} f \right )\rangle = 0.
    \end{equation*}
    We can now conclude that if $\langle \mu, f \rangle = 0$, then 
    \begin{equation*}
        \lim_{j \to \infty} \left \| m^{-j}\sum_{i \in \bbG_j^*} f(\bX_i) \right \|_{2} = 0.
    \end{equation*}
    We consider the case when $\langle \mu, f \rangle$ is arbitrary. For any function $f \in F$, we can construct $g = f - \langle \mu, f \rangle$. We decompose as follows,
    \begin{equation*}
         m^{-j}\sum_{i \in \bbG_j^*} f(\bX_i) =  m^{-j}\sum_{i \in \bbG_j^*} g(\bX_i) + \langle \mu, f \rangle m^{-j} \left | \bbG_j^* \right |. 
    \end{equation*}
    Because $g \in F$ and $\langle \mu, g \rangle = 0$, we know that
    \begin{equation*}
       \lim_{j \to \infty} \left \| m^{-j}\sum_{i \in \bbG_j^*} g(\bX_i) \right \|_{2} = 0.
    \end{equation*}
    Because $\left ( m^{-j} \left | \bbG_j^* \right |, j \geq 1  \right )$ converges to $L^2$ (and a.s.) to $\branchasymp$, we can conclude that
    \begin{equation*}
        m^{-j}\sum_{i \in \bbG_j^*} f(\bX_i) \overset{L^2}{\to} \langle \mu, f \rangle \branchasymp
    \end{equation*}
    as $j \to \infty$.
\end{proof}

We now prove a similar result for the average over all individuals in $\bbT^*_r$, the first $r$ generations. We set $t_r \triangleq \bbE \left ( \left | \bbT_r^* \right |\right )$ and recall that
\begin{equation*}
    t_r = \frac{m^{r+1} - 1}{m-1}.
\end{equation*}
We now state a lemma that is a direct consequences of Lemma~\ref{lem:toep}.
\begin{lemma}
    \label{lem:toeplitzknockoff}
    Let $(v_r, r \in \mathbb{N})$ be a sequence of real numbers converging to $a \in \mathbb{R}_+$, and $m$ be a real number such that $m > 1$. Let
    \begin{equation*}
        w_r = \sum_{j=0}^r m^{j-r-1} v_j.
    \end{equation*}
    Then the sequence $(w_r, r \in \mathbb{N} )$ converges to $a/(m-1)$.
\end{lemma}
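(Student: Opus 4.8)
The plan is to reduce the claim directly to the Toeplitz Lemma (Lemma~\ref{lem:toep}) by separating $w_r$ into a normalized weighted average of the $v_j$ and a deterministic prefactor. First I would rewrite
\begin{equation*}
w_r = \sum_{j=0}^r m^{j-r-1} v_j = m^{-r-1} \sum_{j=0}^r m^j v_j,
\end{equation*}
which isolates the partial sum $\sum_{j=0}^r m^j v_j$ with positive weights $a_j \triangleq m^j$.

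Next I would set $b_r \triangleq \sum_{j=0}^r m^j = (m^{r+1}-1)/(m-1)$. Since $m > 1$, the sequence $b_r$ diverges, so Lemma~\ref{lem:toep} applies (after the trivial reindexing that starts the summation at $j=0$ rather than $j=1$) and, using $v_j \to a$, yields
\begin{equation*}
b_r^{-1} \sum_{j=0}^r m^j v_j \to a
\end{equation*}
as $r \to \infty$.

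Then I would factor $w_r = (m^{-r-1} b_r)\big(b_r^{-1} \sum_{j=0}^r m^j v_j\big)$ and compute the prefactor explicitly,
\begin{equation*}
m^{-r-1} b_r = \frac{1 - m^{-r-1}}{m-1} \to \frac{1}{m-1}.
\end{equation*}
Multiplying the two convergent factors gives $w_r \to a/(m-1)$, as claimed.

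There is no substantive obstacle here; the lemma is essentially a restatement of the Toeplitz Lemma with geometric weights. The only point requiring care is that the normalization $m^{-r-1}$ appearing in $w_r$ is \emph{not} exactly the Toeplitz normalization $b_r^{-1}$, so one must track the deterministic ratio $m^{-r-1}b_r$ separately rather than invoking Lemma~\ref{lem:toep} verbatim. This ratio converges to $1/(m-1)$, which is precisely the source of the extra factor in the stated limit.
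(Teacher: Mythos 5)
Your proof is correct, and it takes a slightly different route from the paper's. You keep the summation index in its natural order and write $w_r = m^{-r-1}\sum_{j=0}^r m^j v_j$, so that the weights $a_j = m^j$ are positive with divergent partial sums $b_r = (m^{r+1}-1)/(m-1)$; this lets you invoke Lemma~\ref{lem:toep} \emph{exactly} as stated, and the limit $a/(m-1)$ then falls out of the separately computed prefactor $m^{-r-1}b_r \to 1/(m-1)$. The paper instead reverses the index, writing $w_r = \sum_{j=0}^r m^{-j-1} v_{r-j}$, and normalizes by the \emph{convergent} geometric sum $\sum_{j=0}^r m^{-j}$ before citing the Toeplitz lemma. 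That step is really a summable-weights/dominated-convergence argument applied to the reversed tail sequence $v_{r-j}$ rather than a literal application of Lemma~\ref{lem:toep} (whose hypotheses require the weight sums to diverge and the averaged sequence to be fixed), so your version is arguably the cleaner and more faithful reduction to the cited lemma; both arrive at the same answer with essentially the same amount of work.
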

\begin{proof}
    First we rearrange the sum $w_r = \sum_{j=0}^r m^{j-r-1} v_j$,
    \begin{equation*}
        w_r = \sum_{j=0}^r m^{j-r-1} v_j = \sum_{j=0}^r m^{-j-1} v_{r-j} = \frac{1}{m} \left ( \sum_{j=0}^r m^{-j} \right )  \left ( \sum_{j=0}^r m^{-j} \right )^{-1}  \sum_{j=0}^r m^{-j} v_{r-j}.
    \end{equation*}
    By Lemma~\ref{lem:toep},
    \begin{equation*}
         \lim_{r \to \infty} \left ( \sum_{j=0}^r m^{-j} \right )^{-1}  \sum_{j=0}^r m^{-j} v_{r-j} = a.
    \end{equation*}
    By the formula for a geometric series,
    \begin{equation*}
        \lim_{r \to \infty}  \sum_{j=0}^r m^{-j}  = \frac{1}{1-\frac{1}{m}}.
    \end{equation*}
    Consequently,
    \begin{equation*}
        \lim_{r\to \infty} w_r =  \frac{1}{m} \left ( \sum_{j=0}^r m^{-j} \right )  \left ( \sum_{j=0}^r m^{-j} \right )^{-1}  \sum_{j=0}^r m^{-j} v_{r-j} = \frac{1}{m} \frac{1}{1-\frac{1}{m}} a = \frac{a}{m-1}.
    \end{equation*}
\end{proof}

We now state the weak law of large numbers when averaging over the complete tree $\bbT^*_r$ as $r \to \infty$.
\begin{theorem}
\label{thm:branchingtreeaverage}
    Let $\{\bX_i, i \in \bbT^* \}$ be the covariate process specified in Equation~\ref{eq:absorb}. Let $F$ satisfy (i)-(vi) and $f \in F$. 
    Then,
    \begin{equation*}
       \left \{  \frac{1}{t_r}  \sum_{i \in \bbT_r^*} f(\bX_i),  r \in \mathbb{N} \right \} \to \langle \mu, f \rangle \branchasymp
    \end{equation*}
    in $L^2$ as $r \to \infty$, where $\branchasymp$ is defined by Equation~\ref{eq:branching_converge}.
\end{theorem}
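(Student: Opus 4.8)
The plan is to reduce the tree-average to a weighted combination of the generation-averages already controlled by Theorem~\ref{thm:BranchingWLLN1}, and then pass to the limit using a Toeplitz-type argument. First I would decompose the sum over the truncated tree by generation, writing $\sum_{i \in \bbT_r^*} f(\bX_i) = \sum_{j=0}^r S_j$ with $S_j \triangleq \sum_{i \in \bbG_j^*} f(\bX_i)$. The key bookkeeping observation is that the normalizer factors exactly: since $t_r = \bbE\left(|\bbT_r^*|\right) = \sum_{j=0}^r \bbE\left(|\bbG_j^*|\right) = \sum_{j=0}^r m^j = \tfrac{m^{r+1}-1}{m-1}$, the weights $m^j/t_r$ for $j = 0,\ldots,r$ are nonnegative and sum to exactly one.

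Writing $a \triangleq \langle \mu, f\rangle \branchasymp$ for the (random) target and using $\tfrac{1}{t_r}\sum_{j=0}^r m^j = 1$, I would form the exact algebraic identity
\[
\frac{1}{t_r}\sum_{i \in \bbT_r^*} f(\bX_i) - a = \frac{1}{t_r}\sum_{j=0}^r m^j\left( m^{-j} S_j - a\right),
\]
which holds pointwise on $\Omega$ even though $a$ is random. Taking the $L^2(\Omega)$ norm and applying the triangle inequality then yields
\[
\left\| \frac{1}{t_r}\sum_{i \in \bbT_r^*} f(\bX_i) - a\right\|_{2} \;\le\; \frac{1}{t_r}\sum_{j=0}^r m^j\, \epsilon_j, \qquad \epsilon_j \triangleq \left\| m^{-j} S_j - a\right\|_{2},
\]
where each $\epsilon_j$ is now a \emph{deterministic} nonnegative number.

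The crux is that Theorem~\ref{thm:BranchingWLLN1} supplies exactly $\epsilon_j \to 0$ as $j \to \infty$, since it asserts $m^{-j}\sum_{i\in\bbG_j^*} f(\bX_i) \to \langle\mu,f\rangle\branchasymp$ in $L^2$. I would then invoke the Toeplitz lemma (Lemma~\ref{lem:toep}) with $x_j = \epsilon_j$, positive weights $a_j = m^j$, and partial sums $b_r = \sum_{j=0}^r m^j = t_r$; because the process is super-critical we have $m > 1$, so $b_r \to \infty$ and the lemma applies, giving $t_r^{-1}\sum_{j=0}^r m^j \epsilon_j \to 0$. This forces the displayed upper bound to vanish, establishing $L^2$ convergence of the tree-average to $\langle\mu,f\rangle\branchasymp$ and proving the theorem.

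I do not expect a serious obstacle, as the substantive work is already carried out in Theorem~\ref{thm:BranchingWLLN1}; the remaining argument is essentially a weighted Cesàro averaging. The only point requiring care is that the target $a = \langle\mu,f\rangle\branchasymp$ is random, so one cannot feed the random sequence $m^{-j}S_j$ directly into the deterministic Lemma~\ref{lem:toeplitzknockoff}; instead I first pass to the deterministic error sequence $\epsilon_j$ via the $L^2$ triangle inequality and apply the classical Toeplitz Lemma~\ref{lem:toep} to that. One could alternatively route the argument through Lemma~\ref{lem:toeplitzknockoff} using the exact asymptotics $m^j/t_r = (m-1)m^{j-r-1}(1+o(1))$, but the triangle-inequality reduction above is cleaner and self-contained.
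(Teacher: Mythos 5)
Your proposal is correct and takes essentially the same route as the paper: decompose the tree sum by generation, pull the $L^2$ norm inside via the triangle inequality using that the weights $m^j/t_r$ sum to one, invoke Theorem~\ref{thm:BranchingWLLN1} for the generation-wise $L^2$ errors, and finish with a Toeplitz averaging argument. The only (immaterial) difference is that you apply the classical Toeplitz Lemma~\ref{lem:toep} directly with weights $a_j = m^j$, whereas the paper routes through its specialized Lemma~\ref{lem:toeplitzknockoff}, which is itself just a repackaging of the same lemma.
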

\begin{proof}
    Recalling that
    \begin{equation*}
    t_r = \frac{m^{r+1} - 1}{m-1}
    \end{equation*}
    we know that
    \begin{align*}
        \left \| \frac{1}{t_r} \sum_{i \in \bbT_r^*} f(\bX_i) - \langle \mu, f \rangle \branchasymp \right \|_{2} &= \left \| \sum_{j=0}^r \frac{m^j}{t_r} \left ( \frac{1}{m^j} \sum_{i \in \bbG_j^*} f(\bX_i) - \langle \mu, f \rangle \branchasymp \right ) \right \|_{2} \\
        &\leq  \sum_{j=0}^r \frac{m^j}{t_r}  \left \| \frac{1}{m^j} \sum_{i \in \bbG_j^*} f(\bX_i) - \langle \mu, f \rangle \branchasymp  \right \|_{2} \\
        &= m^{r+1} \frac{m-1}{m^{r+1} - 1} \sum_{j=0}^r m^{j-r-1}  \left \| \frac{1}{m^j} \sum_{i \in \bbG_j^*} f(\bX_i) - \langle \mu, f \rangle \branchasymp \right \|_{2} \\
        &= \frac{m-1}{1 - m^{-r-1}} \sum_{j=0}^r m^{j-r-1}  \left \| \frac{1}{m^j} \sum_{i \in \bbG_j^*} f(\bX_i) - \langle \mu, f \rangle \branchasymp \right \|_{2}
    \end{align*}
    Consequently,
    \begin{equation*}
        \left \| \frac{1}{t_r} \sum_{i \in \bbT_r^*} f(\bX_i) - \langle \mu, f \rangle \branchasymp\right \|_{2} \to 0
    \end{equation*}
    as $r \to \infty$ by Theorem~\ref{thm:BranchingWLLN1} and Lemma~\ref{lem:toeplitzknockoff}.
\end{proof}

\textbf{Note that convergence in $L^2$ implies convergence in probability, so we have proven a weak law of large numbers (WLLN) for the branching covariate process.}

\noindent\rule{16cm}{0.4pt}

Now that we have proven a WLLN for the branching covariate process implied by Equation~\ref{eq:rdsmod}, we return to the task of verifying Assumption~\ref{as:stabalizedvariance}. First, recall that
\begin{equation*}
   m = \sum_{m^v = 1}^L m^v \frac{ \lambda^{m^v}/m^v!}{
  \sum_{\ell=0}^{|\mathbf{A}^v|} (\lambda^\ell/\ell!)  }
\end{equation*}

Additionally, define $\Sigma_J = t_{J-1}$.
Note that the differentiability and moment conditions of the log-likelihood imply that
\begin{equation*}
    \eta_J = \sum_{j=1}^J \sum_{i \in \mathcal{E}_{j-1}} \bbE_{\bbeta^*, \widetilde{\bpi}} \left \{ \dot{l}_i(\bbeta^*) \dot{l}_i(\bbeta^*)^\top \mid \epochfield \right \} = \sum_{j=1}^J \sum_{i \in \mathcal{E}_{j-1}} \bbE_{\bbeta^*, \widetilde{\bpi}} \left \{ -\ddot{l}_i(\bbeta^*) \mid \epochfield \right \}.
\end{equation*}
Consequently, we will use the Hessians calculated in Section~\ref{app_sec:branchinginference} in the following theory. 
We divide $\eta_J$ into components corresponding to the different models that compose the branching process, $\eta_J = \mathrm{diag} \left (\eta_J^{\bx},  \eta_J^{y}, \eta_J^{m}, \eta_J^t \right )$. 

\textbf{First, we analyze the covariate model component, $\eta_J^{\bx}$.} 
We define quantities
\begin{equation*}
     \bV_j^{\ba} \triangleq \sum_{i \in \epochgroup} M_i \bX^*_i {\bX^*_i}^\top  \mathbb{I}(\bA_i = \ba), \quad
     n_j^{\ba} \triangleq \sum_{i \in \epochgroup} M_i \mathbb{I}(\bA_i = \ba),
\end{equation*}
\begin{equation*}
   \eta_J^{\bx} \triangleq \sum_{j=1}^J  \mathrm{diag} \left [ \bbE_{\bbeta^*, \widetilde{\bpi}} \left \{ \begin{pmatrix}
         \bV_j^{\ba} & 2 \bV_j^{\ba} {G^*_{\ba}}^\top \\
        2 G^*_{\ba} \bV_j^{\ba} & 4 G^*_{\ba}  \bV_j^{\ba} {G^*_{\ba}}^\top  + 2 n_j^{\ba} \Sigma_{\ba}^*
    \end{pmatrix} \otimes \Sigma_{\ba}^* \mid \epochfield \right \}_{\ba \in \mathcal{A}}
    \right ].
\end{equation*}
We define
\begin{align*}
   Q_{j, \ba}^{\bx} \triangleq \bbE \left \{ \sum_{i \in \epochgroup} M_i \bX^*_i {\bX^*_i}^\top  \mathbb{I}(\bA_i = \ba) \mid \epochfield \right \} = \frac{1}{|\mathcal{A}|} \sum_{i \in \epochgroup} m \bX^*_i {\bX^*_i}^\top,
\end{align*}
and
\begin{align*}
   Q_{j, \ba}^{\ba} \triangleq \bbE \left \{ \sum_{i \in \epochgroup} M_i \mathbb{I}(\bA_i = \ba) \mid \epochfield \right \} = \frac{1}{|\mathcal{A}|} m \kappa_{j-1}.
\end{align*}
Recall that
\begin{equation*}
    t_J = \frac{m^{J+1} - 1}{m-1}.
\end{equation*}
Then
\begin{align*}
   & {\Sigma_J}^{-1/2} \eta_J^{\bx}  {\Sigma_J}^{-1/2} = \\
   &\sum_{j=1}^J {\Sigma_J}^{-1/2} \mathrm{diag} \left [\left \{ \begin{pmatrix}
        Q_{j, \ba}^{\bx} & 2 Q_{j, \ba}^{\bx} {G^*_{\ba}}^\top \\
        2 G^*_{\ba}Q_{j, \ba}^{\bx} & 4 G^*_{\ba} Q_{j, \ba}^{\bx} {G^*_{\ba}}^\top   + 2 Q_{j, \ba}^{\ba}\Sigma_{\ba}^*
    \end{pmatrix} \otimes \Sigma_{\ba}^* \right \}_{\ba \in \mathcal{A}}  \right ] {\Sigma_J}^{-1/2} = \\
   & \sum_{j=1}^J \mathrm{diag} \left [\left \{ \begin{pmatrix}
        Q_{j, \ba}^{\bx}& 2 Q_{j, \ba}^{\bx}{G^*_{\ba}}^\top \\
        2 G^*_{\ba}Q_{j, \ba}^{\bx} & 4 G^*_{\ba} Q_{j, \ba}^{\bx}{G^*_{\ba}}^\top   + 2 Q_{j, \ba}^{\ba}\Sigma_{\ba}^*
    \end{pmatrix} \otimes \Sigma_{\ba}^* \right \}_{\ba \in \mathcal{A}}  \right ]/t_{J-1}.
\end{align*}
We know that 
  \begin{align*}
     Q_{j, \ba}^{\bx}/t_{J-1} &=  \frac{1}{|\mathcal{A}|} \sum_{i \in \epochgroup} m \bX^*_i {\bX^*_i}^\top/t_{J-1} \\
   &=  \frac{1}{|\mathcal{A}|} m \begin{pmatrix}
				\kappa_{j-1} /t_{J-1} & \sum_{i \in \epochgroup} {\bX_i}^\top /t_{J-1}\\
				\sum_{i \in \epochgroup} {\bX_i} /t_{J-1} & \sum_{i \in \epochgroup} \bX_i {\bX_i}^\top /t_{J-1} 
			\end{pmatrix}.
  \end{align*}
    Note that $\bX_i$ and $\bX_i \bX_i^\top$ are both continuous, bounded functions of $\bX_i$ because $\mathcal{X}$ is bounded (as previously mentioned). By Theorem~\ref{thm:branchingtreeaverage}, as $J \to \infty$,
  \begin{equation*}
      \begin{pmatrix}
				\sum_{j=1}^J \kappa_{j-1} /t_{J-1} & \sum_{i \in \epochgroup} {\bX_i}^\top /t_{J-1} \\
				\sum_{i \in \epochgroup} {\bX_i} /t_{J-1} & \sum_{i \in \epochgroup} \bX_i {\bX_i}^\top /t_{J-1} 
			\end{pmatrix} \overset{p}{\to} \branchasymp C^{\bx}_*,
  \end{equation*}
   where $C_*^{\bx}$ is a constant matrix. Additionally, 
  \begin{equation*}
      Q_{j, \ba}^{\ba} = \frac{1}{|\mathcal{A}|} m \kappa_{j-1}/t_{J-1} \overset{p}{\to} \frac{m}{|\mathcal{A}|}\branchasymp.
  \end{equation*}
  Consequently, we know that
    \begin{align*}
        &{\Sigma_J}^{-1/2} \eta_J^{\bx}  {\Sigma_J}^{-1/2} = \\
        &\sum_{j=1}^J  \mathrm{diag} \left [ \bbE_{\bbeta^*, \widetilde{\bpi}} \left \{ \begin{pmatrix}
         \bV_j^{\ba} & 2 \bV_j^{\ba} {G^*_{\ba}}^\top \\
        2 G^*_{\ba} \bV_j^{\ba} & 4 G^*_{\ba}  \bV_j^{\ba} {G^*_{\ba}}^\top  + 2 n_j^{\ba} \Sigma_{\ba}^*
    \end{pmatrix} \otimes \Sigma_{\ba}^* \mid \epochfield \right \}_{\ba \in \mathcal{A}}
    \right ]/t_{J-1} \overset{p}{\to} C^{\bx}_{*}\branchasymp
    \end{align*}
    as $J\to \infty$. 

    \begin{lemma}
    \label{lem:covia}
        Under conditions (1)-(6), for any covariate parameters $\bbeta_{\bx} = \left \{ \Sigma_{\ba}, G_{\ba} \right \}_{\ba \in \mathcal{A}} \in \mathcal{B}$ and some $\delta_{\bbeta_\bx} > 0$,
        \begin{align*}
        &\sigma_{\min} \left ( \sum_{j=1}^J  \bbE_{\bbeta^*, \widetilde{\bpi}} \left [ \mathrm{diag} \left \{ \begin{pmatrix}
         \bV_j^{\ba} & 2 \bV_j^{\ba} {G_{\ba}}^\top \\
        2 G_{\ba} \bV_j^{\ba} & 4 G_{\ba}  \bV_j^{\ba} {G_{\ba}}^\top  + 2 n_j^{\ba} \Sigma_{\ba}
    \end{pmatrix} \otimes \Sigma_{\ba} \right \}_{\ba \in \mathcal{A}} /t_{J-1} \mid \epochfield \right ] \right ) \\
        &\geq \delta_{\bbeta_{\bx}} \ \ \mathrm{w.p. \ 1}
        \end{align*}
    \end{lemma}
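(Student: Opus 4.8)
The plan is to reduce the minimum-eigenvalue bound to a few scalar and matrix quantities that are each governed by the weak law of large numbers in Theorem~\ref{thm:branchingtreeaverage}. Since $\mathrm{diag}\{\cdot\}_{\ba\in\mathcal{A}}$ is block diagonal and both $\sum_{j=1}^J$ and $\bbE_{\bbeta^*,\widetilde\bpi}[\cdot\mid\epochfield]$ preserve this structure, the minimum eigenvalue of the full matrix equals $\min_{\ba\in\mathcal{A}}$ of that of the $\ba$-block. Writing $\bar V_\ba^J \triangleq t_{J-1}^{-1}\sum_{j=1}^J\bbE_{\bbeta^*,\widetilde\bpi}[\bV_j^\ba\mid\epochfield]$ and $\bar n_\ba^J\triangleq t_{J-1}^{-1}\sum_{j=1}^J\bbE_{\bbeta^*,\widetilde\bpi}[n_j^\ba\mid\epochfield]$, and using that $\Sigma_\ba$ is independent of $j$ so it factors out of the sum and the conditional expectation, the $\ba$-block is the Kronecker product $\widetilde M_\ba^J\otimes\Sigma_\ba$ with
\begin{equation*}
\widetilde M_\ba^J = \begin{pmatrix}\bar V_\ba^J & 2\bar V_\ba^J G_\ba^\top\\ 2G_\ba\bar V_\ba^J & 4G_\ba\bar V_\ba^J G_\ba^\top + 2\bar n_\ba^J\Sigma_\ba\end{pmatrix}.
\end{equation*}
Because $\widetilde M_\ba^J$ and $\Sigma_\ba$ are positive semidefinite, the Kronecker eigenvalue identity gives $\sigma_{\min}(\widetilde M_\ba^J\otimes\Sigma_\ba)=\sigma_{\min}(\widetilde M_\ba^J)\,\sigma_{\min}(\Sigma_\ba)$, so it remains to bound $\sigma_{\min}(\widetilde M_\ba^J)$ and $\sigma_{\min}(\Sigma_\ba)$ away from zero.

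Next I would apply Lemma~\ref{lem:blckpd} to $\widetilde M_\ba^J$. Its Schur complement collapses, since the cross terms cancel, $(2G_\ba\bar V_\ba^J)(\bar V_\ba^J)^{-1}(2\bar V_\ba^J G_\ba^\top)=4G_\ba\bar V_\ba^J G_\ba^\top$, leaving
\begin{equation*}
\bigl(4G_\ba\bar V_\ba^J G_\ba^\top + 2\bar n_\ba^J\Sigma_\ba\bigr) - 4G_\ba\bar V_\ba^J G_\ba^\top = 2\bar n_\ba^J\Sigma_\ba.
\end{equation*}
Hence Lemma~\ref{lem:blckpd} yields $\sigma_{\min}(\widetilde M_\ba^J)\geq\sigma_{\min}(\bar V_\ba^J)\,\sigma_{\min}(2\bar n_\ba^J\Sigma_\ba)=2\bar n_\ba^J\,\sigma_{\min}(\bar V_\ba^J)\,\sigma_{\min}(\Sigma_\ba)$, and combining with the Kronecker identity gives the key inequality $\sigma_{\min}(\widetilde M_\ba^J\otimes\Sigma_\ba)\geq 2\bar n_\ba^J\,\sigma_{\min}(\bar V_\ba^J)\,\sigma_{\min}(\Sigma_\ba)^2$.

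To control $\bar V_\ba^J$ and $\bar n_\ba^J$, I would substitute the conditional expectations computed earlier, $\bbE_{\bbeta^*,\widetilde\bpi}[\bV_j^\ba\mid\epochfield]=\tfrac{m}{|\mathcal{A}|}\sum_{i\in\mathcal{E}_{j-1}}\bX_i^*\bX_i^{*\top}$ and $\bbE_{\bbeta^*,\widetilde\bpi}[n_j^\ba\mid\epochfield]=\tfrac{m}{|\mathcal{A}|}\kappa_{j-1}$, so that $\bar V_\ba^J=\tfrac{m}{|\mathcal{A}|}t_{J-1}^{-1}\sum_{i\in\bbT^*_{J-1}}\bX_i^*\bX_i^{*\top}$ and $\bar n_\ba^J=\tfrac{m}{|\mathcal{A}|}t_{J-1}^{-1}|\bbT^*_{J-1}|$. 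Applying Theorem~\ref{thm:branchingtreeaverage} with $f(\bx)=\bx^*\bx^{*\top}$ (entrywise) and with $f\equiv 1$ gives $\bar V_\ba^J\overset{p}{\to}\tfrac{m}{|\mathcal{A}|}\branchasymp\,\langle\mu,\bx^*\bx^{*\top}\rangle$ and $\bar n_\ba^J\overset{p}{\to}\tfrac{m}{|\mathcal{A}|}\branchasymp$. The stationary second-moment matrix $\langle\mu,\bx^*\bx^{*\top}\rangle=\bbE_\mu[\bX^*\bX^{*\top}]$ is positive definite: by Lemma~\ref{lem:blckpd} its Schur complement equals $\mathrm{Var}_\mu(\bK)$, which by the law of total variance satisfies $\mathrm{Var}_\mu(\bK)\succeq\Sigma_\bk\succ 0$, so $\sigma_{\min}\{\langle\mu,\bx^*\bx^{*\top}\rangle\}>0$.

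Finally, on the event $E_\branchasymp=\{\branchasymp>\delta\}$ both limits are bounded below by positive constants times $\delta$; feeding these into the key inequality and taking the minimum over the finite set $\mathcal{A}$ produces a strictly positive bound $\delta_{\bbeta_\bx}$ that depends on $\bbeta_\bx$ only through the fixed factors $\sigma_{\min}(\Sigma_\ba)^2$. The main obstacle I anticipate is the positive-definiteness step: showing that the limiting design matrix $\bbE_\mu[\bX^*\bX^{*\top}]$ is genuinely positive definite (not merely semidefinite) is precisely where the ergodicity of the auxiliary chain $\bK_j$ and the law-of-total-variance lower bound $\mathrm{Var}_\mu(\bK)\succeq\Sigma_\bk$ are indispensable, together with the fact that the augmented intercept coordinate of $\bX^*$ introduces no degeneracy. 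A secondary technical point is that Theorem~\ref{thm:branchingtreeaverage} delivers convergence in $L^2$, hence in probability, so the ``w.p.\ 1'' conclusion should be read as an eventual lower bound on $E_\branchasymp$, obtained by passing to the probability-one limit of $\branchasymp$ guaranteed by Equation~\ref{eq:branching_converge}.
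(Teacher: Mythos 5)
Your proposal is correct and follows essentially the same route as the paper's proof: two applications of Lemma~\ref{lem:blckpd} (first to collapse the $2\times 2$ block via the Schur complement $2\bar n_\ba^J\Sigma_\ba$, then to show $\bbE_\mu[\bX^*\bX^{*\top}]\succ 0$ via $\mathrm{Var}_\mu(\bK)\succeq\Sigma_\bk$), the Kronecker eigenvalue identity, the tree-averaged WLLN of Theorem~\ref{thm:branchingtreeaverage}, and the lower bound $\branchasymp>\delta$ on $E_\branchasymp$. The only (cosmetic) differences are that you aggregate the blocks over $\ba\in\mathcal{A}$ by a minimum where the paper uses a product, and you correctly flag that the convergence is in probability rather than almost sure — both points on which your write-up is, if anything, slightly more careful than the paper's.
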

    \begin{proof}
    By Lemma~\ref{lem:blckpd}, we know that
		\begin{align*}
		&\lim_{J \to \infty} \sigma_{\min} \left \{ \sum_{j=1}^J \left \{ \begin{pmatrix}
        Q_{j, \ba}^{\bx}& 2 Q_{j, \ba}^{\bx}{G_{\ba}}^\top \\
        2 G_{\ba}Q_{j, \ba}^{\bx} & 4 G_{\ba} Q_{j, \ba}^{\bx}{G_{\ba}}^\top   + 2 Q_{j, \ba}^{\ba}\Sigma_{\ba}
    \end{pmatrix} \otimes \Sigma_{\ba} \right \}/ t_{J-1} \right \} \geq \\
        &\lim_{J \to \infty}  \sigma_{\min} \left (  \sum_{j=1}^J  Q_{j, \ba}^{\bx}/t_{J-1} \right )  \sigma_{\min} \left (  \sum_{j=1}^J   2 Q_{j, \ba}^{\ba} \Sigma_{\ba}/t_{J-1} \right )
		\end{align*}
		because
    \begin{equation*}
        \lim_{J \to \infty} \sum_{j=1}^J \left \{ 4 G_{\ba} Q_{j, \ba}^{\bx}{G_{\ba}}^\top  + 2 Q_{j, \ba}^{\ba}\Sigma_{\ba} - 4G_{\ba}Q_{j, \ba}^{\bx}{Q_{j, \ba}^{\bx}}^{-1} Q_{j, \ba}^{\bx}{G_{\ba}}^\top \right \}/t_{J-1} =  \lim_{J \to \infty} \sum_{j=1}^J   2 Q_{j, \ba}^{\ba} \Sigma_{\ba}/t_{J-1}.
    \end{equation*}
    We know that $\lim_{J \to \infty} \sum_{j=1}^J 2Q_{j, \ba}^{\ba} \Sigma_{\ba}/t_{J-1} = 2m\branchasymp\Sigma_{\ba}/\left |\mathcal{A} \right |$ a.s., so we only need to prove that 
        \begin{equation*}
            \ \sum_{j=1}^J  Q_{j, \ba}^{\bx}/t_{J-1} \overset{p}{\to} \frac{m\branchasymp}{|\mathcal{A}|}\begin{pmatrix}
                 1 & \bbE_{\mu}\left ( \bK_i \right )^\top \\
				\bbE_{\mu}\left ( \bK_i \right )^\top & \bbE_{\mu} \left ( \bK_i \bK_i^\top \right )
        \end{pmatrix}
        \end{equation*}
        is positive definite. By a second application of Lemma~\ref{lem:blckpd},
		\begin{align*}
			\frac{m\branchasymp}{|\mathcal{A}|} \sigma_{\min}\left \{ \begin{pmatrix}
			    1 & \bbE_{\mu}\left ( \bK_i \right )^\top \\
				\bbE_{\mu}\left ( \bK_i \right ) & \bbE_{\mu} \left ( \bK_i \bK_i^\top \right )
                \end{pmatrix}\right \} & \frac{m\branchasymp}{|\mathcal{A}|} \geq \sigma_{\min} \left (  1 \right ) \times \\
                &\hspace{1.3cm}\sigma_{\min} \left \{  \bbE_{\mu} \left ( \bK_i \bK_i^\top \right ) - \bbE_{\mu}\left ( \bK_i \right ) \bbE_{\mu}\left ( \bK_i \right )^\top \right \} \\ 
                &\hspace{1cm}=  \frac{m\branchasymp}{|\mathcal{A}|} \sigma_{\min} \left \{ \mathrm{Var}_{\mu} \left ( \bK_i \right ) \right \} \\
                &\hspace{1cm}\succeq \frac{m\branchasymp}{|\mathcal{A}|} \sigma_{\min}\left (\Sigma_{\bk} \right ).
		\end{align*}
    Note that there exists $\delta_*$ such that $\branchasymp>\delta_*$ by assumption. Using the properties of kronecker products, we know that defining
    \begin{equation*}
        \epsilon_{\ba} \triangleq 2\frac{m^2\delta_*^2}{|\mathcal{A}|^2} \sigma_{\min} \left ( \Sigma_{\bk} \right )\sigma_{\min} \left ( \Sigma_{\ba} \right )^2,
    \end{equation*}
    and
    \begin{equation*}
        \delta_{\bbeta_{\bx}} \triangleq \prod_{\ba \in \mathcal{A}} \epsilon_{\ba}
    \end{equation*}
    completes the proof.
    \end{proof}
    
    By Lemma~\ref{lem:covia}, 
    $\sigma_{\min} \left (\branchasymp C^{\bx}_{*} \right ) \geq \delta_{\bbeta_{\bx}}$.
   Consequently, Assumption~\ref{as:stabalizedvariance} is satisfied for the covariate model. 
    
\textbf{Next, we analyze the reward model component, $\eta_J^{y}$.} 

\begin{align*}
    \frac{\partial^2 \ell_{J} (\bbeta_y)} {\partial\bbeta_y \partial \bbeta_y^\top} = -\sum_{j =1}^J \sum_{i\in \mathcal{E}_j} & \bZ_i {\bZ_i}^\top \left \{ \frac{1}{1 + \exp(-{\bZ_i}^\top\bbeta_y)} \right \} \left \{\frac{1}{1 + \exp \left ({\bZ_i}^\top \bbeta_y \right )} \right \}.
\end{align*}
Remember that $\bZ_i = \sum_{\ba \in \mathcal{A}} h_{\ba}(\bX_i) \bbI \left (\bA_{r^i} = \ba \right )$, where $h_{\ba}$ is a continuous function in $\bX_i$ that depends on $\ba \in \mathcal{A}$. Consequently,
\begin{align*}
     \frac{\partial^2 \ell_{J} (\bbeta_y)} {\partial\bbeta_y \partial \bbeta_y^\top} =& -\sum_{j =1}^J \sum_{i\in \mathcal{E}_j} \sum_{\ba \in \mathcal{A}} h_{\ba}(\bX_i) h_{\ba}(\bX_i)^\top \left \{ \frac{1}{1 + \exp(-h_{\ba}(\bX_i)^\top\bbeta_y)} \right \} \times 
     \\ &\left \{\frac{1}{1 + \exp \left (h_{\ba}(\bX_i)^\top \bbeta_y \right )} \right \} \bbI \left (\bA_{r^i} = \ba \right ).
\end{align*}
We now express
\begin{align*}
   \eta^y_J &=  \sum_{j =1}^J \bbE \Bigg \{ \sum_{i\in \mathcal{E}_j} \sum_{\ba \in \mathcal{A}} h_{\ba}(\bX_i) h_{\ba}(\bX_i)^\top \times \\
   &\hspace{2cm} \left \{ \frac{1}{1 + \exp(-h_{\ba}(\bX_i)^\top\bbeta^*_y)} \right \} \left \{\frac{1}{1 + \exp \left (h_{\ba}(\bX_i)^\top \bbeta^*_y \right )} \right \} \times \\
   &\hspace{2cm}  \bbI \left (\bA_{r^i} = \ba \right )\mid \epochfield \Bigg \} \\
   &\sum_{j =1}^J \bbE \Bigg \{ \sum_{i\in \mathcal{E}_j} \sum_{\ba \in \mathcal{A}} \sum_{l \in \mathbb{G}_1} \mathbb{I}(il \in \bbT^*) h_{\ba}(\bX_i) h_{\ba}(\bX_i)^\top \times \\
   &\hspace{2cm}\left \{ \frac{1}{1 + \exp(-h_{\ba}(\bX_i)^\top\bbeta^*_y)} \right \} \left \{\frac{1}{1 + \exp \left (h_{\ba}(\bX_i)^\top \bbeta^*_y \right )} \right \} \times \\
   &\hspace{2cm}  \bbI \left (\bA_{r^i} = \ba \right )\mid \epochfield \Bigg \} \\
   &=  \sum_{j =1}^J\sum_{i\in \mathcal{E}_{j-1}} \sum_{l \in \mathbb{G}_1} \bbE \Bigg \{ \mathbb{I}(il \in \bbT^*)  \bbE \Bigg \{ \sum_{\ba \in \mathcal{A}} h_{\ba}(\bX_{il}) h_{\ba}(\bX_{il})^\top \left \{ \frac{1}{1 + \exp(-h_{\ba}(\bX_{il})^\top\bbeta^*_y)} \right \} \times \\
   &\hspace{2cm} \left \{\frac{1}{1 + \exp \left (h_{\ba}(\bX_{il})^\top \bbeta^*_y \right )} \right \} \bbI \left (\bA_{i} = \ba \right ) \mid \bX_i, \ il \in \bbT^*  \Bigg \} \mid \bX_i \Bigg \}\\
   &=  \sum_{j =1}^J\sum_{i\in \mathcal{E}_{j-1}} \sum_{l \in \mathbb{G}_1} \bbE \Bigg \{ \mathbb{I}(il \in \bbT^*)  \frac{1}{|\mathcal{A}|} \sum_{\ba' \in \mathcal{A}} \sum_{\ba \in \mathcal{A}} \bbE \Bigg \{h_{\ba}(\bX_{il}) h_{\ba}(\bX_{il})^\top \\
   &\hspace{2cm} \left \{ \frac{1}{1 + \exp(-h_{\ba}(\bX_{il})^\top\bbeta^*_y)} \right \} \times \\
   &\hspace{2cm} \left \{\frac{1}{1 + \exp \left (h_{\ba}(\bX_{il})^\top \bbeta^*_y \right )} \right \} \bbI \left (\bA_{i} = \ba \right ) \mid \bA_i = \ba', \bX_i, \ il \in \bbT^*  \Bigg \} \mid \bX_i \Bigg \}\\
   &=  \sum_{j =1}^J\sum_{i\in \mathcal{E}_{j-1}} \sum_{l \in \mathbb{G}_1} \bbE \Bigg \{ \mathbb{I}(il \in \bbT^*)  \frac{1}{|\mathcal{A}|} \sum_{\ba \in \mathcal{A}} \bbE \Bigg \{  h_{\ba}(\bX_{il}) h_{\ba}(\bX_{il})^\top \\
   &\hspace{2cm} \left \{ \frac{1}{1 + \exp(-h_{\ba}(\bX_{il})^\top\bbeta^*_y)} \right \} \times \\
  &\hspace{2cm} \left \{\frac{1}{1 + \exp \left (h_{\ba}(\bX_{il})^\top \bbeta^*_y \right )} \right \} \mid \bA_i = \ba, \bX_i, \ il \in \bbT^* \Bigg \} \mid \bX_i \Bigg \}
\end{align*}
We label
\begin{align*}
    &f(\bX_{il}) =   \frac{1}{|\mathcal{A}|} \sum_{\ba \in \mathcal{A}} \bbE \Bigg \{  h_{\ba}(\bX_{il}) h_{\ba}(\bX_{il})^\top \left \{ \frac{1}{1 + \exp(-h_{\ba}(\bX_{il})^\top\bbeta^*_y)} \right \} \times \\
   &\hspace{2cm} \left \{\frac{1}{1 + \exp \left (h_{\ba}(\bX_{il})^\top \bbeta^*_y \right )} \right \} \mid \bA_i = \ba, \bX_i \Bigg \},
\end{align*}
where $f$ is continuous because $h_{\ba}$ is continuous (and it is a composition of continuous functions). Consequently,
\begin{align*}
    \eta^y_J&= \sum_{j =1}^J \sum_{i\in \mathcal{E}_{j-1}} \sum_{l \in \mathbb{G}_1} \bbE \left \{ \mathbb{I}(il \in \bbT^*) f(\bX_{il})\mid  \bX_i   \right \} \\
    &= \sum_{j =1}^J \sum_{i\in \mathcal{E}_{j-1}} L P_1^* \left \{f(\bX_{i}) \right \}.
\end{align*}
Lastly, we can define $f^*(\bX_{i}) = L P_1^* \left \{f(\bX_{i}) \right \}$ and
\begin{align*}
    \eta_J^y = \sum_{j =1}^J \sum_{i\in \mathcal{E}_{j-1}} f^*(\bX_{i}).
\end{align*}
Note that $f^*$ is continuous and bounded since compositions and integrals of continuous functions are continuous.
It is now clear that
\begin{align*}
    \eta_J^y/t_{J-1} = \sum_{j =1}^J \sum_{i\in \mathcal{E}_{j-1}} f^*(\bX_{i})/t_{J-1}  \overset{p}{\to} \branchasymp C_*^y,
\end{align*}
where $C_*^y$ is a constant matrix. 
\begin{lemma}
    \label{lem:rewia}
    Under conditions (1)-(6), for any $\bbeta_y \in \mathcal{B}$, there exists $\delta_{\bbeta_y}$ such that
    \begin{align*}
    &\lim_{J \to \infty} \sigma_{\min} \Bigg ( \sum_{j =1}^J \bbE \Bigg \{ \sum_{i\in \mathcal{E}_j} \sum_{\ba \in \mathcal{A}} h_{\ba}(\bX_i) h_{\ba}(\bX_i)^\top \left \{ \frac{1}{1 + \exp(-h_{\ba}(\bX_i)^\top\bbeta_y)} \right \} \times \\
    &\hspace{2cm} \left \{\frac{1}{1 + \exp \left (h_{\ba}(\bX_i)^\top \bbeta_y \right )} \right \} \times \\
    &\hspace{2cm}  \bbI \left (\bA_{i} = \ba \right )\mid \epochfield \Bigg \}/t_{J-1} \Bigg ) \geq \delta_{\bbeta_y} \ \ \mathrm{w.p. \ 1}
    \end{align*}
    
\end{lemma}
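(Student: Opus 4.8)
The plan is to mirror the strategy used for the covariate component in Lemma~\ref{lem:covia}: reduce the claimed eigenvalue bound to the positive definiteness of a fixed (deterministic) matrix, and then exploit the nondegeneracy of the stationary covariate law together with the strict positivity of the logistic variance weights. The preceding derivation already shows that the matrix inside $\sigma_{\min}(\cdot)$ equals $\eta_J^y/t_{J-1}$ and that $\eta_J^y/t_{J-1}\overset{p}{\to}\branchasymp\,C_*^y$ with $C_*^y=\langle\mu,f^*\rangle$ a constant matrix. Since $\sigma_{\min}$ is continuous and $\branchasymp>\delta_*$ with probability one (Assumption~\ref{as:generation_asymptotics} and Equation~\ref{eq:branching_converge_2}), it suffices to produce a deterministic $\epsilon_y>0$ with $\sigma_{\min}(C_*^y)\ge\epsilon_y$, and then set $\delta_{\bbeta_y}\triangleq\delta_*\epsilon_y$; note $\sigma_{\min}(\branchasymp C_*^y)=\branchasymp\,\sigma_{\min}(C_*^y)$ because $\branchasymp$ is a nonnegative scalar.

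First I would simplify $C_*^y$. Using $f^*=L P_1^* f$ together with the identity $L P_1^*=mQ$ and the stationarity $\langle\mu,Qf\rangle=\langle\mu,f\rangle$ of $\mu$ under the auxiliary kernel $Q$, one gets $C_*^y=m\langle\mu,f\rangle$, so the problem reduces to showing $\langle\mu,f\rangle\succ0$. For a unit vector $\bc$,
\[
\bc^\top\langle\mu,f\rangle\bc=\frac{1}{|\mathcal{A}|}\sum_{\ba\in\mathcal{A}}\bbE\!\left[\{\bc^\top h_{\ba}(\bX)\}^2\,w_{\ba}(\bX)\right],
\]
where $w_{\ba}(\bx)\triangleq\{1+\exp(-h_{\ba}(\bx)^\top\bbeta_y)\}^{-1}\{1+\exp(h_{\ba}(\bx)^\top\bbeta_y)\}^{-1}$ is the logistic variance weight and the expectation is over the relevant marginal covariate law. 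Because $\mathcal{X}$ is compact (Condition (C4)), $h_{\ba}$ is continuous, and $\bbeta_y$ lies in the bounded set $\mathcal{B}$ (Condition (C5)), the scalar $h_{\ba}(\bx)^\top\bbeta_y$ is uniformly bounded; hence there is a deterministic $w_{\min}>0$ with $w_{\ba}(\bx)\ge w_{\min}$ for all $\bx\in\mathcal{X}$ and $\ba\in\mathcal{A}$.

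It then remains to bound the averaged design second-moment matrix $D\triangleq\frac{1}{|\mathcal{A}|}\sum_{\ba}\bbE\{h_{\ba}(\bX)h_{\ba}(\bX)^\top\}$ from below, since $\bc^\top\langle\mu,f\rangle\bc\ge w_{\min}\,\bc^\top D\,\bc$. Here I would use the block structure of the reward design (the intercept/main-effect block $(1,\bX)$ shared across coupons, plus one coupon-specific interaction block $\bbI(\bA=\ba)\bX$ per type): averaging over the $|\mathcal{A}|$ coupon types, each receiving weight $1/|\mathcal{A}|$ under the uniform stabilizing policy $\widetilde{\bpi}$, keeps every interaction block active. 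Applying Lemma~\ref{lem:blckpd} (Schur complement) to this block matrix, the required positive definiteness follows from $\mathrm{Var}(\bX)\succeq\Sigma_{\bk}\succ0$ — the same lower bound on the nondegeneracy of the covariate law established in the covariate verification, where $\Sigma_{\bk}=\frac{1}{|\mathcal{A}|}\sum_{\ba}\Sigma_{\ba}$ and $\Sigma_{\ba}\succ0$. This yields $\sigma_{\min}(D)\ge\epsilon_D>0$, whence $\sigma_{\min}(C_*^y)\ge m\,w_{\min}\,\epsilon_D=:\epsilon_y$ and the claim with $\delta_{\bbeta_y}=\delta_*\epsilon_y$.

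The main obstacle I anticipate is this last step: carefully propagating the nondegeneracy of the covariate distribution through the coupon-interaction block structure to obtain a genuinely positive lower bound on $\sigma_{\min}(D)$, uniformly in the fixed parameters. In particular, one must verify that the per-coupon design vectors $h_{\ba}$ jointly span the full parameter space once the interaction blocks are weighted by the (bounded-below) coupon probabilities, and that the Schur complements arising in Lemma~\ref{lem:blckpd} inherit the $\Sigma_{\bk}\succ0$ bound rather than degenerating. A secondary technical point, handled exactly as in Lemma~\ref{lem:covia}, is that Theorem~\ref{thm:branchingtreeaverage} gives $L^2$ (hence in-probability) convergence while the conclusion is stated with probability one; this is reconciled by noting that $C_*^y$, $w_{\min}$, and $\epsilon_D$ are deterministic and only the scalar factor $\branchasymp$ is random and bounded below.
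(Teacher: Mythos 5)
Your proposal follows essentially the same route as the paper's proof: lower-bound the logistic variance weight by a positive constant $c_{\exp}$ (your $w_{\min}$) using compactness of $\mathcal{X}$ and boundedness of $\mathcal{B}$, reduce the problem to positive definiteness of the averaged design second-moment matrix, and establish that via the block/Schur-complement lemma (Lemma~\ref{lem:blckpd}) together with the non-degeneracy of the Gaussian recruit-covariate covariances $\Sigma_{\ba}\succ 0$. The one substantive difference is the order of operations: you pass first to the in-probability limit $\eta_J^y/t_{J-1}\overset{p}{\to}\branchasymp C_*^y$ and then bound $\sigma_{\min}(C_*^y)$, which on its own only delivers the eigenvalue bound in probability, not ``w.p.\ 1'' as the lemma asserts. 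The paper instead applies the bound at the level of each conditional expectation term, showing $\bbE\{\cdots\mid\epochfield\}\succeq c_{\exp}\,m\,\Delta_2$ surely (the conditional covariances $\Sigma_{\ba}$ do not depend on $\bX_i$), so that the only limit needed is $\overline{\kappa}_J/t_{J-1}\to\branchasymp$, which holds almost surely; your argument is repaired by exactly this reordering, i.e.\ by applying your $w_{\min}$ and Schur-complement bounds before summing and taking $J\to\infty$.
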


\begin{proof}
Label 
    \begin{align*}
    \sigma_y &= \lim_{J \to \infty} \sigma_{\min} \Bigg ( \sum_{j =1}^J \bbE \Bigg \{ \sum_{i\in \mathcal{E}_j} \sum_{\ba \in \mathcal{A}} h_{\ba}(\bX_i) h_{\ba}(\bX_i)^\top \left \{ \frac{1}{1 + \exp(-h_{\ba}(\bX_i)^\top\bbeta_y)} \right \} \times \\
    &\hspace{2cm} \left \{\frac{1}{1 + \exp \left (h_{\ba}(\bX_i)^\top \bbeta_y \right )} \right \} \times \\
    &\hspace{2cm}  \bbI \left (\bA_{i} = \ba \right )\mid \epochfield \Bigg \} \Bigg ).
    \end{align*}
We start with analyzing the quantity
\begin{align*}
    &\sum_{l \in \mathbb{G}_1} \bbE \Bigg \{ \mathbb{I}(il \in \bbT^*)  \frac{1}{|\mathcal{A}|} \sum_{\ba \in \mathcal{A}} \bbE \Bigg \{  h_{\ba}(\bX_{il}) h_{\ba}(\bX_{il})^\top \left \{ \frac{1}{1 + \exp(-h_{\ba}(\bX_{il})^\top\bbeta_y)} \right \} \times \\
  &\hspace{2cm} \left \{\frac{1}{1 + \exp \left (h_{\ba}(\bX_{il})^\top \bbeta_y \right )} \right \} \mid il \in \bbT^*, \bA_i = \ba, \bX_i \Bigg \} \mid \bX_i \Bigg \},
\end{align*}
Because $\mathcal{X}$ is compact and $h_{\ba}$ is continuous, we know that
\begin{equation*}
    c_{\exp} \triangleq \min_{\bx \in \mathcal{X}} \min_{\ba \in \mathcal{A}}  \left \{ \frac{1}{1 + \exp(-h_{\ba}(\bx)^\top\bbeta_y)} \right \}  \left \{\frac{1}{1 + \exp \left (h_{\ba}(\bx)^\top \bbeta_y \right )} \right \},
\end{equation*}
where $c_{\exp} > 0$ by the extreme value theorem. Consequently,
\begin{align*}
  &\succ c_{\exp} \sum_{l \in \mathbb{G}_1} \bbE \Bigg \{ \mathbb{I}(il \in \bbT^*)  \frac{1}{|\mathcal{A}|} \sum_{\ba \in \mathcal{A}} \bbE \Bigg \{  h_{\ba}(\bX_{il}) h_{\ba}(\bX_{il})^\top \mid \bA_i = \ba, \bX_i, il \in \bbT^* \Bigg \} \mid \bX_i \Bigg \}.
\end{align*}
To progress we need to be more explicit about $h_{\ba}$. In our model $Z_i$ is an interaction between $X_i$ and an indicator of the coupon type given to pariticipant $i$. We demonstrate the positive definite property for a paradigm where we have two coupon types $\mathcal{A} =\{ \ba, \ba' \}$, making $\bZ_i = \left (1, \bX_i, \bX_i \mathbb{I}(\bA_i = \ba) \right )$. However, this logic extends to a larger number of coupon types. We find that
\begin{align*}
    &\frac{1}{|\mathcal{A}|} \sum_{\ba \in \mathcal{A}} \bbE \Bigg \{  \bZ_{il} \bZ_{il}^\top \mid il \in \bbT^*, \bA_i = \ba, \bX_i \Bigg \} \\
    &= \frac{1}{|\mathcal{A}|} \sum_{\ba \in \mathcal{A}} \bbE \Bigg \{  \bZ_{il} \bZ_{il}^\top \mid \bA_i = \ba, \bX_i \Bigg \} \\
    &= \mathbb{E} \left \{ \begin{pmatrix}
        1 & \bX_{il}^\top & 0 \\
        \bX_{il}^\top & \bX_{il}\bX_{il}^\top & 0 \\
        0 & 0 & 0
    \end{pmatrix} | \bX_i, \bA_i = \ba' \right \}  \\
    &\hspace{1cm}+ \mathbb{E} \left \{ \begin{pmatrix}
        1 & \bX_{il}^\top &  \bX_{il}^\top \\
        \bX_{il}^\top & \bX_{il}\bX_{il}^\top &    \bX_{il} \bX_{il}^\top \\
        {\bX_{il}}^\top &   \bX_{il}\bX_{il}^\top &   \bX_{il} {\bX_{il}}^\top
    \end{pmatrix} | \bX_i, \bA_i = \ba \right \}.
\end{align*}
We label $\mu_{\ba, \bf{x}} = \mathbb{E}  \left \{ \bX_{il} | \bX_i = \bf{x}, \bA_i = \ba \right \} $ and $\Delta_{\ba} = \bbE \left \{ \bX_{il}\bX_{il}^\top | \bX_i = \bx, \bA_i = \ba \right \}$. Note that $\Delta_{\ba}$ does not depend on $\bX_i$. We find that
\begin{align*}
    \Delta_0 \triangleq \frac{1}{|\mathcal{A}|} \sum_{\ba \in \mathcal{A}} \bbE \Bigg \{  \bZ_{il} \bZ_{il}^\top \mid \bA_i = \ba, \bX_i = \bf{x} \Bigg \}  &= \begin{pmatrix}
        1 & \mu_{\ba', \bf{x}}^\top  & 0 \\
        \mu_{\ba', \bf{x}}  & \Delta_{\ba'} &  0 \\
        0 & 0 & 0
    \end{pmatrix} + \begin{pmatrix}
        1 & \mu_{\ba, \bf{x}}^\top &  \mu_{\ba, \bf{x}}^\top \\
        \mu_{\ba, \bf{x}} & \Delta_{\ba}  &    \Delta_{\ba} \\
        \mu_{\ba, \bf{x}}  &   \Delta_{\ba} &   \Delta_{\ba}
    \end{pmatrix} \\
    &= \begin{pmatrix}
        2 & \mu_{\ba', \bf{x}}^\top + \mu_{\ba, \bf{x}}^\top   & \mu_{\ba, \bf{x}}^\top \\
        \mu_{\ba', \bf{x}} + \mu_{\ba, \bf{x}}  & \Delta_{\ba'} + \Delta_{\ba}  &  \Delta_{\ba} \\
        \mu_{\ba, \bf{x}}  & \Delta_{\ba} &  \Delta_{\ba}
    \end{pmatrix}.
\end{align*}
By Lemma~\ref{lem:blckpd}, we know that $\sigma_{\min} \left ( \Delta_0\right ) \geq 2\sigma_{\min}(\Delta_1)$, where
\begin{align*}
    \Delta_1 &\triangleq \begin{pmatrix}
        \Sigma_{\ba} + \Sigma_{\ba'} + \frac{(\mu_{\ba, \bf{x}} - \mu_{\ba', \bf{x}})(\mu_{\ba, \bf{x}} - \mu_{\ba', \bf{x}})^\top }{2} &
        \Sigma_{\ba} + \frac{(\mu_{\ba, \bf{x}}- \mu_{\ba', \bf{x}}){\mu_{\ba, \bf{x}}}^\top}{2} \\
        \Sigma_{\ba} + \frac{\mu_{\ba, \bf{x}}(\mu_{\ba, \bf{x}}- \mu_{\ba', \bf{x}})^\top}{2} & \Sigma_{\ba} + \frac{\mu_{\ba, \bf{x}}{\mu_{\ba, \bf{x}}}^\top}{2}
    \end{pmatrix} \\
    &= \begin{pmatrix}
        \Sigma_{\ba} + \Sigma_{\ba'} &
        \Sigma_{\ba} \\
        \Sigma_{\ba}  & \Sigma_{\ba} 
    \end{pmatrix} + \begin{pmatrix}
        \frac{(\mu_{\ba, \bf{x}} - \mu_{\ba', \bf{x}})(\mu_{\ba, \bf{x}} - \mu_{\ba', \bf{x}})^\top }{2} &
        \frac{(\mu_{\ba, \bf{x}}- \mu_{\ba', \bf{x}}){\mu_{\ba, \bf{x}}}^\top}{2} \\
        \frac{\mu_{\ba, \bf{x}}(\mu_{\ba, \bf{x}}- \mu_{\ba', \bf{x}})^\top}{2} & \frac{\mu_{\ba, \bf{x}}{\mu_{\ba, \bf{x}}}^\top}{2}
    \end{pmatrix} \\
    &\succeq \begin{pmatrix}
        \Sigma_{\ba} + \Sigma_{\ba'} &
        \Sigma_{\ba} \\
        \Sigma_{\ba}  & \Sigma_{\ba} 
        \end{pmatrix}.
\end{align*}
We label 
\begin{equation*}
    \Delta_2 \triangleq \begin{pmatrix}
        \Sigma_{\ba} + \Sigma_{\ba'} &
        \Sigma_{\ba} \\
        \Sigma_{\ba}  & \Sigma_{\ba} 
    \end{pmatrix}.
\end{equation*}
By Lemma~\ref{lem:blckpd} again, we find that
\begin{align*}
    \sigma_{\min} \left \{ \Delta_2 \right \} & \geq \sigma_{\min}(\Sigma_{\ba})\sigma_{\min} (\Sigma_{\ba} + \Sigma_{\ba'} - \Sigma_{\ba} {\Sigma_{\ba}}^{-1} \Sigma_{\ba}) \\
    &= \sigma_{\min}(\Sigma_{\ba})\sigma_{\min} (\Sigma_{\ba'}) > 0.
\end{align*}
We can conclude that
\begin{align*}
    &c_{\exp} \sum_{l \in \mathbb{G}_1} \bbE \Bigg \{ \mathbb{I}(il \in \bbT^*)  \frac{1}{|\mathcal{A}|} \sum_{\ba \in \mathcal{A}} \bbE \Bigg \{  h_{\ba}(\bX_{il}) h_{\ba}(\bX_{il})^\top \mid \bA_i = \ba, \bX_i \Bigg \} \mid \bX_i \Bigg \} \\
    & \succ c_{\exp} \sum_{l \in \mathbb{G}_1} \bbE \Bigg \{ \mathbb{I}(il \in \bbT^*)  \Delta_2 \mid \bX_i \Bigg \}  \\
    & \succ c_{\exp}m  \Delta_2. 
\end{align*}
Therefore,
\begin{align*}
    \sigma_y &\geq \lim_{J \to \infty} \sum_{j =1}^J \sum_{i\in \mathcal{E}_{j-1}} \frac{1}{t_{J-1}} c_{\exp}m \sigma_{\min} \left ( \Delta_2 \right ) \\
    &\geq \lim_{J \to \infty} \frac{\overline{\kappa}_J}{t_{J-1}} c_{\exp}m \sigma_{\min} \left ( \Delta_2 \right ) \\
    &= \branchasymp c_{\exp}m \sigma_{\min} \left ( \Delta_2 \right ).
\end{align*}
 Note that there exists $\delta_*$ such that $\branchasymp>\delta_*$ by assumption. Defining
 \begin{equation*}
     \delta_{\bbeta_y} \triangleq \delta^* c_{\exp}m \sigma_{\min} \left ( \Delta_2 \right )
 \end{equation*}
  completes the proof.
\end{proof}

We conclude that $\sigma_{\min}\left ( \branchasymp C_*^y \right ) \geq  \delta_{\bbeta^*_y} $ and we can conclude \textbf{Assumption~\ref{as:stabalizedvariance}} is satisfied for the reward model component of the branching process.

\textbf{Next, we analyze the time model component, $\eta_J^{t}$.}  
The Hessian for this part of the model is
\begin{align*}
    &\frac{\partial^2 \ell_{J}(\zeta)}{\partial\zeta ^2}  =  \sum_{j = 1}^J \sum_{i \in \epochgroup}  -M_i \left ( \frac{1}{\zeta^2}  + \frac{ae^{-\zeta a} - b^2 e^{-\zeta b}}{ [e^{-\zeta a} -e^{-\zeta b}]^2 } \right ).
\end{align*}
We get that
\begin{align*}
    \lim_{J \to \infty } \eta_J^t/t_{J-1} &= \lim_{J \to \infty }  \overline{\kappa}_J/t_{J-1} \left ( \frac{1}{{\zeta^*}^2}  + \frac{ae^{-{\zeta^*} a} - b^2 e^{-{\zeta^*} b}}{ [e^{-{\zeta^*} a} -e^{-{\zeta^*} b}]^2 } \right )m \\
    &= \branchasymp \left ( \frac{1}{{\zeta^*}^2}  + \frac{ae^{-{\zeta^*} a} - b^2 e^{-{\zeta^*} b}}{ [e^{-{\zeta^*} a} -e^{-{\zeta^*} b}]^2 } \right )m  \\
    &> 0.
\end{align*}
 Where the last inequality follows because there exists $\delta_*$ such that $\branchasymp>\delta_*$ by assumption. We can conclude \textbf{Assumption~\ref{as:stabalizedvariance}} is satisfied for the time model component of the branching process.

\textbf{Lastly, we analyze the family size model component, $\eta_J^{m}$.}
\begin{align*}
    &\frac{\partial^2 \ell_{J}(\tau)}{\partial \tau^2} = \sum_{j = 1}^J \sum_{i \in \epochgroup} - \tau^2 \frac{ \left \{ \sum_{\ell=k}^{|\mathbf{A}_i|} e^{\tau \ell}/\ell! \right \} \left \{  \sum_{\ell=k}^{|\mathbf{A}_i|} \ell^2 e^{\tau \ell}/\ell! \right \} -  \left \{  \sum_{\ell=k}^{|\mathbf{A}_i|} \ell e^{\tau \ell}/\ell! \right \}^2 }{\left \{ \sum_{\ell=k}^{|\mathbf{A}_i|} e^{\tau \ell}/\ell! \right \}^2}.
\end{align*}
We find
\begin{align*}
    &\lim_{J \to \infty}  \eta_J/t_{J-1} \\
    &= \lim_{J \to \infty} \sum_{j = 1}^J \sum_{i \in \epochgroup}  {\tau^*}^2 \frac{ \left \{ \sum_{\ell=k}^{|\mathbf{A}_i|} e^{{\tau^*} \ell}/\ell! \right \} \left \{  \sum_{\ell=k}^{|\mathbf{A}_i|} \ell^2 e^{{\tau^*} \ell}/\ell! \right \} -  \left \{  \sum_{\ell=k}^{|\mathbf{A}_i|} \ell e^{{\tau^*} \ell}/\ell! \right \}^2 }{\left \{ \sum_{\ell=k}^{|\mathbf{A}_i|} e^{{\tau^*} \ell}/\ell! \right \}^2}/t_{J-1} \\
    &=  \lim_{J \to \infty} \frac{\overline{\kappa}_J}{t_{J-1}} {\tau^*}^2 \frac{ \left \{ \sum_{\ell=k}^{L} e^{{\tau^*} \ell}/\ell! \right \} \left \{  \sum_{\ell=k}^{L} \ell^2 e^{{\tau^*} \ell}/\ell! \right \} -  \left \{  \sum_{\ell=k}^{L} \ell e^{{\tau^*} \ell}/\ell! \right \}^2 }{\left \{ \sum_{\ell=k}^{L} e^{{\tau^*} \ell}/\ell! \right \}^2} \\
    &=  \branchasymp {\tau^*}^2 \frac{ \left \{ \sum_{\ell=k}^{L} e^{{\tau^*} \ell}/\ell! \right \} \left \{  \sum_{\ell=k}^{L} \ell^2 e^{{\tau^*} \ell}/\ell! \right \} -  \left \{  \sum_{\ell=k}^{L} \ell e^{{\tau^*} \ell}/\ell! \right \}^2 }{\left \{ \sum_{\ell=k}^{L} e^{{\tau^*} \ell}/\ell! \right \}^2}
\end{align*}
 Note that there exists $\delta_*$ such that $\branchasymp>\delta_*$ by assumption. We can conclude \textbf{Assumption~\ref{as:stabalizedvariance}} is satisfied for the family model component of the branching process.

In conclusion, we have confirmed that 
\begin{equation*}
    \lim_{J \to \infty} \eta_J/t_{J-1} = \branchasymp\mathbb{C}
\end{equation*}
where $\branchasymp\mathbb{C} \succ 0$ w.p. 1 on the event $E_\branchasymp$. We establish one more lemma that will be useful in the consistency proof of Section~\ref{app_sec:branch_consistency}.

\begin{lemma}
     \label{lem:iatimefam}
    Recall in the model described in Equation~\ref{eq:rdsmod}, $\bbeta_t \triangleq \zeta$ and $\bbeta_m \triangleq \tau$. Under conditions (1)-(6), for any $\zeta, \tau \in \mathcal{B}$, there exists $\delta_{\bbeta_t}, \delta_{\bbeta_m} > 0$,
    \begin{align*}
       &\lim_{J \to \infty} \sum_{j = 1}^J \bbE \left [  \sum_{i \in \epochgroup} M_i \left ( \frac{1}{\zeta^2}  + \frac{ae^{-\zeta a} - b^2 e^{-\zeta b}}{ [e^{-\zeta a} -e^{-\zeta b}]^2 } \right ) \mid \epochfield \right ]/t_{J-1} \geq \delta_{\bbeta_t} \ \mathrm{a.s.} \\
       &\lim_{J \to \infty} \sum_{j = 1}^J \bbE \left \{  \sum_{i \in \epochgroup}  {\tau}^2 \frac{ \left \{ \sum_{\ell=k}^{|\mathbf{A}_i|} e^{{\tau} \ell}/\ell! \right \} \left \{  \sum_{\ell=k}^{|\mathbf{A}_i|} \ell^2 e^{{\tau} \ell}/\ell! \right \} -  \left \{  \sum_{\ell=k}^{|\mathbf{A}_i|} \ell e^{{\tau} \ell}/\ell! \right \}^2 }{\left \{ \sum_{\ell=k}^{|\mathbf{A}_i|} e^{{\tau} \ell}/\ell! \right \}^2} \mid \epochfield \right \}/t_{J-1} \\
       &\hspace{1cm}\geq \delta_{\bbeta_m} \ \mathrm{a.s.}
    \end{align*}
    on event $E_\branchasymp$.
\end{lemma}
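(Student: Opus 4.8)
The plan is to exploit the fact that both bounds have exactly the same structure already encountered in the verification of Assumption~\ref{as:stabalizedvariance}: each summand is a \emph{deterministic} per-observation Fisher information (made constant across $i$ by Condition~(C1)) multiplied by the offspring count $M_i$, and the conditional expectation collapses the randomness in $M_i$ to the mean offspring number $m$. Concretely, write the time-model factor as $g_t(\zeta) \triangleq \tfrac{1}{\zeta^2} + \tfrac{t_{\min}e^{-\zeta t_{\min}} - t_{\max}^2 e^{-\zeta t_{\max}}}{[e^{-\zeta t_{\min}}-e^{-\zeta t_{\max}}]^2}$ and the family-size factor as $g_m(\tau) \triangleq \tau^2\,\tfrac{\{\sum_{\ell=k}^{L} e^{\tau\ell}/\ell!\}\{\sum_{\ell=k}^{L}\ell^2 e^{\tau\ell}/\ell!\}-\{\sum_{\ell=k}^{L}\ell e^{\tau\ell}/\ell!\}^2}{\{\sum_{\ell=k}^{L}e^{\tau\ell}/\ell!\}^2}$. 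Under Condition~(C1) the upper summation limit is $|\mathbf{A}_i|=L$ for every $i$, so neither $g_t$ nor $g_m$ depends on $i$; they are fixed positive constants once $\zeta$, respectively $\tau$, is fixed.

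For the \textbf{time model}, I would factor $g_t(\zeta)$ out of the conditional expectation and use that, under the branching working model, $M_i$ is independent of $\epochfield$ given $\bA_i$ with mean $m$ (independent of the allocation by (C1)), so $\bbE\{\sum_{i\in\epochgroup} M_i \mid \epochfield\} = m\kappa_{j-1}$. Summing over $j$ and using $\sum_{j=1}^{J}\kappa_{j-1}=\overline{\kappa}_J$ gives
\begin{equation*}
\sum_{j=1}^{J}\bbE\left\{\sum_{i\in\epochgroup}M_i\,g_t(\zeta)\,\middle|\,\epochfield\right\}\big/t_{J-1}
= g_t(\zeta)\,m\,\overline{\kappa}_J/t_{J-1}.
\end{equation*}
Positivity of $g_t(\zeta)$ for each fixed $\zeta\in\mathcal{B}$ follows because it is the Fisher information of the truncated-exponential observation model, which is the variance of a non-degenerate one-parameter exponential family (non-degenerate since $t_{\min}<t_{\max}$ under (C1)) and hence strictly positive. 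The \textbf{family-size model} is handled identically: $g_m(\tau)$ factors out, the same conditional-expectation computation yields $g_m(\tau)\,m\,\overline{\kappa}_J/t_{J-1}$, and $g_m(\tau)>0$ for each fixed $\tau\in\mathcal{B}$ because it is (a positive multiple of) the variance of the truncated-Poisson count on $\{k,\dots,L\}$, which is non-degenerate and of full rank by the exponential-family argument already invoked for Assumption~\ref{as:differentiable}.

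It then remains to take $J\to\infty$. I would invoke Theorem~\ref{thm:branchingtreeaverage} with $f\equiv 1$, together with the almost-sure generation-ratio convergence of Equation~(\ref{eq:branching_converge_2}) and a Toeplitz/Cesàro summation (Lemma~\ref{lem:toep}, as packaged in Lemma~\ref{lem:toeplitzknockoff}), to conclude $\overline{\kappa}_J/t_{J-1}\to\branchasymp$ almost surely. On the event $E_\branchasymp$ we have $\branchasymp>\delta$, so both limits are bounded below by the strictly positive constants $\delta_{\bbeta_t}\triangleq g_t(\zeta)\,m\,\delta$ and $\delta_{\bbeta_m}\triangleq g_m(\tau)\,m\,\delta$, which is the claim. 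The only genuinely delicate point is the last step's upgrade from the $L^2$ convergence of Theorem~\ref{thm:branchingtreeaverage} to the \emph{almost sure} statement the lemma asserts: this is why I rely on the a.s. half of Equation~(\ref{eq:branching_converge_2}) rather than on Theorem~\ref{thm:branchingtreeaverage} alone, and on the Toeplitz argument to transfer a.s. convergence of $m^{-j}\kappa_j$ to a.s. convergence of the normalized cumulative count $\overline{\kappa}_J/t_{J-1}$.
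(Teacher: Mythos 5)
Your proof is correct and follows essentially the same route as the paper's: factor the deterministic per-observation information term out of the conditional expectation, reduce $\bbE\{\sum_{i\in\epochgroup}M_i\mid\epochfield\}$ to $m\kappa_{j-1}$ so the whole sum becomes $m\,\overline{\kappa}_J$ times that factor, invoke the almost-sure convergence $\overline{\kappa}_J/t_{J-1}\to\branchasymp$, and lower-bound $\branchasymp$ by $\delta$ on $E_\branchasymp$. The only difference is that you explicitly verify strict positivity of the factors $g_t(\zeta)$ and $g_m(\tau)$ via the exponential-family variance argument, a step the paper leaves implicit when it defines $\delta_{\bbeta_t}$ and $\delta_{\bbeta_m}$.
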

\begin{proof}
    For any $\zeta \in \mathcal{B}$, 
    \begin{align*}
       &\lim_{J \to \infty} \sum_{j = 1}^J \bbE \left [  \sum_{i \in \epochgroup} M_i \left ( \frac{1}{\zeta^2}  + \frac{ae^{-\zeta a} - b^2 e^{-\zeta b}}{ [e^{-\zeta a} -e^{-\zeta b}]^2 } \right ) \mid \epochfield \right ]/t_{J-1} = \\
       &\lim_{J \to \infty} \left ( \overline{\kappa}_J/t_{J-1} \right ) m\left ( \frac{1}{\zeta^2}  + \frac{ae^{-\zeta a} - b^2 e^{-\zeta b}}{ [e^{-\zeta a} -e^{-\zeta b}]^2 } \right ) = \\
       & \branchasymp m\left ( \frac{1}{\zeta^2}  + \frac{ae^{-\zeta a} - b^2 e^{-\zeta b}}{ [e^{-\zeta a} -e^{-\zeta b}]^2 } \right ) \geq \\
       & \delta^* m\left ( \frac{1}{\zeta^2}  + \frac{ae^{-\zeta a} - b^2 e^{-\zeta b}}{ [e^{-\zeta a} -e^{-\zeta b}]^2 } \right ) \ \mathrm{a.s.}
    \end{align*}
    where the last inequality follows from the fact that there exists $\delta_*$ such that $\branchasymp>\delta_*$ by assumption. Defining 
    \begin{equation*}
        \delta_{\bbeta_t} \triangleq \delta^* m\left ( \frac{1}{\zeta^2}  + \frac{ae^{-\zeta a} - b^2 e^{-\zeta b}}{ [e^{-\zeta a} -e^{-\zeta b}]^2 } \right ) 
    \end{equation*}
    completes the first part of the lemma.

    For any $\tau \in \mathcal{B}$, we find that
    \begin{align*}
        &\lim_{J \to \infty} \sum_{j = 1}^J \bbE \left \{  \sum_{i \in \epochgroup}  {\tau}^2 \frac{ \left \{ \sum_{\ell=k}^{|\mathbf{A}_i|} e^{{\tau} \ell}/\ell! \right \} \left \{  \sum_{\ell=k}^{|\mathbf{A}_i|} \ell^2 e^{{\tau} \ell}/\ell! \right \} -  \left \{  \sum_{\ell=k}^{|\mathbf{A}_i|} \ell e^{{\tau} \ell}/\ell! \right \}^2 }{\left \{ \sum_{\ell=k}^{|\mathbf{A}_i|} e^{{\tau} \ell}/\ell! \right \}^2} \mid \epochfield \right \}/t_{J-1} \geq \\
        &\lim_{J \to \infty} \frac{\overline{\kappa}_J}{\tau_{J-1}} {\tau}^2 \frac{ \left \{ \sum_{\ell=k}^{|\mathbf{A}_i|} e^{{\tau} \ell}/\ell! \right \} \left \{  \sum_{\ell=k}^{|\mathbf{A}_i|} \ell^2 e^{{\tau} \ell}/\ell! \right \} -  \left \{  \sum_{\ell=k}^{|\mathbf{A}_i|} \ell e^{{\tau} \ell}/\ell! \right \}^2 }{\left \{ \sum_{\ell=k}^{|\mathbf{A}_i|} e^{{\tau} \ell}/\ell! \right \}^2} \geq \\
        &\branchasymp {\tau}^2 \frac{ \left \{ \sum_{\ell=k}^{|\mathbf{A}_i|} e^{{\tau} \ell}/\ell! \right \} \left \{  \sum_{\ell=k}^{|\mathbf{A}_i|} \ell^2 e^{{\tau} \ell}/\ell! \right \} -  \left \{  \sum_{\ell=k}^{|\mathbf{A}_i|} \ell e^{{\tau} \ell}/\ell! \right \}^2 }{\left \{ \sum_{\ell=k}^{|\mathbf{A}_i|} e^{{\tau} \ell}/\ell! \right \}^2} \geq \\
        &\delta^* {\tau}^2 \frac{ \left \{ \sum_{\ell=k}^{|\mathbf{A}_i|} e^{{\tau} \ell}/\ell! \right \} \left \{  \sum_{\ell=k}^{|\mathbf{A}_i|} \ell^2 e^{{\tau} \ell}/\ell! \right \} -  \left \{  \sum_{\ell=k}^{|\mathbf{A}_i|} \ell e^{{\tau} \ell}/\ell! \right \}^2 }{\left \{ \sum_{\ell=k}^{|\mathbf{A}_i|} e^{{\tau} \ell}/\ell! \right \}^2} \ \mathrm{a.s.}
    \end{align*}
    where the last inequality follows from the fact that there exists $\delta_*$ such that $\branchasymp>\delta_*$ by assumption. Defining 
    \begin{equation*}
        \delta_{\bbeta_m} \triangleq \delta^* {\tau}^2 \frac{ \left \{ \sum_{\ell=k}^{|\mathbf{A}_i|} e^{{\tau} \ell}/\ell! \right \} \left \{  \sum_{\ell=k}^{|\mathbf{A}_i|} \ell^2 e^{{\tau} \ell}/\ell! \right \} -  \left \{  \sum_{\ell=k}^{|\mathbf{A}_i|} \ell e^{{\tau} \ell}/\ell! \right \}^2 }{\left \{ \sum_{\ell=k}^{|\mathbf{A}_i|} e^{{\tau} \ell}/\ell! \right \}^2} 
    \end{equation*}
    completes the proof of the lemma.
    
\end{proof}

\subsubsection{Assumption~\ref{as:IA}}
\textbf{Assumption~\ref{as:IA}} follows from the fact that $\Sigma_J = t_{J-1}$, verification of Assumption~\ref{as:stabalizedvariance} and Lemmas~\ref{lem:covia}-\ref{lem:iatimefam},
\begin{equation*}
    \lim_{J \to \infty} \eta_J/t_{J-1} =\branchasymp \mathbb{C}
\end{equation*}
where $\mathbb{C} \succ 0$ w.p. 1. Consequently,
\begin{equation*}
     \frac{\eta_J}{\overline{\kappa}_J} =  \frac{\eta_J}{t_{J-1}} \frac{t_{J-1}}{\overline{\kappa}_J} \overset{p}{\to} \mathbb{C} \branchasymp \branchasymp^{-1} = \mathbb{C} \succ 0
\end{equation*}
by the continuous mapping theorem. We conclude that \textbf{Assumption~\ref{as:IA}} is satisfied.

\subsubsection{Assumption~\ref{as:equicontinuity}}

We prove \textbf{Assumption~\ref{as:equicontinuity}}.
Set any $\epsilon_{\ddot{l}} > 0 $. We conceptualize $\ddot{l}_i(\bbeta^*)$ as a function $\ddot{\lambda}:\mathcal{B}\times \mathscr{D} \to \mathbb{R}^{k \times k}$. From Section~\ref{app_sec:branchinginference}, we know that $\ddot{\lambda}$ is continuous over both $\mathcal{B}$ and $\mathscr{D}$.  Because $\mathcal{X}$ is compact, we know that $\mathscr{D}$ is compact. Consequently, $\ddot{\lambda}$ is uniformly continuous in $\mathcal{B} \times \mathscr{D}$ (since $\mathcal{B}$ is compact too). Therefore, for any $0<\epsilon \leq \epsilon_{\ddot{l}}$, we can find $\delta_{\epsilon}$ such that for any $\bbeta, \bbeta'$ and $\bd, \bd'$ such 
\begin{equation*}
    \sup_{\|\bbeta - \bbeta'\|_2 + \|\textbf{d} - \textbf{d}'\|_2 \leq \delta_{\epsilon}}  \left  \| \ddot{\lambda}(\bbeta, \bd) - \ddot{\lambda}(\bbeta', \bd') \right \|_2 \leq \epsilon.
\end{equation*}
Since $\{ \bbeta, \bd, \bd': \|\bbeta - \bbeta^*\|_2 \leq  \delta_{\epsilon},  \|\bd - \bd'\|_2  = 0 \} \subseteq \{ \bbeta,\bbeta', \bd, \bd: \|\bbeta - \bbeta'\|_2 + \|\bd - \bd'\|_2 \leq \delta_{\epsilon} \}$, we know that
\begin{equation*}
    \sup_{\bbeta : \| \bbeta - \bbeta^* \|_2 \leq \delta_\epsilon} \sup_{\textbf{d} \in \mathscr{D}} \left  \| \ddot{\lambda}(\bbeta, \bd) - \ddot{\lambda}(\bbeta^*, \bd) \right \|_2 \leq \epsilon.
\end{equation*}
\textbf{Assumption~\ref{as:equicontinuity}} is satisfied.

\subsection{Consistency Proof}
\label{app_sec:branch_consistency}

Note that we do not need Assumptions~\ref{as:lipchitz} and \ref{as:wellseperated} in the proof of consistency for the branching process described in Equation~\ref{eq:rdsmod}. 
We begin with a finite Taylor series expansion,
	\begin{equation*}
		0 = \dot{\mathcal{M}_{J}}(\bbeta^*) + \ddot{\mathcal{M}}_{J}(\bar{\bbeta}^{J})(\widehat{\bbeta}_{J} - \bbeta^*),
	\end{equation*}
	where $\bar{\bbeta}^{J}$ is between $\widehat{\bbeta}_{J}$ and $\bbeta^*$.
    By the concavity of exponential families \citep{brown1986fundamentals} -- this can also be observed in the proofs of Lemmas~\ref{lem:covia}-\ref{lem:iatimefam} -- we know that for all $J \in \mathbb{N}$ and $\bbeta \in \mathcal{B}$, $\ddot{\mathcal{M}}_{J}(\bbeta)/t_{J-1}$ is invertible. Consequently, 
    \begin{align*}
        &-\dot{\mathcal{M}_{J}}(\bbeta^*)/t_{J-1} = \left \{ \ddot{\mathcal{M}}_{J}(\bar{\bbeta}^{J})/t_{J-1} \right \} (\widehat{\bbeta}_{J} - \bbeta^*) \Rightarrow \\
        &-\left \{ \ddot{\mathcal{M}}_{J}(\bar{\bbeta}^{J})/t_{J-1} \right \}^{-1} \dot{\mathcal{M}_{J}}(\bbeta^*)/t_{J-1} = \widehat{\bbeta}_{J} - \bbeta^*.
    \end{align*}
    Under Assumptions~\ref{as:1}-\ref{as:moments} and Assumption~\ref{as:stabalizedvariance}, we know that
	\begin{equation*}
	    \dot{\mathcal{M}}_{J}(\bbeta^*)/t_{J-1}^{1/2} = O_p(1)
	\end{equation*}
    by Section~\ref{app_sec:asconv}.
    Consequently, 
    \begin{equation*}
     \dot{\mathcal{M}}_{J}(\bbeta^*)/t_{J-1} = o_p(1).
    \end{equation*}
    Therefore, we only need to show
    \begin{equation*}
    \left \{- \ddot{\mathcal{M}}_{J}(\bar{\bbeta}^{J})/t_{J-1} \right \}^{-1} = O_p(1)
    \end{equation*}
    to prove that 
    \begin{equation*}
        \widehat{\bbeta}_{J} - \bbeta^* = o_p(1).
    \end{equation*}
     To do this, we show that for any $\bbeta \in \mathcal{B}$,
    \begin{equation}
        \label{eq:consistentgoal}
        \sigma_{\min} \left \{ -\ddot{\mathcal{M}}_{J}(\bbeta)/t_{J-1} \right \} \geq \epsilon^*_{\bbeta}/\rho_{\max} + o_p(1).
    \end{equation}
    This implies that
    \begin{equation*}
        \left \| \left \{-\ddot{\mathcal{M}}_{J}(\bbeta)/t_{J-1} \right \}^{-1} \right \|_2 \leq \rho_{\max}/\epsilon^*_{\bbeta} + o_p(1).
    \end{equation*}
    To show Equation~\ref{eq:consistentgoal}, we first define
    \begin{align*}
        &\nu_J(\bbeta) = -\sum_{j=1}^J \bbE_{\bbeta^*, \widetilde{\bpi}} \left \{ \sum_{i \in \mathcal{E}_{j-1}} \ddot{l}(\bbeta) \ \Bigg | \ \epochfield \right \}, \\
        &\alpha_J(\bbeta) = -\sum_{j=1}^J \bbE_{\bbeta^*, \widehat{\bpi}} \left \{ \sum_{i \in \mathcal{E}_{j-1}} W_i\ddot{l}(\bbeta) \ \Bigg | \ \epochfield \right \}
    \end{align*}
    By Lemmas~\ref{lem:covia}-\ref{lem:iatimefam}, we know that (on the event $E_\branchasymp$) for any $\bbeta \in \mathcal{B}$ and \\ $\epsilon^*_{\bbeta} = \min \left \{ \delta_{\bbeta_m^*}, \delta_{\bbeta_t^*}, \delta_{\bbeta_{\bx}^*}, \delta_{\bbeta_y^*} \right \}$ (where $\delta_{\bbeta_m^*}, \delta_{\bbeta_t^*}, \delta_{\bbeta_{\bx}^*}, \delta_{\bbeta_y^*}$ are defined in Lemmas~\ref{lem:covia}-\ref{lem:iatimefam}),
    \begin{equation}
        \label{eq:invertibility}
        \lim_{J \to \infty} \nu_J(\bbeta) /t_{J-1} \geq \epsilon^*_{\bbeta}.
    \end{equation}
    By Lemmas~\ref{lem:mineigen} and \ref{lem:submult},
    \begin{align*}
        \lim_{J \to \infty} \sigma_{\min} \left \{ \ddot{\mathcal{M}}_{J}(\bbeta)/t_{J-1} \right \} &\geq \frac{1}{\rho_{\max}}  \lim_{J \to \infty} \sigma_{\min} \left \{ \left \{ -\sum_{j = 1}^J \sum_{i \in \epochgroup} W_i \ddot{l}_i(\bbeta) \right \}\alpha_J(\bbeta)^{-1} \nu_J(\bbeta)/t_{J-1} \right \} \\
        &\hspace{-4cm} \geq  \frac{1}{\rho_{\max}} \lim_{J \to \infty} \sigma_{\min} \left \{ \left \{- \sum_{j = 1}^J \sum_{i \in \epochgroup} W_i \ddot{l}_i(\bbeta) \right \}\alpha_J(\bbeta)^{-1}  \right \}  \sigma_{\min} \left \{ \nu_J(\bbeta)/t_{J-1} \right \} \\
        &\hspace{-4cm} = \frac{1}{\rho_{\max}} \lim_{J \to \infty} \sigma_{\min} \left \{ \left \{ -\sum_{j = 1}^J \sum_{i \in \epochgroup} W_i \ddot{l}_i(\bbeta) \right \}\alpha_J(\bbeta)^{-1} - I + I  \right \}  \sigma_{\min} \left \{ \nu_J(\bbeta)/t_{J-1} \right \} \\ 
        &\hspace{-4cm} \geq \frac{1}{\rho_{\max}} \lim_{J \to \infty} \left ( \sigma_{\min} \left \{ \left \{ -\sum_{j = 1}^J \sum_{i \in \epochgroup} W_i \ddot{l}_i(\bbeta) \right \}\alpha_J(\bbeta)^{-1} - I \right \}  + 1  \right )   \sigma_{\min} \left \{ \nu_J(\bbeta)/t_{J-1} \right \}.
    \end{align*}
    The first inequality follows from the inequality
    \begin{equation}
        \label{eq:upperalpha}
        \alpha_J \succ \frac{1}{\rho_{\max}} \nu_J.
    \end{equation}
    Equation~\ref{eq:upperalpha} follows from the fact that ${l}(\bbeta)$ is concave (because each component of the branching process is a member of a full exponential family -- this can also be observed in the proofs of Lemmas~\ref{lem:covia}-\ref{lem:iatimefam}), $-\ddot{l}(\bbeta) \succeq 0$, so for every $\bbeta \in \mathcal{B}$,
    \begin{align*}
     \alpha_J &\triangleq \sum_{j = 1}^J \sum_{i \in \epochgroup} \bbE_{\bbeta^*, \widehat{\bpi}} \left \{ W_i (-\ddot{l}(\bbeta)) \mid \epochfield \right \} \\
     &\succeq \frac{1}{\rho_{\max}} \sum_{j = 1}^J \sum_{i \in \epochgroup} \bbE_{\bbeta^*, \widehat{\bpi}} \left \{ W^2_i (-\ddot{l}(\bbeta))  \mid \epochfield \right \} \\
     &= \frac{1}{\rho_{\max}} \sum_{j = 1}^J \sum_{i \in \epochgroup} \bbE_{\bbeta^*, \widetilde{\bpi}} \left \{ (-\ddot{l}(\bbeta))  \mid \epochfield \right \}.
    \end{align*}
    Additionally, from Property (3) of Section~\ref{app_sec:convrate}, we know that
    \begin{equation}
        \label{eq:secderconv}
        \lim_{J \to \infty} \left \| \frac{\sum_{j = 1}^J \sum_{i \in \epochgroup} W_i \ddot{l}_i(\bbeta) - \alpha_J(\bbeta) }{\overline{\kappa}_J} \right \|_2 = 0 \ \ \mathrm{a.s.}
    \end{equation}
    We can extend this to 
    \begin{align*}
         &  \left \|  \left \{-\sum_{j = 1}^J \sum_{i \in \epochgroup} W_i \ddot{l}(\bbeta) \right \} \alpha_J(\bbeta)^{-1} - I \right \|_2 \\
          & = \left \|  \left \{ \left \{-\sum_{j = 1}^J \sum_{i \in \epochgroup} W_i \ddot{l}(\bbeta) \right \}  - \alpha_J(\bbeta) \right \} \alpha_J(\bbeta)^{-1}\right \|_2 \\
           & \leq \left \|  \left \{ \left \{-\sum_{j = 1}^J \sum_{i \in \epochgroup} W_i \ddot{l}(\bbeta) \right \}  - \alpha_J(\bbeta) \right \} \right \|_2 \left \| \alpha_J(\bbeta)^{-1}\right \|_2 \\
        & \leq \rho_{\max} \left \| \frac{ \left \{  \left \{-\sum_{j = 1}^J \sum_{i \in \epochgroup} W_i \ddot{l}(\bbeta) \right \} - \alpha_J(\bbeta) \right \} }{\overline{\kappa}_J} \right \|_2 \frac{\overline{\kappa}_J}{t_{J-1}} \left \| \nu_J(\bbeta)^{-1}\right \|_2 t_{J-1} \\
        & = o_p(1) \left \{ O_p(1) + o_p(1) \right \} \left \{ \frac{1}{\epsilon^*_{\bbeta}} + o_p(1) \right \} \\
        &= o_p(1).
    \end{align*}
    The third to last equality follows from Equation~\ref{eq:upperalpha}. The second to last equality follows from Equation~\ref{eq:invertibility}, Equation~\ref{eq:secderconv}, and the fact that
    \begin{equation*}
        \lim_{J \to \infty} \frac{\overline{\kappa}_J}{t_{J-1}} = \branchasymp \ \mathrm{a.s.}
    \end{equation*}
    Consequently, we know that
    \begin{equation*}
        \sigma_{\min} \left \{ -\ddot{\mathcal{M}}_{J}(\bbeta)/t_{J-1} \right \} \geq \epsilon_{\bbeta}^*/\rho_{\max} + o_p(1).
    \end{equation*}
    This implies that 
    \begin{equation*}
        t_{J-1}  \left \| \ddot{\mathcal{M}}_{J}(\bbeta)^{-1} \right \|_2 \leq \rho_{\max}/\epsilon_{\bbeta}^* + o_p(1).
    \end{equation*}
    Note that $\left \| \ddot{\mathcal{M}}_{J}(\bbeta)^{-1} \right \|_2$ is a continuous function of $\bbeta$ by Section~\ref{app_sec:likelihood}. Because $\mathcal{B}$ is compact, we know that
    \begin{equation*}
        \sup_{\bbeta \in \mathcal{B}} t_{J-1}  \left \| \ddot{\mathcal{M}}_{J}(\bbeta)^{-1} \right \|_2 \leq \sup_{\bbeta \in \mathcal{B}} \rho_{\max}/\epsilon_{\bbeta}^* + o_p(1) < \infty
    \end{equation*}
    by the extreme value theorem. Consequently, we know that
    \begin{equation*}
        t_{J-1}  \left \| \ddot{\mathcal{M}}_{J}(\bar{\bbeta}_J)^{-1} \right \|_2 = O_p(1).
    \end{equation*}
    This demonstrates that $\widehat{\bbeta}_{J}$ is consistent.

    \subsection{Proof of Theorem~\ref{thm:rlbranching}}
    We have proved consistency and verified Assumptions~\ref{as:generation_asymptotics}-\ref{as:moments} and Assumptions~\ref{as:stabalizedvariance}-\ref{as:lipchitz2}. This proves Theorems~\ref{thm:asympNorm} and \ref{thm:budg}. Theorem~\ref{thm:rlbranching} follows from Theorems~\ref{thm:asympNorm} and \ref{thm:budg}.

\section{Generalized RDS Inference}

\subsection{Hessian of the Covariate Model}
\label{app_sec:hess_cov_mod}

\paragraph{The Hessian for the log-likelihood of the covariate model.} Define $\bX_*^v \triangleq \left  (1, \bX^v \right )$ and $G^{\dagger}_{\ba} = \left  (\bphi_{a}^{\top}, G_{a}^{\top} \right )$.
Define $\mathbb{A}$ to be the set of possible coupon types. 
 We note that each coupon allocation is a set of identical coupons, implying that the sets $\mathcal{A}$ and $\mathbb{A}$ have a one to one correspondence. Consequently, for $\ba \in \mathcal{A}$, there exists $a \in \mathbb{A}$ such that
\begin{equation*}
    \bbI (\bA_i = \ba) = \bbI (A_{i,l} = a)
\end{equation*}
for every $i \in \mathbb{N}$ and $l \in \left \{1,2, \ldots, M_i \right \}$. Consequently, we can represent the complete branching process likelihood as
\begin{align*}
    \mathcal{L}^{\btheta}_{\noepochsamplesize} \left ( \left \{G^{\dagger}_{\ba}, \Sigma_{\ba} \right \}_{\ba \in \mathcal{A}} \right ) &\triangleq \prod_{v =1}^{\noepochsamplesize}(2\pi)^{-p/2} \left |\sum_{\ba \in \mathcal{A}}\Sigma_{\ba} \mathbb{I}(\bA^{R^v} = \ba) \right |^{-1/2} \times \\
    &\hspace{0.5cm} \exp \left [ -\frac{1}{2} \sum_{\ba \in \mathcal{A}} \left \{  (\bX^v -  G^{\dagger}_{\ba}\bX_*^{R^v})^\top \Sigma^{-1}_{\ba}(\bX^v - G^{\dagger}_{\ba}\bX_*^{R^v}) \right \} \mathbb{I}(\bA^{R^v} = \ba) \right ] \\
    &= \prod_{v =1}^{\noepochsamplesize} (2\pi)^{-p/2} \left |\sum_{\ba \in \mathcal{A}}\Sigma_{\ba} \mathbb{I}(\bA^{R^v} = \ba) \right |^{-1/2} \times \\
    &\hspace{0.5cm} \exp \Bigg [ -\frac{1}{2} \sum_{\ba \in \mathcal{A}} \Big \{  {\bX^v}^\top \Sigma^{-1}_{\ba} \bX^v - 2 {\bX_*^{R^v}}^\top {G^{\dagger}_{\ba}}^\top \Sigma^{-1}_{\ba}\bX^v  + \\
    &\hspace{1.5cm} {\bX_*^{R^v}}^\top {G^{\dagger}_{\ba}}^\top \Sigma^{-1}_{\ba} {G^{\dagger}_{\ba}} \bX_*^{R^v}  \Big \} \mathbb{I}(\bA^{R^v} = \ba) \Bigg ].
\end{align*}
We will use this likelihood for the proofs that follow.

For $\ba \in \mathcal{A}$, we reparameterize $\Omega_{\ba} = - \frac{1}{2} \Sigma_{\ba}^{-1}$ and $\Gamma_{\ba} = \Sigma_{\ba}^{-1}{G^{\dagger}_{\ba}}$.
\begin{align*}
    \mathcal{L}^{\btheta}_{\noepochsamplesize} \left ( \left \{\Gamma_{\ba}, \Omega_{\ba} \right \}_{\ba \in \mathcal{A}} \right ) &= \prod_{v =1}^{\noepochsamplesize} (2\pi)^{-p/2} |\sum_{\ba \in \mathcal{A}}-2\Omega_{\ba} \mathbb{I}(\bA^{R^v} = \ba)|^{1/2} \times \\
    &\hspace{0.5cm} \exp \Bigg [ \sum_{\ba \in \mathcal{A}} \Big \{  {\bX^v}^\top \Omega_{\ba} \bX^v + {\bX_*^{R^v}}^\top \Gamma_{\ba}^\top\bX^v  + \\
    &\hspace{1.5cm} {\frac{1}{4}\bX_*^{R^v}}^\top \Gamma_{\ba}^\top {\Omega_{\ba}}^{-1} \Gamma_{\ba} \bX_*^{R^v}  \Big \} \mathbb{I}(\bA^{R^v} = \ba) \Bigg ] .
\end{align*}
The reparameterized log-likelihood is
\begin{align*}
    \mathcal{\ell}^{\btheta}_{{\noepochsamplesize}} \left ( \left \{\Gamma_{\ba}, \Omega_{\ba} \right \}_{\ba \in \mathcal{A}} \right ) &= \sum_{v=1}^{{\noepochsamplesize}}\sum_{\ba \in \mathcal{A}} \Bigg [ (-p/2)\log(2\pi) + \frac{1}{2}\log|-2\Omega_{\ba}| + \\
    &\hspace{2cm} {\bX^v}^\top \Omega_{\ba} \bX^v + {\bX_*^{R^v}}^\top \Gamma_{\ba}^\top\bX^v  + \\
    &\hspace{2cm} \frac{1}{4}{\bX_*^{R^v}}^\top \Gamma_{\ba}^\top {\Omega_{\ba}}^{-1} \Gamma_{\ba} \bX_*^{R^v} \Bigg ] \mathbb{I}(\bA^{R^v} = \ba)\\
    &= \sum_{v=1}^{{\noepochsamplesize}}\sum_{\ba \in \mathcal{A}} \Bigg [ (-p/2)\log(2\pi) + \frac{1}{2}\log|-2\Omega_{\ba}| + \\
    &\hspace{2cm} \mathrm{tr} \left ( \Omega_{\ba} \bX^v {\bX^v}^\top \right ) + \mathrm{tr} \left ( \Gamma_{\ba}^\top\bX^v {\bX_*^{R^v}}^\top \right ) + \\
    &\hspace{2cm} \mathrm{tr} \left (\frac{1}{4} \Gamma_{\ba}^\top {\Omega_{\ba}}^{-1} \Gamma_{\ba} \bX_*^{R^v} {\bX_*^{R^v}}^\top \right )  \Bigg ] \mathbb{I}(\bA^{R^v} = \ba),
\end{align*}
where the second equality follows from rearranging terms and using the properties of the trace operator.
For $\ba \in \mathcal{A}$, define ${\noepochsamplesize}_{\ba} = \sum_{v=1}^{{\noepochsamplesize}}\mathbb{I}(\bA^{R^v} = \ba)$ and $\bV^{{\noepochsamplesize}}_{\ba} \in \mathbb{R}^{(p+1)\times (p+1)}$ such that $\bV^{{\noepochsamplesize}}_{\ba} = \sum_{v=1}^{{\noepochsamplesize}}\bX_*^{R^v} {\bX_*^{R^v}}^\top  \mathbb{I}(\bA^{R^v} = \ba) $. We apply the differential operator two times and find
\begin{align*}
    \bd^2\mathcal{\ell}^{\btheta}_{{\noepochsamplesize}} \left ( \left \{\Gamma_{\ba}, \Omega_{\ba} \right \}_{\ba \in \mathcal{A}} \right ) &= \sum_{v=1}^{{\noepochsamplesize}}\sum_{\ba \in \mathcal{A}} -\bd^2 \Bigg \{ -\frac{1}{2}\log|-2\Omega_{\ba}| - \\
      &\hspace{2.7cm} \frac{1}{4} \mathrm{tr} \left ( \Gamma_{\ba}^\top {\Omega_{\ba}}^{-1} \Gamma_{\ba} \bX_*^{R^v} {\bX_*^{R^v}}^\top \right ) \Bigg \} \mathbb{I}(\bA^{R^v} = \ba) \\
      &= -\bd \Bigg \{ -\frac{{\noepochsamplesize}_{\ba}}{2}\mathrm{tr} \left (\Omega_{\ba}^{-1} \bd \Omega_{\ba} \right ) - \frac{1}{4} \mathrm{tr} \left (2 \Gamma_{\ba}^\top \Omega_{\ba}^{-1} \bd \Gamma_{\ba} \bV^{{\noepochsamplesize}}_{\ba} \right ) + \\
      &\hspace{1.5cm} \frac{1}{4} \mathrm{tr} \left ( \Gamma_{\ba}^\top {\Omega_{\ba}}^{-1} \bd \Omega_{\ba} {\Omega_{\ba}}^{-1} \Gamma_{\ba} \bV^{{\noepochsamplesize}}_{\ba} \right ) \Bigg \}  \\
      &= -\Bigg \{ \frac{{\noepochsamplesize}_{\ba}}{2}\mathrm{tr} \left (\Omega_{\ba}^{-1} \bd \Omega_{\ba} \Omega_{\ba}^{-1} \bd \Omega_{\ba} \right ) - \\
      &\hspace{1.2cm}\frac{1}{2} \mathrm{tr} \left ( \Gamma_{\ba}^\top \Omega_{\ba}^{-1} \bd \Omega_{\ba} \Omega_{\ba}^{-1} \bd \Omega_{\ba} \Omega_{\ba}^{-1} \Gamma_{\ba} \bV^{{\noepochsamplesize}}_{\ba} \right ) - \\ 
      &\hspace{1.2cm}\frac{1}{2} \mathrm{tr} \left ( \bd \Gamma_{\ba}^\top \Omega_{\ba}^{-1} \bd \Gamma_{\ba} \bV^{{\noepochsamplesize}}_{\ba} \right ) + \\
      &\hspace{1.2cm} \mathrm{tr} \left ( \Gamma_{\ba}^\top \Omega_{\ba}^{-1} \bd \Omega_{\ba} \Omega_{\ba}^{-1} \bd \Gamma_{\ba} \bV^{{\noepochsamplesize}}_{\ba} \right ) +  \\
      &\hspace{1.2cm} \frac{1}{4} \mathrm{tr} \left ( \Gamma_{\ba}^\top {\Omega_{\ba}}^{-1} \bd \Omega_{\ba} {\Omega_{\ba}}^{-1} \Gamma_{\ba} \bV^{{\noepochsamplesize}}_{\ba} \right ) \Bigg \}.
\end{align*}
We observe that for $\ba, \ba' \in \mathcal{A}$ and $\ba \neq \ba'$,  
\begin{equation*}
    \frac{\partial \mathcal{\ell}^{\btheta}_{{\noepochsamplesize}} \left ( \left \{\Gamma_{\ba}, \Omega_{\ba} \right \}_{\ba \in \mathcal{A}} \right )} { \partial \left (  \mathrm{vec}(\Gamma_{\ba}),  \mathrm{vec}(\Omega_{\ba}) \right )\partial \left (  \mathrm{vec}(\Gamma_{\ba'}),  \mathrm{vec}(\Omega_{\ba'}) \right )} = [0]_{(p+1)^2 \times p^2}. 
\end{equation*}
We express
\begin{align*}
    &\ddot{\ell}^{\btheta}_{{\noepochsamplesize}} \left ( \left \{\Gamma_{\ba}, \Omega_{\ba} \right \}_{\ba \in \mathcal{A}} \right ) \\
    &= -\mathrm{diag} \left [\left \{\begin{pmatrix}
        \bV^{{\noepochsamplesize}}_{\ba} \otimes \Sigma_{\ba} & 2\left ( \bV^{{\noepochsamplesize}}_{\ba} {G^{\dagger}_{\ba}}^\top \otimes \Sigma_{\ba} \right ) \\
        2 \left ( {G^{\dagger}_{\ba}}\bV^{{\noepochsamplesize}}_{\ba} \otimes \Sigma_{\ba} \right ) & 4 \left ( {G^{\dagger}_{\ba}} \bV^{{\noepochsamplesize}}_{\ba} {G^{\dagger}_{\ba}}^\top \otimes \Sigma_{\ba} \right ) + 2{\noepochsamplesize}_{\ba}\left ( \Sigma_{\ba} \otimes \Sigma_{\ba} \right )
    \end{pmatrix} \right \}_{\ba \in \mathcal{A}}
    \right ] \\
    &= -\mathrm{diag} \left [\left \{ \begin{pmatrix}
        \bV^{{\noepochsamplesize}}_{\ba} & 2 \bV^{{\noepochsamplesize}}_{\ba} {G^{\dagger}_{\ba}}^\top\\
        2 {G^{\dagger}_{\ba}}\bV^{{\noepochsamplesize}}_{\ba}  & 4 {G^{\dagger}_{\ba}} \bV^{{\noepochsamplesize}}_{\ba} {G^{\dagger}_{\ba}}^\top  + 2 {\noepochsamplesize}_{\ba}\Sigma_{\ba}
    \end{pmatrix} \otimes \Sigma_{\ba} \right \}_{\ba \in \mathcal{A}}
    \right ].
\end{align*}
The second equality follows from properties of kronecker products.

\subsection{Consistency of an M-estimator}
	
	We establish the conditions for consistency of an M-estimator under a sequence of concave estimating equations. This will be leveraged in Section~\ref{app_sec:netinf}.
	\begin{lemma}[Theorem 2.7 from \cite{newey1994chapter}]
		\label{lem:concave}
		If there is a function $Q_0(\theta)$ and sequence of functions $\{\widehat{Q}_n(\theta)\}_{n \geq  1}$  such that
		\begin{enumerate}
			\item $Q_0(\theta)$ is uniquely maximized at $\theta_0$.
			\item $\theta_0$ is an element of the interior of a convex set $\Theta$.
			\item For all $n \geq 1$, $\widehat{Q}_n(\theta)$ is concave.
			\item $\widehat{Q}_n(\theta) \overset{p}{\to} Q_0(\theta)$ for all $\theta \in \Theta$.
		\end{enumerate}
		Then, $\widehat{\theta}_n = \arg \max_\theta \widehat{Q}_n(\theta) $ exists with probability approaching one, and $\widehat{\theta}_n \overset{p}{\to} \theta_0$.
	\end{lemma}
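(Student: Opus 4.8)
The plan is to reduce the claim to a uniform law of large numbers that concavity supplies automatically, and then to run a textbook argmax--consistency argument. The engine is the \emph{convexity lemma}: if a sequence of concave functions on an open convex set converges in probability pointwise to a (necessarily concave, hence continuous on the interior) limit, then the convergence is in fact uniform in probability on every compact subset of the interior. First I would invoke this with the concave $\widehat{Q}_n$ and the limit $Q_0$ supplied by conditions (3)--(4): the limit $Q_0$ inherits concavity and, for any compact $K \subseteq \mathrm{int}(\Theta)$, one obtains $\sup_{\theta \in K} |\widehat{Q}_n(\theta) - Q_0(\theta)| \overset{p}{\to} 0$.

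Given uniform convergence on compacta, I would establish consistency by separation on a sphere. Fix $\epsilon > 0$ small enough that the closed ball $\overline{B}_\epsilon(\theta_0)$ lies in $\mathrm{int}(\Theta)$, which is possible by condition (2), and let $S_\epsilon = \{\theta : \|\theta - \theta_0\| = \epsilon\}$, which is compact. Since $Q_0$ is continuous and uniquely maximized at $\theta_0$ by condition (1), the gap $\delta \triangleq Q_0(\theta_0) - \sup_{\theta \in S_\epsilon} Q_0(\theta)$ is strictly positive. Uniform convergence on $\overline{B}_\epsilon(\theta_0)$ then yields, with probability approaching one, both $\widehat{Q}_n(\theta_0) > Q_0(\theta_0) - \delta/3$ and $\sup_{\theta \in S_\epsilon}\widehat{Q}_n(\theta) < \sup_{\theta \in S_\epsilon} Q_0(\theta) + \delta/3$, hence $\widehat{Q}_n(\theta_0) > \sup_{\theta \in S_\epsilon} \widehat{Q}_n(\theta)$.

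The separation converts into control of the maximizer through concavity. On the high-probability event above, the restriction of the concave, and therefore continuous, function $\widehat{Q}_n$ to the compact ball $\overline{B}_\epsilon(\theta_0)$ attains its maximum at some $\widehat{\theta}_n$ with $\widehat{Q}_n(\widehat{\theta}_n) \geq \widehat{Q}_n(\theta_0) > \sup_{\theta \in S_\epsilon}\widehat{Q}_n(\theta)$, so $\widehat{\theta}_n$ lies in the open ball rather than on $S_\epsilon$. A concave function on a convex set whose restriction to an open ball has an interior maximum has that point as a global maximum over $\Theta$; were some competing point at distance exceeding $\epsilon$ to do better, the point where the segment joining it to $\theta_0$ crosses $S_\epsilon$ would, by concavity, carry a value at least $\widehat{Q}_n(\theta_0)$, contradicting the separation. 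Thus $\widehat{\theta}_n$ exists with probability approaching one and satisfies $\|\widehat{\theta}_n - \theta_0\| < \epsilon$; since $\epsilon > 0$ was arbitrary, $\widehat{\theta}_n \overset{p}{\to} \theta_0$.

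The genuine obstacle is the in-probability convexity lemma, not the argmax bookkeeping. The deterministic statement, that a pointwise limit of concave functions is concave with convergence uniform on compacta, is classical (e.g., Rockafellar, Theorem~10.8), but I would need its stochastic upgrade. The cleanest route is to discretize a compact neighborhood of $\theta_0$ on a finite grid, apply the finite-dimensional pointwise convergence of condition (4) at the grid points, and control the oscillation between grid points using the local Lipschitz bound that any finite concave function automatically satisfies on a compact subset of the interior; alternatively one argues along subsequences via the almost-sure representation theorem and applies the deterministic lemma pathwise.
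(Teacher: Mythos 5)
The paper does not prove this lemma; it is stated verbatim as a citation of Theorem~2.7 of Newey and McFadden (1994) and used as a black box. Your proof is a correct reconstruction of the standard argument behind that cited result: the stochastic convexity lemma (pointwise convergence in probability of concave functions upgrades to uniform convergence on compacta, via a grid plus the local Lipschitz bound that concavity provides, or via almost-sure representation and the deterministic Rockafellar result), followed by the sphere-separation step and the observation that concavity forces any global maximizer inside the sphere once $\widehat{Q}_n(\theta_0)$ exceeds $\sup_{S_\epsilon}\widehat{Q}_n$. I see no gap: the one piece you defer, the in-probability convexity lemma, is exactly the Andersen--Gill/Pollard result, and either of the two routes you sketch for it closes the argument.
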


\subsection{Generalized RDS Inference Proof}
\label{app_sec:netinf}

We prove consistency of the MLE in this section. First, we establish some helpful lemmas.

\begin{lemma}
    \label{lem:log_concave}
    Under Assumptions \ref{as:1}-\ref{as:3}, the log-likelihood for the branching process specified in Equation~\ref{eq:rdsmod} is concave.
\end{lemma}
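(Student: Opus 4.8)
The claim is that the log-likelihood of the branching process in Equation~\ref{eq:rdsmod} is concave. The plan is to exploit the factorization of the likelihood into four independent components---the covariate (Normal autoregressive) model, the reward (logistic) model, the arrival-time (truncated exponential) model, and the family-size (truncated Poisson) model---and to argue concavity of each piece separately after a suitable reparameterization, since the sum of concave functions is concave.

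First I would invoke the observation made in Section~\ref{app_sec:likelihood}: each of the four sub-models is a full-rank exponential family, and for such families the log-likelihood is concave in the natural parameter \citep{brown1986fundamentals}. Concretely, for the covariate model I would use the reparameterization $\Omega_{\ba} = -\tfrac{1}{2}\Sigma_{\ba}^{-1}$ and $\Gamma_{\ba} = \Sigma_{\ba}^{-1}G^{\dagger}_{\ba}$ already introduced in Section~\ref{app_sec:hess_cov_mod}; in these natural coordinates the normal log-likelihood is concave, and the Hessian computed there is manifestly negative semidefinite (it is $-\mathrm{diag}\{\cdots \otimes \Sigma_{\ba}\}$, a negative Kronecker product of positive semidefinite blocks). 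For the logistic reward model, the Hessian derived in Section~\ref{app_sec:likelihood} is $-\sum \bZ_{i,l}\bZ_{i,l}^\top \,\sigma(\cdot)\{1-\sigma(\cdot)\}$, a sum of negative-scaled outer products, hence negative semidefinite. For the arrival model in $\zeta$ and the family-size model under the reparameterization $\tau = \log\lambda$, the second derivatives computed in Section~\ref{app_sec:likelihood} are strictly negative (the family-size Hessian is $-\tau^2$ times a variance-of-a-discrete-distribution expression, which is nonnegative by Cauchy--Schwarz), so both are concave.

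The main subtlety, and the step I expect to require the most care, is that concavity must hold jointly in the stated parameterization of the model, not merely in the natural/reparameterized coordinates, because reparameterization does not preserve concavity in general. I would address this by noting that concavity of the full objective is what is actually needed for the downstream arguments (e.g., Lemma~\ref{lem:concave} and the consistency proof), and that the MLE and the maximizer are invariant under the smooth bijections used; alternatively, I would state the concavity result directly in the natural parameters $(\{\Gamma_{\ba},\Omega_{\ba}\}, \bbeta_y, \zeta, \tau)$ over the convex domain on which $\Omega_{\ba}\prec 0$, which is exactly the parameter set used elsewhere in the inference. Because Assumptions~\ref{as:1}--\ref{as:3} guarantee the likelihood factorization in Equation~\ref{eq:compgenerationlike} (strong ignorability and consistency give the clean conditional factorization, and positivity ensures the relevant densities are well defined), each conditional factor depends on a disjoint block of parameters.

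Finally I would assemble the pieces: the total log-likelihood is a sum over the four blocks, each concave in its own argument and involving disjoint parameter blocks, so the joint Hessian is block-diagonal with negative semidefinite blocks, hence negative semidefinite; therefore the log-likelihood is concave. I would close by remarking that this concavity is precisely what permits dropping Assumptions~\ref{as:lipchitz} and \ref{as:wellseperated} in the consistency argument of Section~\ref{app_sec:branch_consistency}, since for concave objectives pointwise convergence of the (normalized) log-likelihood suffices for consistency of the maximizer.
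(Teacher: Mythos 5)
Your proof is correct and follows essentially the same route as the paper: pass to the natural exponential-family coordinates (e.g., $\Omega_{\ba} = -\tfrac{1}{2}\Sigma_{\ba}^{-1}$, $\Gamma_{\ba} = \Sigma_{\ba}^{-1}G^{\dagger}_{\ba}$), compute the block-diagonal Hessian, and verify each block is negative semidefinite, the covariate block via the Schur-complement/Kronecker-product structure of Lemma~\ref{lem:blckpd}. The only difference is scope: the paper's proof of this lemma establishes concavity only for the covariate-model log-likelihood $\ell^{\btheta}_{\kappa}$ (the piece actually needed to verify Assumption~\ref{as:gen_inf_iar}), whereas you additionally treat the reward, arrival-time, and family-size components, which the paper handles elsewhere.
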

\begin{proof}
Recall that for $\ba \in \mathcal{A}$, $\Omega_{\ba} \triangleq - \frac{1}{2} \Sigma_{\ba}^{-1}$ and $\Gamma_{\ba} \triangleq \Sigma_{\ba}^{-1}{G^{\dagger}_{\ba}}$. For $\ba \in \mathcal{A}$, define $\noepochsamplesize_{\ba} \triangleq \sum_{v=1}^{\noepochsamplesize}\mathbb{I}(\bA^{R^v} = \ba)$ and $\bV^{{\noepochsamplesize}}_\ba \in \mathbb{R}^{(p+1)\times (p+1)}$ such that $\bV^{\noepochsamplesize}_{\ba}\triangleq \sum_{v=1}^{\noepochsamplesize}\left (1, \bX^{r^v} \right) \left (1, \bX^{r^v} \right )^\top  \mathbb{I}(\bA^{R^v} = \ba) $.
From Section~\ref{app_sec:hess_cov_mod}, we know that
\begin{align*}
    \ddot{\ell}^{\btheta}_{{\noepochsamplesize}} \left ( \left \{\Gamma_{\ba}, \Omega_{\ba} \right \}_{\ba \in \mathcal{A}} \right ) &=
     -\mathrm{diag} \left [\left \{ \begin{pmatrix}
        \bV^{\noepochsamplesize}_{\ba}& 2 \bV^{\noepochsamplesize}_{\ba}{G^{\dagger}_{\ba}}^\top\\
        2 {G^{\dagger}_{\ba}}\bV^{\noepochsamplesize}_{\ba} & 4 {G^{\dagger}_{\ba}} \bV^{\noepochsamplesize}_{\ba}{G^{\dagger}_{\ba}}^\top  + 2 {\noepochsamplesize}_{\ba}\Sigma_{\ba}
    \end{pmatrix} \otimes \Sigma_{\ba} \right \}_{\ba \in \mathcal{A}}
    \right ].
\end{align*}
We begin by analyzing the quantity,
\begin{align*}
    \mathbb{M}^{{\noepochsamplesize}} \triangleq \begin{pmatrix}
        \bV^{\noepochsamplesize}_{\ba}& 2 \bV^{\noepochsamplesize}_{\ba}{G^{\dagger}_{\ba}}^\top\\
        2 {G^{\dagger}_{\ba}}\bV^{\noepochsamplesize}_{\ba} & 4 {G^{\dagger}_{\ba}} \bV^{\noepochsamplesize}_{\ba}{G^{\dagger}_{\ba}}^\top  + 2{\noepochsamplesize}_{\ba} \Sigma_{\ba}
    \end{pmatrix}.
\end{align*}
By Lemma~\ref{lem:blckpd}, we know that $\mathbb{M}^{{\noepochsamplesize}}$
is positive semi-definite since $\bV^{{\noepochsamplesize}}_{\ba}$ and $4 {G^{\dagger}_{\ba}} \bV^{\noepochsamplesize}_{\ba}{G^{\dagger}_{\ba}}^\top  + 2 {\noepochsamplesize}_{\ba}\Sigma_{\ba} - 4G\bV^{\noepochsamplesize}_{\ba}{\bV^{{\noepochsamplesize}}_{\ba}}^{-1} \bV^{\noepochsamplesize}_{\ba}{G^{\dagger}_{\ba}}^\top = 2 {\noepochsamplesize}_{\ba}\Sigma_{\ba}$ are positive semi-definite. Consequently, because $\mathbb{M}^{{\noepochsamplesize}}$ and $\Sigma_{\ba}$ are positive semi-definite regardless of $\ba \in \mathcal{A}$,
\begin{equation*}
    0 \succeq \ddot{\ell}^{\btheta}_{{\noepochsamplesize}} \left ( \left \{\Gamma_{\ba}, \Omega_{\ba} \right \}_{\ba \in \mathcal{A}} \right ).
\end{equation*}
We conclude that ${\ell}_{{\noepochsamplesize}}^{\btheta}\left ( \left \{\Gamma_{\ba}, \Omega_{\ba} \right \}_{\ba \in \mathcal{A}} \right )/{\noepochsamplesize}$ is concave for all ${\noepochsamplesize} \in \mathbb{N}$.
\end{proof}

We now show that the Hessian of the log-likelihood is negative definite almost surely under Assumption~\ref{as:inf_iar}.
\begin{lemma}
\label{lem:IAmisp}
     Under Assumptions \ref{as:1}-\ref{as:3}, \ref{as:budg}, \ref{as:inf_iar}, and Conditions (C2) and (C4),, assume that the true underlying model for RDS is indexed by $\btheta$, and the working model is the branching process specified in Equation~\ref{eq:rdsmod}.
    Define the MLE of the working model covariate distribution as $\widehat{\bbeta}_\bx^{\noepochsamplesize}(\btheta) = \arg \max_{\bbeta_\bx \in \mathcal{B}} \ell_{\noepochsamplesize}^{\btheta} (\bbeta_\bx)$. 
    For any compact set $\mathcal{B}' \subset \mathcal{B}$, there exists $\delta_{\btheta, \mathcal{B}'} > 0$ such that
     \begin{align*}
    \lim_{\noepochsamplesize \to \infty} \inf_{\bbeta_{\bx} \in \mathcal{B}'} \sigma_{\min} \left ( - \ddot{\ell}^{\btheta}_{\noepochsamplesize} \left ( \bbeta_{\bx} \right ) /\noepochsamplesize \right ) 
    &\geq \delta_{\btheta, \mathcal{B}'} \ \ \mathrm{a.s.}
    \end{align*}
\end{lemma}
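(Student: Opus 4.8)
The plan is to leverage the explicit block structure of the covariate-model Hessian computed in Section~\ref{app_sec:hess_cov_mod}. Writing $\bbeta_\bx = \{G^{\dagger}_\ba, \Sigma_\ba\}_{\ba \in \mathcal{A}}$, that Hessian is block diagonal across coupon types, so $\sigma_{\min}\{-\ddot{\ell}^{\btheta}_{\noepochsamplesize}(\bbeta_\bx)/\noepochsamplesize\}$ equals the minimum over $\ba \in \mathcal{A}$ of the minimum singular values of the per-coupon blocks, each of which is a Kronecker product of $\Sigma_\ba$ with an inner matrix built from $\bV^{\noepochsamplesize}_\ba \triangleq \sum_{v=1}^{\noepochsamplesize} \bX_*^{R^v}{\bX_*^{R^v}}^\top \bbI(\bA^{R^v}=\ba)$ and $\noepochsamplesize_\ba \triangleq \sum_{v=1}^{\noepochsamplesize} \bbI(\bA^{R^v}=\ba)$, where $\bX_*^{R^v} \triangleq (1,\bX^{R^v})$ and Condition (C2) guarantees a single coupon type per recruiter. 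Exactly as in the proof of Lemma~\ref{lem:covia}, the Schur complement of the upper-left block $\bV^{\noepochsamplesize}_\ba$ inside the inner matrix collapses to $2\noepochsamplesize_\ba\Sigma_\ba$ (the $G^{\dagger}_\ba$ terms cancel), so after factoring the Kronecker product and applying Lemma~\ref{lem:blckpd} and Lemma~\ref{lem:submult} it suffices to bound below, uniformly in $\bbeta_\bx \in \mathcal{B}'$ and almost surely as $\noepochsamplesize \to \infty$, the three quantities $\sigma_{\min}(\bV^{\noepochsamplesize}_\ba/\noepochsamplesize)$, $\noepochsamplesize_\ba/\noepochsamplesize$, and $\sigma_{\min}(\Sigma_\ba)$. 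The essential difference from Lemma~\ref{lem:covia} is that the data are now generated by the arbitrary network model $\btheta$ rather than the branching process, so the branching weak law (Theorem~\ref{thm:branchingtreeaverage}) is unavailable.

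The factor $\sigma_{\min}(\Sigma_\ba)$ is the only one depending on $\bbeta_\bx$. Since $\Sigma_\ba$ is positive definite throughout $\mathcal{B}$ and depends continuously on $\bbeta_\bx$, and $\mathcal{B}'$ is compact, $\inf_{\bbeta_\bx \in \mathcal{B}'} \sigma_{\min}(\Sigma_\ba) > 0$; this simultaneously settles the ``$\inf_{\bbeta_\bx}$'' in the statement, because $\bV^{\noepochsamplesize}_\ba$ and $\noepochsamplesize_\ba$ are functionals of the observed data alone and carry no dependence on $\bbeta_\bx$.

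It remains to control the two data-driven averages, which is the crux of the argument. For each fixed $\ba$ I would write $\bV^{\noepochsamplesize}_\ba/\noepochsamplesize$ as the sum of a martingale-difference average $\noepochsamplesize^{-1}\sum_{v} [\bX_*^{R^v}{\bX_*^{R^v}}^\top\bbI(\bA^{R^v}=\ba) - \bbE_\btheta\{\cdot \mid \mathcal{F}^{v-1}\}]$ and the average of conditional second moments $\noepochsamplesize^{-1}\sum_v \bbE_\btheta\{\bX_*^{R^v}{\bX_*^{R^v}}^\top\bbI(\bA^{R^v}=\ba)\mid\mathcal{F}^{v-1}\}$, and likewise decompose $\noepochsamplesize_\ba/\noepochsamplesize$. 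Condition (C4) makes $\mathcal{X}$ compact, so the summands are bounded and the martingale-difference averages vanish almost surely by Theorem~\ref{thm:SLLN}. Each conditional second-moment term is then bounded below using Assumption~\ref{as:inf_iar}: the conditional covariate outer product is $\succeq \Delta \succ 0$ and the conditional selection probability is $\geq \delta$, and combining these through a further application of Lemma~\ref{lem:blckpd} (the scalar selection-probability block playing the role of the upper-left entry and $\Delta$ the lower-right) yields a positive-definite lower bound on the intercept-augmented matrix $\bbE_\btheta\{\bX_*^{R^v}{\bX_*^{R^v}}^\top\bbI(\bA^{R^v}=\ba)\mid\mathcal{F}^{v-1}\}$ that is uniform in $v \geq N$. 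Averaging preserves the bound, giving the required $\delta_{\btheta,\mathcal{B}'} > 0$. I expect the main obstacle to be precisely this last step: verifying that the intercept-augmented conditional second-moment matrix inherits a uniform positive-definite lower bound from the separate bounds $\Delta$ and $\delta$ supplied by Assumption~\ref{as:inf_iar} (so that the Schur complement appearing in Lemma~\ref{lem:blckpd} does not degenerate), and ensuring the finitely many early terms $v < N$ do not spoil the liminf of the average.
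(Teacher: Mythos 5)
Your proposal follows essentially the same route as the paper's proof: the block-diagonal Kronecker structure of the covariate Hessian, reduction via Lemmas~\ref{lem:blckpd} and \ref{lem:submult} to the three quantities $\sigma_{\min}(\bV^{\noepochsamplesize}_{\ba}/\noepochsamplesize)$, $\noepochsamplesize_{\ba}/\noepochsamplesize$, and $\inf_{\bbeta_\bx \in \mathcal{B}'}\sigma_{\min}(\Sigma_{\ba})$, the martingale-difference decomposition with Theorem~\ref{thm:SLLN} (summands bounded by Condition (C4)), Assumption~\ref{as:inf_iar} for the conditional moment lower bounds, and the extreme value theorem for uniformity over the compact set $\mathcal{B}'$. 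The one step you flag as the main obstacle --- passing from the separate bounds $\Delta$ and $\delta$ of Assumption~\ref{as:inf_iar} to a uniform positive-definite lower bound on the intercept-augmented conditional second-moment matrix --- is exactly the point the paper itself treats tersely (via a brief appeal to Lemma~\ref{lem:blckpd} followed by a direct invocation of Assumption~\ref{as:inf_iar}), so your concern mirrors rather than falls short of the paper's own argument.
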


\begin{proof}
Recall that for $\ba \in \mathcal{A}$, $\Omega_{\ba} \triangleq - \frac{1}{2} \Sigma_{\ba}^{-1}$ and $\Gamma_{\ba} \triangleq \Sigma_{\ba}^{-1}{G^{\dagger}_{\ba}}$. For $\ba \in \mathcal{A}$, define $\noepochsamplesize_{\ba} \triangleq \sum_{v=1}^{\noepochsamplesize}\mathbb{I}(\bA^{R^v} = \ba)$ and $\bV^{{\noepochsamplesize}}_\ba \in \mathbb{R}^{(p+1)\times (p+1)}$ such that $\bV^{\noepochsamplesize}_{\ba}\triangleq \sum_{v=1}^{\noepochsamplesize}\left (1, \bX^{r^v} \right) \left (1, \bX^{r^v} \right )^\top  \mathbb{I}(\bA^{R^v} = \ba) $.
From Section~\ref{app_sec:hess_cov_mod}, we know that
\begin{align*}
    \ddot{\ell}^{\btheta}_{{\noepochsamplesize}} \left ( \left \{\Gamma_{\ba}, \Omega_{\ba} \right \}_{\ba \in \mathcal{A}} \right ) &=
     -\mathrm{diag} \left [\left \{ \begin{pmatrix}
        \bV^{\noepochsamplesize}_{\ba}& 2 \bV^{\noepochsamplesize}_{\ba}{G^{\dagger}_{\ba}}^\top\\
        2 {G^{\dagger}_{\ba}}\bV^{\noepochsamplesize}_{\ba} & 4 {G^{\dagger}_{\ba}} \bV^{\noepochsamplesize}_{\ba}{G^{\dagger}_{\ba}}^\top  + 2 {\noepochsamplesize}_{\ba}\Sigma_{\ba}
    \end{pmatrix} \otimes \Sigma_{\ba} \right \}_{\ba \in \mathcal{A}}
    \right ].
\end{align*}
We begin by analyzing the quantity,
\begin{align*}
    \mathbb{M}^{{\noepochsamplesize}} \triangleq \begin{pmatrix}
        \bV^{\noepochsamplesize}_{\ba}& 2 \bV^{\noepochsamplesize}_{\ba}{G^{\dagger}_{\ba}}^\top\\
        2 {G^{\dagger}_{\ba}}\bV^{\noepochsamplesize}_{\ba} & 4 {G^{\dagger}_{\ba}} \bV^{\noepochsamplesize}_{\ba}{G^{\dagger}_{\ba}}^\top  + 2{\noepochsamplesize}_{\ba} \Sigma_{\ba}
    \end{pmatrix}.
\end{align*}
By Lemma~\ref{lem:blckpd}, we know that $\mathbb{M}^{{\noepochsamplesize}}$
is positive semi-definite since $\bV^{{\noepochsamplesize}}_{\ba}$ and $4 {G^{\dagger}_{\ba}} \bV^{\noepochsamplesize}_{\ba}{G^{\dagger}_{\ba}}^\top  + 2 {\noepochsamplesize}_{\ba}\Sigma_{\ba} - 4G\bV^{\noepochsamplesize}_{\ba}{\bV^{{\noepochsamplesize}}_{\ba}}^{-1} \bV^{\noepochsamplesize}_{\ba}{G^{\dagger}_{\ba}}^\top = 2 {\noepochsamplesize}_{\ba}\Sigma_{\ba}$ are positive semi-definite. Consequently, because $\mathbb{M}^{{\noepochsamplesize}}$ and $\Sigma_{\ba}$ are positive semi-definite regardless of $\ba \in \mathcal{A}$,
\begin{equation*}
    0 \succeq \ddot{\ell}^{\btheta}_{{\noepochsamplesize}} \left ( \left \{\Gamma_{\ba}, \Omega_{\ba} \right \}_{\ba \in \mathcal{A}} \right ).
\end{equation*}

We now show that $-\ddot{\ell}^{\btheta}_{{\noepochsamplesize}}$ is asymptotically \textbf{positive definite}. 
By Lemma~\ref{lem:blckpd}, 
\begin{equation*}
    \sigma_{\min}\left ( \mathbb{M}^{{\noepochsamplesize}}/{\noepochsamplesize} \right ) \geq \sigma_{\min} \left ( \bV^{{\noepochsamplesize}}_{\ba}/{\noepochsamplesize} \right )  \sigma_{\min} \left ( 2 {\noepochsamplesize}_{\ba}\Sigma_{\ba}/{\noepochsamplesize} \right ).
\end{equation*}
To characterize the convexity of $\mathcal{\ell}_{{\noepochsamplesize}}^{\btheta} \left ( \left \{{G^{\dagger}_{\ba}}, \Sigma_{\ba} \right \}_{\ba \in \mathcal{A}} \right )/{\noepochsamplesize}$ as ${\noepochsamplesize} \to \infty$, we first analyze
\begin{align*}
    \lim_{{\noepochsamplesize} \to \infty} \bV^{{\noepochsamplesize}}_{\ba}/{\noepochsamplesize} &= \lim_{{\noepochsamplesize} \to \infty} \frac{1}{{\noepochsamplesize}}\sum_{v=1}^{\noepochsamplesize}\left (1, \bX^{R^v} \right) \left (1, \bX^{R^v} \right )^\top  \mathbb{I}(\bA^{R^v} = \ba) \\ &= 
    \lim_{{\noepochsamplesize} \to \infty} \frac{1}{{\noepochsamplesize}}\sum_{v=1}^{\noepochsamplesize}\begin{pmatrix}
        1 & {\bX^{R^v}}^\top \\
        {\bX^{R^v}} & \bX^{R^v} {\bX^{R^v}}^\top 
    \end{pmatrix}\mathbb{I}(\bA^{R^v} = \ba).
\end{align*}
Define fields $\mathcal{F}^{\noepochsamplesize} = \sigma \left [\left\lbrace R^v, T^v, \bX^v, Y^v , \bA^v, C^v \right\rbrace_{v=1}^{{\noepochsamplesize}} \right ]$ and quantity
\begin{equation*}
    \Delta^v \triangleq  \bbE \left \{ (1, \bX^{R^v}) (1, {\bX^{R^v}})^\top  \mathbb{I}(\bA^{R^v} = \ba) \mid \mathcal{F}^{v-1} \right \}.
\end{equation*}
By Lemma~\ref{lem:blckpd},
\begin{align*}
    \Delta^v  \succ
    &\bbE \left \{ {\bX^{R^v}} {\bX^{R^v}}^\top  \mathbb{I}(\bA^{R^v} = \ba) \mid \mathcal{F}^{v-1} \right \}.
\end{align*}
Additionally, by Assumption~\ref{as:inf_iar}, there exists $N$ such that for ${\noepochsamplesize} \geq N$, $\Delta^{\noepochsamplesize} \succeq \Delta \succ 0$.

We know that $\forall v \in \mathbb{N}$, every entry of $\bX^{v}  {\bX^{v}}^\top $ is bounded because $\mathcal{X}$ is compact.
By Theorem~\ref{thm:SLLN},
\begin{align*}
    \lim_{{\noepochsamplesize} \to \infty } \left \{ \sum_{v=1}^{\noepochsamplesize} \left (1, \bX^{R^v} \right ) \left (1, \bX^{R^v} \right )^\top  \mathbb{I}(\bA^{R^v} = \ba) - \sum_{v =  1}^{\noepochsamplesize} \Delta^v \right \}/{\noepochsamplesize} = 0 \ \ \mathrm{a.s.}
\end{align*}
There exists $N \in \mathbb{N}$ such that
\begin{align*}
    &\lim_{{\noepochsamplesize} \to \infty} \frac{1}{{\noepochsamplesize}}\sum_{v=1}^{\noepochsamplesize} \left (1, \bX^{R^v} \right ) \left (1, \bX^{R^v} \right )^\top  \mathbb{I}(\bA^{R^v} = \ba) \\
    &= \lim_{{\noepochsamplesize} \to \infty} \left \{ \sum_{v=1}^{\noepochsamplesize} \left (1, \bX^{R^v} \right ) \left (1, \bX^{R^v} \right )^\top  \mathbb{I}(\bA^{R^v} = \ba) - \sum_{v =  1}^{\noepochsamplesize} \Delta^v \right \}/{\noepochsamplesize} \\
    &\hspace{1cm} + \lim_{{\noepochsamplesize} \to \infty} \sum_{v =  1}^{\noepochsamplesize} \Delta^v/{\noepochsamplesize} \\
    &= \lim_{{\noepochsamplesize} \to \infty} \sum_{v =  1}^{\noepochsamplesize} \Delta^v/{\noepochsamplesize} \ \ \mathrm{a.s.} \\
    &= \lim_{{\noepochsamplesize} \to \infty} \sum_{v =  N}^{\noepochsamplesize} \Delta^v/\noepochsamplesize + \sum_{v=  1}^N \Delta^v/N  \ \ \mathrm{a.s.} \\
    &\succeq \lim_{{\noepochsamplesize} \to \infty} \frac{{\noepochsamplesize}-N}{{\noepochsamplesize}} \Delta \\
    &\succeq \Delta
\end{align*}
Consequently, by the continuous mapping theorem
\begin{align*}
    \lim_{{\noepochsamplesize} \to \infty} \sigma_{\min} \left \{ \frac{1}{{\noepochsamplesize}}\sum_{v=1}^{\noepochsamplesize} \left (1, \bX^{R^v} \right ) \left (1, \bX^{R^v} \right )^\top  \mathbb{I}(\bA^{R^v} = \ba) \right \} \geq \sigma_{\min}(\Delta).
\end{align*}
Defining $\epsilon_1 \triangleq\sigma_{\min}(\Delta)  > 0$, we conclude that $\lim_{{\noepochsamplesize} \to \infty} \sigma_{\min}\left (\bV^{{\noepochsamplesize}}_{\ba}/{\noepochsamplesize} \right ) \geq \epsilon_1$ a.s. Next, we analyze $2 {\noepochsamplesize}_{\ba}\Sigma_{\ba}/{\noepochsamplesize}$. By Assumption~\ref{as:inf_iar} and Theorem~\ref{thm:SLLN}, 
\begin{align*}
    \lim_{{\noepochsamplesize} \to \infty} {\noepochsamplesize}_{\ba}/{\noepochsamplesize} &=  \lim_{{\noepochsamplesize} \to \infty} \left \{ \sum_{v=1}^{\noepochsamplesize}\mathbb{I}(\bA^{R^v} = \ba) - \bbP(\bA^{R^v} = \ba \mid \mathcal{F}^{{\noepochsamplesize}-1}) \right \}/{\noepochsamplesize} + \sum_{v=1}^{\noepochsamplesize} \bbP(\bA^{R^v} = \ba \mid \mathcal{F}^{{\noepochsamplesize}-1})/{\noepochsamplesize} \\
    &\geq \lim_{{\noepochsamplesize} \to \infty} \sum_{v=1}^{\noepochsamplesize} \bbP(\bA^{R^v} = \ba \mid \mathcal{F}^{{\noepochsamplesize}-1})/{\noepochsamplesize} \\
    &\geq \lim_{{\noepochsamplesize} \to \infty} \sum_{v=N}^{\noepochsamplesize} \bbP(\bA^{R^v} = \ba \mid \mathcal{F}^{{\noepochsamplesize}-1})/{\noepochsamplesize} \\
    &\geq \delta \ \ \mathrm{a.s.}
\end{align*}
Consequently, $\lim_{{\noepochsamplesize} \to \infty} {\noepochsamplesize}_{\ba}/{\noepochsamplesize} \geq \delta $ a.s. as ${\noepochsamplesize} \to \infty$. 

Therefore,
\begin{equation*}
    \lim_{{\noepochsamplesize} \to \infty} \sigma_{\min} \left ( 2 {\noepochsamplesize}_{\ba}\Sigma_{\ba}/{\noepochsamplesize} \right )  \geq 2\sigma_{\min}(\Sigma_{\ba}) \delta \ \ \mathrm{a.s.}
\end{equation*}
    Because the minimum eigenvalue of a matrix is a continuous function, by the extreme value theorem, we know that for some $\alpha > 0$,
    \begin{equation*}
        \min_{a \in \mathcal{A}} \min_{\Sigma_{\ba} \in \mathcal{B}'} \sigma_{\min}(\Sigma_{\ba}) \geq \alpha.
    \end{equation*}
    Defining $\epsilon_2 \triangleq 2 \alpha \delta \epsilon_1$, we find that
\begin{equation*}
    \lim_{{\noepochsamplesize} \to \infty} \sigma_{\min} \left ( \mathbb{M}^{\noepochsamplesize}/{\noepochsamplesize} \right ) \geq \epsilon_2 \ \ \mathrm{a.s.}
\end{equation*}
Defining $\epsilon_3 \triangleq \epsilon_2 \alpha $, for any $\ba \in \mathcal{A}$, we find that
\begin{align*}
    \lim_{{\noepochsamplesize}\to \infty} \sigma_{\min} \left \{ - \ddot{\ell}^{\btheta}_{{\noepochsamplesize}} \left ( \Gamma_{\ba}, \Omega_{\ba} \right ) \right \}/{\noepochsamplesize} &= \sigma_{\min}\left ( \mathbb{M}^{\noepochsamplesize} \otimes \Sigma_{\ba} \right ) \\ 
    &= \sigma_{\min}\left ( \mathbb{M}^{\noepochsamplesize} \right ) \sigma_{\min} \left (\Sigma_{\ba} \right ) \\
    &\geq   \epsilon_3 \ \ \mathrm{a.s.}
\end{align*}
    Line~2 follows because for square matrices $A$ and $B$, $\sigma_{\min}(A \otimes B) = \sigma_{\min}(A)\sigma_{\min}(B)$.
    In conclusion,
\begin{align*}
    \lim_{{\noepochsamplesize} \to \infty} \sigma_{\min} \left \{ \ddot{\ell}^{\btheta}_{{\noepochsamplesize}} \left ( \left \{\Gamma_{\ba}, \Omega_{\ba} \right \}_{\ba \in \mathcal{A}} \right ) /{\noepochsamplesize} \right \} &= \lim_{{\noepochsamplesize} \to \infty} \prod_{\ba \in \mathcal{A}} \sigma_{\min} \left \{ \ddot{\ell}^{\btheta}_{{\noepochsamplesize}} \left ( \Gamma_{\ba}, \Omega_{\ba} \right ) \right \}/{\noepochsamplesize} \\
    &\leq -\epsilon_3^{|\mathcal{A}|} \ \ \mathrm{a.s.} \\
    &< 0 \ \ \mathrm{a.s.},
\end{align*}
where the first line follows from Lemma~\ref{lem:blckpd} (where the off-diagonal blocks are zero).
We complete the proof by defining $\delta_{\btheta, \mathcal{B}'} \triangleq \epsilon_3$.

\end{proof}

We verify the last part of Assumption~\ref{as:gen_inf_iar}.

\begin{lemma}
    \label{lem:O_p_1}
    Under Assumptions \ref{as:1}-\ref{as:3}, \ref{as:budg}, \ref{as:inf_iar}, and Conditions (C2) and (C4), assume that the true underlying model for RDS is indexed by $\btheta$, and the working model is the branching process specified in Equation~\ref{eq:rdsmod}. For any $\bbeta_{\bx} \in \mathcal{B}$,
\begin{equation*}
    \dot{\ell}_{{\noepochsamplesize}}^{\btheta} \left ( \bbeta_\bx \right )/{\noepochsamplesize} = O_p(1)
\end{equation*}
as $\kappa \to \infty$.
\end{lemma}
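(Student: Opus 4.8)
The plan is to exploit the compactness of the covariate space and the finiteness of the set of coupon allocations to bound the per-observation score uniformly, and then invoke Lemma~\ref{lem:op}; this mirrors the argument used to verify Assumption~\ref{as:lipchitz}.

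First I would write out the score $\dot{\ell}_{\noepochsamplesize}^{\btheta}(\bbeta_\bx)$ from the covariate log-likelihood derived in Section~\ref{app_sec:hess_cov_mod}. Applying the differential operator once to $\ell_{\noepochsamplesize}^{\btheta}$ shows that the score decomposes as $\dot{\ell}_{\noepochsamplesize}^{\btheta}(\bbeta_\bx) = \sum_{v=1}^{\noepochsamplesize} s(\bbeta_\bx, \bd^v)$ with $\bd^v \triangleq (\bX^v, \bX^{R^v}, \bA^{R^v})$, where for fixed $\bbeta_\bx = \{ G^{\dagger}_{\ba}, \Sigma_{\ba} \}_{\ba \in \mathcal{A}}$ each summand is assembled from the residual $\bX^v - G^{\dagger}_{\ba}(1, \bX^{R^v})^\top$, the precision matrix $\Sigma_{\ba}^{-1}$, and the indicator $\mathbb{I}(\bA^{R^v} = \ba)$; in particular it is a continuous function of $\bd^v$.

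Next I would regard $s(\bbeta_\bx, \cdot)$ as a function on the data domain. By Condition (C4) the covariate space $\mathcal{X}$ is compact, and by Condition (C2) the coupon type is constant across a recruiter's recruits so that $\bA^{R^v}$ ranges over the finite set $\mathcal{A}$; hence the domain over which $s(\bbeta_\bx, \cdot)$ is evaluated is compact. Since $s$ is continuous in the data and $\bbeta_\bx$ is fixed, the extreme value theorem furnishes a finite constant $M_{\bbeta_\bx} < \infty$ with $\| s(\bbeta_\bx, \bd) \|_2 \le M_{\bbeta_\bx}$ for every admissible $\bd$, whence
\[
\left\| \dot{\ell}_{\noepochsamplesize}^{\btheta}(\bbeta_\bx)/\noepochsamplesize \right\|_2 \le \frac{1}{\noepochsamplesize} \sum_{v=1}^{\noepochsamplesize} \left\| s(\bbeta_\bx, \bd^v) \right\|_2 \le M_{\bbeta_\bx} \ \ \mathrm{a.s.}
\]

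Finally, because $M_{\bbeta_\bx}$ is deterministic and does not depend on $\noepochsamplesize$, we have $\mathbb{E}\| \dot{\ell}_{\noepochsamplesize}^{\btheta}(\bbeta_\bx)/\noepochsamplesize \|_2 \le M_{\bbeta_\bx} = O(1)$, so Lemma~\ref{lem:op} gives $\dot{\ell}_{\noepochsamplesize}^{\btheta}(\bbeta_\bx)/\noepochsamplesize = O_p(1)$ as $\noepochsamplesize \to \infty$. I expect no substantive obstacle here; the only point requiring care is confirming that the score is genuinely continuous in the data and that the relevant data domain is compact, so that the uniform bound supplied by the extreme value theorem is legitimate --- exactly as in the verification of Assumption~\ref{as:lipchitz}.
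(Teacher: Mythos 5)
Your proposal is correct and follows essentially the same route as the paper's proof: decompose the score into per-observation contributions, use continuity of each contribution in the data together with compactness of the data domain (Conditions (C2) and (C4)) to obtain a deterministic uniform bound via the extreme value theorem, and conclude boundedness in probability. The only cosmetic difference is that you route the final step through Lemma~\ref{lem:op}, whereas the paper simply notes that an almost-sure deterministic bound immediately yields $O_p(1)$.
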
 
\begin{proof}
    Write
    \begin{equation*}
        \dot{\ell}_{{\noepochsamplesize}}^{\btheta} \left ( \bbeta_\bx \right ) = \sum_{v = 1}^{\kappa} q^v \left (\bbeta_{\bx} \right ) = \sum_{v = 1}^{\kappa} q\left (\bbeta_{\bx}, \left \{Y^v, \bX^v, T^v, R^v, \bA^v \right \} \right ).
    \end{equation*} We see from Section~\ref{app_sec:hess_cov_mod} that the function $q$ is continuous over the compact space $ \mathcal{Y} \times \mathcal{X} \times [t_{\min}, t_{\max}] \times \left \{1,2, \ldots v \right \} \times \mathcal{A}$ (which defines the data space). Consequently, for any given value of $\bbeta_{\bx}$, $q$ has a maximum over possible data values. Call this maximum $q^*(\bbeta_{\bx})$. Consequently, 
\begin{equation*}
    \dot{\ell}_{{\noepochsamplesize}}^{\btheta} \left ( \bbeta_\bx \right )/{\noepochsamplesize} = \sum_{v = 1}^{\kappa} q\left (\bbeta_{\bx}, \left \{Y^v, \bX^v, T^v, R^v, \bA^v \right \} \right ) \leq \sum_{v = 1}^{\kappa} q^*/\kappa = q^*(\bbeta_{\bx}).
\end{equation*}
The lemma follows.
\end{proof}

Lemmas~\ref{lem:log_concave}-\ref{lem:O_p_1} verify Assumption~\ref{as:gen_inf_iar} under the working model specified by Equation~\ref{eq:rdsmod}, Assumptions \ref{as:1}-\ref{as:3}, \ref{as:budg}, \ref{as:inf_iar}, and Conditions (C2) and (C4). 
We now prove Theorem~\ref{thm:misspec} under Assumption~\ref{as:gen_inf_iar} (which suffices to prove Theorem~\ref{thm:misspec} under both assumption sets).

\begin{proof}[Proof of Theorem~\ref{thm:misspec}]
\textbf{Proof of consistency.}
Because $\mathcal{B}$ is open, we know that $\bar{\bbeta}_\bx(\btheta)$ is an element of the interior of $\mathcal{B}$. Consequently, there exists $\gamma^* >0$ such that $$N_{\gamma^*} \triangleq \left \{ \bbeta_\bx : \left  \| \bbeta_\bx - \bar{\bbeta}_{\bx}(\btheta) \right \|_2 \leq \gamma^* \right \}$$ is a compact subset of $\mathcal{B}$.

Under Assumption~\ref{as:gen_inf_iar}, we know that ${\ell}_{{\noepochsamplesize}}^{\btheta}(\bbeta_\bx)/{\noepochsamplesize}$ is concave for all ${\noepochsamplesize} \in \mathbb{N}$, and that for $\gamma^* > 0$ there exists a $\delta > 0$ such that
\begin{align}
    \label{eq:strictconcavity}
    \lim_{\noepochsamplesize \to \infty} \inf_{\bbeta \in N_{\gamma^*}} \sigma_{\min} \left ( - \ddot{\ell}^{\btheta}_{\noepochsamplesize} \left ( \bbeta_{\bx} \right ) /\noepochsamplesize \right ) 
    &\geq \delta\ \ \mathrm{a.s.}
\end{align}

We now show that for $\bbeta_{\bx} \in N_{\gamma^*}$, $\ell_{{\noepochsamplesize}}^{\btheta} \left ( \bbeta_{\bx} \right )/{\noepochsamplesize}$ is strictly concave asymptotically. For any $\bbeta_\bx, \bbeta_\bx' \in N_{\gamma^*}$ and $t \in [0,1]$, we set $\bbeta_\bx^\dagger = t \bbeta_\bx + (1-t)\bbeta_\bx'$. For $\bbeta_\bx''\in \left  \{\alpha \bbeta_\bx + (1-\alpha)\bbeta_\bx^\dagger : \alpha \in [0,1] \right \} $, $\bbeta_\bx'''\in \left  \{\alpha \bbeta_\bx^\dagger + (1-\alpha )\bbeta_\bx' : \alpha \in [0,1] \right \} $, and an $\epsilon \in (0,\delta)$, there exists $N \in \mathbb{N}$ such that $\forall {\noepochsamplesize} \geq N$ (by Taylor expansion),
\begin{align*}
    \ell_{{\noepochsamplesize}}^{\btheta}\left ( \bbeta_\bx \right )/{\noepochsamplesize} &= \left \{ \ell_{{\noepochsamplesize}}^{\btheta}\left ( \bbeta_\bx^\dagger \right ) + \dot{\ell_{{\noepochsamplesize}}^{\btheta}}\left ( \bbeta_\bx^\dagger \right )^\top (\bbeta_\bx - \bbeta_\bx^\dagger ) + \frac{1}{2} (\bbeta_\bx - \bbeta_\bx^\dagger )^\top \ddot{\ell_{{\noepochsamplesize}}^{\btheta}}\left ( \bbeta_\bx'' \right ) (\bbeta_\bx - \bbeta_\bx^\dagger) \right \}/{\noepochsamplesize} \\
    &\leq \ell_{{\noepochsamplesize}}^{\btheta}\left ( \bbeta_\bx^\dagger \right )/{\noepochsamplesize} + \dot{\ell_{{\noepochsamplesize}}^{\btheta}}\left ( \bbeta_\bx^\dagger \right )^\top (\bbeta_\bx - \bbeta_\bx^\dagger)/{\noepochsamplesize} - \frac{1}{2} \| \bbeta_\bx - \bbeta_\bx^\dagger\|_2^2 \epsilon \ \ \mathrm{a.s.},
\end{align*}
and
\begin{align*}
    \ell_{{\noepochsamplesize}}^{\btheta}\left ( \bbeta_\bx' \right )/{\noepochsamplesize} &= \left \{ \ell_{{\noepochsamplesize}}^{\btheta}\left ( \bbeta_\bx^\dagger \right ) + \dot{\ell_{{\noepochsamplesize}}^{\btheta}}\left ( \bbeta_\bx^\dagger \right )^\top (\bbeta_\bx' - \bbeta_\bx^\dagger) + \frac{1}{2} (\bbeta_\bx' - \bbeta_\bx^\dagger)^\top \ddot{\ell_{{\noepochsamplesize}}^{\btheta}}\left ( \bbeta_\bx''' \right ) (\bbeta_\bx' - \bbeta_\bx^\dagger) \right \}/{\noepochsamplesize} \\
    &\leq \ell_{{\noepochsamplesize}}^{\btheta}\left ( \bbeta_\bx^\dagger \right )/{\noepochsamplesize} + \dot{\ell_{{\noepochsamplesize}}^{\btheta}}\left ( \bbeta_\bx^\dagger \right )^\top (\bbeta_\bx' - \bbeta_\bx^\dagger)/{\noepochsamplesize} - \frac{1}{2} \| \bbeta_\bx' - \bbeta_\bx^\dagger\|_2^2\epsilon \ \ \mathrm{a.s.}
\end{align*}
    Both of these inequalities follow because (1) for any $x \in \mathbb{R}^d$ and semi-positive definite matrix $A \in \mathbb{R}^{d\times d}$, we know that $x^\top A x \geq \sigma_{\min}(A)x^\top x$ and (2) we have a lower bound on $-\ddot{\ell_{{\noepochsamplesize}}^{\btheta}}\left ( \bbeta_\bx \right )/\noepochsamplesize$ for any $\bbeta_{\bx} \in N_{\gamma^*}$ by Equation~\ref{eq:strictconcavity}.

From these two expressions, we find that $\forall {\noepochsamplesize} \geq N$,
\begin{align*}
    t\ell_{{\noepochsamplesize}}^{\btheta}\left ( \bbeta_\bx \right )/{\noepochsamplesize} + (1-t)\ell_{{\noepochsamplesize}}^{\btheta}\left ( \bbeta_\bx' \right )/{\noepochsamplesize} &\leq \ell_{{\noepochsamplesize}}^{\btheta}\left ( \bbeta_\bx^\dagger \right )/{\noepochsamplesize} - \\
        &\hspace{0.5cm}\frac{1}{2}\left \{ t \| \bbeta_\bx - \bbeta_\bx^\dagger\|_2^2 +  (1-t) \| \bbeta_\bx' - \bbeta_\bx^\dagger\|_2^2 \right \} \epsilon\ \ \mathrm{a.s.} \\
    &= \ell_{{\noepochsamplesize}}^{\btheta}\left ( \bbeta_\bx^\dagger \right )/{\noepochsamplesize}  - \frac{1}{2}\Big \{ t(1-t)^2 \| \bbeta_\bx - \bbeta_\bx'\|_2^2 + \\
        &\hspace{0.5cm} (1-t)t^2 \| \bbeta_\bx - \bbeta_\bx'\|_2^2 \Big \}\epsilon \ \ \mathrm{a.s.} \\
    &= \ell_{{\noepochsamplesize}}^{\btheta}\left ( \bbeta_\bx^\dagger \right ) /{\noepochsamplesize} - \frac{1}{2} t(1-t)\| \bbeta_\bx - \bbeta_\bx'\|_2^2 \epsilon \ \ \mathrm{a.s.}
\end{align*}
    Lines 1, 2, and 3 follow from distributing, combining like terms, and refactoring (as well as the inequalities stated above). By point-wise convergence, this implies that
\begin{align*}
    t\overline{\ell}^{\btheta}\left ( \bbeta_\bx \right ) + (1-t)\overline{\ell}^{\btheta}\left ( \bbeta_\bx' \right ) \leq \overline{\ell}^{\btheta}\left ( \bbeta_\bx^\dagger \right )  - \frac{1}{2}\left \{ t(1-t)\| \bbeta_\bx' - \bbeta_\bx\|_2^2 \right \} \epsilon \ \ \mathrm{a.s.}
\end{align*}
Consequently, we know that $\overline{\ell}^{\btheta}$ is strictly concave a.s. over the compact set $N_{\gamma^*}$; i.e., for any $\bbeta_\bx, \bbeta_\bx' \in N_{\gamma^*}$ and $t \in (0,1)$,
\begin{align*}
    t\overline{\ell}^{\btheta}\left ( \bbeta_\bx \right ) + (1-t)\overline{\ell}^{\btheta}\left ( \bbeta_\bx' \right ) < \overline{\ell}^{\btheta}\left ( t\bbeta_\bx + (1-t)\bbeta_\bx' \right ).
\end{align*}
Lastly, we show that strict concavity over $N_{\gamma^*}$ implies that $\bar{\bbeta}_\bx(\btheta) \in \arg \max_{\bbeta_\bx \in \mathcal{B}} \overline{\ell}^{\btheta} (\bbeta_\bx)$ is unique. For a proof by contradiction, assume that $\bbeta_\bx \in \mathcal{B}$ such that $\bbeta_\bx \neq \bar{\bbeta}_\bx(\btheta)$ and $\bbeta_\bx \in \arg \max_{\bbeta_\bx \in \mathcal{B}} \overline{\ell}^{\btheta} (\bbeta_\bx)$ as well. By concavity, for any $t \in (0,1)$,
\begin{align*}
        \overline{\ell}^{\btheta} \left \{ \bar{\bbeta}_\bx(\btheta) \right \} &= t\overline{\ell}^{\btheta} \left \{ \bar{\bbeta}_\bx(\btheta) \right \} + (1-t)\overline{\ell}^{\btheta}(\bbeta_\bx)\\
        &\leq \overline{\ell}^{\btheta} \left \{ t\bar{\bbeta}_\bx(\btheta) + (1-t)\bbeta_\bx \right \}.
\end{align*}
Because $\bar{\bbeta}_\bx(\btheta)$ and $\bbeta_\bx$ are maxima, this implies that for any $t \in (0,1)$,
\begin{equation*}
    \overline{\ell}^{\btheta} \left \{ \bar{\bbeta}_\bx(\btheta) \right \} = \overline{\ell}^{\btheta} \left ( \bbeta_{\bx} \right )
    = \overline{\ell}^{\btheta} \left \{ t\bar{\bbeta}_\bx(\btheta) + (1-t)\bbeta_\bx \right \}.
\end{equation*}
For any $\bbeta_\bx \notin N_{\gamma^*}$, we find that for $t \geq 1- \gamma^*/ \|\bar{\bbeta}_\bx(\btheta) - \bbeta_\bx \|_2$,
\begin{align*}
    \| \bar{\bbeta}_\bx(\btheta) - \left ( t\bar{\bbeta}_\bx(\btheta) + (1-t)\bbeta_\bx \right ) \|_2 &=  \| (1-t)  \bar{\bbeta}_\bx(\btheta) - (1-t)\bbeta_\bx \|_2 \\
    &= (1-t) \|  \bar{\bbeta}_\bx(\btheta) - \bbeta_\bx \|_2 \\
    &\leq \gamma^*.
\end{align*}
We now pick any $t^* =  \max \left [1- \gamma^*/ \left \{ 2\|\bar{\bbeta}_\bx(\btheta) - \bbeta_\bx \|_2 \right \} , 0 \right ]$ and label $\bbeta_{\bx}' = t^*\bar{\bbeta}_\bx(\btheta) + (1-t^*)\bbeta_\bx$. We know that $\bbeta_{\bx}' \in N_{\gamma^*}$, and $\bbeta_{\bx}' \in \arg \max_{\bbeta_\bx \in \mathcal{B}} \overline{\ell}^{\btheta} (\bbeta_\bx)$. Additionally,
\begin{align}
    \begin{split}
        \label{eq:contradiction}
        \overline{\ell}^{\btheta} \left \{ \bar{\bbeta}_\bx(\btheta) \right \} &= \frac{1}{2}\overline{\ell}^{\btheta} \left \{ \bar{\bbeta}_\bx(\btheta) \right \} + \frac{1}{2}\overline{\ell}^{\btheta}(\bbeta_\bx')\\
        &< \overline{\ell}^{\btheta} \left \{ \frac{1}{2}\bar{\bbeta}_\bx(\btheta) + \frac{1}{2}\bbeta_\bx' \right \},
    \end{split}
\end{align}
where the last inequality is strict because $\overline{\ell}^{\btheta}$ is strictly concave over $N_{\gamma^*}$.
Equation~\ref{eq:contradiction} is a \textbf{contradiction} because $(\bar{\bbeta}_\bx(\btheta) + \bbeta_\bx')/2$ has a higher log-likelihood than $\bar{\bbeta}_\bx(\btheta)$. Consequently, $\bar{\bbeta}_\bx(\btheta)$ must be unique.

We now know that $\overline{\ell}^{\btheta}$ has a unique maximizer and $\ell_{{\noepochsamplesize}}^{\btheta}(\bbeta_\bx)/{\noepochsamplesize}$ is concave for all ${\noepochsamplesize} \in \mathbb{N}$. 
By Lemma~\ref{lem:concave}, $\widehat{\bbeta}_\bx^{{\noepochsamplesize}}(\btheta) \overset{P}{\to} \bar{\bbeta}_\bx(\btheta)$.

\paragraph{Ridge regression penalty.} We also prove consistency under a ridge regression penalty. The log-likelihood with a ridge regression component parameterized by $\alpha \in \mathbb{R}^+$ is
\begin{align*}
    \mathcal{\ell}^{\btheta, \alpha}_{{\noepochsamplesize}} \left ( \left \{\Gamma_{\ba}, \Omega_{\ba} \right \}_{\ba \in \mathcal{A}} \right ) &= \mathcal{\ell}^{\btheta}_{{\noepochsamplesize}} \left ( \left \{\Gamma_{\ba}, \Omega_{\ba} \right \}_{\ba \in \mathcal{A}} \right ) - \sum_{\ba \in \mathcal{A}}\frac{\alpha}{2}\mathrm{tr} \left (\Gamma_{\ba}^\top {\Omega_{\ba}}^{-1} \Gamma_{\ba} \right ).
\end{align*} 
Define $\widehat{\bbeta}_\bx^{{\noepochsamplesize}}(\btheta, \alpha) = \arg \max_{\bbeta_\bx \in \mathcal{B}} \mathcal{\ell}^{\btheta, \alpha}_{{\noepochsamplesize}}\left (\bbeta_\bx \right )$. We want to prove that $\widehat{\bbeta}_\bx^{{\noepochsamplesize}}(\btheta, \alpha)  \overset{p}{\to} \bar{\bbeta}_\bx(\btheta)$ as ${\noepochsamplesize} \to \infty$.
Since $-\Gamma_{\ba}^\top {\Omega_{\ba}}^{-1} \Gamma_{\ba} $ is negative semi-definite, we know that $- \sum_{\ba \in \mathcal{A}}\frac{\alpha}{2}\mathrm{tr} \left (\Gamma_{\ba}^\top {\Omega_{\ba}}^{-1} \Gamma_{\ba} \right )$ is concave. We conclude that $\ell^\alpha_{{\noepochsamplesize}}$ is concave for all ${\noepochsamplesize} \in \mathbb{N}$ since it is the sum of two concave functions. Additionally, since
\begin{equation*}
    \lim_{{\noepochsamplesize} \to \infty} \sum_{\ba \in \mathcal{A}}\frac{\alpha}{2}\mathrm{tr} \left (\Gamma_{\ba}^\top {\Omega_{\ba}}^{-1} \Gamma_{\ba} \right )/{\noepochsamplesize} = 0,
\end{equation*}
we know that for any $\bbeta_\bx \in \mathcal{B}$,
\begin{align*}
    \lim_{{\noepochsamplesize} \to \infty} \mathcal{\ell}^\alpha_{{\noepochsamplesize}} \left ( \bbeta_\bx \right )/{\noepochsamplesize} = \lim_{{\noepochsamplesize} \to \infty} \left [\mathcal{\ell}_{{\noepochsamplesize}} \left ( \bbeta_\bx \right )/{\noepochsamplesize} - \sum_{\ba \in \mathcal{A}}\frac{\alpha}{2}\mathrm{tr} \left ({\Gamma_{\ba}}^\top {\Omega_{\ba}}^{-1} \Gamma_{\ba} \right )/{\noepochsamplesize} \right ] = \overline{\ell}^{\btheta} (\bbeta_\bx).
\end{align*}

Using Lemma~\ref{lem:concave}, we conclude that $\widehat{\bbeta}_\bx^{{\noepochsamplesize}}(\btheta,\alpha) \overset{p}{\to} \bar{\bbeta}_\bx(\btheta)$.

\noindent\rule{16cm}{0.4pt}

\textbf{Concentration.}
We now prove the concentration statement of Theorem~\ref{thm:misspec}: for any $\btheta \notin \Theta^*$,
\begin{equation*}
   \lim_{{\noepochsamplesize} \to \infty} \bbP \left \{ \btheta \in \Gamma_{1-\alpha, \noepochsamplesize }\right \} \to 0.
\end{equation*}
Recall that $\Theta^* = \left \{\btheta \in \Theta : \bar{\bbeta}(\btheta) = \bar{\bbeta}(\btheta^*) \right \}$. For a given $\btheta \in \Theta$, we define the sampling distribution of the log-likelihood ratio statistic as $P_{\noepochsamplesize}^{\btheta}\left (\bar{\bbeta}_{\bx} \right )$ such that
\begin{equation*}
     -2\left [ \ell_{{\noepochsamplesize}}^{\btheta} \left \{ \bar{\bbeta}_\bx(\btheta) \right \} -  \ell_{{\noepochsamplesize}}^{\btheta} \left \{\widehat{\bbeta}^{{\noepochsamplesize}}_\bx(\btheta) \right \} \right ] \sim P_{\noepochsamplesize}^{\btheta}\left (\bar{\bbeta}_{\bx} \right ).
\end{equation*}
Furthermore, define the $1-\alpha$ quantile of $P_{\noepochsamplesize}^{\btheta}\left (\bar{\bbeta}_{\bx} \right )$ as $\bgamma_{1-\alpha, \noepochsamplesize}^{\btheta}\left (\bar{\bbeta}_{\bx} \right )$ and the confidence set in question as
\begin{equation*}
    \Gamma_{1-\alpha, \noepochsamplesize } = \left \{ \btheta : -2\left [ \ell_{{\noepochsamplesize}} \left \{ \bar{\bbeta}_\bx(\btheta) \right \} -  \ell_{{\noepochsamplesize}} \left \{\widehat{\bbeta}^{{\noepochsamplesize}}_\bx\right \} \right ] \leq \bgamma_{1-\alpha, \noepochsamplesize}^{\btheta}\left (\bar{\bbeta}_{\bx} \right ) \right \}.
\end{equation*}

We know that $\widehat{\bbeta}^{{\noepochsamplesize}}_\bx$ is consistent for $\bar{\bbeta}_\bx(\btheta^*)$. Therefore, we know that for any $\epsilon_0 > 0$ such that $\left \{ \bbeta_{\bx}: \left \| \bbeta_{\bx} -\bar{\bbeta}_{\bx}\left (\btheta^* \right )  \right \|_2 \leq \epsilon_0  \right \} \subset \mathcal{B}$, there exists $N_1 \in \mathbb{N}$ such that for all ${\noepochsamplesize} \geq N_1$, $\left \| \widehat{\bbeta}^{{\noepochsamplesize}}_\bx- \bar{\bbeta}_{\bx}\left (\btheta^* \right ) \right \| \leq \epsilon_0$ almost surely. 

Define the compact set $\mathcal{B}_c^{\btheta} = \bar{\bbeta}_{\bx}\left (\btheta \right ) \cup \left \{ \bbeta_{\bx}: \left \| \bbeta_{\bx} -\bar{\bbeta}_{\bx}\left (\btheta^* \right )  \right \|_2 \leq \epsilon_0  \right \}$. Additionally, for any $ \bbeta_{\bx}^p, \bbeta_{\bx}^g \in \mathcal{B}_c^{\btheta}$, define $\mathcal{B}_*^{\btheta}= \{ \bbeta_{\bx} \in \left  \{\alpha \bbeta_{\bx}^p + (1-\alpha)\bbeta_{\bx}^g : \alpha \in [0,1] \right \}$. Note that $\mathcal{B}_*^{\btheta}$ is compact because it is the image of a continuous function on a compact set in Euclidean space.

Under Assumption~\ref{as:gen_inf_iar}, we know that there exists an $N_2 \in \mathbb{N}$ such that for an $\epsilon_{\btheta} > 0$ and any ${\noepochsamplesize} \geq N_2$
\begin{align*}
	\inf_{\bbeta_{\bx} \in \mathcal{B}_*^{\btheta}} \sigma_{\min} \left \{ - \ddot{\ell}_{{\noepochsamplesize}} \left ( \bbeta_\bx \right ) /{\noepochsamplesize} \right \}
	&\geq \epsilon_{\btheta} \ \ \mathrm{a.s.}
\end{align*}

Consequently, for all ${\noepochsamplesize} \geq \max\{N_1, N_2\}$, any $\btheta \in \Theta$, and a \newline $\bbeta_\bx' \in \left  \{\alpha \widehat{\bbeta}^{{\noepochsamplesize}}_\bx+ (1-\alpha)\bar{\bbeta}_\bx(\btheta) : \alpha \in [0,1] \right \}$,
\begin{align*}
     &-2\left [ \ell_{{\noepochsamplesize}} \left \{ \bar{\bbeta}_\bx(\btheta) \right \} -  \ell_{{\noepochsamplesize}} \left \{\widehat{\bbeta}^{{\noepochsamplesize}}_\bx\right \} \right ]/{\noepochsamplesize} \\
     &= 2\left [ \ell_{{\noepochsamplesize}} \left \{\widehat{\bbeta}^{{\noepochsamplesize}}_\bx\right \} - \ell_{{\noepochsamplesize}} \left \{ \bar{\bbeta}_\bx(\btheta) \right \} \right ]/{\noepochsamplesize}\\
      &= 2\Big [ \ell_{{\noepochsamplesize}} \left \{  \widehat{\bbeta}^{{\noepochsamplesize}}_\bx\right \} -\ell_{{\noepochsamplesize}} \left \{ \widehat{\bbeta}^{{\noepochsamplesize}}_\bx\right \} - \dot{\ell}_{{\noepochsamplesize}} \left \{  \widehat{\bbeta}^{{\noepochsamplesize}}_\bx\right \}^\top\left \{ \bar{\bbeta}_\bx(\btheta) -  \widehat{\bbeta}^{{\noepochsamplesize}}_\bx\right \} - \\
      &\hspace{0.7cm} \frac{1}{2}\left \{  \bar{\bbeta}_\bx(\btheta)  - \widehat{\bbeta}^{{\noepochsamplesize}}_\bx\right \}^\top \ddot{\ell}_{{\noepochsamplesize}} \left \{ \bbeta_\bx' \right \} \left \{  \bar{\bbeta}_\bx(\btheta)  - \widehat{\bbeta}^{{\noepochsamplesize}}_\bx\right \}  \Big ]/{\noepochsamplesize} \\
      &=\left \{  \bar{\bbeta}_\bx(\btheta)  - \widehat{\bbeta}^{{\noepochsamplesize}}_\bx\right \}^\top \left \{ -\ddot{\ell}_{{\noepochsamplesize}} \left (\bbeta_\bx' \right )/{\noepochsamplesize} \right \} \left \{  \bar{\bbeta}_\bx(\btheta)  - \widehat{\bbeta}^{{\noepochsamplesize}}_\bx\right \} \\
      &\geq \epsilon_{\btheta} \left \| \bar{\bbeta}_\bx(\btheta)  - \widehat{\bbeta}^{{\noepochsamplesize}}_\bx\right \|_2^2 \ \ \mathrm{a.s.}
\end{align*}
Line~3 follows from an exact Taylor expansion. Line~4 follows from the fact that \\ $\dot{\ell}_{{\noepochsamplesize}} \left \{  \widehat{\bbeta}^{{\noepochsamplesize}}_\bx\right \} = 0$. Line~5 follows from the fact that $\bbeta_\bx' \in \mathcal{B}_*^{\btheta}$.

For $\bbeta_\bx'' \in \left  \{\alpha \widehat{\bbeta}^{{\noepochsamplesize}}_\bx(\btheta) + (1-\alpha)\bar{\bbeta}_\bx(\btheta) : \alpha \in [0,1] \right \}$, we now upper bound
\begin{align*}
    &0 \leq -2\left [ \ell_{{\noepochsamplesize}}^{\btheta} \left \{ \bar{\bbeta}_\bx(\btheta) \right \} -  \ell_{{\noepochsamplesize}}^{\btheta} \left \{\widehat{\bbeta}^{{\noepochsamplesize}}_\bx(\btheta) \right \} \right ]/{\noepochsamplesize} \\
    &= 2\left [\ell_{{\noepochsamplesize}}^{\btheta} \left \{\widehat{\bbeta}^{{\noepochsamplesize}}_\bx(\btheta) \right \} - \ell_{{\noepochsamplesize}}^{\btheta} \left \{ \bar{\bbeta}_\bx(\btheta) \right \} \right]/{\noepochsamplesize} \\
    &= 2\Big [ \ell_{{\noepochsamplesize}}^{\btheta} \left \{  \bar{\bbeta}_\bx(\btheta) \right \} - \ell_{{\noepochsamplesize}}^{\btheta} \left \{ \bar{\bbeta}_\bx(\btheta) \right \} + \dot{\ell}_{{\noepochsamplesize}}^{\btheta} \left \{  \bar{\bbeta}_\bx(\btheta) \right \}^\top\left \{\widehat{\bbeta}^{{\noepochsamplesize}}_\bx(\btheta) - \bar{\bbeta}_\bx(\btheta) \right \} + \\
    &\hspace{0.7cm} \frac{1}{2}\left \{  \bar{\bbeta}_\bx(\btheta)  - \widehat{\bbeta}^{{\noepochsamplesize}}_\bx(\btheta)\right \}^\top \ddot{\ell}_{{\noepochsamplesize}}^{\btheta} \left \{ \bbeta_\bx'' \right \} \left \{  \bar{\bbeta}_\bx(\btheta)  - \widehat{\bbeta}^{{\noepochsamplesize}}_\bx(\btheta)\right \}  \Big ]/{\noepochsamplesize} \\
    &= 2 \left [\dot{\ell}_{{\noepochsamplesize}}^{\btheta} \left \{  \bar{\bbeta}_\bx(\btheta) \right \} /{\noepochsamplesize} \right ]^\top\left \{\widehat{\bbeta}^{{\noepochsamplesize}}_\bx(\btheta) - \bar{\bbeta}_\bx(\btheta)  \right \} - \\
    &\hspace{0.7cm} \left \{  \bar{\bbeta}_\bx(\btheta)  -\widehat{\bbeta}^{{\noepochsamplesize}}_\bx(\btheta) \right \}^\top \left [ - \ddot{\ell}_{{\noepochsamplesize}}^{\btheta} \left \{ \bbeta_\bx'' \right \}/{\noepochsamplesize} \right ] \left \{  \bar{\bbeta}_\bx(\btheta)  - \widehat{\bbeta}^{{\noepochsamplesize}}_\bx(\btheta)\right \} \\
    &\leq 2 \left [\dot{\ell}_{{\noepochsamplesize}}^{\btheta} \left \{  \bar{\bbeta}_\bx(\btheta) \right \} /{\noepochsamplesize} \right ]^\top\left \{\widehat{\bbeta}^{{\noepochsamplesize}}_\bx(\btheta) - \bar{\bbeta}_\bx(\btheta)\right \} \ \ \mathrm{a.s.}
\end{align*}
Lines~1, 2, and 3 follow from simple algebra. Line~4 follows from the fact that we know from the same logic as Lemma~\ref{lem:IAmisp} that $\ell_{{\noepochsamplesize}}^{\btheta}$ is concave for any $\btheta \in \Theta$. 

By Assumption~\ref{as:gen_inf_iar}, we know that $\dot{\ell}_{{\noepochsamplesize}}^{\btheta} \left \{  \bar{\bbeta}_\bx(\btheta) \right \}/{\noepochsamplesize} = O_p(1)$ as $\kappa \to \infty$.
Because
$\lim_{{\noepochsamplesize} \to \infty} \left \| \bar{\bbeta}_\bx(\btheta)  -\widehat{\bbeta}^{{\noepochsamplesize}}_\bx(\btheta) \right \|_2^2 = 0 $ a.s., we know that
\begin{align*}
   \lim_{{\noepochsamplesize} \to \infty} \left ( 2\left [\dot{\ell}_{{\noepochsamplesize}}^{\btheta} \left \{  \bar{\bbeta}_\bx(\btheta) \right \} /{\noepochsamplesize} \right ]^\top\left \{\bar{\bbeta}_\bx(\btheta) - \widehat{\bbeta}^{{\noepochsamplesize}}_\bx(\btheta) \right \} \right ) = 0. \ \ \mathrm{a.s.}
\end{align*}
Therefore, we know that $\lim_{{\noepochsamplesize} \to \infty} \bgamma_{1-\alpha, \noepochsamplesize}^{\btheta}\left (\bar{\bbeta}_{\bx} \right )/{\noepochsamplesize} = 0$ a.s. We find that
\begin{align*}
    \Gamma_{1-\alpha, \noepochsamplesize } &\triangleq \left \{ \btheta : -2\left [ \ell_{{\noepochsamplesize}} \left \{ \bar{\bbeta}_\bx(\btheta) \right \} -  \ell_{{\noepochsamplesize}} \left \{\widehat{\bbeta}^{{\noepochsamplesize}}_\bx\right \} \right ] \leq \bgamma_{1-\alpha, \noepochsamplesize}^{\btheta}\left (\bar{\bbeta}_{\bx} \right ) \right \} \\
    &= \left \{ \btheta : -2\left [ \ell_{{\noepochsamplesize}} \left \{ \bar{\bbeta}_\bx(\btheta) \right \} -  \ell_{{\noepochsamplesize}} \left \{\widehat{\bbeta}^{{\noepochsamplesize}}_\bx\right \} \right ]/{\noepochsamplesize} \leq \bgamma_{1-\alpha, \noepochsamplesize}^{\btheta}\left (\bar{\bbeta}_{\bx} \right )/{\noepochsamplesize} \right \} \\
    &\subseteq \left \{ \btheta : \epsilon_{\btheta} \left \| \bar{\bbeta}_\bx(\btheta)  - \widehat{\bbeta}^{{\noepochsamplesize}}_\bx\right \|_2^2 \leq \bgamma_{1-\alpha, \noepochsamplesize}^{\btheta}\left (\bar{\bbeta}_{\bx} \right )/{\noepochsamplesize} \right \} \ \mathrm{a.s.}
\end{align*}
Line~2 follows by dividing both sides of the inequality by ${\noepochsamplesize}$. Line~3 follows because for any $\btheta \in \Theta$,
\begin{align*}
     &-2\left [ \ell_{{\noepochsamplesize}} \left \{ \bar{\bbeta}_\bx(\btheta) \right \} -  \ell_{{\noepochsamplesize}} \left \{\widehat{\bbeta}^{{\noepochsamplesize}}_\bx\right \} \right ]/{\noepochsamplesize} \geq \epsilon_{\btheta} \left \| \bar{\bbeta}_\bx(\btheta)  - \widehat{\bbeta}^{{\noepochsamplesize}}_\bx\right \|_2^2 \ \ \mathrm{a.s.}
\end{align*}
as ${\noepochsamplesize} \to \infty$. Because $\lim_{{\noepochsamplesize} \to \infty} \widehat{\bbeta}^{{\noepochsamplesize}}_\bx= \bar{\bbeta}_\bx(\btheta^*)$, we know that if $\bar{\bbeta}_\bx(\btheta^*) \neq \bar{\bbeta}_\bx(\btheta)$,
\begin{equation*}
    \epsilon_{\btheta} \left \| \bar{\bbeta}_\bx(\btheta)  - \widehat{\bbeta}^{{\noepochsamplesize}}_\bx\right \|_2^2 \to \epsilon_{\btheta} \left \| \bar{\bbeta}_\bx(\btheta)  - \bar{\bbeta}_\bx(\btheta^*) \right \|_2^2 > 0.
\end{equation*}
Additionally, $\bgamma_{1-\alpha, \noepochsamplesize}^{\btheta}\left (\bar{\bbeta}_{\bx} \right )/{\noepochsamplesize} \to 0$.
Consequently, for any $\btheta \notin \Theta^*$,
\begin{align*}
   \lim_{{\noepochsamplesize} \to \infty} \bbP \left ( \btheta \in \Gamma_{1-\alpha, \noepochsamplesize } \right ) \leq
   \lim_{{\noepochsamplesize} \to \infty} \bbP \left (  \epsilon_{\btheta} \left \| \bar{\bbeta}_\bx(\btheta)  - \widehat{\bbeta}^{{\noepochsamplesize}}_\bx\right \|_2^2 \leq \bgamma_{1-\alpha, \noepochsamplesize}^{\btheta}\left (\bar{\bbeta}_{\bx} \right )/\noepochsamplesize \right ) 
   \to 0.
\end{align*}
\end{proof}

\section{Branching Model Paradigm for RL-RDS}

In this section, we conduct a series of simulation experiments to evaluate the operating characteristics
of RL-RDS when the branching process is the true generative model.  To allow comparisons with the two-stage procedure proposed by 
\citet{mcfall2021optimizing} (see also \citet{vanorsdale2023adaptive}),
we consider the setting in which the goal is to recruit the largest subset of 
people in a hidden population with a given binary trait, e.g., undiagnosed HIV.     
The outcome is thus an indicator of this trait.  
We estimate the optimal policy, $\bpi^{\mathrm{opt}}$, using RL-RDS. In this paradigm, we assume that the branching process described in Section~\ref{sec:examp} is the true generative model. We restate this model here for convenience.

Recall that $T_{i,l}$, $\bX_{i,l}$, $Y_{i,l}$, and $A_{i,l}$ for $l=1,\ldots, M_i$ are the arrival times, covariates, rewards, and coupon types associated with the potential recruits of recruiter $i$ respectively. 
Treating $A_{i,l}$ as a factor, let $\bZ_{i,l} \in\mathcal{Z}\subseteq \mathbb{R}^{\ell}$ be a row in the model matrix of a model that includes a main effect, $\bX_{i,l}$ and its interaction with $A_{i,l}$.
For example, if $\mathcal{A} = \{-1, 1\}$, then $\bZ_{i,l} = \left ( 1, \bX_{i,l}, \bX_{i,l} \bbI \left (A_{i,l} = -1 \right ) \right )$. Define $\mathbb{A}$ as the set of possible coupon types.

We consider a working model of the form (same as Equation~\ref{eq:rdsmod}):  
\begin{eqnarray*}
\bbP(M_{i} = m_{i} | \bH_{i}, \bA_i) &=& \frac{
\lambda^{m_i}/m_i!
}{
  \sum_{\ell=0}^{|\mathbf{A}_i|} (\lambda^\ell/\ell!)  
} , m_i=0,\ldots, |\mathbf{A}_i|,   \nonumber \\[3pt]
T_{i,l} - T_i | \bH_i, \bA_i, M_i &\sim& 
\mathrm{Truncated \ Exponential}(\zeta, t_{\min}, t_{\max}),\, l=1,\ldots, M_i,  \nonumber \\[3pt] 
\bX_{i,l} | \bH_i, U_{i,l}, \bA_i, M_i &\sim& 
\mathrm{Normal}\left (\bphi_{a} + G_{a}\bX_i,\Sigma_{a} \right ),\,
l=1,\ldots, M_i, a = a_{i,l}  \nonumber \\[3pt]
Y_{i,l} | \bH_i, \bX_i, U_{i,l}, \bA_i, M_i &\sim& \mathrm{ Bernoulli} \left \{ \frac{1}{1 + \exp \left (-{\bZ_{i,l}}^\top \bbeta_y \right )} \right \} ,\, 
j=1,\ldots, M_i,
\end{eqnarray*}
where $\lambda, \zeta \in \mathbb{R}$, $\{ G_{a} \}_{a \in \mathbb{A}} \in \mathbb{R}^{p \times p}$, 
$\{ \bphi_{a} \}_{a \in \mathbb{A}} \in \mathbb{R}^{p \times 1}$, $\bbeta_y \in \mathbb{R}^{\ell}$, 
and $\{\Sigma_{a} \}_{a \in \mathbb{A}} \in \mathbb{R}^{p\times p}$.
We draw the covariates of the initial sample, $\mathcal{E}_0$, from a multivariate normal distribution, $\mathrm{Normal} \left (  \mu , \Sigma \right )$.

\subsection{Policies}
We evaluate the performance of RL-RDS against a suite of alternative strategies. At each step, the researcher can choose from a finite selection of coupon types. 
The fixed allocation policies (i.e., those that give the same coupon allocation type to all participants)  represent the current standard in RDS. The train and implement policy mimics the procedure used by \cite{mcfall2021optimizing}, which determines an incentive strategy using a pilot study. 
We describe each policy below.
\begin{enumerate}
	\item \textbf{Fixed} offers a fixed coupon allocation $\ba \in \mathcal{A}$ to every study participant. If $\ba \notin  \psi^v(\bh^v)$, then pick a random coupon allocation from $\psi^v(\bh^v)$ to give to the $v^{\mathrm{th}}$ study participant.
	\item \textbf{Random} offers a random element of $\psi^v(\bh^v)$ to the $v^{\mathrm{th}}$ study participant.
         \item \textbf{Train and Implement} uses half of the budget for a ``pilot study," in which the the Random policy is used to assign coupon allocations. It then conducts policy search using the pilot study data to estimate the branching process working model. This estimated policy (without updating) is used to determine coupon allocations for the remainder of the budget.
	\item \textbf{RL-RDS} uses the Random policy to assign coupon allocations to participants in a short ``warm-up" period ($50$ participants in the simulations below). Then, it performs policy search with Thompson sampling as outlined in Section~\ref{sec:rl} for the remainder of the budget.
\end{enumerate}

To conduct RL-RDS, we establish a reasonable space of policies, $\Pi$. Define $\alpha_0 \in \mathbb{R}$, $\bma{\alpha}_1 \in \mathbb{R}^p$, and $\bma{\alpha} = (\alpha_0, \bma{\alpha}_1)$. For $n \in \mathbb{N}$ and state $\bh^n \in \mathcal{H}^n$, we consider policies of the form $\bpi(\bh^n) =  \left \{ \pi^n(\bh^n), \pi^{n+1}(\bh^{n+1}), \cdots  \right \}$ such that for $v \geq n$,
\begin{equation*}
    \pi^v(\bh^v) = \pi^v(\bh^v, \bma{\alpha}) = \pi^v(\bx^v,\bma{\alpha}) = g^v \left [ \frac{1}{1+\exp \left \{ - (\alpha_0 + {\bx^v}^\top \bma{\alpha}_1) \right \} } \right ],
\end{equation*}
where $g^v: (0,1) \to \phi^v(\bh^v)$ maps a continuous score (dependent on the participant's covariates) to a coupon allocation, $\ba^v \in \psi^v(\bh^v)$. 
We first draw $\widehat{\bbeta}^{n}$ from the MLE sampling distribution using a generalized bootstrap for estimating equations \citep{chatterjee2005generalized}. We then generate synthetic data sets,
$$\mathcal{K}^B(\bh^n, \bpi; \widehat{\bbeta}^{n}) = 
\left\lbrace
\left(\bh_b^n, A_b^n, Y_b^n, \bh_b^{n+1}, A_b^{n+1}, Y_b^{n+1},\ldots,
\bh_b^{Q}, Y_b^{Q}
\right)
\right\rbrace_{b=1}^B,$$
and calculate
\begin{equation*}
\widehat{V}^n_{B}(\bh^{n}, \bpi; \widehat{\bbeta}^{n}) =
\frac{\sum_{b =1 }^B \left\lbrace
\sum_{v \geq n} \Delta^v_b Y^{v} _b
\right\rbrace}{B}
\end{equation*}
for each $\bpi \in \Pi$ (or each $\bpi$ in a grid approximation of $\Pi$). To determine the coupon allocation given to the latest study participant, we set $\widehat{\balpha}_B^n = \arg \max_{\balpha} \widehat{V}^n_{B}(\bh^{n}, \balpha; \widehat{\bbeta}^{n} )$,
and assign $\ba^n = \pi^n(\bx^n, \widehat{\balpha}_B^n) $. 

\subsection{Results}
In the following simulations, we set the hidden population size to $N= 5,000$. The recruitment process begins with an initial sample of $25$ individuals randomly drawn from the population, $\left | \mathcal{E}_0\right | = 25$ with $\mu^* = (1,1,1)^\top$ and $\Sigma^* = \mathrm{diag}(1/8)$. We define 
\begin{align*}
    &G_{a_1} = \begin{pmatrix}
         \phi_1 & 0 & 0 \\
         0 & \phi_1 & 0 \\
         0 & 0 & \phi_1
    \end{pmatrix}, 
    G_{a_2} = \begin{pmatrix}
         \phi_1*0.975 & 0 & 0 \\
         0 & \phi_1* 0.975 & 0 \\
         0 & 0 & \phi_1 * 0.975
    \end{pmatrix} \\
    &G_{a_3} = \begin{pmatrix}
        \phi_1*0.95 & 0 & 0 \\
        0 & \phi_1* 0.95 & 0 \\
        0 & 0 & \phi_1 * 0.95
    \end{pmatrix}, \ \bphi_{a_1} = \bphi_{a_2} = \bphi_{a_3} = \bphi_0,
\end{align*}
where we set $\bphi_0 = \left (0.05, 0.05, 0.05 \right )$ and $\phi_1 = 0.95$ in our weak correlation setting, and we set $\bphi_0 = \left (0.025, 0.025, 0.025 \right )$ and $\phi_1 = 0.975$ in our strong correlation setting.
The researchers have access to three types of coupon allocations $\{\ba_1, \ba_2, \ba_3\}$, and each allocation has $5$ coupons. We found that limiting the number of coupons given to each study participant in the pilot study and the warm-up period of the T\&I and the RLRDS policies respectively allows us to observe the effects of the learned policies earlier in the sampling process. Consequently, we give two coupons to individuals in the pilot and warm-up period, and increase the allotment to $5$ coupons afterwards. Additionally, $C^v \equiv 1$ for all $v \in \mathbb{N}$. Define the reward model components as $\bZ_j^v \triangleq \left \{1, \bX^v_j, \mathbb{I}(\bA^v_j = \ba_2)\bX^v_j,  \mathbb{I}(\bA^v_j = \ba_3)\bX^v_j \right \}$ and $\bbeta_y^* \triangleq (-1,3\bk,-3\bk,-6\bk)$, where $\bk \triangleq (1,-1,-1)$.
The basic policy objective is clear: we want to ensure that coupon type 1, $\ba_1$, is used to recruit individuals with covariates that satisfy ${\bX^v}^\top \bk > 0$, and coupon type 3, $\ba_3$, is used to recruit individuals with covariates that satisfy ${\bX^v}^\top \bk < 0$. 
We define the policy space by specifying the function $g^v$,
\begin{equation*}
    g^v(z) = \begin{cases}
      \ba_1, & \mathrm{if}\ z > 0.66, \\
      \ba_2, & \mathrm{if}\ 0.66 > z \geq 0.33, \\
      \ba_3, & \mathrm{if}\ 0.33 > z.
    \end{cases}
\end{equation*}
This policy space implies that correctly assigning coupons $1$ or $3$ will depend on the sign of the $\balpha$ components. The frequency of coupon 2 allocation will be determined by the magnitude of $\balpha$. This structure makes finding an optimal policy computationally feasible while maintaining sufficient difficulty to showcase the strength of RL-RDS. Lastly, we set $\lambda = 3$, $\zeta = 0.5$, $t_{\min}=0$, and $t_{\max} = 3$.

Figure~\ref{fig:branchingcumrew} illustrates the estimated value of policies in both the sparse and dense network settings. It indicates that RL-RDS outperforms all competitor policies by a significant margin in each regime.
Note that the train and implement (T\&I) method does not adapt its policy after the initial sample, causing the margin between RL-RDS and this strategy to increase for larger budgets. 

\begin{figure}
    \centering
    \includegraphics{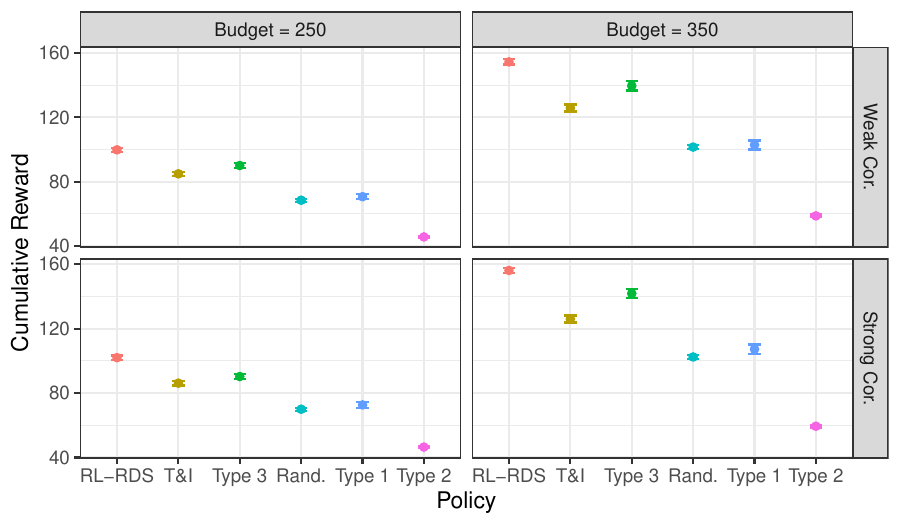}
    \caption{This figure compares the estimated cumulative reward of each policy with 90\% Monte Carlo confidence intervals over multiple sample sizes and graph densities.}
    \label{fig:branchingcumrew}
\end{figure}

\section{Alternative Graph Model Paradigms for RL-RDS}
\label{app_sec:RL-RDSalt}

The setup of this section mirrors the setup of Section~\ref{sec:sim}. 
We conduct a series of simulation experiments to evaluate the operating characteristics
of RL-RDS when the graph process of Section~\ref{sec:sim} is the true generative model.
Here, we evaluate the performance of RL-RDS under two additional simulation settings. In both of these paradigms, the coupon ``type" is held fixed.
\begin{enumerate}
    \item Researchers can vary the coupon value while the number of coupons is fixed.
    \item Researchers can vary the number of coupons while the coupon value is fixed.
\end{enumerate}

The fixed allocation policies (i.e., those that give the same coupon allocation type to all participants)  represent the current standard in RDS. The train and implement policy mimics the procedure used by \cite{mcfall2021optimizing}, which determines an incentive strategy using a pilot study. 
We describe each policy below.
\begin{enumerate}
	\item \textbf{Fixed (Min)} offers $\min_{a \in \psi^v(\bh^v)}$ to the $v^{\mathrm{th}}$ study participant.
	\item \textbf{Fixed (Half)} offers $\lfloor \max_{a \in \psi^v(\bh^v)}/2 \rfloor$ to the $v^{\mathrm{th}}$ study participant.
	\item \textbf{Fixed (Max)} offers $\max_{a \in  \psi^v(\bh^v)}$ to the $v^{\mathrm{th}}$ study participant.
	\item \textbf{Random} offers a random element of $\psi^v(\bh^v)$ to the $v^{\mathrm{th}}$ study participant.
         \item \textbf{Train and Implement} uses half of the budget for a ``pilot study," in which the the Random policy is used to assign coupon allocations. It then conducts policy search using the pilot study data to estimate the branching process working model. This estimated policy (without updating) is used to determine coupon allocations for the remainder of the budget.
	\item \textbf{RL-RDS} uses the Random policy to assign coupon allocations to participants in a short ``warm-up" period ($50$ participants in the simulations below). Then, it performs policy search with Thompson sampling as outlined in Section~\ref{sec:rl} for the remainder of the budget.
\end{enumerate}
The working model we use for inference in the experiments in this section involves the value of the coupons explicitly and reduces the dimensionality of the covariate and reward models. The possible coupons will have values between $0$ and $1$, $\mathbb{A} \subseteq [0,1]$.
 
	$T_{i,l}$,$\bX_{i,l}$, $Y_{i,l}$ and $A_{i,l}$ for $l=1,\ldots, M_i$ are the arrival times, covariates, rewards, and incentive values associated with the potential recruits of study participant $i$ respectively.
	\begin{eqnarray}\label{eq:rdsmod2}
		P(M_i=m_i) &=& \frac{
			\lambda^{m_i}/m_i!
		}{
			\sum_{\ell=0}^{|\mathbf{A}_i|} (\lambda^\ell/\ell!)  
		} , m_i=0,\ldots, |\mathbf{A}_i|,   \nonumber \\[3pt]
		U_{i,l} \triangleq T_{i,l} - T_i &\sim& 
		\mathrm{Truncated Exponential}(\zeta_0 + \zeta_1A_{i,l}, b),\, l=1,\ldots, M_i,  \nonumber \\[3pt] 
		\bX_{i,l} &\sim& 
		\mathrm{Normal}\left [\pmb{\phi}_0 + \phi_1\bX_{i,l},\left \{\Omega_0 + \left (1-A_{i,l} \right )\Omega_1 \right \}^{-1} \right ],\,
		l=1,\ldots, M_i,  \nonumber \\[3pt]
		Y_{i,l} &\sim& \mathrm{ Bernoulli} \left \{ \frac{1}{1 + \exp \left (-{\bZ_{i,l}}^\top \bbeta_y \right )} \right \} ,\, 
		l=1,\ldots, M_i,
	\end{eqnarray}
	where the incentive, $A_{i,l} \in [0,1]$, is coded so that a value of $0$ encodes the minimal incentive and $1$ encodes the maximum incentive. For simplicity (and to align with common study constraints),
	we assume that the incentive allocation strategy is such that
	$A_{i,l}$ is constant across $l$.
	Lastly, we assume that there is an upper bound on the number of coupons that can be given to a single participant; i.e., there exists $L \in \mathbb{N}$ such that $\forall i \in \mathbb{N}$, $|\bA_i | \leq L$.

\subsection{Results}
\label{app_sec:add_results}
In all following simulations, the hidden population size is $N= 5,000$. The recruitment process begins with an initial sample of $25$ individuals randomly drawn from the population, $\left | \mathcal{E}_0\right | = 25$. The graph model is defined by $\psi^*_0 =  0$ and $\psi^*_1 \in \{0.5,2\}$ to compare between simulation experiments in dense and sparse network settings respectively. 
We specify the covariate distribution with $\bmu^* = (1,1,1)$ and
$\Sigma^* = 10*I_3$. We define the arrival time distribution with
$\zeta^*_0 = 0.5$ and $\zeta^*_1 = 6$. $\zeta^*_1$ controls the relationship between the incentive offered to a potential recruit and their arrival time, which effects whether we observe them before the end of the process.
Consequently, for large $\zeta^*_1$, there is a high positive association between the incentive assigned to a recruiter and the likelihood of observing their recruits.

We switch to superscript indexing for study participants when discussing policy estimation to emphasize that the assignment of coupon allocations happens upon the arrival of a study participant. 
In the first simulation setting, we set the coupon package size to $5$ for all recruits and vary the value of the coupons. For the recruits $j \in \{1,2,\ldots, M^v \}$ of participant $v$, we define $\bZ^v_j \triangleq (1, A^v_j, \bX^v_j, A^v_j * \bX^v_j)$.
We make $\bbeta_y^* \in \mathbb{R}^{2p +2}$ sparse, setting it equal to $\bbeta_y^* = (-4,0,0,0,0,3,0,0)$. This makes the basic policy objective clear: give high incentives to study participants who are likely to recruit individuals with a particular characteristic in order to maximize cumulative utility. 
We define the possible incentive values as $\{0.1,0.2,0.3,\ldots, 1\}$ and the policy space for participant $v$ as
\begin{align*}
    &\pi^v(\bH^v) \triangleq \bA^v, \ \mathrm{where \ for} \ A_j^v \in \bA^v, \\
    &A_j^v = \frac{1}{10} \left \lfloor 10 \times \frac{1}{1+\exp \left [- (\alpha_0 + {\bx^v}^\top \bma{\alpha}_1) \right ] } + \frac{1}{2} \right \rfloor, \ \mathrm{for} \  j \in 1,\ldots, M^v.
\end{align*}
Therefore, the policy is determined by $\balpha_1 \in \mathbb{R}^p$ and $\alpha_0 \in \mathbb{R}$.

In the second simulation context, we hold the coupon value constant and allow the researcher to vary the size of $\bA_i$. 
We make $\bZ^v = (1, \bX^v)$ and set $\bbeta_y = (-1, 2,-2,-2)$. This provides a strong signal to prioritize policies that recruit individuals who exhibit the covariate pattern: $X^v_{1} >0 , X^v_{2} < 0, X^v_{3} < 0$.
In this context, we make the possible coupon package sizes $\{1,2,3, \ldots, 7\}$, and specify the policy space as
\begin{equation*}
    \pi^v(\bh^v) \triangleq |\bA^v|= \left \lfloor 7 \times \frac{1}{1+\exp \left \{- (\alpha_0 + {\bx^v}^\top \bma{\alpha}_1) \right \} } + \frac{1}{2} \right \rfloor.
\end{equation*}
Again, the policy is determined by $\balpha_1 \in \mathbb{R}^p$ and $\alpha_0 \in \mathbb{R}$.

Figures~\ref{fig:cumrew_price} and \ref{fig:cumrew_amount} contain the estimated value of policies in both simulation settings. They indicate that RL-RDS outperforms all competitor policies by a significant margin in each regime. Additionally, the effect sizes increase slightly as the network becomes sparser and the similarities between neighbors become stronger (because there is more ``signal" for the polices to leverage).
Lastly, note that the train and implement method does not adapt its policy after its ``pilot study," causing the margin between RL-RDS and this strategy to generally increase in the larger budget setting.

\begin{figure}
    \centering
    \includegraphics{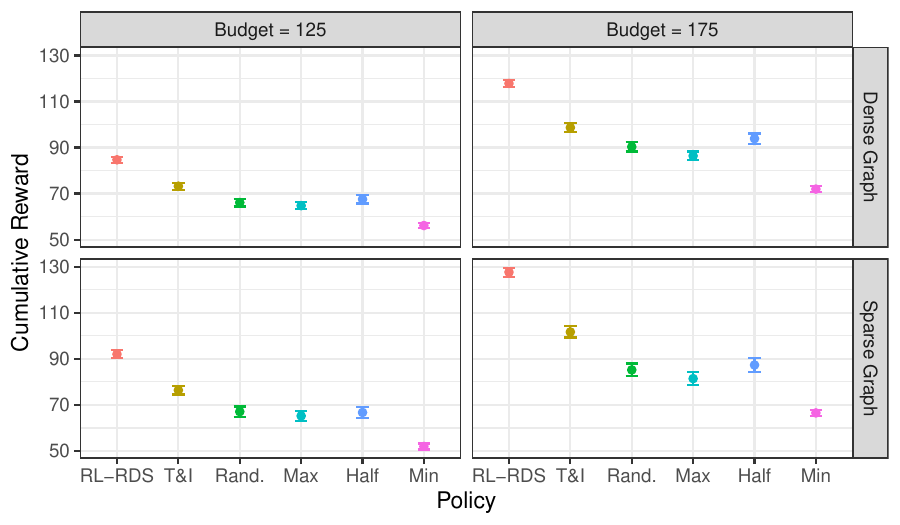}
    \caption{This figure compares the estimated cumulative reward of each policy with 90\% Monte Carlo confidence intervals in simulation setting 1.}
    \label{fig:cumrew_price}
\end{figure}

\begin{figure}
    \centering
    \includegraphics{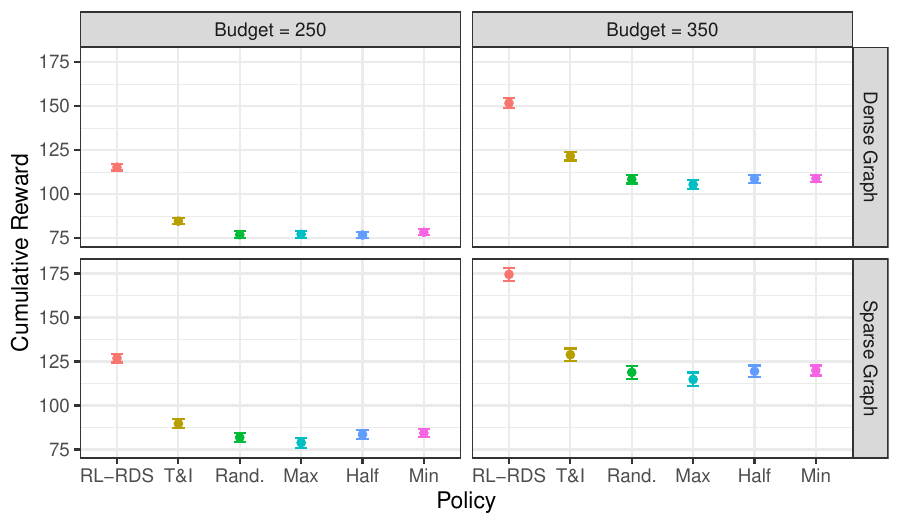}
    \caption{This figure compares the estimated cumulative reward of each policy with 90\% Monte Carlo confidence intervals in simulation setting 2.}
    \label{fig:cumrew_amount}
\end{figure}

\end{document}